\newtheorem{definition}{Definition}
\newtheorem{example}[definition]{Example}
\newtheorem{theorem}[definition]{Theorem}
\newtheorem{lemma}[definition]{Lemma}
\newtheorem{proposition}[definition]{Proposition}
\newtheorem{remark}[definition]{Remark}
\newtheorem{corollary}[definition]{Corollary}
\renewcommand{\@begintheorem}[2]{ % not in italics
\trivlist\item[\hskip\labelsep{\bf #1\ #2}]}
\renewcommand{\@opargbegintheorem}[3]{
\trivlist\item[\hskip \labelsep{\bf #1\ #2\ (#3)}]}
\newtheorem{proof}{Proof}
\newcommand{\qed}{\hfill$\blacksquare$} %
\newcommand\APEX[5]{
\ifthenelse{\boolean{proofintheappendix}}

{\ifthenelse{\equal{#2}{}}{\begin{#1}}{\begin{#1}[#2]}\label{#3}
\leavevmode\marginpar{\textit{\small Proof at page
\pageref{APXZ#3}.}}{#4}\end{#1}\immediate\write\tempfile{
    \unexpanded{\subsubsection*{Proof of \uppercase{#1} \ref{#3}, page
\pageref{#3}}\label{APXZ#3}{#5}}}
    }{\ifthenelse{\boolean{bodyafterstatement}}

{\ifthenelse{\equal{#2}{}}{\begin{#1}}{\begin{#1}[#2]}\label{#3}{#4}\end{#1}{#5}}

{\ifthenelse{\equal{#2}{}}{\begin{#1}}{\begin{#1}[#2]}\label{#3}{#4}\end{#1}}
 }}
\newcommand{\mapPincToDi}[1]
           {\llparenthesis\, {#1}\, \rrparenthesis} %
\newcommand{\mapDiToPinc}[2]
           {\llbracket\,\! #1\, \rrbracket_{#2}} %
\newcommand{\pincLTSJud}[3]{\xymatrix@1{#1\, \ar[r]^-{\ #3\ } & \,#2}} % with xypic
\newcommand{\pincLTSJudShort}[3]{#1\, \stackrel{#3}{\longrightarrow} \,#2} %
\newcommand{\pincLTSEXTJud}[5]
{\xymatrix@1{#1\, \ar[r#4]^-{\ #3\ } &#5 \,#2}} % with xypic
\newcommand{\pincRSJud}[3]{\xymatrix@1{#1\, \ar@{>>}[r]^-{\ #3\ } &\, #2}} %
\newcommand{\pincRSEXTJud}[5]
{\xymatrix@1{#1\, \ar@{>>}[r#4% 0 <= n occurrences of r
]^{\ #3\ } &#5% 0 <= n occurrences of &
\, #2}} %
\newcommand{\pincCong}{\approxeq} %
\newcommand{\pincact}{\mathsf{a}} %
\newcommand{\pinccom}{\mathsf{c}} %
\newcommand{\pinccntxp}{\mathsf{ctx}} %
\newcommand{\pincpi}{\mathsf{p_i}} %
\newcommand{\pincpe}{\mathsf{p_e}} %
\newcommand{\pinctran}{\mathsf{trn}} %
\newcommand{\pincrefl}{\mathsf{rfl}} %
\newcommand{\pincZer}{\mathbf{0}} %
\newcommand{\pincSec}[2]{{#1}.{#2}} %
\newcommand{\pincPar}[2]{#1\mid#2} %
\newcommand{\pincNu}[2]{#2|_{#1}} %
\newcommand{\pincNuPi}[2]{(\nu #1){#2}} %
\newcommand{\pincPreT}{\epsilon} % synchro prefix
\newcommand{\pincLabT}{\epsilon} % structural interaction
\newcommand{\pincLabL}{\mathfrak{l}} % Generic label
\newcommand{\pincLabM}{\mathfrak{m}}
\newcommand{\pincLabN}{\mathfrak{n}}
\newcommand{\pincLabnL}{\vlne{\mathfrak{l}}}
\newcommand{\pincE}{E} % Pi terms names
\newcommand{\pincF}{F}
\newcommand{\pincG}{G}
\newcommand{\pincH}{H}
\newcommand{\pinca}{a} % Pi channel names
\newcommand{\pincaRed}{\textcolor{Red}{a}}
\newcommand{\pincb}{b}
\newcommand{\pincbBlu}{\textcolor{Blue}{b}}
\newcommand{\pincc}{c}
\newcommand{\pincd}{d}
\newcommand{\pincna}{{\overline{a}}} % Pi channel co-names
\newcommand{\pincnaRed}{\textcolor{Red}{\pincna}}
\newcommand{\pincnb}{{\overline{b}}}
\newcommand{\pincnbBlu}{\textcolor{Blue}{\pincnb}}
\newcommand{\pincnc}{{\overline{c}}}
\newcommand{\pincalpha}{\alpha} % Pi meta names prefixes
\newcommand{\pincbeta}{\beta}
\newcommand{\pincgamma}{\gamma}
\newcommand{\pincAct}{s}
\newcommand{\pincFN}[1]{\operatorname{fn}(#1)} % Free names set
\newcommand{\cutfree}{cut-free\xspace}
\newcommand{\cutelimination}{cut-elimination\xspace}
\newcommand{\Deepinference}{Deep Inference\xspace}
\newcommand{\rightcontext}{right-context}
\newcommand{\Rightcontext}{Right-context}
\newcommand{\invertible}{invertible}
\newcommand{\Invertible}{Invertible}
\newcommand{\invertiblestructure}{\invertible\ structure}
\newcommand{\coinvertible}{co-invertible}
\newcommand{\coinvertiblestructure}{\coinvertible\ structure}
\newcommand{\canonical}{canonical}
\newcommand{\Canonical}{Canonical}
\newcommand{\canonicalstructure}{\canonical\ structure}
\newcommand{\Canonicalstructure}{\Canonical\ structure}
\newcommand{\trivial}{trivial}
\newcommand{\Trivial}{Trivial}
\newcommand{\trivialderivation}{\trivial\ derivation}
\newcommand{\Trivialderivation}{\Trivial\ derivation}
\newcommand{\nontrivial}{non-\trivial\xspace}
\newcommand{\nontrivialderivation}{\nontrivial\ derivation}
\newcommand{\standard}{standard\xspace}
\newcommand{\Standard}{Standard\xspace}
\newcommand{\standardderivation}{\standard\ derivation}
\newcommand{\Standardderivation}{\Standard\ derivation}
\newcommand{\simple}{simple\xspace}
\newcommand{\Simple}{Simple\xspace}
\newcommand{\simplestructure}{\simple\ structure}
\newcommand{\Simplestructure}{\Simple\ structure}
\newcommand{\simpleprocess}{simple process}
\newcommand{\Simpleprocess}{Simple process}
\newcommand{\processstructure}{process structure}
\newcommand{\Processstructure}{Process structure}
\newcommand{\environmentstructure}{environment structure}
\newcommand{\Environmentstructure}{Environment structure}
\newcommand{\lts}{labeled transition system}
\newcommand{\Lts}{Labeled transition system}
\newcommand{\Set}[1]{ \{ #1 \}}
\newcommand{\Size}[1]{|#1|}
\newcommand{\smalltitle}[1]{\small #1}
\newcommand{\ST}{such that}
\newcommand{\ie}{i.e.\xspace}
\newcommand{\wrt}{w.r.t.\xspace}
\newcommand{\dfn}[1]{\emph{#1}}
\newcommand{\grammareq}{::=\ }
\newcommand{\DI}{DI\xspace}
\newcommand{\BV}{\mathsf{BV}\xspace}
\newcommand{\SBV}{\mathsf{SBV}\xspace}
\newcommand{\BVsub}{\mathsf{B}\xspace}
\newcommand{\SBVsub}{\mathsf{B}\xspace}
\newcommand{\BVT}{\BV\mathsf{Q}\xspace}
\newcommand{\BVTL}{{\BV\mathsf{Q}\llcorner}\xspace}
\newcommand{\BVTMin}{{\BV\mathsf{Q}^{-}}\xspace}
\newcommand{\CCSRM}{\mathsf{CCS_{spr}}\xspace} % Milner CCS with restriction
\newcommand{\CCSR}{\mathsf{CCS_{spq}}\xspace} % CCS with Sdq-related restriction
\newcommand{\SBVT}{\SBV\mathsf{Q}\xspace}
\newcommand{\CCS}{\mathsf{CCS}\xspace}
\definecolor{Blue}{rgb}{0,0,1}
\definecolor{Red}{rgb}{1,0.2,0}
\def\@thmcountersep{.}
\def\@thmcounterend{\;}
\renewcommand{\vlscn}[1]{
\!\ifvirginialakesmallbrackets\{\else\left\{\fi #1
  \ifvirginialakesmallbrackets\}\else\right\}\fi   } % context
\newcommand{\vlone}{\circ} % multiplicative unit
\newcommand{\vloneRed}{\textcolor{red}{\circ}} % multiplicative unit
\newcommand{\vlqu}[2]{
  \ifvirginialakesmallbrackets\lceil\else\left\lceil\fi #2
  \ifvirginialakesmallbrackets\rfloor\else\right\rfloor\fi_{#1}
}
\newcommand{\vlfo}[2]{\vlqu{#1}{#2}}
\newcommand{\vlex}[2]{\vlqu{#1}{#2}}
\newcommand{\vlholer}[1]{#1^{\llcorner}} % right context
\newcommand{\OpNameSeq}{\textnormal{\textsf{Seq}}\xspace}
\newcommand{\OpNamePar}{\textnormal{\textsf{Par}}\xspace}
\newcommand{\OpNameCop}{\textnormal{\textsf{CoPar}}\xspace}
\newcommand{\OpNameRen}{\textnormal{\textsf{Sdq}}\xspace}
\newcommand{\OpNameTen}{\textnormal{\textsf{Tensor}}\xspace}
\newcommand{\OpNameNot}{\textnormal{\textsf{Not}}\xspace}
\renewcommand{\vlne}[1]{\overline{#1}}
\newcommand{\strFN}[1]{\operatorname{fn}(#1)} % free names in structures
\newcommand{\strBN}[1]{\operatorname{bn}(#1)} % bound names in structures
\newcommand{\strP}{P}
\newcommand{\strR}{R}
\newcommand{\strS}{S}
\newcommand{\strSb}{\breve{S}}
\newcommand{\strSc}{\tilde{S}}
\newcommand{\strT}{T}
\newcommand{\strU}{U}
\newcommand{\strV}{V}
\newcommand{\atma}{a} % atoms
\newcommand{\atmaColor}[1]{\textcolor{#1}{\atma}}
\newcommand{\atmaRed}{\atmaColor{Red}}
\newcommand{\atmb}{b}
\newcommand{\atmbColor}[1]{\textcolor{#1}{\atmb}}
\newcommand{\atmbRed}{\atmbColor{Red}}
\newcommand{\atmbBlu}{\atmbColor{Blue}}
\newcommand{\atmc}{c}
\newcommand{\atmcColor}[1]{\textcolor{#1}{\atmc}}
\newcommand{\atmcRed}{\atmcColor{Red}}
\newcommand{\atmd}{d}
\newcommand{\natma}{\vlne\atma} % negated atoms
\newcommand{\natmaColor}[1]{\textcolor{#1}{\natma}}
\newcommand{\natmaRed}{\natmaColor{Red}}
\newcommand{\natmaBlu}{\natmaColor{Blue}}
\newcommand{\natmb}{\vlne\atmb}
\newcommand{\natmbColor}[1]{\textcolor{#1}{\natmb}}
\newcommand{\natmbBlu}{\natmbColor{Blue}}
\newcommand{\natmc}{\vlne\atmc}
\newcommand{\natmcColor}[1]{\textcolor{#1}{\natmc}}
\newcommand{\natmcBlu}{\natmcColor{Blue}}
\newcommand{\atmLabL}{\mathfrak{l}} %% generic names
\newcommand{\atmLabM}{\mathfrak{m}}
\newcommand{\atmLabN}{\mathfrak{n}}
\newcommand{\atmalpha}{\alpha} % Pi meta names prefixes
\newcommand{\atmbeta}{\beta}
\newcommand{\subst}[2]{\{^{#1}\!/\!_{#2}\}} % substitutes #1 for #2
\newcommand{\bvtrdrule}{\mathsf{u}\downarrow}
\newcommand{\bvtrurule}{\mathsf{u}\uparrow}
\newcommand{\bvtrhorule}{\rho}
\newcommand{\bvtatrdrule}{\mathsf{at}\downarrow\llcorner}
\newcommand{\bvtintdrule}{\mathsf{i}\downarrow}
\newcommand{\bvtseqdrule}{\mathsf{q}\downarrow}
\newcommand{\bvtsequrule}{\mathsf{q}\uparrow}
\newcommand{\bvtatidrule}{\mathsf{ai}\downarrow}
\newcommand{\bvtatiurule}{\mathsf{ai}\uparrow}
\newcommand{\bvtswirule }{\mathsf{s}}
\newcommand{\bvtrdrulein}{\mbox{$\mathsf{u}\!\!\downarrow$}}
\newcommand{\bvtrurulein}{\mbox{$\mathsf{u}\!\!\uparrow$}}
\newcommand{\bvtrhorulein}{\rho}
\newcommand{\bvtatrdrulein}{\mbox{$\mathsf{at}\!\!\downarrow\!\!\llcorner$}}
\newcommand{\bvtintdrulein}{\mbox{$\mathsf{i}\!\!\downarrow$}}
\newcommand{\bvtseqdrulein}{\mbox{$\mathsf{q}\!\!\downarrow$}}
\newcommand{\bvtsequrulein}{\mbox{$\mathsf{q}\!\!\uparrow$}}
\newcommand{\bvtatidrulein}{\mbox{$\mathsf{ai}\!\!\downarrow$}}
\newcommand{\bvtatiurulein}{\mbox{$\mathsf{ai}\!\!\uparrow$}}
\newcommand{\bvtswirulein }{\mathsf{s}}
\newcommand{\bvtDder}{\mathcal{D}} % derivation names
\newcommand{\bvtEder}{\mathcal{E}}
\newcommand{\bvtPder}{\mathcal{P}} % proof names
\newcommand{\bvtQder}{\mathcal{Q}}
\newcommand{\bvtInfer}[2]{#1: #2}
\newcommand{\bvtJudGen}[2]{\vdash_{#1 % system of reference
}^{#2 % subset of rules
}}
\newcommand{\TITLE}{Communication, and concurrency with logic-based restriction
inside a calculus of structures}
\title
{\TITLE\\
{\small Luca Roversi}\\
{\small Universit\`a di Torino --- Dipartimento di Informatica\footnote
{{\it E-mail}: luca.roversi@unito.it
%\\
%\texttt{http://www.di.unito.it/{\~{}}rover/}
}}}
\begin{document}
\date{}
\maketitle

\begin{abstract}
It is well known that we can use structural proof theory to refine, or generalize, existing paradigmatic computational primitives, or to discover new ones. 
Under such a point of view we keep developing a programme whose goal is establishing a correspondence between proof-search of a logical system and computations in a process algebra.
We give a purely logical account of a process algebra operation which strictly includes the behavior of restriction on actions we find in Milner $\CCS$. 
This is possible inside a logical system in the Calculus of Structures of Deep Inference
endowed with a self-dual quantifier. Using proof-search of \cutfree proofs of such a logical system we show how to solve reachability problems in a process algebra that subsumes a significant fragment of Milner $\CCS$.
\end{abstract}
%%%%%%%%%%%%%%%%%%%%%%%%%%%%%%%%%%%%%%%%%%%%%%%%%%%%%%
\section{Introduction}
\label{section:Introduction}
This is a work in structural proof-theory which builds on 
\cite{Brus:02:A-Purely:wd,Roversi:2010-LLCexDI,Roversi:TLCA11,Roversi:unpub2012-I}.
Broadly speaking we aim at using structural proof theory to study primitives of paradigmatic 
programming languages, and to give evidence that some are the natural ones, while others, 
which we might be used to think of as ``given once for all'', can, in fact, be refined or 
generalized. In our case this means to keep developing the programme in 
\cite{Brus:02:A-Purely:wd} whose goal is establishing a correspondence between proof-search of 
a logical system, and computations in a process algebra.
From \cite{Brus:02:A-Purely:wd}, we already know that both
(i) sequential composition of Milner $\CCS$ \cite{Miln:89:Communic:qo} gets modeled by the non commutative logical operator \OpNameSeq of $\BV$ \cite{Gugl:06:A-System:kl}, which is the paradigmatic calculus of structures in \Deepinference, and
(ii) parallel composition of Milner $\CCS$ gets modeled by the commutative logical operator \OpNamePar of $\BV$ so that communication becomes logical annihilation.
This is done under a logic-programming analogy. It says that the terms of a calculus 
$\mathcal{C}$ --- which is a fragment of Milner $ \CCS $ in the case of \cite{Brus:02:A-Purely:wd} --- correspond to formulas of a logical system $ \mathcal{L} $ --- which is $ \BV $ in the case of \cite{Brus:02:A-Purely:wd} ---, and that computations inside 
$\mathcal{C}$ recast to searching \cutfree proofs in $\mathcal{L}$, as summarized in~\eqref{equation:introduction-non-curry-howard-correspondence} here below.
%%%%%%%%%%%
\par\vspace{\baselineskip}\noindent
{\small
  \fbox{
    \begin{minipage}{.974\linewidth}
      \begin{equation}
        \label{equation:introduction-non-curry-howard-correspondence}
		\begin{tabular}{c|c}
\textbf{Paradigmatic calculus} $\mathcal{C}$ & 
     \textbf{Logical system} $\mathcal{L} $
 \\\hline\hline
term             & formula\\\hline
step of computation  & logical rule\\\hline
computation  & searching a \cutfree proof
\end{tabular}
      \end{equation}
    \end{minipage}
  }%fbox
}%\small
\vspace{\baselineskip}\par\noindent
%%%%%%%%%%%%%%%
\paragraph{Contributions.}
We show that in~\eqref{equation:introduction-non-curry-howard-correspondence} we can 
take $\BVT$ \cite{Roversi:2010-LLCexDI,Roversi:TLCA11,Roversi:unpub2012-I} for $\mathcal L$, 
and $ \CCSR $ for $\mathcal C$. The system $\BVT$ extends $\BV$ with a self-dual quantifier, 
while $\CCSR$ is introduced by this work (Section~\ref{section:Communicating, and concurrent processes with restriction}). The distinguishing aspect of $\CCSR$ is its operational 
semantics which subsumes the one of the fragment of Milner $\CCS$ that contains sequential, 
parallel, and restriction operators, and which we identify as $\CCSRM$.
Specifically, the self-dual quantifier of $\CCSR$ allows to relax the operational semantics of the restriction operator in $\CCSRM$ without getting to an inconsistent calculus of 
processes. This is a direct consequence of (the analogous of) the a \cutelimination property for $\BVT$ \cite{Roversi:2010-LLCexDI,Roversi:TLCA11,Roversi:unpub2012-I}.
\par
The main step that allows to take $\BVT$ for $\mathcal L$, and $\CCSR$ for $\mathcal C$ is proving Soundness of $ \BVT $ with respect to $ \CCSR $ (Section~\ref{section:Soundness of BVTCC}). The following example helps explaining what Soundness amounts to.
Let us suppose we want to observe what the following judgment describes:
%%%%%%%%%%%
\par\vspace{\baselineskip}\noindent
{\small
  \fbox{
    \begin{minipage}{.974\linewidth}
    \vspace{-.1cm}
      \begin{equation}
      \label{equation:intro-example-00-term-to-compute}
      \pincLTSJud{\pincNu{\pinca}
                {(\pincPar{(\pincSec{\pinca}
                                    {\pincSec{\pincb}
                                             {\pincE}
                                    })
                          }
                             {(\pincSec{\pincna}
                                       {\pincF})
                             })
                  }
           }
     {\pincNu{\pinca}{(\pincPar{\pincE}{\pincF})}
  }
     {\pincb}
      \end{equation}
    \end{minipage}
  }%fbox
}%\small
\vspace{\baselineskip}\par\noindent
%%%%%%%%%%%%
The process $\pincSec{\pinca}{\pincSec{\pincb}{\pincE}} $ can perform actions $\pinca$, and $\pincb$, in this order, before entering $ \pincE $.
The other process can perform $\pincna$ before entering $\pincF$. In particular,
$ \pincSec{\pinca}{\pincSec{\pincb}{\pincE}} $, and $ \pincSec{\pincna}{\pincF} $ internally communicate when simultaneously firing $\pinca$, and $\pincna$. In any case, firing on 
$\pinca$, or $\pincna$, would remain private because of the outermost restriction $\pincNu{\pinca}{\,\cdot\,} $ which hides both $\pinca$, and $\pincna$ to the environment\footnote{ 
We write something related to Milner $\CCS$. Indeed,  hiding both $\pinca$, and $\pincna$ in Milner $\CCS$ is $\pincNu{\Set{\pinca,\pincna}}{\,\cdot\,}$.}.
The action $\pincb$ is always observable because $\pincb$ differs from $\pinca$.
Of course, we might describe one of the possible dynamic evolutions of~\eqref{equation:intro-example-00-term-to-compute} thanks to a suitable \lts\ able to develop a derivation like~\eqref{equation:intro-example-00}:
%%%%%%%%%%%
\par\vspace{\baselineskip}\noindent
\fbox{
 \begin{minipage}{.974\linewidth}
    {\small
      \vspace{-.3cm}
      \begin{equation}
        \label{equation:intro-example-00}
        \vlderivation{
 \vliin{ %\pinctran
       }{}
	   {
        \pincLTSJud{\pincNu{\pinca}
	                       {(\pincPar{(\pincSec{\pinca}
	                                           {\pincSec{\pincb}
	                                                    {\pincE}
	                                           })
	                                 }
                                     {(\pincSec{\pincna}
                                              {\pincF})
                                     })
                          }
                   }
		           {\pincNu{\pinca}{(\pincPar{\pincE}{\pincF})}
			       }
		       {\pincPreT;\pincb\,\pincCong\,\pincb}
       }
       {%%A
        \vlin{ %\pincpe
             }
             {\!\pincLabT\!\not\equiv\!\pinca}
	          {
               \pincLTSJud{\pincNu{\pinca}
	                              {(\pincPar{(\pincSec{\pinca}
					                                  {\pincSec{\pincb}{\pincE}})
				                            }
				                            {(\pincSec{\pincna}{\pincF})})
	                              }
                         }
		                  {\pincNu{\pinca}{(\pincPar{(\pincSec{\pincb}{\pincE})}
		                                     		{\pincF})}
		                  }
		                  {\pincPreT}
             } 
             {
	            \vliin{ %\pinccom
	                  }{}
    		          {
                       \pincLTSJud{
    		                       \pincPar{(\pincSec{\pinca}
		                                             {\pincSec{\pincb}{\pincE}})}
				                           {(\pincSec{\pincna}{\pincF})}
			                      }
			                      {\pincPar{(\pincSec{\pincb}{\pincE})}{\pincF}
			                      }
		                          {\pincPreT}
  		              } 
                      {%%%A.1-------
		                 \vlin{%\pincact
		                      }{}
		                      {\pincLTSJud{\pincSec{\pinca}
		                                           {\pincSec{\pincb}{\pincE}}
			                              }
			                              {\pincSec{\pincb}{\pincE}}
		                                  {\pinca}
		                      } {
		                 \vlhy{}}
		              }
		              {%%%A.2-------
		                 \vlin{%\pincact
		                      }{}
		                      {\pincLTSJud{\pincSec{\pincna}{\pincF}}{\pincF}{\pincna}
		                      }{
		                 \vlhy{}}
		              }
              }
   }%%A-stop
   {%%%B
    \vlin{%\pincpe
         }{\!\pincb\!\not\equiv\!\pinca}
   		 {\pincLTSJud{\pincNu{\pinca}{(\pincPar{(\pincSec{\pincb}{\pincE})}{\pincF})}
		             }
   			         {
   			          \pincNu{\pinca}{(\pincPar{\pincE}{\pincF})}
   			         }
   			         {\pincb}
   	      }   {
    \vlin{%\pinccntxp
         }{}
   		 {\pincLTSJud{
   		              \pincPar{(\pincSec{\pincb}{\pincE})}{\pincF}
   		             }
   			         {\pincPar{\pincE}{\pincF}
   			         }
   			         {\pincb}
   	      }   {
    \vlin{ %\pincact
         }{}
   		 {\pincLTSJud{\pincSec{\pincb}{\pincE}}
   			         {\pincE}
   			         {\pincb}
   	      }   {
    \vlhy{}   }}}
   }%%%B-stop
}%%%vlderivation---stop

      \end{equation}
    }%\small
    \end{minipage}
  }%fbox
%%%%%%%%%%%%%%%
\vspace{\baselineskip}\par\noindent
Soundness says that instead of rewriting
$\pincSec{\pinca}{\pincSec{\pincb}{\pincE}}$ to $\pincSec{\pincna}{\pincF}$, as in~\eqref{equation:intro-example-00}, we can 
(i) compile the whole judgment
$\pincLTSJud{\pincNu{\pinca}
		            {(\pincPar{(\pincSec{\pinca}
		                                {\pincSec{\pincb}
		                                         {\pincE}
		                                })
		                      }
		                      {\pincSec{\pincna}
		                               {\pincF}
		                      })
		            }
            }
			{\pincNu{\pinca}{(\pincPar{\pincE}{\pincF})}}
			{\pincb}$
to a structure, say $\strR$, of $\BVT$, and
(ii) search for a \cutfree proof, say $\bvtPder$ of $\strR$, and 
(iii) if $\bvtPder$ exists, then Soundness assures that~\eqref{equation:intro-example-00-term-to-compute} holds.
So, in general, Soundness recasts the reachability problem ``Is it true that $\pincLTSJudShort{\pincE}{\pincF}{\pincalpha}$'' to a problem of proof search. 
Noticeably, the Soundness we prove poses weaker constraints on the form of $\pincF$ than those ones we find in Soundness of~\cite{Brus:02:A-Purely:wd}. Specifically, 
only the silent process $\pincZer$ can be the target of the reachability problem 
in~\cite{Brus:02:A-Purely:wd}. Here, $\pincF$ can belong to the set of \emph{\simpleprocess es} which contains $\pincZer$. Intuitively, every \simpleprocess\ different from $\pincZer$ is normal with respect to internal communication, but is alive if we consider the external ones. Finally, from a technical standing point, our proof of Soundness in neatly decomposed in steps that makes it reusable for further extensions of both $\BVT$, and $\CCSRM$.
%%%%
\paragraph{Road map.}
Section~\ref{section:Systems SBVT and BVT} recalls $\BVT$ and its symmetric version $ \SBVT $ mainly from \cite{Roversi:unpub2012-I}.
Section~\ref{section:Standardization inside BVT} is about two proof-theoretical properties of $\BVT$ which were not proved in 
\cite{Roversi:2010-LLCexDI,Roversi:TLCA11,Roversi:unpub2012-I} but which Soundness relies on. The first one says that every \OpNameTen-free derivations of $\BVT$ has at least corresponding  \emph{\standard} one. The second one supplies sufficient conditions for a structure of $\BVT$ to be invertible, somewhat internalizing derivability of $\BVT$.
Section~\ref{section:Communication core of BVT} has the pedagogical aim of showing, with many examples, why the derivations of $ \BVT $ embody a computational meaning.
Section~\ref{section:Communicating, and concurrent processes with restriction} introduces $\CCSR$, namely the process calculus that $\BVT$ embodies.
Section~\ref{section:How computing  in CCSR by means of BVTCC} first formalizes the connections between $\BVT$, and $\CCSR$. Then it shows how computations inside the \lts\ of $\CCSR$ recast to proof-search inside $\BVT$, justifying the need to prove Soundness.
Section~\ref{section:Soundness of BVTCC} proves Soundness, starting with a pedagogical overview of what proving it means. Section~\ref{section:Final discussion, and future work} points to future work, mainly focused on $\CCSR$.
\section{Recalling the systems $\SBVT$ and $\BVT$}
\label{section:Systems SBVT and BVT}
We briefly recall $ \SBVT $, and $\BVT $ from \cite{Roversi:unpub2012-I}.
\paragraph{Structures.}
Let $\atma, \atmb, \atmc, \ldots$ denote the elements of a countable set of
\dfn{positive propositional variables}. Let $\natma, \natmb, \natmc, \ldots$
denote the elements of a countable set of \dfn{negative propositional variables}.
The set of \dfn{names}, which we range over by $\atmLabL, \atmLabM$, and
$\atmLabN$, contains both positive, and negative propositional variables,
and nothing else. Let $\vlone$ be a constant, different from any name, which we call \dfn{unit}. The set of \dfn{atoms} contains both names and the unit, while the set of
\dfn{structures} identifies formulas of $\SBV$. Structures belong to
the language of the grammar in~\eqref{fig:BVT-structures}.
%%%%%%%%%%%%%%%%
\par\vspace{\baselineskip}\noindent
{\small
  \fbox{
   \begin{minipage}{.974\linewidth}
      \begin{equation}
       \label{fig:BVT-structures}
       \strR  \grammareq \vlone
              \ \ \mid \ \ \atmLabL
              \ \ \mid \ \ \vlne{\strR}
              \ \ \mid \ \ \vlrobrl\strR\vlte\strR\vlrobrr
              \ \ \mid \ \ \vlsbr<\strR;\strR>
              \ \ \mid \ \ \vlsqbrl\strR\vlpa\strR\vlsqbrr
              \ \ \mid \ \ \vlfo{\atma}{\strR}
      \end{equation}
%       \vspace{.2cm}
    \end{minipage}
  }%fbox
}%\small
\par\vspace{\baselineskip}\noindent
%%%%%%%%%%%%%%%%
We use 
$
\strR, \strT, \strU, \strV$ to range over structures, in which 
$\vlne{\strR}$ is a \OpNameNot, 
$\vlrobrl\strR\vlte\strT\vlrobrr$ is a \OpNameCop, 
$\vlsbr<\strR;\strT>$ is a \OpNameSeq, 
$\vlsqbrl\strR\vlpa\strT\vlsqbrr$ is a \OpNamePar, and 
$\vlfo{\atma}{\strR}$ is a self-dual quantifier \OpNameRen, which
comes with the proviso that $\atma$ must be a positive atom. Namely, 
$\vlfo{\natma}{\strR}$ is not in the syntax. \OpNameRen\
induces obvious notions of \dfn{free}, and \dfn{bound names} \cite{Roversi:unpub2012-I}.
%%%%%%%%%%%%%
\paragraph{Size of the structures.}
The \dfn{size} $\Size{\strR}$ of $\strR$ is the number of occurrences
of atoms in $\strR$ plus the number of occurrences of \OpNameRen\ that effectively
bind an atom. For example, $\vlstore{\vlsbr[\atma;\natma]}\Size{\vlread} =
\vlstore{\vlfo{\atmb}{\vlsbr[\atma;\natma]}}\Size{\vlread}=2$, while 
$\vlstore{\vlfo{\atma}{\vlsbr[\atma;\natma]}} \Size{\vlread}=3$.
%%%%
\paragraph{(Structure) Contexts.}
We denote them by $\strS\vlhole$. A context is a structure with a single hole
$\vlhole$ in it. If $\strS\vlscn{\strR}$, then $\strR$ is a \dfn{substructure} of
$\strS$. We shall tend to shorten
$\vlstore{\vlsbr[\strR;\strU]}\strS\vlscn{\vlread}$ as $\strS{\vlsbr[\strR;\strU]}$
when ${\vlsbr[\strR;\strU]}$ fills the hole $\vlhole$ of $\strS\vlhole$ exactly.
%%%%%%%%%%%%%
\paragraph{Congruence $\approx$ on structures.}
Structures are partitioned by the smallest congruence $\approx$ we obtain as
reflexive, symmetric, transitive and contextual closure of the relation $\sim$ whose
defining clauses are \eqref{align:negation-atom}, through \eqref{align:alpha-symm} here
below.
%%%%%%%%%%%%%
\par\vspace{\baselineskip}\noindent
  \fbox{
    \begin{minipage}{.974\linewidth}
%      \vspace{-.3cm}
     {\small
       \begin{minipage}{.48\textwidth}
      \begin{center}
      %%%%%%
      \smalltitle{\textbf{Negation}}%
{%do not erase
 \vlstore{
   \label{align:negation-atom}
   \vlne{\vlone} &
   \sim & \vlone
   \\
   \label{align:negation-negation}
   \vlne{\vlne \strR} &
   \sim & \strR
   \\
   \label{align:negation-pa}
   \vlne{\vlsbr[\strR;\strT]} &
   \sim & {\vlsbr(\vlne{\strR};\vlne{\strT})}
   \\
   \label{align:negation-co}
   \vlne{\vlsbr(\strR;\strT)} &
   \sim & {\vlsbr[\vlne{\strR};\vlne{\strT}]}
   \\
   \label{align:negation-seq}
   \vlne{\vlsbr<\strR;\strT>} &
   \sim & {\vlsbr<\vlne{\strR};\vlne{\strT}>}
   \\
   \label{align:negation-fo}
   \vlne{\vlfo{\atma}{\strR}} &
   \sim & \vlex{\atma}{\vlne{\strR}}
}%\vlstore
{\setlength{\arraycolsep}{2pt}
\begin{eqnarray}
 \vlread
\end{eqnarray}}
 }%do not erase
      %%%%%%

      %%%%%%
      \smalltitle{\textbf{Symmetry}}%
{%do not erase
  \vlstore{
         \label{align:symm-pa}
         \vlsbr[\strR;\strT] & \sim & \vlsbr[\strT;\strR]
         \\
         \label{align:symm-co}
         \vlsbr(\strR;\strT) & \sim & \vlsbr(\strT;\strR)
  }%\vlstore
{\setlength{\arraycolsep}{2pt}
\begin{eqnarray}
 \vlread
\end{eqnarray}}
}%do not erase
      %%%%%%
%
      %%%%%%
%       \input{./BV2-structure-equivalences-singleton}
      %%%%%%
      \end{center}
\end{minipage}
%%%% 2nd column
\begin{minipage}{.5\textwidth}
      \begin{center}
%         \vspace{2\baselineskip}
      %%%%%%
      \smalltitle{\textbf{Associativity}}%
{%do not erase
 \vlstore{
   \label{align:assoc-co}
   \vlsbr(\strR;(\strT;\strV))
   & \sim &
   \vlsbr((\strR;\strT);\strV)
   \\
   \label{align:assoc-se}
   \vlsbr<\strR;<\strT;\strV>>
   & \sim &
   \vlsbr<<\strR;\strT>;\strV>
   \\
   \label{align:assoc-pa}
   \vlsbr[\strR;[\strT;\strV]]
   & \sim &
   \vlsbr[[\strR;\strT];\strV]
 }%\vlstore
{\setlength{\arraycolsep}{2pt}
\begin{eqnarray}
 \vlread
\end{eqnarray}}
}%do not erase
      %%%%%%

      %%%%%%
      \smalltitle{\textbf{Unit}}%
{%do not erase
  \vlstore{
   \label{align:unit-co}
          \vlsbr(\vlone;\strR)  & \sim  &\strR\\
   \label{align:unit-seq}
          \vlsbr<\vlone;\strR>  & \sim  &\vlsbr<\strR;\vlone>
                                           \sim  \strR\\
   \label{align:unit-pa}
          \vlsbr[\vlone;\strR]  & \sim  &\strR
  }%\vlstore
{\setlength{\arraycolsep}{2pt}
\begin{eqnarray}
 \vlread
\end{eqnarray}}
}%do not erase
      %%%%%

      %%%%%%
      \smalltitle{$\alpha$-\textbf{rule}}%
{%do not erase
    \vlstore{
%         \label{align:alpha-intro-vlone}
%	  \vlfo{\atma}{\vlone} &
%	  \sim \vlone\\
	\label{align:alpha-intro}
	  \vlfo{\atma}{\strR} &
	  \sim & \strR
	  \quad\quad\ \textrm{ if } \atma\not\in\strFN{\strR}\\
        \label{align:alpha-varsub}
	  \vlfo{\atma}{\strR\subst{\atma}{\atmb}} &
	  \sim & \vlfo{\atmb}{\strR}
	  \quad \textrm{ if } \atma\not\in\strFN{\strR}
	  \\
        \label{align:alpha-symm}
	  \vlfo{\atma}{\vlfo{\atmb}{\strR}} &
	  \sim & \vlfo{\atmb}{\vlfo{\atma}{\strR}}
%         \label{align:alpha-extrusion}
%	  \vlfo{\atma}{\vlsbr[\strR;\strT]} &
%	  \sim \vlsbr[\vlfo{\atma}{\strR};\strT]
%	  & \textrm{ if } \atma\not\in\strFN{\strT}
         }%\vlstore
{\setlength{\arraycolsep}{2pt}
\begin{eqnarray}
 \vlread
\end{eqnarray}}
}%do not erase
      %%%%%%
      \end{center}
\end{minipage}
     }%\small
     \vspace{-.5cm}
    \end{minipage}
  }%fbox
\par\vspace{\baselineskip}\noindent
%%%%%%%%%%%%%
\emph{Contextual closure}
means that ${\strS{\vlscn{\strR}}} \approx {\strS{\vlscn{\strT}}}$ whenever $\strR
\approx \strT$.
Thanks to \eqref{align:alpha-symm}, we abbreviate 
$ \vlfo{\atma_n}{\vldots\vlfo{\atma_1}{\strR}\vldots} $ as  $ \vlfo{\vec{\atma}}{\strR}$, where we may also interpret $ \vec{\atma} $ as one of the permutations of $ \atma_1, \ldots, \atma_n $.
%%%%%
\paragraph{\Canonicalstructure s.}
\label{paragraph:Structures in canonical form}
We inspire to the normal forms of \cite{Gugl:06:A-System:kl} to define structures in
\dfn{\canonical form} inside $\SBVT$. \Canonicalstructure s will be used to define \environmentstructure s (Section~\ref{section:How computing  in CCSR by means of BVTCC}, page~\pageref{paragraph:Environmentstructure s}.)
A structure $\strR$ is \emph{\canonical} when either it is the unit $\vlone$, or the
following four conditions hold:
(i) the only negated structures appearing in $\strR$ are negative propositional variables,
(ii) no unit $\vlone$ appears in $\strR$, but at least one name occurs in
it,
(iii) the nesting of occurrences of \OpNamePar, \OpNameTen, \OpNameSeq, and \OpNameRen  build a right-recursive syntax tree of $\strR$, and
(iv) no occurrences of \OpNameRen can be eliminated from $\strR$, while maintaining
the equivalence.
%%%%%
\begin{example}[\textbf{\textit{\Canonicalstructure s}}]
\label{example:Canonical structures}
The structure $\vlstore{\vlsbr[(\natma;\natmb);\vlfo{\atmc}{\natmc}]}\vlread$ is not \canonical,
but it is equivalent to the \canonical\
one $\vlstore{\vlsbr[\natma;(\natmb;\vlfo{\atmc}{\natmc})]}\vlread$ whose syntax tree is right-recursive. Other non \canonicalstructure s are
$\vlstore{\vlsbr[\atma;(\vlone;\atmb)]}\vlne{\vlread}$, and
$\vlsbr(\vlne{\vlread};<\vlone;\natmb>)$, and 
$\vlstore{\vlsbr[\natma;(\natmb;\vlfo{\atmd}{\natmc})]}\vlread$. The first two are equivalent to $\vlsbr(\natma;\natmb)$ which, instead, is \canonical. Finally, also
$\vlstore{\vlsbr[\atma;\vlone]}\vlread$ is not \canonical, equivalent to the
\canonical\ one $\atma$.
\end{example}
%%%%%%
\begin{fact}[\textbf{\textit{Normalization to \canonicalstructure s}}]
\label{fact:Structures normalize to canonical forms}
Given a structure $\strR$:
(i) negations can move inward to atoms, and, possibly, disappear,
thanks to \eqref{align:negation-atom}, \ldots, \eqref{align:negation-fo},
(ii) units can be removed thanks to \eqref{align:unit-co}, \ldots,
\eqref{align:unit-pa}, and
(iii) brackets can move rightward by \eqref{align:assoc-co}, \ldots,
\eqref{align:assoc-pa}.
\par
So, for every $\strR$ we can take the equivalent \canonicalstructure\ which
is either $\vlone$, or different from $\vlone$.
\end{fact}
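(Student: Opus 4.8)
The plan is to realise the three mechanisms listed in the statement as a single terminating rewriting procedure: I read each defining clause of $\sim$ that I need as a rewrite rule, oriented in one fixed direction and applied to a substructure inside a context. Since $\approx$ is the contextual (as well as reflexive, symmetric, and transitive) closure of $\sim$, every such step keeps the current structure $\approx$-congruent to the original $\strR$; so it suffices to drive the procedure to a normal form and check that this normal form is either $\vlone$ or satisfies conditions (i)--(iv) of the definition of a \canonicalstructure. I would split the work into four phases, each with its own termination measure, and then verify that no later phase spoils the invariant secured by an earlier one.

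I would run the phases in the order negations, units, vacuous quantifiers, right-association. For \emph{negations}, orienting \eqref{align:negation-atom}--\eqref{align:negation-fo} left-to-right pushes each $\vlne{\cdot}$ through one connective towards the leaves, or erases it; the sum, over all occurrences of $\vlne{\cdot}$, of the size of the structure it negates strictly decreases, and at a leaf \eqref{align:negation-atom} and \eqref{align:negation-negation} leave only a unit, a positive name, or a negative propositional variable --- this is condition (i). For \emph{units}, orienting \eqref{align:unit-co}--\eqref{align:unit-pa} as deletions of a $\vlone$ strictly lowers the number of occurrences of $\vlone$: either the structure collapses to $\vlone$, and we are done, or no $\vlone$ is left while some name remains, which is condition (ii). For \emph{vacuous quantifiers}, orienting \eqref{align:alpha-intro} to delete any $\vlfo{\atma}{\strU}$ with $\atma \notin \strFN{\strU}$ strictly lowers the number of quantifier occurrences and gives condition (iv). For \emph{right-association}, orienting \eqref{align:assoc-co}--\eqref{align:assoc-pa} to rewrite $\vlsbr((\strR;\strT);\strV)$ into $\vlsbr(\strR;(\strT;\strV))$, and likewise for \OpNameSeq and \OpNamePar, makes the syntax tree right-recursive, which is condition (iii); a weight that sums the sizes of all left subtrees handles termination.

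The four invariants coexist because none of the last three phases reintroduces a negation, the last two never create a $\vlone$, and neither pruning nor re-association changes the free-name set $\strFN{\cdot}$ of any quantifier scope; hence (i), (ii), and (iv) survive to the end, and the normal form is canonical and $\approx$-congruent to $\strR$. The one genuine subtlety --- and the reason the phase order is not arbitrary --- is the interaction between quantifier pruning and right-association: deleting $\vlfo{\atma}{\strU}$ can expose a left-nested associativity redex and thereby break right-recursion, whereas re-association can never turn a live quantifier into a vacuous one; pruning must therefore precede, and not follow, the associativity phase. Everything else is routine termination and invariant-preservation bookkeeping.
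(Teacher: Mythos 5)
Your overall plan --- orient the clauses of $\sim$ as rewrite rules, run them in phases each with its own termination measure, then check that later phases preserve the invariants of earlier ones --- is the right reading of this Fact, which the paper itself justifies only by the enumeration of mechanisms (i)--(iii) in the statement. You even repair a real omission: those three mechanisms alone cannot secure condition (iv) of canonicity, so your extra pruning phase based on \eqref{align:alpha-intro} is genuinely necessary.

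However, your phase ordering has a concrete bug, of exactly the kind you flag for pruning versus associativity, just one interface earlier: units must come \emph{after} pruning, not before. Take $\strR = \vlsbr[\atmb;\vlfo{\atma}{\vlone}]$. Your phase 2 finds no redex, since none of \eqref{align:unit-co}--\eqref{align:unit-pa} matches either $\vlfo{\atma}{\vlone}$ or the whole structure; this already refutes your claim that phase 2 ends with no $\vlone$ left, because a unit can sit trapped as the entire scope of a quantifier. Phase 3 then prunes $\vlfo{\atma}{\vlone}$ to $\vlone$, exposing $\vlsbr[\atmb;\vlone]$, which phase 4 cannot simplify; the output violates condition (ii), while the canonical form is $\atmb$. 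Your invariant ``the last two phases never create a $\vlone$'' is literally true but not the relevant one: pruning creates no new occurrence of $\vlone$, yet it can expose an existing one to the surrounding context, producing a unit redex after the unit phase has already terminated. The fix is cheap: run pruning before unit removal (safe, because unit removal deletes only occurrences of $\vlone$, hence never changes any free-name set and so cannot make a quantifier vacuous), or iterate the two phases to a joint fixpoint. Two further small repairs: a redex such as $\vlsbr[\strT;\vlone]$ or $\vlsbr(\strT;\vlone)$ needs a symmetry step \eqref{align:symm-pa} or \eqref{align:symm-co} before your oriented unit rules apply; and your negation measure must be plain syntactic size rather than the paper's $\Size{\cdot}$, under which $\Size{\vlsbr[\strT;\strU]} = \Size{\strT}+\Size{\strU}$, so pushing a negation through \OpNamePar (or rewriting $\vlne{\vlone}$ to $\vlone$) would not strictly decrease it.
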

%%%%%%%%%%%%%
\paragraph{The system $\SBVT$.}
It contains the set of inference rules in \eqref{fig:SBVT} here below.
Every rule has form $\vldownsmash{\vlinf{\bvtrhorule}{}{\strR}{\strT}}$, \dfn{name}
$\bvtrhorule$, \dfn{premise} $\strT$, and \dfn{conclusion} $\strR$.
\par\vspace{\baselineskip}\noindent
  \fbox{
    \begin{minipage}{.974\linewidth}
%      \vspace{-.3cm}
     {\small
      \begin{equation}
        \label{fig:SBVT}
        {\small
\begin{tabular}{ccc}
%%%%%%%%%%
{$\vlinf{\bvtatidrule}{}{\vlsbr[\atma;\natma]}
                        {\vlone}$}
&
&
\qquad
{$\vlinf{\bvtatiurule}{}{\vlone}
                       {\vlsbr(\atma;\natma)}$}
\\
\\
{$\vlinf{\bvtseqdrule}{}
        {\vlsbr[<\strR;\strT>;<\strU;\strV>]}
        {\vlsbr<[\strR;\strU];[\strT;\strV]>}$}
&\qquad
{$\vlinf{\bvtswirule}{}
        {\vlsbr[(\strR;\strU);\strT]}
        {\vlsbr([\strR;\strT];\strU)}$}
&\qquad
{$\vlinf{\bvtsequrule}{}
        {\vlsbr<(\strR;\strU);(\strT;\strV)>}
        {\vlsbr(<\strR;\strT>;<\strU;\strV>)}$}
%%%%%%%%%
\\ \\
%%%%%%%%%%
{$\vlinf{\bvtrdrule}{}
       {\vlsbr[{\vlfo{\atma}{\strR}};{\vlex{\atma}{\strU}}]}
       {\vlfo{\atma}{\vlsbr[\strR;\strU]}}$}
&\qquad&
\qquad
{$\vlinf{\bvtrurule}{}
       {\vlex{\atma}{\vlsbr(\strR;\strU)}}
       {\vlsbr({\vlex{\atma}{\strR}};{\vlfo{\atma}{\strU}})}$}
%%%%%%%%%
\end{tabular}
}
      \end{equation}
     }%\small
%      \vspace{-.5cm}
    \end{minipage}
  }%fbox
%%%%%%%%%%%%
\paragraph{Derivations vs. proofs.}
A \dfn{derivation} in $\SBVT$ is either a structure or an instance of the above
rules or a sequence of two derivations. Both $\bvtDder$, and $\bvtEder$ will range
over derivations. The topmost structure in a derivation is
its \dfn{premise}. The bottommost is its \dfn{conclusion}. The \dfn{length}
$\Size{\bvtDder}$ of a derivation $\bvtDder$ is the number of rule instances in
$\bvtDder$. A derivation $\bvtDder$ of a structure $\strR$ in $\SBVT$ from a
structure $\strT$ in $\SBVT$, only using a subset $\SBVsub\subseteq\SBVT$ is
$\vlderivation                  {
\vlde{\bvtDder}{\SBVsub}{\strR}{
\vlhy                   {\strT}}}
$.
The equivalent \emph{space-saving} form is
$\bvtInfer{\bvtDder}{\strT \bvtJudGen{\SBVsub}{}\strR}$.
The derivation
$\vlupsmash{
\vlderivation                  {
\vlde{\bvtDder}{\SBVsub}{\strR}{
\vlhy                   {\strT}}}}$
is a \dfn{proof} whenever $\strT\approx \vlone$. We denote it as
$\vlupsmash{
\vlderivation                  {
\vlde{\bvtPder}{\SBVsub}{\strR}{
\vlhy                   {\vlone}}}}$, or
$\vlupsmash{\vlproof{\bvtPder}{\SBVsub}{\strR}}$,
or $\bvtInfer{\bvtPder}{\ \bvtJudGen{\SBVsub}{}\strR}$. Both $\bvtPder$, and
$\bvtQder$ will range over proofs.
In general, we shall drop $\SBVsub$ when clear from the context.
In a derivation, we write
$
% \vlupsmash{
  \vliqf{\bvtrhorule_1,\ldots,\bvtrhorule_m,n_1,\ldots,n_p}{}{\strR}{\strT}
%  }%\vlupsmash
$,
whenever we use the rules $\bvtrhorule_1,\ldots,\bvtrhorule_m$ to derive $\strR$
from $\strT$ with the help of $n_1,\ldots,n_p$ instances of
\eqref{align:negation-atom}, \ldots, \eqref{align:symm-co}.
To avoid cluttering derivations, whenever possible, we shall tend to omit the use of
negation axioms \eqref{align:negation-atom}, \ldots, \eqref{align:negation-fo},
associativity axioms \eqref{align:assoc-co}, \eqref{align:assoc-se},
\eqref{align:assoc-pa}, and symmetry aximos \eqref{align:symm-pa},
\eqref{align:symm-co}. This means we avoid writing all
brackets, as in $\vlsbr[\strR;[\strT;\strU]]$, in favor of
$\vlsbr[\strR;\strT;\strU]$, for example.
Finally if, for example, $q>1$ instances of some axiom $(n)$ of
\eqref{align:negation-atom}, \ldots, \eqref{align:alpha-symm} occurs among $n_1,\ldots,n_p$, then we write $(n)^q$.
%%%%%%
\paragraph{\dfn{Up} and \dfn{down} fragments of $\SBVT$.}
The set $\Set{\bvtatidrulein, \bvtswirulein, \bvtseqdrulein,
\bvtrdrulein}$ is the \dfn{down fragment} $\BVT$ of $\SBVT$.
The \dfn{up fragment} is $\Set{\bvtatiurulein,\bvtswirulein, \bvtsequrulein,
\bvtrurulein}$. So $\bvtswirulein$ belongs to both.
%%%%%%%%%%%%%
\begin{corollary}[\cite{Roversi:TLCA11,Roversi:unpub2012-I}]
\label{theorem:Admissibility of the up fragment}
The up-fragment 
$\Set{\bvtatiurulein, \bvtsequrulein, \bvtrurulein}$ of $\SBVT$ is admissible for $\BVT$. 
This means that we can transform any proof 
$\bvtInfer{\bvtPder}{\ \bvtJudGen{\SBVT}{}\strR}$ into a proof
$\bvtInfer{\bvtQder}{\ \bvtJudGen{\BVT}{}\strR}$ free of every occurrence of rules
that belong to the up-fragment of $\SBVT$.
\end{corollary}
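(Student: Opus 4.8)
The plan is to prove the statement by the \emph{splitting} technique of the Calculus of Structures, adapting Guglielmi's argument for $\BV$ \cite{Gugl:06:A-System:kl} to the self-dual quantifier $\OpNameRen$. Since $\SBVT$ is exactly $\BVT$ extended with the up-rules $\bvtatiurulein$, $\bvtsequrulein$, $\bvtrurulein$ (the switch $\bvtswirulein$ being common to both fragments, and the cut $\bvtatiurulein$ being already atomic), it suffices to eliminate these three rules one at a time from any $\SBVT$-proof. A naive local permutation of an up-rule below the down-rules fails in deep inference, because structures have no distinguished main connective on which to pivot; so instead I would establish a global decomposition result and read admissibility off from it.

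First I would prove a \textbf{Splitting Lemma} for the down-fragment: whenever a structure with an exposed top-level connective is provable in $\BVT$, the proof decomposes into $\BVT$-proofs of the immediate substructures plus a \emph{killing} derivation built in $\BVT$. Concretely, (i) for \OpNameCop, if $\bvtJudGen{\BVT}{}\vlsbr[(\strR;\strT);\strU]$ then $\strU$ splits as $\strU_1,\strU_2$ with $\bvtJudGen{\BVT}{}\vlsbr[\strR;\strU_1]$, $\bvtJudGen{\BVT}{}\vlsbr[\strT;\strU_2]$, and a derivation $\vlsbr(\strU_1;\strU_2)\bvtJudGen{\BVT}{}\strU$; (ii) for \OpNameSeq, the analogous statement respecting the order imposed by the non-commutative $\vlsbr<\strR;\strT>$; (iii) for \OpNameRen, if $\bvtJudGen{\BVT}{}\vlsbr[{\vlfo{\atma}{\strR}};\strU]$ then the binder can be pulled out so as to match a corresponding occurrence of $\vlex{\atma}{\strU}$ in a split of $\strU$. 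Case (iii) is the genuinely new one relative to $\BV$. Each case I would prove by induction on proof length, casing on the bottom-most rule instance and commuting it past the exposed connective, using $\approx$ --- associativity, the unit laws, and in particular the $\alpha$-rules \eqref{align:alpha-intro}--\eqref{align:alpha-symm} --- to keep structures in the shape the induction needs.

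Next I would establish \textbf{Context Reduction}: if $\bvtJudGen{\BVT}{}\strS\vlscn{\strR}$ for an arbitrary context $\strS\vlhole$, then there is a structure $\strU$ (possibly under a prefix of binders $\vlfo{\vec{\atma}}{\strU}$, which is where the self-duality of $\OpNameRen$ is used) with $\bvtJudGen{\BVT}{}\vlsbr[\strR;\strU]$ and with $\bvtJudGen{\BVT}{}\strS\vlscn{\strV}$ for every $\strV$ such that $\bvtJudGen{\BVT}{}\vlsbr[\strV;\strU]$. This reduces reasoning about an arbitrary context to the par-context shape that Splitting consumes. Admissibility then follows by one induction on proof length applied to the top-most up-rule instance. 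For $\bvtatiurulein$, the instance exposes $\vlsbr(\atma;\natma)$ inside a context; Context Reduction isolates it in a par-context, Splitting on the \OpNameCop yields $\BVT$-proofs of $\vlsbr[\atma;\strU_1]$ and $\vlsbr[\natma;\strU_2]$, and the down-rule $\bvtatidrulein$ annihilates the pair, rebuilding a $\BVT$-proof with one fewer up-rule. The cases of $\bvtsequrulein$ and $\bvtrurulein$ are symmetric: each is the exact dual of a down-rule, so Splitting on the structure it produces supplies precisely the components needed to replace it by its down counterpart. Iterating yields Corollary~\ref{theorem:Admissibility of the up fragment}.

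The hard part will be the $\OpNameRen$ cases throughout. The binder forces careful bookkeeping of free and bound names modulo $\approx$, and in Splitting the decomposition must respect name scope so that pulling $\vlfo{\atma}{\strR}$ across the exposed connective never captures a name --- the provisos $\atma\notin\strFN{\strR}$ of \eqref{align:alpha-intro}--\eqref{align:alpha-varsub} are exactly what must be maintained. Because $\OpNameRen$ is self-dual, its up- and down-rules $\bvtrurulein$, $\bvtrdrulein$ have the same shape, so the corresponding Splitting case must handle the binder uniformly whether it sits in positive or negative position; this uniformity, absent from the classical $\BV$ splitting, is where I expect the delicate side-condition reasoning to concentrate, and it is the reason the result is established separately in \cite{Roversi:TLCA11,Roversi:unpub2012-I}.
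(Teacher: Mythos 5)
Your overall strategy --- shallow splitting for \OpNameSeq, \OpNameCop\ and \OpNameRen, then context reduction, then induction on a topmost up-rule instance --- is exactly the strategy of the proof this corollary is imported from: the paper itself gives no proof, citing \cite{Roversi:TLCA11,Roversi:unpub2012-I}, and the splitting statement your first step postulates is recalled essentially verbatim as Proposition~\ref{proposition:Shallow Splitting} in the appendix. So in outline you have reconstructed the intended argument, including the correct observation that the case of the self-dual binder is the only genuinely new one with respect to $\BV$.

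Two steps would nevertheless fail as written. First, your \OpNameCop\ splitting case ends with a killing derivation $\vlsbr(\strU_1;\strU_2)\bvtJudGen{\BVT}{}\strU$, whereas the correct conclusion (Proposition~\ref{proposition:Shallow Splitting}, point~\ref{enum:Shallow-Splitting-copar}) is $\vlsbr[\strU_1;\strU_2]\bvtJudGen{\BVT}{}\strU$, with a \emph{par} premise. Your version is true --- it follows from the correct one, since $\vlsbr(\strU_1;\strU_2)\bvtJudGen{\BVT}{}\vlsbr[\strU_1;\strU_2]$ is one instance of $\bvtswirulein$ --- but it is too weak to close the $\bvtatiurulein$ case: what you can actually manufacture from $\bvtJudGen{\BVT}{}\vlsbr[\atma;\strU_1]$ and $\bvtJudGen{\BVT}{}\vlsbr[\natma;\strU_2]$ is a proof of $\vlsbr[\strU_1;\strU_2]$, which must then be prolonged by the killing derivation to obtain the proof of $\strU$ that context reduction (with $\strV\approx\vlone$) asks for; with a copar premise you are stuck, because $\vlsbr[\strU_1;\strU_2]\bvtJudGen{\BVT}{}\vlsbr(\strU_1;\strU_2)$ is not derivable in $\BVT$ (for instance $\vlsbr[\atma;\natma]$ is provable while $\vlsbr(\atma;\natma)$ is not). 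Second, ``the down-rule $\bvtatidrulein$ annihilates the pair'' is not a proof step: $\bvtatidrulein$ only \emph{creates} dual pairs going downward, and nothing in $\BVT$ erases the exposed $\vlsbr(\atma;\natma)$. The missing mechanism that combines the two proofs delivered by splitting is either Guglielmi's atom-substitution argument --- trace $\natma$ in the proof of $\vlsbr[\natma;\strU_2]$ up to the $\bvtatidrulein$ instance that introduced it, substitute $\strU_1$ for that occurrence from there downward, and plug the proof of $\vlsbr[\atma;\strU_1]$, run in context, where that instance stood --- or, using this paper's own tools, the invertibility of the atom $\atma$ (Proposition~\ref{proposition:Invertible structures are invertible}), which turns $\bvtJudGen{\BVT}{}\vlsbr[\atma;\strU_1]$ into a derivation $\natma\bvtJudGen{\BVT}{}\strU_1$ that, applied inside the context $\vlsbr[\vlhole;\strU_2]$ underneath the proof of $\vlsbr[\natma;\strU_2]$, yields the required proof of $\vlsbr[\strU_1;\strU_2]$. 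Without one of these two devices your induction step cannot discharge the pair, and the analogous combination is equally needed in the $\bvtsequrulein$ and $\bvtrurulein$ cases.
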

%%%%%%%%
\begin{remark}
Thanks to Corollary~\ref{theorem:Admissibility of the up fragment}, we shall always focus on the up-fragment $\BVT$ of $\SBVT$.
\end{remark}
\section{Standardization inside a fragment of $\BVT$}
\label{section:Standardization inside BVT}
Taken a derivation $ \bvtDder $ of $ \BVT $, standardization reorganizes $\bvtDder$ into another derivation $\bvtEder $ with the same premise, and conclusion, as $ \bvtDder $. The order of application of the instances of $ \bvtatidrulein $ in $ \bvtEder $ satisfies a specific, given constraint which some examples illustrate.
Standardization in $ \BVT $ is one of the properties we need to recast reachability problems in a suitable calculus of communicating, and concurrent processes, to proof-search inside (a fragment) of $ \BVT $.
%%%%%%
\begin{example}[\textbf{\textit{\Standardderivation s of $\BVT$}}]
\label{example:Standard proofs of BVT}
Both~\eqref{equation:tracing-sequential-interactions-01},
and~\eqref{equation:tracing-sequential-interactions-00} here below are
\standardderivation s of the same conclusion
$ \vlsbr[<\pincaRed;\strR>;<\pincnbBlu;\strT>;<\pincnaRed;\pincbBlu>]$ from the same premise 
$ \vlsbr[\strR;\strT] $.
%%%%%%%%%
\par\vspace{\baselineskip}\noindent
{\scriptsize
  \fbox{
    \begin{minipage}{.482\linewidth}
       \begin{equation}
        \label{equation:tracing-sequential-interactions-01}
	    \vlderivation                        {
  \vlin{\bvtseqdrule}{}
       {\vlsbr
        [<\atmaRed;\strR>
        ;<\natmbBlu;\strT>
        ;<\natmaRed;\atmbBlu>]}               {
  \vliq{\bvtatidrule}{}
       {\vlsbr
        [<[\atmaRed;\natmaRed]
         ;[\strR;\atmbBlu]>
        ;<\natmbBlu;\strT>]}{
  \vliq{\eqref{align:unit-seq}}{}
       {\vlsbr
        [\strR;\atmbBlu
        ;<\natmbBlu;\strT>]}{
  \vlin{\bvtseqdrule}{}
       {\vlsbr
        [\strR;<\atmbBlu;\vlone>
        ;<\natmbBlu;\strT>]}{
  \vliq{\bvtatidrule
       ,\eqref{align:unit-seq}
       ,\eqref{align:unit-pa}}{}
       {\vlsbr
        [\strR
        ;<[\atmbBlu;\natmbBlu]
        ;[\vlone;\strT]>]}  {
  \vlhy{\vlsbr[\strR;\strT]}}}}}}}
       \end{equation}
    \end{minipage}
    \begin{minipage}{.482\linewidth}
      \begin{equation}
       \label{equation:tracing-sequential-interactions-00}
   	    \vlderivation                        {
  \vlin{\bvtseqdrule}{}
       {\vlsbr
        [<\atmaRed;\strR>
        ;<\natmbBlu;\strT>
        ;<\natmaRed;\atmbBlu>]}      {
  \vliq{\eqref{align:unit-seq}}{}
       {\vlsbr
       [<[\atmaRed;\natmaRed]
        ;[\strR;\atmbBlu]>
       ;<\natmbBlu;\strT>]}{
  \vlin{\bvtseqdrule}{}
       {\vlsbr
       [<[\atmaRed;\natmaRed]
        ;[\strR;\atmbBlu]>
       ;<\vlone;<\natmbBlu;\strT>>]}{
  \vliq{\eqref{align:unit-seq}}{}
       {\vlsbr
       <[\atmaRed;\natmaRed;\vlone]
       ;[\strR;\atmbBlu;<\natmbBlu;\strT>]>}{
  \vlin{\bvtseqdrule}{}
       {\vlsbr
       <[[\atmaRed;\natmaRed]
        ;\vlone]
       ;[\strR
        ;<\atmbBlu;\vlone>
        ;<\natmbBlu;\strT>]>}{
  \vliq{\eqref{align:unit-pa}^2}{}
       {\vlsbr
        <[[\atmaRed;\natmaRed]
         ;\vlone]
        ;[\strR
         ;<[\atmbBlu
           ;\natmbBlu]
          ;[\vlone
           ;\strT]>]>}{
  \vliq{\eqref{align:unit-seq}}{}
       {\vlsbr
        <[\atmaRed;\natmaRed]
        ;[\strR
         ;<[\atmbBlu;\natmbBlu]
         ;\strT>]>}{
  \vlin{\bvtseqdrule}{}
       {\vlsbr
        <[\atmaRed;\natmaRed]
        ;[<\vlone
          ;\strR>
         ;<[\atmbBlu
           ;\natmbBlu]
          ;\strT>]>}{
  \vliq{\eqref{align:unit-pa}}{}
       {\vlsbr
        <[\atmaRed;\natmaRed]
        ;<[\vlone
          ;\atmbBlu
          ;\natmbBlu]
         ;[\strR
          ;\strT]>>}{
  \vlin{\bvtatidrule}{}
       {\vlsbr
        <[\atmaRed;\natmaRed]
        ;<[\atmbBlu;\natmbBlu]
         ;[\strR;\strT]>>}{
  \vlin{\bvtatidrule}{}
       {\vlsbr
        <\vlone
        ;<[\atmbBlu;\natmbBlu]
         ;[\strR;\strT]>>}{
  \vliq{\eqref{align:unit-seq}^2}{}
       {\vlsbr
        <\vlone
        ;<\vlone
         ;[\strR;\strT]>>}{
  \vlhy{\vlsbr[\strR;\strT]}}}}}}}}}}}}}}
       \end{equation}
    \end{minipage}
  }%fbox
}%\small
\par\vspace{\baselineskip}\noindent
%%%%%%%%%
They are \standard because every occurrence of $ \bvtatidrulein $ \emph{does not appear to the right-hand side} of an instance of $ \OpNameSeq $.
\end{example}
%%%%%%%%%%%%
\begin{remark}[\textbf{\textit{Proof-thoeretical meaning of standardization}}]
Standardization says that
(i) any of the structures inside $\strR$, and $\strT$ of $\vlsbr<\strR;\strT>$ will never interact, and 
(ii) all the interactions inside $\strR$ must occur before the interactions inside $\strT$.
\end{remark}
%%%%%%%%%%%%
\paragraph{Our goal} is to show that we can transform 
a \emph{sufficiently large} set of 
derivations in $ \BVT $ into standard ones. 
We start by supplying the main definitions.
%%%%%%%%
\paragraph{\Rightcontext s.} 
We rephrase, inductively, and extend to $\BVT$ the namesake definition in
\cite{Brus:02:A-Purely:wd}. The following grammar generates \emph{\rightcontext s} which we denote as $\vlholer{\strS\vlhole}$.
\par\vspace{\baselineskip}\noindent
{\small
  \fbox{
    \begin{minipage}{.974\linewidth}
      \begin{equation}
       \label{equation:SBV2-right-contexts-inductive}
       \begin{aligned}
\vlholer{\strS\vlhole} 
   & \grammareq
           \vlhole
      \ \mid\ \vlsbr(\vlholer{\strS'\vlhole};\strR)
      \ \mid\ \vlsbr[\vlholer{\strS'\vlhole};\strR]
      \ \mid\ \vlsbr<\vlholer{\strS'\vlhole};\strR>\\
%\end{aligned}
%\qquad\qquad\qquad\qquad
%\begin{aligned}
%\phantom{\vlholer{\strS\vlhole}} & 
      & \phantom{\grammareq\vlhole}
      \,\ \mid\ \vlsbr(\strR;\vlholer{\strS'\vlhole})
      \ \mid\ \vlsbr[\strR;\vlholer{\strS'\vlhole}]
      \ \mid\ \vlfo{\atma}{\vlholer{\strS'\vlhole}}
\end{aligned}

      \end{equation}
    \end{minipage}
  }%fbox
}%\small
%%%
\begin{example}[\textbf{\textit{\Rightcontext s}}]
\label{example:Rightcontexts}
A \rightcontext\ is
$\vlstore{\vlsbr[\atma;\vlfo{\atmc}{[\atmb;<\vlhole;\natmc;\atmd>]}]}\vlread$.\\
Instead,
$\vlstore{\vlsbr[\atma;\vlfo{\atmc}{[\atmb;<\natmc;\vlhole;\atmd>]}]}\vlread$ is
not.
\end{example}
%%%%%%%%%%
\paragraph{Left atomic interaction.} Recalling it from
\cite{Brus:02:A-Purely:wd}, the \emph{left atomic interaction} is:
\par\vspace{\baselineskip}\noindent
{\small
  \fbox{
    \begin{minipage}{.974\linewidth}
      \begin{equation}
       \label{equation:SBV2-atomic-interaction-left}
        \vlderivation{
\vlin{\bvtatrdrulein}{}
      {\vlholer{\strS\vlsbr[\atma;\natma]}}{
\vlhy{\vlholer{\strS\vlscn{\vlone}}}}}

      \end{equation}
    \end{minipage}
  }%fbox
}%\small
\par\vspace{\baselineskip}\noindent
%%%%%%%
\begin{example}[\textbf{\textit{Some left atomic interaction instances}}]
\label{example:}
Let three proofs of $\BVT$ be given:
\par\vspace{\baselineskip}\noindent
{\small
  \fbox{
    \begin{minipage}{.321\linewidth}
     \begin{equation}
       \label{equation:example-standardizable}
        \vlderivation{
\vlin{\bvtatidrule}{}
     {\vlsbr<[\atma;\natma]
 	    ;[\atmb;\natmb]>}{
\vliq{\eqref{align:unit-seq}}{}
     {\vlsbr<\vlone
	    ;[\atmb;\natmb]>}{
   \vlin{\bvtatidrule}{}
     {\vlsbr[\atmb;\natmb]}{
\vlhy{\vlone}}      }}}

      \end{equation}
    \end{minipage}
    \begin{minipage}{.321\linewidth}
      \begin{equation}
       \label{equation:example-standardized}
        \vlderivation{
  \vlin{\bvtatrdrule}{}
      {\vlsbr<[\atma;\natma]
	      ;[\atmb;\natmb]>}{
  \vliq{\eqref{align:unit-seq}}{}
      {\vlsbr<\vlone
	      ;[\atmb;\natmb]>}{
    \vlin{\bvtatrdrule}{}
      {\vlsbr[\atmb;\natmb]}{
  \vlhy{\vlone}}      }}}

     \end{equation}
    \end{minipage}
    \begin{minipage}{.321\linewidth}
      \begin{equation}
      \label{equation:example-non-standardizable}
        \vlderivation{
  \vlin{\bvtatidrule}{}
      {\vlsbr<[\atma;\natma]
	      ;[\atmb;\natmb]>}{
  \vliq{\eqref{align:unit-seq}}{}
      {\vlsbr<[\atma;\natma]
	      ;\vlone>}{
    \vlin{\bvtatrdrule}{}
      {\vlsbr[\atma;\natma]}{
  \vlhy{\vlone}}      }}}

      \end{equation}
    \end{minipage}
  }%fbox
}%\small
\par\vspace{\baselineskip}\noindent
The two occurrences of $\bvtatidrulein$ in
\eqref{equation:example-standardizable} can correctly be seen as two
instances of $\bvtatrdrulein$, as outlined by \eqref{equation:example-standardized}.
Instead, the occurrence of $\bvtatidrulein$ in
\eqref{equation:example-non-standardizable} cannot be seen as an instance of
$\bvtatrdrulein$ as it occurs to the right of \OpNameSeq, namely in the context
$ \vlsbr<[\atma;\natma];\vlhole>$ which is not in~\eqref{equation:SBV2-right-contexts-inductive}.
\end{example}
%%%%
\begin{fact}
\label{fact:atil is ati}
By definition, every occurrence of $\bvtatrdrulein$ is one of $\bvtatidrulein$. The
vice versa is false.
\end{fact}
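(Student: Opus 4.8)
The plan is to treat the two halves of the Fact separately, since they are logically independent and each reduces to unwinding the relevant definition.

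For the inclusion --- that every occurrence of $\bvtatrdrulein$ is an occurrence of $\bvtatidrulein$ --- I would set the two rule schemes side by side. The left atomic interaction of~\eqref{equation:SBV2-atomic-interaction-left} rewrites $\vlholer{\strS\vlscn{\vlone}}$ into $\vlholer{\strS\vlsbr[\atma;\natma]}$, where the surrounding context $\vlholer{\strS\vlhole}$ is drawn from the \rightcontext s of~\eqref{equation:SBV2-right-contexts-inductive}. The generic atomic interaction $\bvtatidrule$ performs the very same local rewriting of $\vlone$ into $\vlsbr[\atma;\natma]$, but allows $\strS\vlhole$ to be \emph{any} structure context. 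Hence an instance of $\bvtatrdrulein$ is obtained from an instance of $\bvtatidrulein$ merely by imposing the extra requirement that the context be a \rightcontext; forgetting that requirement leaves exactly an $\bvtatidrulein$. I expect this direction to need no computation beyond the observation that every \rightcontext\ is, in particular, a context.

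For the strictness --- that the converse fails --- I would argue by the counterexample already displayed in~\eqref{equation:example-non-standardizable}. There the upper instance of $\bvtatidrulein$ acts on the unit sitting in the context $\vlsbr<[\atma;\natma];\vlhole>$, and I would show that this context is not a \rightcontext, so that particular atomic interaction is not a left atomic interaction. The one point deserving care --- and the closest thing to an obstacle --- is justifying that $\vlsbr<[\atma;\natma];\vlhole>$ genuinely lies outside the grammar of~\eqref{equation:SBV2-right-contexts-inductive}, rather than merely outside one failed attempt to generate it. For this I would scan the seven productions: the hole may sit to the left of a \OpNameSeq, through the clause $\vlsbr<\vlholer{\strS'\vlhole};\strR>$, but no production ever places the hole to the right of a \OpNameSeq. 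Since each production keeps the hole in a fixed argument position of the connective it introduces, a context whose outermost connective is a \OpNameSeq with the hole as its right argument can never be derived. This closes the argument and confirms that the inclusion is proper.
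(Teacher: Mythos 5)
Your proposal is correct and follows essentially the same route as the paper: the inclusion is immediate from the definition of $\bvtatrdrulein$ as an instance of $\bvtatidrulein$ whose context is constrained to be a \rightcontext, and the strictness is witnessed by the paper's own counterexample~\eqref{equation:example-non-standardizable}, whose context $\vlsbr<[\atma;\natma];\vlhole>$ the paper likewise observes is not generated by~\eqref{equation:SBV2-right-contexts-inductive}. Your explicit scan of the productions, showing that no clause of the grammar ever places the hole to the right of a \OpNameSeq, merely makes precise what the paper leaves implicit.
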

%%%%%%%%%
\paragraph{\Standardderivation s of $\BVT$.}
Let $ \strR $, and $ \strT $ be structures. 
A derivation $\bvtInfer{\bvtDder}{\strT \bvtJudGen{\BVT}{} \strR}$
is \emph{\standard} whenever all the atomic interactions that $\bvtDder$ contains
can be labeled as $\bvtatrdrulein$. We notice that nothing forbids $ \strT\approx\vlone $.
%%%%%%%%%%
  \subsection{Standardization} 
\label{subsection:Standardization}
We reorganize derivations of
$\Set{\bvtatrdrulein,\bvtatidrulein,\bvtseqdrulein,\bvtrdrulein}\subset\BVT$
which operate on \dfn{\OpNameTen-free structures} only.
%%%%%%%%%
\paragraph{\OpNameTen-free structures.} By definition, $ \strR $ in $ \BVT $ is \dfn{\OpNameTen-free} whenever it does not contain $\vlsbr(\strR_1;\vldots;\strR_n)$, for any $\strR_1,\ldots,\strR_n$, and $n>1$.
%%%%%
\paragraph{Our goal} is to prove the following theorem, inspiring to the standardization in \cite{Brus:02:A-Purely:wd}:
%%%%%%%
\begin{theorem}[\textbf{\textit{Standardization in 
$\Set{\bvtatrdrulein,\bvtatidrulein,\bvtseqdrulein,\bvtrdrulein}$}}]
\label{theorem:Standardization in bvtatrdrulein...}
Let $\strT$, and $\strR$ be \OpNameTen-free. 
For every 
$\bvtInfer{\bvtDder}
 {\strT 
  \bvtJudGen{\Set{\bvtatrdrule,\bvtatidrule,\bvtseqdrule,\bvtrdrule}}
            {} 
  \strR}$,
there is a \standardderivation\ 
$\bvtInfer{\bvtEder} 
 {\strT 
  \bvtJudGen{\Set{\bvtatrdrule,\bvtseqdrule,\bvtrdrule}}{} 
  \strR}$.
\end{theorem}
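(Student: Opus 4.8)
The plan is to rewrite $\bvtDder$ so that every atomic interaction fires inside a \rightcontext, \ie so that each instance of $\bvtatidrulein$ becomes an instance of $\bvtatrdrulein$. By Fact~\ref{fact:atil is ati} the set $\Set{\bvtatrdrule,\bvtatidrule,\bvtseqdrule,\bvtrdrule}$ collapses to $\Set{\bvtatidrule,\bvtseqdrule,\bvtrdrule}$, and, inspecting the grammar~\eqref{equation:SBV2-right-contexts-inductive}, a derivation is \standard\ precisely when none of its atomic interactions fires in the right argument of an \OpNameSeq: that is the only shape of context the grammar forbids. So I would call an instance of $\bvtatidrulein$ \emph{bad} when its surrounding context is not a \rightcontext, and aim at removing the bad interactions one at a time while keeping the premise $\strT$ and the conclusion $\strR$ fixed.

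First I would set up a well-founded induction on the lexicographic measure given by the number of bad interactions in $\bvtDder$ and, for the topmost bad one, the number of rule instances that separate it, towards the premise, from the point where the trapping \OpNameSeq\ still has unit left argument. The engine is a permutation lemma asserting that a bad $\bvtatidrulein$ can be commuted one step towards the premise, past the instance immediately above it, strictly decreasing the measure. If that instance is another atomic interaction, the two introduced pairs live in disjoint substructures --- the \OpNameTen-freeness of $\strT$ and $\strR$ keeps the contexts simple --- hence the two rules are independent and swap. If it is $\bvtrdrulein$, then, since~\eqref{equation:SBV2-right-contexts-inductive} is closed under $\vlfo{\atma}{\vlholer{\strS\vlhole}}$, sliding the pair across the quantifier does not spoil the context. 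The unit laws~\eqref{align:unit-seq} for \OpNameSeq\ are what actually turn a right-of-\OpNameSeq\ position into a \rightcontext\ one, via $\vlsbr<\vlone;\strT>\approx\strT$, and it is exactly there that an $\bvtatidrulein$ gets relabelled as $\bvtatrdrulein$.

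The hard part will be the permutation past $\bvtseqdrulein$. Here the bad interaction fires in the right argument of an \OpNameSeq\ that $\bvtseqdrule$ has just assembled, turning $\vlsbr<[\strR;\strU];[\strT;\strV]>$ into $\vlsbr[<\strR;\strT>;<\strU;\strV>]$; I would show that the atomic pair can instead be introduced in the \OpNamePar\ layer sitting above $\bvtseqdrule$, which then distributes it, and iterate this until the left argument of the trapping \OpNameSeq\ has been reduced to $\vlone$, where~\eqref{align:unit-seq} finally exposes a \rightcontext. The real obstacle is to certify that the pair is never permanently trapped by a non-unit left argument and that each push strictly lowers the measure, so that the rewriting terminates; this is where the \OpNameTen-free hypothesis is essential, since it guarantees that the contexts met along the way stay inside~\eqref{equation:SBV2-right-contexts-inductive} and that the required swaps are always available. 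Once no bad interaction survives, every atomic interaction is an $\bvtatrdrulein$, and the resulting $\bvtEder$ is the wanted \standardderivation\ over $\Set{\bvtatrdrule,\bvtseqdrule,\bvtrdrule}$ from $\strT$ to $\strR$.
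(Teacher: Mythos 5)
Your overall strategy is the one the paper follows: call an instance of $\bvtatidrulein$ bad when its context is not a \rightcontext, permute the topmost bad instance one step towards the premise past the rule immediately above it, distinguish cases on that rule, and terminate by a measure combining the number of bad instances with the distance to the premise. This is exactly the combination of the commuting-conversions lemma (Lemma~\ref{lemma:bvtatidrulein commuting conversions}), one-step standardization (Proposition~\ref{proposition:One-step standardization of BVTCC}), and the counting argument in the paper's proof of the theorem. One local inaccuracy: permuting past another atomic interaction is not a mere ``disjoint substructures, hence independent'' swap; in Appendix~\ref{section:Proof of lemma:bvtatidrulein commuting conversions} that case can force the manufacture of auxiliary instances of $\bvtrdrulein$ and $\bvtseqdrulein$, e.g.\ when the second pair sits under a quantifier, so the commuted derivation is genuinely longer, not a transposition.

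The genuine gap is the point you yourself flag as ``the real obstacle'' and then discharge by appeal to \OpNameTen-freeness: nothing in your proposal guarantees that a bad interaction is not \emph{permanently} trapped to the right of an \OpNameSeq\ whose left argument never becomes $\vlone$. \OpNameTen-freeness cannot supply this: the one-rule derivation $\vlinf{\bvtatidrulein}{}{\vlsbr<\atmb;[\atma;\natma]>}{\atmb}$ is entirely \OpNameTen-free, its unique interaction is bad, there is no rule above it to permute with, and no \standardderivation\ from $\atmb$ to $\vlsbr<\atmb;[\atma;\natma]>$ exists at all, because, read top-down, $\bvtatrdrulein$ only creates atoms with nothing sequentially before them, while $\bvtseqdrulein$ and $\bvtrdrulein$ never create new \OpNameSeq\ dependencies between existing atoms, so no freshly created pair can end up sequentially after $\atmb$. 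What closes this hole in the paper is a separate ingredient your proposal lacks: Lemma~\ref{lemma:Existence of bvtatrdrulein}, which argues that above the topmost atomic interaction only $\bvtseqdrulein$ and $\bvtrdrulein$ can occur, that these cannot erase atoms, and hence that a non-unit structure sitting to the left of a trapping \OpNameSeq\ could never be consumed before reaching the premise --- a contradiction precisely when the premise is $\vlone$, i.e.\ when the derivation is (or behaves like) a proof. This is a global argument about what happens \emph{above} the interaction, driven by the shape of the premise, not a syntactic consequence of \OpNameTen-freeness; without it, or an equivalent non-trapping lemma, your lexicographic induction has no sound base case, and the example above shows that no argument resting on \OpNameTen-freeness alone can provide one.
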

\par\vspace{\baselineskip}\noindent
%%%%%%%%
It proof relies on the coming lemmas, and proposition.
%%%%%%%%%
\begin{lemma}[\textbf{\textit{Existence of $\bvtatrdrulein$}}] 
\label{lemma:Existence of bvtatrdrulein}
The topmost instance of $\bvtatidrulein$ in a proof 
$\bvtInfer{\bvtPder} 
 {\
  \bvtJudGen{\BVT}{} 
  \strR}$ is always an instance of $\bvtatrdrulein$.
\end{lemma}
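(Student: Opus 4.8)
The plan is to exploit that, among the four rules of $\BVT$, only $\bvtatidrulein$ can create names. First I would record a \emph{name-invariance} property: every instance of $\bvtswirulein$, $\bvtseqdrulein$, and $\bvtrdrulein$ has a premise and a conclusion carrying exactly the same occurrences of names, since these rules only rearrange the substructures around their atoms; moreover $\approx$ preserves name occurrences, as the unit clauses only insert or delete $\vlone$, the $\alpha$-rule only renames or erases a vacuous binder, and the negation, associativity and symmetry clauses merely reshuffle. By contrast, $\bvtatidrulein$ replaces a $\vlone$ with $\vlsbr[\atma;\natma]$, strictly adding the two names $\atma$ and $\natma$.

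Next I would single out the topmost instance of $\bvtatidrulein$ in $\bvtPder$, i.e.\ the first one met descending from the premise $\vlone$, and call $\strU$ the structure sitting immediately above it. By construction everything strictly above this instance is a derivation of $\strU$ from $\vlone$ that uses only $\bvtswirulein$, $\bvtseqdrulein$, $\bvtrdrulein$ and $\approx$; hence, by name-invariance, $\strU$ carries no name at all. By Fact~\ref{fact:Structures normalize to canonical forms} a name-free structure has $\vlone$ as its canonical form, so $\strU\approx\vlone$. Writing the instance as the passage from $\strS\vlscn{\vlone}$ to $\strS\vlscn{\vlsbr[\atma;\natma]}$, where $\strS\vlhole$ is the context in which it fires, we thus obtain $\strS\vlscn{\vlone}\approx\vlone$.

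It then remains to show that any context with $\strS\vlscn{\vlone}\approx\vlone$ is congruent to a \rightcontext\ of \eqref{equation:SBV2-right-contexts-inductive}; once this is established, Fact~\ref{fact:atil is ati} lets us relabel the instance as an $\bvtatrdrulein$, which is exactly the claim. I would prove this by induction on $\strS\vlhole$, using Fact~\ref{fact:Structures normalize to canonical forms} to assume negations sit only on atoms, so that the hole never lies under a \OpNameNot. When the hole is at the root, or guarded by $\OpNamePar$, $\OpNameCop$, $\OpNameRen$, or a $\OpNameSeq$ whose hole is on the left, the matching clause of \eqref{equation:SBV2-right-contexts-inductive} applies after discharging any side structure, which must itself be $\approx\vlone$ and disappears by the unit laws. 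The one delicate case is a $\OpNameSeq$ with the hole \emph{to the right} of the semicolon, of shape $\vlsbr<\strR;\strS'\vlhole>$, exactly the configuration barred from \eqref{equation:SBV2-right-contexts-inductive}; I expect this to be the main obstacle. The escape is that $\vlsbr<\strR;\strS'\vlscn{\vlone}>\approx\vlone$ forces $\strR\approx\vlone$ --- a $\OpNameSeq$ is congruent to $\vlone$ only when both of its immediate substructures are --- so the unit law \eqref{align:unit-seq} collapses the sequence, giving $\strS\vlhole\approx\strS'\vlhole$, the induction hypothesis applies, and the hole lands in a genuine \rightcontext. This collapse of a unit-prefixed \OpNameSeq\ is the heart of the argument.
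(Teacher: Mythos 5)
Your proof is correct and is essentially the paper's own argument run in the contrapositive: both rest on the observation that, of the rules of $\BVT$, only $\bvtatidrulein$ can create or destroy names, so the premise of the topmost instance is name-free and hence $\approx\vlone$, which makes every structure sitting to the left of a \OpNameSeq above the hole collapse by the unit laws (the paper instead supposes a blocker $\strT\not\approx\vlone$ to the left of a \OpNameSeq and observes that its names could never disappear without a further $\bvtatidrulein$ above, a contradiction). One small slip: the final relabeling is licensed by the definition of $\bvtatrdrulein$ in \eqref{equation:SBV2-atomic-interaction-left}, not by Fact~\ref{fact:atil is ati}, which states only the converse implication.
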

%%%%%%%
\begin{proof}
Let $\bvtPder$ be
$
\vlderivation{
\vlde{}{\BVT}
     {\strR}{
\vlin{\bvtatidrulein^{\bullet}}{}
     {\strS\vlsbr[\atma;\natma]}{
\vlpr{\bvtQder}{\Set{\bvtatrdrule,\bvtseqdrule,\bvtrdrule}}
     {\strS\vlscn{\vlone}}}}}
$ with $\bvtatidrulein^{\bullet}$ its topmost instance of $\bvtatidrulein$ which cannot be relabeled as $\bvtatrdrulein$. By contraction, let us assume $\strS\vlhole$ be a non \rightcontext, namely
$\strS\vlhole
 \approx\strS'\vlsbr<\strT;\strS''\vlhole>$ for some
$\strS'\vlhole,\strS''\vlhole$, and $\strT$ \ST $\strT\not\approx\vlone$.
In this case, to let the names of $\strT$, and, may be, those ones of $\strS''\,\vlscn{\vlone}$, to disappear from
$
\vlderivation{
\vlin{\bvtatidrulein^{\bullet}}{}
     {\strS'\vlsbr<\strT;\strS''[\atma;\natma]>}{
\vlpr{\bvtQder}{\Set{\bvtatrdrule,\bvtseqdrule,\bvtrdrule}}
     {\strS'\vlsbr<\strT;\strS''\,\vlscn{\vlone}>}}}
$ we would have to apply at least one instance of $\bvtatidrulein$ which would occur in $\bvtQder$, against our assumption on the position of $\bvtatidrulein^{\bullet}$.
\end{proof}
%%%%%%%%%
\begin{lemma}[\textbf{\textit{Commuting conversions in 
$\Set{\bvtatrdrulein,\bvtatidrulein,\bvtseqdrulein,\bvtrdrulein}$}}]
\label{lemma:bvtatidrulein commuting conversions}
Let $\strR, \strT$, and $\strS\vlscn{\vlone}$ be \OpNameTen-free.
Also, let $\bvtrhorule\in\Set{\bvtatrdrulein,\bvtseqdrulein,\bvtrdrulein}$.
Finally, let $\bvtDder$ be
$\
% \vlupsmash{
  \vlderivation{
  \vlin{\bvtatidrule^{\bullet}}{}
       {\strR}{
  \vlin{\bvtrhorule}{}
       {\vlholer{\strS\vlsbr[\atma;\natma]}}{
  \vlhy{\strT}}}}
%  }
$,
where $\bvtatidrulein^{\bullet}$ is the topmost occurrence of 
$\bvtatidrulein$ which is not $\bvtatrdrulein$.
Then, there is
$
% \vldownsmash{
 \vlderivation{
 \vlde{\bvtDder}
      {\Set{\bvtatrdrule,\bvtatidrule,\bvtseqdrule,\bvtrdrule}}
      {\strR}{
 \vlin{\bvtatidrule^{*}}{}
      {\strV}{
 \vlhy{\strT}}}}
% }
$, where $\strV$, and all the structures of $\bvtDder$
are \OpNameTen-free, and $\bvtatidrulein^{*}$ may be an instance of $\bvtatrdrulein$.
\end{lemma}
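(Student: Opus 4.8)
We are given a derivation $\bvtDder$ whose bottommost two steps are an instance $\bvtrhorule \in \Set{\bvtatrdrulein, \bvtseqdrulein, \bvtrdrulein}$ applied to the premise $\strT$, yielding $\vlholer{\strS\vlsbr[\atma;\natma]}$, followed by the topmost non-$\bvtatrdrulein$ interaction $\bvtatidrulein^{\bullet}$ that eliminates the pair $\vlsbr[\atma;\natma]$. The goal is to push $\bvtatidrulein^{\bullet}$ \emph{below} $\bvtrhorule$, producing an equivalent derivation in which the two steps are swapped, so that the lower interaction step $\bvtatidrulein^{*}$ now acts directly on $\strT$ (and may even qualify as $\bvtatrdrulein$), with everything remaining $\OpNameTen$-free. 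The plan is a case analysis on $\bvtrhorule$ and on the relative positions of the redex of $\bvtrhorule$ and of the pair $\vlsbr[\atma;\natma]$ consumed by $\bvtatidrulein^{\bullet}$.

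\textbf{The core case analysis.}

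First I would dispose of the \emph{disjoint} case: if the redex rewritten by $\bvtrhorule$ and the pair $\vlsbr[\atma;\natma]$ occupy independent positions in the structure, the two rules plainly commute, and swapping them is immediate up to the congruence $\approx$; the resulting lower $\bvtatidrulein^{*}$ eliminates the same atoms, still inside a right-context since the context did not change in a relevant way. The interesting situations are the \emph{overlapping} ones, which I would split by the three possible values of $\bvtrhorule$. For $\bvtrhorule = \bvtrdrulein$, the rule moves a $\OpNameRen$ outward across a $\OpNamePar$; since $\vlholer{\strS\vlhole}$ admits $\OpNameRen$ and $\OpNamePar$ as right-context formers (clause $\vlfo{\atma}{\vlholer{\strS'\vlhole}}$ and $\vlsbr[\vlholer{\strS'\vlhole};\strR]$ of~\eqref{equation:SBV2-right-contexts-inductive}), a short computation shows the annihilated pair can be repositioned so that $\bvtatidrulein^{*}$ applies below $\bvtrdrulein$, with the $\OpNameRen$ binder reinstated afterwards via $\bvtrdrulein$. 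For $\bvtrhorule = \bvtseqdrulein$, which rewrites $\vlsbr<[\strR;\strU];[\strT;\strV]>$ from $\vlsbr[<\strR;\strT>;<\strU;\strV>]$, the pair $\vlsbr[\atma;\natma]$ produced by the step sits in one of the $\OpNamePar$-components; here I would use the $\OpNameSeq$-shaped right-context clause $\vlsbr<\vlholer{\strS'\vlhole};\strR>$ together with the unit and associativity laws (just as in the worked derivations~\eqref{equation:tracing-sequential-interactions-01} and~\eqref{equation:tracing-sequential-interactions-00}) to relocate the interaction below the sequencing step. The case $\bvtrhorule = \bvtatrdrulein$ is the easiest, since $\bvtatrdrulein$ is itself an instance of $\bvtatidrulein$ and two interaction steps commute once their redexes are separated.

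\textbf{Maintaining $\OpNameTen$-freeness and the right-context.}

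Throughout, I would verify the two invariants that the statement demands: that every intermediate structure stays $\OpNameTen$-free, and that the repositioned lower interaction is applied inside a right-context (so that it may legitimately be relabeled $\bvtatrdrulein$). The first invariant is essentially free: none of $\bvtatrdrulein, \bvtseqdrulein, \bvtrdrulein$ introduces a $\OpNameCop$, and the congruence laws do not create one either, so if $\strR, \strT, \strS\vlscn{\vlone}$ are $\OpNameTen$-free then no $\OpNameTen$ can appear after rearrangement. The genuine content, and the \emph{main obstacle}, is the second invariant in the $\bvtseqdrulein$ case. The subtlety is exactly the one flagged by Example~\eqref{equation:example-non-standardizable}: an interaction sitting to the \emph{right} of a $\OpNameSeq$, as in the context $\vlsbr<[\atma;\natma];\vlhole>$, is \emph{not} a right-context and hence cannot be relabeled $\bvtatrdrulein$. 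I must therefore check that when $\bvtseqdrulein$ is pushed past $\bvtatidrulein^{\bullet}$, the pair $\vlsbr[\atma;\natma]$ never lands on the forbidden right-hand side of a sequence; the hypothesis that $\bvtatidrulein^{\bullet}$ is the \emph{topmost} non-$\bvtatrdrulein$ interaction, combined with the shape of $\vlholer{\strS\vlhole}$, is what guarantees this, and making that argument precise is where the real work lies. The conclusion of the lemma is then read off: the rearranged derivation $\bvtDder$ has the desired shape with $\bvtatidrulein^{*}$ at the bottom, possibly an instance of $\bvtatrdrulein$, and all structures $\OpNameTen$-free.
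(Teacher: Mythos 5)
Your proposal misses the idea the paper's proof is built around, and it misses it in the case you set aside as ``the easiest''. You claim that when $\bvtrhorule$ is $\bvtatrdrulein$ the two interactions ``commute once their redexes are separated''. In $\BVT$ that is false, and the failure is the entire content of the lemma. Take the key instance from Appendix~\ref{section:Proof of lemma:bvtatidrulein commuting conversions}: $\strR \approx \vlsbr[\atma;\vlfo{\atmb}{<\natma;[\atmb;\natmb]>}]$, where $\bvtatidrulein^{\bullet}$ annihilates $[\atmb;\natmb]$ --- its context $\vlsbr[\atma;\vlfo{\atmb}{<\natma;\vlhole>}]$ is not a \rightcontext\ --- and the rule above it is $\bvtatrdrulein$ annihilating $\vlsbr[\atma;\natma]$, with premise $\strT\approx\vlone$. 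The two redexes share no atom occurrence, yet no swap is possible: in $\strR$ the atoms $\atma$ and $\natma$ do not form a \OpNamePar redex at all; they become adjacent only because the lower interaction collapses $\vlsbr<\natma;[\atmb;\natmb]>$ to $\natma$, after which \eqref{align:unit-seq} and \eqref{align:alpha-intro} erase the \OpNameSeq and the binder. The lower interaction \emph{enables} the upper one, so your disjoint/overlapping dichotomy does not even classify this case correctly. To reverse the order, the paper has to manufacture the redex $\vlsbr[\atma;\natma]$ before $[\atmb;\natmb]$ is consumed, inserting a fresh instance of $\bvtrdrulein$ and a fresh instance of $\bvtseqdrulein$ so as to pass through $\vlfo{\atmb}{\vlsbr<[\atma;\natma];[\atmb;\natmb]>}$: the two-rule derivation becomes a four-rule one. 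This is precisely why the lemma's conclusion allows an arbitrary derivation in $\Set{\bvtatrdrule,\bvtatidrule,\bvtseqdrule,\bvtrdrule}$ between $\strR$ and $\bvtatidrulein^{*}$, rather than the single swapped rule that your ``the two steps are swapped'' plan would produce; a proof that only permutes the two given rule instances cannot establish the statement.

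Two consequences of this misreading. First, your difficulty map is inverted: the $\bvtseqdrulein$ and $\bvtrdrulein$ cases are the easy ones, because there the redex of $\bvtatidrulein^{\bullet}$ sits inside a substructure that those rules carry around intact; and the obstacle you flag in the $\bvtseqdrulein$ case --- guaranteeing the relocated interaction lands in a \rightcontext\ --- is not something the lemma asks for: the statement only says $\bvtatidrulein^{*}$ \emph{may} be an instance of $\bvtatrdrulein$, and in the paper's own $\bvtseqdrulein$ case it is not one; right-context placement is obtained only by iterating the lemma (Proposition~\ref{proposition:One-step standardization of BVTCC}) together with Lemma~\ref{lemma:Existence of bvtatrdrulein}, not in a single step. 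Second, your orientation is backwards: $\bvtatidrulein^{\bullet}$ starts \emph{below} $\bvtrhorule$ and must be moved \emph{upward}, so that $\bvtatidrulein^{*}$ ends up topmost with premise $\strT$ and the residual derivation below it; describing the move as pushing the interaction ``below'' $\bvtrhorule$, and calling the result ``the lower interaction step'', contradicts your own (correct) requirement that it act directly on $\strT$.
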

\begin{proof}
The proof is, first, by cases on $\bvtrhorulein$, and, then, by cases on $\vlstore{\vlholer{\strS\vlsbr[\atma;\natma]}}\vlread$.
Fixed $\vlstore{\vlholer{\strS\vlsbr[\atma;\natma]}}\vlread$, the proof is by cases on $\strR$ which must contain a redex of
$\bvtatidrulein, \bvtseqdrulein$, or $\bvtrdrulein$, that, after
$\bvtatidrulein^{\bullet}$, leads to the chosen $\vlstore{\vlholer{\strS\vlsbr[\atma;\natma]}}\vlread$.
(Appendix~\ref{section:Proof of lemma:bvtatidrulein commuting conversions}.)
\end{proof}
%%%%%%%%%%%%%%%%%%
\begin{proposition}[\textbf{\textit{One-step standardization in $ \Set{\bvtatrdrulein,\bvtatidrulein,\bvtseqdrulein,\bvtrdrulein} $}}]
\label{proposition:One-step standardization of BVTCC}
Let
$
%  \vlupsmash{
  \vlderivation{
  \vlde{\bvtDder'}{}
       {\strR}{
  \vlin{\bvtatidrule^\bullet}{}
       {\strU}{
  \vlde{\bvtDder''}{}
       {\strV}{
  \vlhy{\strT}}}}}
%   }
$ be a derivation 
in $\Set{\bvtatrdrulein,\bvtatidrulein,\bvtseqdrulein,\bvtrdrulein}$ such that $\bvtatidrulein^\bullet$ is the topmost instance of $\bvtatidrulein$.
There exists a derivation 
$\bvtInfer{\bvtEder}
          {\strT \bvtJudGen{\Set{\bvtatrdrule,\bvtatidrule,\bvtseqdrule,\bvtrdrule}}
                 {}
           \strR}$
where $\bvtatidrulein^\bullet$ has been eventually moved upward to transform it into
an instance of $\bvtatrdrulein$.
\end{proposition}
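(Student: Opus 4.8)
The plan is to push the distinguished interaction $\bvtatidrule^\bullet$ upwards through $\bvtDder''$, one inference at a time, by repeatedly invoking the commuting-conversion Lemma~\ref{lemma:bvtatidrulein commuting conversions}, until it lands in a \rightcontext\ and may be relabelled $\bvtatrdrule$. I would argue by induction on the length $\Size{\bvtDder''}$ of the sub-derivation sitting immediately above $\bvtatidrule^\bullet$. Since $\bvtatidrule^\bullet$ is by hypothesis the topmost occurrence of $\bvtatidrule$, every inference inside $\bvtDder''$ already belongs to $\Set{\bvtatrdrule,\bvtseqdrule,\bvtrdrule}$; this is exactly the side condition that lets me feed each local two-inference window into Lemma~\ref{lemma:bvtatidrulein commuting conversions}.

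For the inductive step, let $\bvtrhorule\in\Set{\bvtatrdrule,\bvtseqdrule,\bvtrdrule}$ be the bottommost rule of $\bvtDder''$, i.e. the instance whose conclusion is $\strV$, the premise of $\bvtatidrule^\bullet$. The window consisting of $\bvtrhorule$ followed by $\bvtatidrule^\bullet$ is precisely the shape required by Lemma~\ref{lemma:bvtatidrulein commuting conversions}, so I would apply that lemma to replace it by a derivation in which an interaction $\bvtatidrule^{*}$ now sits above the residual of $\bvtrhorule$, with all intervening structures still \OpNameTen-free. Two invariants must be checked here. First, $\bvtatidrule^{*}$ remains the topmost interaction, because everything now lying above it comes from $\bvtDder''$ minus its bottom rule, which contains no $\bvtatidrule$. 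Second, the lemma tells us $\bvtatidrule^{*}$ \emph{may} already be an instance of $\bvtatrdrule$: if it is, we stop; if not, it is still $\bvtatidrule$ but the count of inferences above it has strictly dropped, so the induction hypothesis applies to the shortened upper part.

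The base case is $\Size{\bvtDder''}=0$, where $\bvtatidrule^\bullet$ sits immediately above the premise $\strT$ and is therefore the topmost interaction of the whole derivation. Here I would close the argument exactly as in Lemma~\ref{lemma:Existence of bvtatrdrulein}: a topmost interaction cannot be trapped to the right of a non-unit \OpNameSeq, since otherwise the material to the left of that \OpNameSeq\ could never be consumed without a further, higher instance of $\bvtatidrule$, contradicting topmost-ness; hence its context is a \rightcontext\ and the interaction is an $\bvtatrdrule$. Splicing the (possibly commuted) upper part, the relabelled interaction, and the untouched $\bvtDder'$ back together then yields the required $\bvtEder$ from $\strT$ to $\strR$ in $\Set{\bvtatrdrule,\bvtatidrule,\bvtseqdrule,\bvtrdrule}$, with $\bvtatidrule^\bullet$ turned into $\bvtatrdrule$.

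The substantive work, and the main obstacle, is concentrated inside the single-step commuting conversion of Lemma~\ref{lemma:bvtatidrulein commuting conversions}: the delicate part there is the exhaustive case analysis on which of $\bvtatrdrule,\bvtseqdrule,\bvtrdrule$ plays the role of $\bvtrhorule$ and on the relative position of the redex of $\bvtrhorule$ and the pair introduced by $\bvtatidrule^\bullet$ (disjoint, nested, or separated by a \OpNameSeq), together with keeping every structure \OpNameTen-free throughout. At the level of this proposition the only genuinely new ingredients are the two bookkeeping invariants — preservation of topmost-ness and the strict decrease of $\Size{\bvtDder''}$, which jointly guarantee termination — and the clean interface with Lemma~\ref{lemma:Existence of bvtatrdrulein} that forces the interaction into a \rightcontext\ once it has reached the top; I expect the latter step to be the one requiring the most care, since it is where the position of the interaction relative to \OpNameSeq\ is finally pinned down.
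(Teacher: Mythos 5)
You follow the paper's proof essentially step for step: the same induction on the number of inferences sitting above $\bvtatidrulein^{\bullet}$, the same use of Lemma~\ref{lemma:bvtatidrulein commuting conversions} to swap the interaction with the rule immediately above it, the same bookkeeping (preservation of topmost-ness and \OpNameTen-freeness, strict decrease of the measure), and the same appeal to Lemma~\ref{lemma:Existence of bvtatrdrulein} once the interaction has reached the premise. Your inductive step is correct and coincides with the paper's.

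The problem is your base case, and it is a genuine gap: the argument you import from Lemma~\ref{lemma:Existence of bvtatrdrulein} is sound only for \emph{proofs}. That lemma's reasoning --- material to the left of the blocking \OpNameSeq must be consumed, which would require a higher interaction, contradicting topmost-ness --- works because in a proof everything above the interaction has to collapse to $\vlone$. In the present Proposition the premise $\strT$ is an arbitrary structure, so nothing above the interaction need be consumed at all: whatever sits to the left of that \OpNameSeq can simply persist into $\strT$, and no contradiction arises. Concretely, take the derivation consisting of the single instance of $\bvtatidrulein$ with conclusion $\vlsbr<\atmb;[\atma;\natma]>$ and premise $\vlsbr<\atmb;\vlone>\approx\atmb$: both structures are \OpNameTen-free, $\bvtDder'$ and $\bvtDder''$ are empty, the interaction is topmost, its redex lies to the right of a \OpNameSeq whose left component is $\atmb\not\approx\vlone$, and there is no rule above it to commute past. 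Worse, the conclusion of the Proposition is unattainable here: reading rules from premise to conclusion, every instance of $\bvtatrdrulein$ introduces its pair inside a \rightcontext, hence never as the right component of a \OpNameSeq whose left component contains an already existing atom, while $\bvtseqdrulein$ and $\bvtrdrulein$ only turn \OpNameSeq-dependencies between atom occurrences into \OpNamePar-dependencies and never create new ones; so $\atma$ can never end up \OpNameSeq-after $\atmb$, and no derivation of $\vlsbr<\atmb;[\atma;\natma]>$ from $\atmb$ in $\Set{\bvtatrdrule,\bvtseqdrule,\bvtrdrule}$ exists. Hence the base case cannot be repaired by a cleverer argument; it needs a stronger hypothesis, for instance that the portion of the derivation above $\bvtatidrulein^{\bullet}$ is a proof of its premise, which is exactly the setting of Lemma~\ref{lemma:Existence of bvtatrdrulein}. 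To be fair, this defect is inherited: the paper's own proof closes the same case by citing Lemma~\ref{lemma:Existence of bvtatrdrulein} outside its scope, so your reconstruction is faithful to the paper --- but the step fails as written, for you and for the paper alike, and you should flag the missing hypothesis rather than reproduce the consumption argument.
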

%%%%%%
\begin{proof}
Let $n$ be the number of rules in $\bvtDder''$.
If $\strU\approx
    \vlstore{\vlsbr[\atma;\natma]}\vlholer{\strS\vlread}$, with $ \vlsbr[\atma;\natma] $
the redex of $\bvtatidrulein^\bullet$, then $\bvtatidrulein^\bullet$ is already
an instance of $\bvtatrdrulein$, and we are done.
Otherwise, we can apply Lemma ~\ref{lemma:bvtatidrulein commuting conversions}
moving $\bvtatidrulein^\bullet$ one step upward, getting to
$\bvtInfer{\bvtEder}
          {\strT \bvtJudGen{\Set{\bvtatrdrule,\bvtatidrule,\bvtseqdrule,\bvtrdrule}}
                           {}
           \strR}$, where
$\bvtatidrulein^\bullet$ is no more than $n-1$ rules far from $\strT$.
An obvious inductive argument allows to conclude thanks to Lemma~\ref{lemma:Existence of bvtatrdrulein}.
\end{proof}
%%%%%%%%%%%%%%%%%%
\paragraph{Proof of Theorem~\ref{theorem:Standardization in bvtatrdrulein...}.}
Let $X_{\bvtDder}$ be the set of all instances of $\bvtatidrulein$ in $\bvtDder$,
that can be directly seen as instances of $\bvtatrdrulein$, and $Y_{\bvtDder}$
the set of all other instances of $\bvtatidrulein$ in $\bvtDder$.
If $Y_{\bvtDder}=\emptyset$ we are done because $\bvtEder$ is $\bvtDder$ where every
instance of $\bvtatidrulein$ in $X_{\bvtDder}$, if any, can be directly
relabeled as $\bvtatrdrulein$. Otherwise, let us pick the topmost occurrence of
$\bvtatidrulein$ in $\bvtDder$ out of $Y_{\bvtDder}$, and apply
Proposition~\ref{proposition:One-step standardization of BVTCC} to it.
We get 
$\bvtInfer{\bvtEder}
          {\strT \bvtJudGen{\Set{\bvtatrdrule,\bvtatidrule,\bvtseqdrule,\bvtrdrule}}
                           {}
           \strR}$, whose set
$Y_{\bvtEder}$ is strictly smaller than $Y_{\bvtDder}$.
An obvious inductive argument allows to conclude.
%%%%%
\paragraph{Standard fragment $\BVTL$ of $\BVT$.} 
After Theorem~\ref{theorem:Standardization in bvtatrdrulein...} it is sensible defining 
$\BVTL$ as $\Set{\bvtatrdrulein,\bvtseqdrulein,\bvtrdrulein}\subset\BVT$ whose derivations contain \OpNameTen-free only structures.
\section{Internalizing derivability of $ \BVT$}
\label{section:Internalizing the derivability in BVT}
Roughly, internalizing derivability in $ \BVT $ shows when we can ``discharge assumptions''. It is another of the properties we need to recast reachability problems in a suitable calculus of communicating, and concurrent processes, to proof-search inside (a fragment) of $ \BVT $.
The internalization links to the notion of \invertiblestructure s.
%%%%%%
\paragraph{\Invertible, and \coinvertiblestructure s.}
We define them in~\eqref{equation:SBV2-invertible-structures-in-words} here below.
\par\vspace{\baselineskip}\noindent
{
%\small
  \fbox{
    \begin{minipage}{.95\linewidth}
      \begin{equation}
       \label{equation:SBV2-invertible-structures-in-words}
       \begin{minipage}{.8\linewidth}
         \begin{center}
          $\strT$ is \dfn{invertible} whenever
$\vlderivation{
 \vlpd{\bvtPder}{\BVT}
      {\vlsbr[\strT;\strP]}
      {}
 }$  
\ implies\ \
$\vlderivation{
 \vldd{\bvtDder}{\BVT}
      {\strP}
      {
 \vlhy{\vlne{\strT}}}
 }$,
for every $\strT$, and $\strP$
         \end{center}
       \end{minipage}
      \end{equation}
    \end{minipage}
  }%fbox
}%\small
\par\vspace{\baselineskip}\noindent
If $\strT$ is invertible, then, by definition, $\vlne{\strT}$ is \dfn{\coinvertible}.
%%%%%%
\begin{remark}
Clearly, definition~\eqref{equation:SBV2-invertible-structures-in-words} here above omits
the implication
``If  $\bvtInfer{\bvtDder}
                {\vlne{\strT}
                \bvtJudGen{\BVT}{}
                \strP}$, then 
$\vlstore{\vlsbr[\strT;\strP]}
 \bvtInfer{\bvtPder}{\ \bvtJudGen{\BVT}{} \vlread}$'' on purpose.
It always holds because $ \bvtintdrulein $ is derivable in $ \BVT $.
Moreover, our \invertiblestructure s inspire to the namesake concept in \cite{Stra:03:System-N:mb}.
\end{remark}
\par\vspace{\baselineskip}\noindent
%%%%%%%%%%%%%%%
The following proposition gives sufficient conditions for a structure to be \invertible.
\begin{proposition}[\textit{\textbf{A language of \invertiblestructure s}}]
\label{proposition:Invertible structures are invertible}
The following grammar \eqref{equation:SBV2-invertible-structures} generates \invertiblestructure s.
\par\vspace{\baselineskip}\noindent
{\small
  \fbox{
    \begin{minipage}{.95\linewidth}
      \begin{equation}
       \label{equation:SBV2-invertible-structures}
       \begin{aligned}
&\strT  \grammareq 
             \vlone
              \mid \vlsqbrl
                    \atmLabL_1 \vlpa \vldots \vlpa \atmLabL_n
                    \vlsqbrr
              \mid \vlrobrl \strT \vlte \strT \vlrobrr
              \mid  \vlsbr<\strT;\strT>
              \mid   \vlfo{\atma}{\strT}
              \\
&              \textrm{ where } 
              n > 0, 
              \textrm{ and, for every } 
              1\leq i,j\leq n,
              \textrm{ if } 
              i\neq j 
              \textrm{ then } 
              \atmLabL_i \neq  \vlne{\atmLabL_j}
\end{aligned}
      \end{equation}
    \end{minipage}
  }%fbox
}%\small
\end{proposition}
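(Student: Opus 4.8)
The plan is to proceed by induction on the generation of $\strT$ through the grammar \eqref{equation:SBV2-invertible-structures}. Fix $\strP$ together with a proof of $\vlsbr[\strT;\strP]$ in $\BVT$; by Corollary~\ref{theorem:Admissibility of the up fragment} I may assume this proof, call it $\bvtPder$, contains no instance of the up-fragment. What I must produce is a \emph{down-fragment} derivation $\bvtInfer{\bvtDder}{\vlne{\strT}\bvtJudGen{\BVT}{}\strP}$ as demanded by \eqref{equation:SBV2-invertible-structures-in-words}. The backbone of the construction is the usual cut move: placing $\bvtPder$ inside the context $\vlsbr(\vlne{\strT};\vlhole)$ turns $\vlne{\strT}\approx\vlsbr(\vlne{\strT};\vlone)$ into $\vlsbr(\vlne{\strT};[\strT;\strP])$, one $\bvtswirulein$ rearranges this into $\vlsbr[(\strT;\vlne{\strT})\vlpa\strP]$, and annihilating the tensor $\vlsbr(\strT;\vlne{\strT})$ against $\vlone$ yields $\strP$. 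The only blemish is that this annihilation is a co-interaction, hence an up-rule; the entire effort goes into replacing it by down-rules, and this is what the induction achieves.

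For the structural constructors the annihilation of $\vlsbr(\strT;\vlne{\strT})$ decomposes into annihilations of the immediate substructures, so that a straightforward recursion on $\strT$ drives the single general cut down to purely atomic co-interactions. Indeed, for $\strT=\vlrobrl\strT_1\vlte\strT_2\vlrobrr$ repeated $\bvtswirulein$ rewrites $\vlsbr((\strT_1;\strT_2);[\vlne{\strT_1};\vlne{\strT_2}])$ into $\vlsbr[(\strT_1;\vlne{\strT_1})\vlpa(\strT_2;\vlne{\strT_2})]$; for $\strT=\vlsbr<\strT_1;\strT_2>$ a single $\bvtsequrulein$ rewrites $\vlsbr(<\strT_1;\strT_2>;<\vlne{\strT_1};\vlne{\strT_2}>)$ into $\vlsbr<(\strT_1;\vlne{\strT_1});(\strT_2;\vlne{\strT_2})>$; and for $\strT=\vlfo{\atma}{\strT_1}$ a single $\bvtrurulein$ rewrites $\vlsbr(\vlfo{\atma}{\strT_1};\vlex{\atma}{\vlne{\strT_1}})$ into $\vlfo{\atma}{\vlsbr(\strT_1;\vlne{\strT_1})}$. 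In each case the residual cuts fall on the strictly smaller, still grammatical, substructures covered by the induction hypothesis, and the unit congruences collapse $\vlsbr[\vlone\vlpa\vlone]$, $\vlsbr<\vlone;\vlone>$, and $\vlfo{\atma}{\vlone}$ back to $\vlone$. The case $\strT=\vlone$ is immediate, since there $\vlne{\strT}\approx\vlone$ and $\bvtPder$ is already the required derivation $\vlone\to\strP$.

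The crux, and the step I expect to resist, is the atomic base case $\strT=\vlsqbrl\atmLabL_1\vlpa\vldots\vlpa\atmLabL_n\vlsqbrr$, where the recursion has deposited one atomic co-interaction $\bvtatiurulein$ on each $\vlsbr(\atmLabL_i;\vlne{\atmLabL_i})$ and these must be discharged without leaving the down-fragment. Here the side condition that no two of the names are complementary does the decisive work. Since $\bvtPder$ is free of up-rules, every name in it is produced by an instance of $\bvtatidrulein$ jointly with its dual; because no $\atmLabL_i$ is the dual of another $\atmLabL_j$, the dual $\vlne{\atmLabL_i}$ that cancels each $\atmLabL_i$ can only be furnished from inside $\strP$. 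Tracing, inside $\bvtPder$, the instance of $\bvtatidrulein$ that creates each $\atmLabL_i$ then lets me re-read it as the down-step that consumes the matching $\vlne{\atmLabL_i}$ coming out of $\vlne{\strT}=\vlsbr(\vlne{\atmLabL_1};\vldots;\vlne{\atmLabL_n})$, so that the atomic co-interactions disappear in favour of instances of $\bvtatidrulein$. The delicate point I would have to nail down is the book-keeping of this reversal when $\strP$ itself carries interactions among its own names, to be sure each $\vlne{\atmLabL_i}$ is matched with the intended $\atmLabL_i$; I would therefore extract this atomic statement as a separate lemma about up-rule-free proofs whose par-context contains no complementary pair, and then feed it to the structural recursion sketched above.
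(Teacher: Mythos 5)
Your construction does not produce the object the proposition demands, and the obstacle is structural rather than a matter of bookkeeping. Invertibility, as defined in \eqref{equation:SBV2-invertible-structures-in-words}, requires a derivation $\bvtInfer{\bvtDder}{\vlne{\strT}\bvtJudGen{\BVT}{}\strP}$ lying entirely in the down fragment $\BVT=\Set{\bvtatidrulein,\bvtswirulein,\bvtseqdrulein,\bvtrdrulein}$. Your cut move and its structural decomposition, however, live in $\SBVT$: the generalized co-interaction you start from, the residual atomic instances of $\bvtatiurulein$, and the instances of $\bvtsequrulein$ (\OpNameSeq case) and $\bvtrurulein$ (\OpNameRen case) that your recursion introduces are all up-rules; only the \OpNameCop case, handled by $\bvtswirulein$, stays inside $\BVT$. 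Your final paragraph addresses only the atomic co-interactions, so the instances of $\bvtsequrulein$ and $\bvtrurulein$ remain in the derivation you assemble, and nothing in your plan removes them. Corollary~\ref{theorem:Admissibility of the up fragment} cannot be invoked for that purpose: admissibility of the up fragment is a statement about \emph{proofs}, i.e.\ derivations whose premise is $\vlone$, not about derivations with premise $\vlne{\strT}$. Indeed up-rules are not admissible for open derivations: $\vlsbr(\atma;\natma)\bvtJudGen{\SBVT}{}\vlone$ by one instance of $\bvtatiurulein$, yet no derivation in $\BVT$ from $\vlsbr(\atma;\natma)$ to $\vlone$ exists, because reading any down-fragment derivation from premise to conclusion the number of name occurrences never decreases. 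Worse, converting an $\SBVT$ derivation with premise $\vlne{\strT}$ into a $\BVT$ one is itself an invertibility-type statement, so this route is circular.

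The second gap is that the ``separate lemma'' you defer is precisely the hard content of the proposition. What you need in the base case --- from an up-rule-free proof of $\vlsbr[[\atmLabL_1;\vldots;\atmLabL_n];\strP]$ with pairwise non-complementary $\atmLabL_i$, produce a down-fragment derivation $\vlsbr(\vlne{\atmLabL_1};\vldots;\vlne{\atmLabL_n})\bvtJudGen{\BVT}{}\strP$ --- is exactly Point~\ref{enum:Shallow-Splitting-atom} of Shallow Splitting (Proposition~\ref{proposition:Shallow Splitting}), the theorem imported from earlier work on which the paper's entire proof rests; ``tracing'' and ``re-reading'' occurrences of $\bvtatidrulein$ restates that goal rather than proving it. Note also that your two halves do not compose: the tracing argument would operate on the original proof $\bvtPder$, whereas the atomic cuts to be discharged sit inside the cut-laden derivation your recursion builds around $\bvtPder$. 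The paper's proof avoids both problems by applying Shallow Splitting directly to the given proof of $\vlsbr[\strT;\strP]$ in each case of the grammar: splitting returns proofs of $\vlsbr[\strT_i;\strP_i]$, for pieces $\strP_i$ of $\strP$, together with a down-fragment derivation reassembling $\strP$ from those pieces; the induction hypothesis (on the size of the split structures) applies to the smaller proofs, and all components compose inside $\BVT$ simply by plugging derivations into contexts.
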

\begin{proof}
Let $\vlstore{\vlsbr[\vlne{\strT};\strP]}
 \bvtInfer{\bvtPder}{\ \bvtJudGen{\BVT}{} \vlread}$ be given with 
$\vlne{\strT}$ in \eqref{equation:SBV2-invertible-structures}. We reason  by induction on $\vlstore{\vlsbr[\vlne{\strT};\strP]}
 \Size{\vlread}$, and we build $\bvtDder$
of~\eqref{equation:SBV2-invertible-structures-in-words}, proceeding by cases on $\vlne{\strT}$. (Details in Appendix~\ref{section:Proof of proposition:Invertible structures are invertible}.)
\end{proof}
\section{Intermezzo}
\label{section:Communication core of BVT}
We keep the content of this section at an intuitive level.
We describe how structures of $ \BVT $ model terms in a language whose syntax is not formally identified yet, but which is related to the one of Milner $ \CCS $. 
%Consequently, derivations of $ \BVT $ model derivation in a \lts, related to the one of Milner $ \CCS $.
%%%%
\begin{example}[\textbf{\textit{Modeling internal communication inside $\BVT$}}]
\label{example:Modeling internal communication inside BVT}
Derivations of $\BVT$ model internal communication
if we look at structures of $\BVT$ as they were terms of Milner $\CCS$, as in
\cite{Brus:02:A-Purely:wd}. Let us focus on
\eqref{equation:PPi-internal-interaction-example-ll} here below.
\par\vspace{\baselineskip}\noindent
{\small
  \fbox{
    \begin{minipage}{.487\textwidth}
      \begin{equation}
       \label{equation:PPi-internal-interaction-example-ll}
  	   \vlderivation                      {
\vlin{\bvtseqdrule}{}
     {\vlsbr
       [<\atmaRed;\pincE>
       ;<\natmaBlu;\pincF>]
     }                             {
\vlin{\bvtatidrule}{}
     {\vlsbr
       <[\atmaRed;\natmaBlu]
       ;[\pincE;\pincF]>
     }                             {
\vliq{\eqref{align:unit-seq}}{}
     {\vlsbr
       <\vlone
       ;[\pincE;\pincF]>
     }                             {
\vlhy{\vlsbr[\pincE;\pincF]}}}}}
      \end{equation}
    \end{minipage}
%%%%%%
    \begin{minipage}{.487\textwidth}
      \begin{equation}
      \label{equation:PPi-internal-interaction-example-rr}
	\vlderivation                                                {
\vliin{}{}
      {\pincLTSJud
      {\pincPar{\pincSec{\atma}{\pincE}}
		{\pincSec{\natma}{\pincF}}
      }
      {\pincPar{\pincE}
	       {\pincF}
      }
      {\pincPreT}}{
\vlhy{\pincLTSJud
	{\pincSec{\atma}{\pincE}}
	{\pincE}
	{\atma}}
      }{
%%%%%%%%%%%
\vlhy{\pincLTSJud
	{\pincSec{\natma}{\pincF}}
	{\pincF}
	{\natma}}
	}}
      \end{equation}
    \end{minipage}
  }%fbox
}%\small
\par\vspace{\baselineskip}\noindent
The instance of $\bvtseqdrulein$ moves atoms $\atma$, and $\natma$, one aside the
other, and $\bvtatidrulein$ annihilates them. Annihilation can be seen as an
internal communication between the two components
$\vlsbr<\atmaRed;\pincE>$, and $\vlsbr<\natmaBlu;\pincF>$ of the structure
$\vlsbr[<\atmaRed;\pincE>;<\natmaBlu;\pincF>]$. The usual way to formalize
such an internal communication is
\eqref{equation:PPi-internal-interaction-example-rr}, derivation that belongs to the 
\lts\ of Milner $\CCS$. The sequential composition of
\eqref{equation:PPi-internal-interaction-example-rr} stands for
\OpNameSeq, parallel composition for \OpNamePar, and both
$\pincE$, and $\pincF$ in \eqref{equation:PPi-internal-interaction-example-ll} are
represented by corresponding processes $\pincE$, and $\pincF$ in
\eqref{equation:PPi-internal-interaction-example-rr}.
\end{example}
%%%%%%%%
\begin{example}[\textbf{\textit{Modeling external communication inside $\BVT$}}]
\label{example:Modeling external communication inside BVT}
Derivations of $\BVT$ model external communication if we
look at structures of $\BVT$ as they were terms of Milner $\CCS$, as in \cite{Brus:02:A-Purely:wd}.
Let us focus on \eqref{equation:PPi-external-interaction-example-ll} here below.
\par\vspace{\baselineskip}\noindent
{\small
  \fbox{
    \begin{minipage}{.487\textwidth}
      \begin{equation}
      \label{equation:PPi-external-interaction-example-ll}
	\vlderivation                                              {
\vliq{\eqref{align:unit-seq}
     ,\bvtseqdrule}{}
     {\vlsbr[<\atmaRed;\pincE>
             ;\natmaBlu]
     }                                                     {
\vlin{\bvtatidrule}{}
     {\vlsbr
       <[\atmaRed;\natmaBlu]
       ;[\pincE;\vlone]>
     }                                                     {
\vliq{\eqref{align:unit-seq}
     ,\eqref{align:unit-pa}}{}
     {\vlsbr
       <\vlone
       ;[\pincE;\vlone]>
     }                                                     {
\vlhy{\pincE}}}}}
      \end{equation}
    \end{minipage}
%%%
    \begin{minipage}{.487\textwidth}
%       \vspace{1.3\baselineskip}
      \begin{equation}
      \label{equation:PPi-external-interaction-example-rr}
	\vlderivation                       {
\vlin{}{}
      {\pincLTSJud
        {\pincSec{\atma}{\pincE}}
        {\pincE}
        {\atma}}                    {
\vlhy{}                             }}

      \end{equation}
%       \vspace{1.3\baselineskip}
    \end{minipage}
  }%fbox
}%\small
\par\vspace{\baselineskip}\noindent
We look at $\vlsbr[<\atmaRed;\pincE>;\natmaBlu]$ as
containing two sub-structures with different meaning. The structure
$\vlsbr<\atmaRed;\pincE>$ corresponds to the process
$\pincSec{\atma}{\pincE}$. Instead, $\natmaBlu$ can be seen as an
action of the context ``around'' $\vlsbr<\atmaRed;\pincE>$. This means that
\eqref{equation:PPi-internal-interaction-example-ll} formalizes Milner $\CCS$ derivation
\eqref{equation:PPi-internal-interaction-example-rr}.
\end{example}
%%%%%%%%%%%%%%%%%
\begin{remark}[\textbf{\textit{``Processes'', and  ``contexts'' are
first-citizens}}]
\label{remark:Processes and  contexts are first-citizens}
The structure $\vlsbr[<\atmaRed;\pincE>;\natmaBlu]$ is equivalent to $\vlsbr
[<\atmaRed;\pincE> ;<\natmaBlu;\vlone>]$ in \eqref{equation:PPi-external-interaction-example-ll}. This highlights a first difference
between modeling the communication by means of (a sub-system of) $\BVT$,
instead than with Milner $\CCS$.
This latter constantly separates terms from the contexts they interact with.
Instead, the structures of $\BVT$ make no difference, and represent contexts as
first-citizens. Namely, choosing which structures are the ``real
processes'', and which are ``contexts'' is, somewhat, only matter of taste.
Specifically, in our case, we could have said that $ \vlsbr{<\natmaBlu;\vlone>} $ represents the process $ \pincSec{\pincna}{\pincZer}$, instead than the context.
\end{remark}
%%%%%%%%%%%%%%%%%
\begin{example}[\textbf{\textit{Hiding communication}}]
\label{example:Hiding communication}
Derivations in $\BVT$ model
hidden communications of Milner $\CCS$ thanks to \OpNameRen. So, we strictly extend the correspondence between a \DI system and Milner $ \CCS $, as given in \cite{Brus:02:A-Purely:wd}. We build on Example~\ref{example:Modeling external communication inside BVT}, placing an instance of \OpNameRen around every of the two components of
$\vlsbr[<\atmaRed;\pincE>;\natmaBlu]$ in
\eqref{equation:PPi-external-interaction-example-ll}.
\par\vspace{\baselineskip}\noindent
{\small
  \fbox{
    \begin{minipage}{.487\textwidth}
      \begin{equation}
       \label{equation:PPi-external2internal-interaction-example-ll}
	\vlderivation                                              {
\vlin{\bvtrdrule}{}
     {\vlsbr
       [\vlfo{\atma}{<\atmaRed;\pincE>}
       ;\vlfo{\atma}{\natmaBlu}]
     }                                                     {
\vliq{\eqref{align:unit-seq}
     ,\bvtseqdrule
     ,\eqref{align:unit-pa}}{}
     {\vlfo{\atma}
      {\vlsbr[<\atmaRed;\pincE>
             ;\natmaBlu]}
     }                                                     {
\vlin{\bvtatidrule}{}
     {\vlfo{\atma}{
      \vlsbr
       <[\atmaRed;\natmaBlu]
       ;\pincE>
      }
     }                                                     {
\vliq{\eqref{align:unit-seq}}{}
     {\vlfo{\atma}{
      \vlsbr
       <\vlone;\pincE>
      }}                                                   {
\vlhy{\vlfo{\atma}{\pincE}}}}}}}
      \end{equation}
    \end{minipage}
%%%%
    \begin{minipage}{.487\textwidth}
      \begin{equation}
      \label{equation:PPi-external2internal-interaction-example-rr}
	\vlderivation                                           {
\vlin{}{}
      {\pincLTSJud{\pincNu{\atma}{(\pincSec{\atma}{\pincE})}}
      {\pincNu{\atma}{\pincE}}
      {\pincPreT}}{
\vlhy{\pincLTSJud
      {\pincSec{\atma}{\pincE}}
      {\pincE}
      {\atma}}    }}
      \end{equation}
    \end{minipage}
  }%fbox
}%\small
\par\vspace{\baselineskip}\noindent
We can look at \OpNameRen, which binds $\atmaRed$, and $\natmaBlu$ as restricting the
visibility of the communication. The derivation in the \lts\ of Milner $\CCS$ that models
\eqref{equation:PPi-external2internal-interaction-example-ll} is
\eqref{equation:PPi-external2internal-interaction-example-rr}.
\end{example}
%%%%%%%%%%%%%%%%%
\begin{example}[\textbf{\textit{More freedom inside $\BVT$}}]
\label{example:Escaping close correspondence with CCS}
Inside  
$\vlsbr
      [<\atmaRed;\pincE>
      ;<\natmaBlu;<\natmbBlu;<\natmcBlu;\pincF>>>
      ;<\atmbRed;<\atmcRed;\vlne\pincF>>]$, of
\eqref{equation:PPi-invertible-interaction-example-ll} among others, we can identify the ``processes''
$\pincG_1\equiv \vlsbr<\atmaRed;\pincE>$,
$\pincG_2\equiv \vlsbr<\natmaBlu;<\natmbBlu;<\natmcBlu;\pincF>>>$,
$\pincG_3\equiv \vlsbr<\natmbBlu;<\natmcBlu;\pincF>>$, and
$\pincG_4\equiv \vlsbr<\atmbRed;<\atmcRed;\vlne\pincF>>$:
\par\vspace{\baselineskip}\noindent
{\small
  \fbox{
   \begin{minipage}{.974\linewidth}
      \begin{equation}
      \label{equation:PPi-invertible-interaction-example-ll}
	\vlderivation                      {
\vlin{\bvtseqdrule}{}
     {\vlsbr
      [<\atmaRed;\pincE>
      ;<\natmaBlu;<\natmbBlu;<\natmcBlu;\pincF>>>
      ;<\atmbRed;<\atmcRed;\vlne{\pincF}>>]
     }                             {
\vliq{\eqref{align:unit-seq}}{}
     {\vlsbr
       [<[\atmaRed;\natmaBlu]
        ;[\pincE;<\natmbBlu;<\natmcBlu;\pincF>>]>
        ;<\atmbRed;<\atmcRed;\vlne{\pincF}>>]
     }                             {
\vlin{\bvtseqdrule}{}
     {\vlsbr
       [<[\atmaRed;\natmaBlu]
        ;[\pincE;<\natmbBlu;<\natmcBlu;\pincF>>]>
        ;<\vlone;<\atmbRed;<\atmcRed;\vlne{\pincF}>>>]
     }                             {
\vliq{\eqref{align:unit-pa}}{}
     {\vlsbr
       <[\atmaRed;\natmaBlu;\vlone]
       ;[\pincE;<\natmbBlu;<\natmcBlu;\pincF>>
        ;<\atmbRed;<\atmcRed;\vlne{\pincF}>>
       ]>
     }                             {
\vlin{\bvtintdrule}{}
     {\vlsbr
      <[\atmaRed;\natmaBlu]
       ;[\pincE;<\natmbBlu;<\natmcBlu;\pincF>>
        ;<\atmbRed;<\atmcRed;\vlne{\pincF}>>
       ]>
     }                             {
\vlin{\bvtatidrule}{}
     {\vlsbr
      <[\atmaRed;\natmaBlu]
       ;\pincE>
     }                             {
\vliq{\eqref{align:unit-seq}}{}
     {\vlsbr
      <\vlone
       ;\pincE>
     }                             {
\vlhy{\pincE}}}}}}}}}
      \end{equation}
    \end{minipage}
  }%fbox
}%\small
\par\vspace{\baselineskip}\noindent
The lowermost instance of $\bvtseqdrulein$ predisposes $\pincG_1$, and $\pincG_2$
to an interaction through $\atmaRed$, and $\natmaBlu$. However, only the instance of
$\bvtatidrulein$ makes the interaction effective. Before that, the instance of
$\bvtintdrulein$ identifies $\pincG_4$ as the negation of $\pincG_3$,
and annihilates them in a whole. So, \eqref{equation:PPi-invertible-interaction-example-ll} suggests that modeling process computations inside $\BVT$ may result more flexible than usual, because it introduces a notion of ``negation of a process'' which sounds as a higher-order ingredient of proof-search-as-computation.
\end{example}
\section{Communication, and concurrency with logic restriction}
\label{section:Communicating, and concurrent processes with restriction}
The correspondences Section~\ref{section:Communication core of BVT} highlights, justify the introduction of a calculus of processes which we identify as $\CCSR$.
Specifically, $\CCSR$ is a calculus of communicating, and concurrent processes, with a logic-based restriction, whose operational semantics is driven by the logical behavior of 
$ \bvtrdrulein $ rule.
%%%%%%%%
\begin{remark}[\textbf{\textit{$\CCSR$ vs. Milner $\CCS$}}]
\label{remark:CCSR vs Milners CCS}
It will turn out that $\CCSR$ is not Milner $\CCS$ \cite{Miln:89:Communic:qo}
%, but certainly related to it
. The concluding Section~\ref{section:Final discussion, and future work} will discuss on this.
\end{remark}
%%%
\paragraph{Actions on terms of $\CCSR$.}
Let $\pinca, \pincb, \pincc, \ldots$ denote the elements of a countable set of
\dfn{names}, and let $\pincna, \pincnb, \pincnc, \ldots$ denote the elements of a
countable set of \dfn{co-names}. The set of \dfn{labels}, which we range over by
$\pincLabL$\,, $\pincLabM$, and $\pincLabN$ contains both names, and co-names, and
nothing else. Let $\pincLabT$ be
the \dfn{silent}, or \dfn{perfect action}, different from any name, and
co-name. The (set of) \dfn{sequences of actions} contains equivalence classes
defined on the language that \eqref{equation:PPi-LTS-labels} yields:
%%%%%%%%%%%%%
\par\vspace{\baselineskip}\noindent
{\small
  \fbox{
    \begin{minipage}{.974\linewidth}
%     \vspace{-.3cm}
      \begin{equation}
        \label{equation:PPi-LTS-labels}
	\begin{aligned}
% \nonumber
\pincAct &
	\grammareq 	\pincLabT
	\ \ \mid\ \ 	\pincLabL
	\ \ \mid\ \ 	\vlne\pincAct
	\ \ \mid\ \  	\pincAct\,;\pincAct
\end{aligned}
      \end{equation}
    \end{minipage}
  }%fbox
}%\small
\par\vspace{\baselineskip}\noindent
%%%%%%%%%%%%
By definition, the equivalence relation~\eqref{align:PPi-labels-congruence} here below induces the congruence $\pincCong$ on~\eqref{equation:PPi-LTS-labels}.
%%%%%%%%%%%%
\par\vspace{\baselineskip}\noindent
{\small
  \fbox{
   \begin{minipage}{.974\linewidth}
      \begin{equation}
	\label{align:PPi-labels-congruence}
	\begin{aligned}
 \vlne\pincPreT & \sim \pincPreT
 &\qquad\qquad
 \vlne\pincna & \sim \pinca
 &\qquad\qquad
 \vlne{\pincAct\,;\pincAct'} & \sim
   \vlne{\pincAct}\,;\vlne{\pincAct'}
 &\qquad\qquad
 \pincPreT\,;\pincAct & \sim \pincAct
\end{aligned}

      \end{equation}
    \end{minipage}
  }%fbox
}%\small
\par\vspace{\baselineskip}\noindent
%%%%%%%%%%%
We shall use $\atmalpha, \atmbeta$, and $\pincgamma$ to range over the elements in
the set of actions sequences.
%%%
\paragraph{Processes of $\CCSR$.}
The terms of $\CCSR$, \ie \dfn{processes}, belong to the language of the grammar
\eqref{align:PPi-syntax} here below.
%%%%%%%%%%%%%%%%%%%%%
\par\vspace{\baselineskip}\noindent
{\small
  \fbox{
   \begin{minipage}{.974\linewidth}
      \begin{equation}
	  \label{align:PPi-syntax}
	  \begin{aligned}
% \nonumber
\pincE &
	\grammareq 	\pincZer
%	\ \ \mid\ \ 	\pincSec{\pincLabT}
%		                    {\pincE}
	\ \ \mid\ \ 	\pincSec{\pincLabL}
		                    {\pincE}
	\ \ \mid\ \  	(\pincPar{\pincE}{\pincE})
	\ \ \mid\ \ 	\pincNu{\pinca}{\pincE}
\end{aligned}
      \end{equation}
    \end{minipage}
  }%fbox
}%\small
\par\vspace{\baselineskip}\noindent
%%%%%%%%%%%%%%%%%%%%%
We use $ \pincE, \pincF, \pincG$, and $ \pincH $ to range over processes.
The \dfn{inactive process} is $\pincZer$,
the \dfn{parallel composition} of $\pincE$, and $\pincF$ is
$\pincPar{\pincE}{\pincF}$. 
The \dfn{sequential composition}
$\pincSec{\pincLabL}{\pincE}$ sets the occurrence of the \dfn{action prefix}
$\pincLabL$ before the occurrence of $\pincE$.
\dfn{Logic restriction} $\pincNu{\pinca}{\pincE}$ hides all, and only, the occurrences of
$\pinca$, and $ \pincna $, inside $ \pincE $, which becomes invisible outside $\pincE$.
%%%
\paragraph{Size of processes.} The size $\Size{\pincE}$ of $\pincE$ is the number of symbols of $\pincE$.
%%%
\paragraph{Congruence on processes of $\CCSR$.}
We partition the processes of $\CCSR$ up to the smallest congruence which, by abusing
notation, we keep calling $\pincCong$, and which we obtain as reflexive, transitive, and contextual closure of the relation \eqref{align:PPi-structural-congruence} here below.
%%%%%%%%%%%%%%%%%%%%
\par\vspace{\baselineskip}\noindent
{\small
  \fbox{
   \vspace{-.2cm}
   \begin{minipage}{.974\linewidth}
      \begin{equation}
	\label{align:PPi-structural-congruence}
	\begin{aligned}
 \vlne\pincna & \sim \pinca
% \\
% \phantom{ \vlne\pincna} & \phantom{\sim \pinca}
% & \quad
% \pincSec{\pincLabL}{\pincSec{\pincZer}{\pincZer}} & \sim
%   \pincSec{\pincLabL}{\pincZer}
 & \quad
 \pincPar{\pincE}{\pincZer} & \sim \pincE
 & \quad
 \pincPar{\pincE}{\pincF} & \sim \pincPar{\pincF}{\pincE}
 & \quad
 \pincPar{\pincE}{(\pincPar{\pincF}{\pincG})}
 & \sim
 \pincPar{(\pincPar{\pincE}{\pincF})}{\pincG}
%  & \quad
%  \pincSec{\pincZer}{\pincE} & \sim \pincE
\\
\phantom{ \vlne\pincna}
%& \phantom{\sim \pinca}
%\phantom{\pincSec{\pincLabL}{\pincSec{\pincZer}{\pincZer}}} &
%\phantom{\sim \pincSec{\pincLabL}{\pincZer}}
& \phantom{}
   &\quad
%  \pincNu{\pincb}{\pincZer} & \sim \pincZer
%    &\quad
 \pincNu{\pinca}{\pincNu{\pincb}{\pincE}}
 & \sim
 \pincNu{\pincb}{\pincNu{\pinca}{\pincE}}
   &\quad
 \pincNu{\pincb}{(\pincE\subst{\pinca}{\pincb})}
 & \sim
 \pincNu{\pinca}{\pincE}
   &\quad
 \pincNu{\pinca}{\pincE}
 & \sim
 \pincE
 \textrm{ if } \pinca\not\in\pincFN{\pincE}
\end{aligned}
      \end{equation}
    \end{minipage}
  }%fbox
}%\small
\par\vspace{\baselineskip}\noindent
In~\eqref{align:PPi-structural-congruence}
(i) $\pincE\subst{\pinca}{\pincb}$ denotes a standard clash-free substitution of $\pinca$ for both $\pincb$, and $\pincnb$ in $\pincE$ that we can define as usual, and
%in analogy to \eqref{equation:BV2-structure-substitution}, and
% %%%%%%%%%%%%% NON CANCELLARE: contiene la definizione esplicita di
% %%%%%%%%%%%%% sostituizione sui termini di CCSr.
% \par\vspace{\baselineskip}\noindent
%   \fbox{
%     \begin{minipage}{.974\linewidth}
%      {\small
%       \vspace{-.3cm}
%       \begin{equation}
%         \label{equation:PPI-process-substitution}
%         \input{./PPI-process-substitution}
%       \end{equation}
%      }%\small
%     \end{minipage}
%   }%fbox
% \par\vspace{\baselineskip}\noindent
% %%%%%%%%%%%%%
(ii) $\pincFN{\cdot}$ is the set of free-names of a term in $\CCSR$, whose definition, again, is the obvious one. Namely, neither
$\pinca$, nor $\pincna$ belong to the set $\pincFN{\pincNu{\pinca}{\pincE}}$.
%%%%%%%
\paragraph{\Lts\ of $\CCSR$.}
Its rules are in~\eqref{equation:PPi-LTS-from-BVT}, and they justify why $ \CCSR $ is not Milner $ \CCS $.
%%%%%%%%%%%%%
\par\vspace{\baselineskip}\noindent
{\small
  \fbox{
    \begin{minipage}{.974\linewidth}
%     \vspace{-.1cm}
      \begin{equation}
        \label{equation:PPi-LTS-from-BVT}
   	    \begin{gathered}
% 1st
%\vlinf{\pincact}{}
%      {\pincLTSJud{\pincSec{\pincalpha}
%                           {\pincE}
%		          }
%				  {\pincE}
%				  {\pincLabI{\vlne\pincalpha}{\vlne\pincalpha}}
%      }
%      {
%        \pincalpha\in\Set{\pincLabT,\pincLabL}
%      }
\vlinf{\pincact}{}
      {\pincLTSJud{\pincSec{\pincLabL}
                           {\pincE}
		          }
				  {\pincE}
				  {\pincLabL}
      }
      {
      }
\qquad\qquad
\vliinf{\pinccom}{}
       {\pincLTSJud{\pincPar{\pincE}
                            {\pincF}}
		   {\pincPar{\pincE'}
                            {\pincF'}}
		   {\pincLabT}
       }
       {%1
        \pincLTSJud{\pincE}
                   {\pincE'}
                   {\pincLabL}
       }
       {%2
        \pincLTSJud{\pincF}
                   {\pincF'}
                   {\vlne\pincLabL}
       }
\\
% 3rd line
\vlinf{\pincpi}
      {(\pincalpha\in\Set{\pincb,\pincnb})}
      {\pincLTSJud{\pincPar{\pincNu{\pincb}{\pincE}}
                           {\pincNu{\pincb}{\pincF}}}
		  {\pincPar{\pincNu{\pincb}{\pincE'}}
		           {\pincNu{\pincb}{\pincF'}}}
		  {\pincPreT}}
      {\pincLTSJud{\pincPar{\pincE}{\pincF}}
		  {\pincPar{\pincE'}{\pincF'}}
		  {\pincalpha}}
\qquad
\vlinf{\pincpe}
      {(\pincalpha\not\in\Set{\pincb,\pincnb})}
      {\pincLTSJud{\pincPar{\pincNu{\pincb}{\pincE}}
                           {\pincNu{\pincb}{\pincF}}}
		  {\pincPar{\pincNu{\pincb}{\pincE'}}
		           {\pincNu{\pincb}{\pincF'}}}
		  {\pincalpha}}
      {\pincLTSJud{\pincPar{\pincE}{\pincF}}
		  {\pincPar{\pincE'}{\pincF'}}
		  {\pincalpha}}
\\
% 5th line
\vlinf{\pincrefl}
      {}
      {\pincLTSJud{\pincE}
		  {\pincE}
		  {\pincPreT}}
      {}
\qquad
\vlinf{\pinccntxp}{}
      {\pincLTSJud{\pincPar{\pincE}{\pincG}}
		  {\pincPar{\pincF}{\pincG}}
		  {\pincalpha}
      }
      {\pincLTSJud{\pincE}
                  {\pincF}
                  {\pincalpha}}
\qquad
  \vlinf{\pinctran}{}
    {\pincLTSJud{\pincE}
		{\pincG}
		{\pincalpha;\,\pincbeta}}
    {\pincLTSJud{\pincF}
		{\pincF'}
		{\pincalpha}
    \qquad
    \pincLTSJud{\pincF'}
		{\pincG'}
		{\pincbeta}}
\end{gathered}
      \end{equation}
    \end{minipage}
  }%fbox
}%\small
\par\vspace{\baselineskip}\noindent
%%%%%%%%%%%%%
In~\eqref{equation:PPi-LTS-from-BVT}, the rule $\pincact$ implements external
communication, by firing the action prefix $\pincLabL$, as usual. The rule $\pinccom$ implements internal
communication, annihilating two complementary actions. 
The rules $\pincpi$, and $\pincpe$ allow processes, one aside the other, to communicate, even when both are inside a logic restriction. This is a consequence of the 
logical nature of \OpNameRen, which binds names, and co-names, up to their renaming, indeed.
The rule $\pinccntxp$ leaves processes, one aside the other, to evolve
independently. Finally, $\pincrefl$ makes the relation reflexive.
%%%%
\begin{example}[\textbf{\textit{Using the \lts}}]
\label{example:Using the labeled transition system}
As a first example, we rewrite
$\pincNu{\pinca}
	{(\pincPar{(\pincSec{\pinca}
	                   {\pincSec{\pincb}
                                    {\pincE})}
	          }
                  {\pincSec{\pincna}
                           {\pincF}
                  })
        }$
to $\pincNu{\pinca}{(\pincPar{\pincE}{\pincF})}$, observing the action $\pincb$, as
follows:
%%%%%%%%%%%
\par\vspace{\baselineskip}\noindent
{\scriptsize
  \fbox{
    \begin{minipage}{.974\linewidth}
    \vspace{-.3cm}
      \begin{equation}
      \label{equation:Using the labeled transition system-01}
      \vlderivation{
 \vliin{\pinctran}{}
	   {\pincLTSJud{\pincNu{\pinca}
	                       {(\pincPar{(\pincSec{\pinca}
	                                           {\pincSec{\pincb}
	                                                    {\pincE}
	                                           })
	                                 }
                                         {\pincSec{\pincna}
                                                  {\pincF}
                                         })
                               }
                   }
		           {\pincNu{\pinca}{(\pincPar{\pincE}{\pincF})}
			       }
		       {\pincPreT;\pincb\,\pincCong\,\pincb}
       }
       {%%A
        \vlin{\pincpe}{\pincLabT\not\equiv\pinca}
	         {\pincLTSJudShort{\pincNu{\pinca}
		                              {(\pincPar{(\pincSec{\pinca}
						                                  {\pincSec{\pincb}
						                                           {\pincE}})
					                            }
					                            {\pincSec{\pincna}
					                                     {\pincF}})
		                              } 
					           \pincCong
					           \pincPar{\pincNu{\pinca}
					             	           {(\pincPar{(\pincSec{\pinca}
					             					               {\pincSec{\pincb}
					             					                        {\pincE}})
					             				         }
					             				         {\pincSec{\pincna}
					             				                  {\pincF}})
					                           }
					                   }
					                   {\pincNu{\pinca}{\pincZer}}
	                           }
			                   {\pincPar{\pincNu{\pinca}
			                                    {(\pincPar{\pincSec{\pincb}{\pincE}}
			                                     		  {\pincF})}
			                            }
			                            {\pincNu{\pinca}{\pincZer}}
			                   }
			                   {\pincPreT}
%		                       {r}{&}
             } {
	            \vliin{\pinccom}{}
    		          {\pincLTSJudShort{\pincPar{\pincPar{(\pincSec{\pinca}
    		          		                                       {\pincSec{\pincb}
    		          		                                                {\pincE}})}
    		          				                     {\pincSec{\pincna}
    		          				                              {\pincF}}}
    		                                    {\pincZer}
    		                             \pincCong
    		                             \pincPar{(\pincSec{\pinca}
		                                                   {\pincSec{\pincb}
		                                                            {\pincE}})}
				                                 {\pincSec{\pincna}
				                                          {\pincF}}
			                           }
				                       {\pincPar{(\pincSec{\pincb}{\pincE})}{\pincF}
				                        \pincCong
				                        \pincPar{\pincPar{(\pincSec{\pincb}{\pincE})}{\pincF}}
				                                {\pincZer}
				                       }
			                           {\pincPreT}
  		              } {%%%A.1-------
		                 \vlin{\pincact}{}
		                      {\pincLTSJudShort{\pincSec{\pinca}
		                                                {\pincSec{\pincb}{\pincE}}
			                                   }
			                                   {\pincSec{\pincb}{\pincE}}
		                                       {\pinca}
		                      } {
		                 \vlhy{}}
		                }
		                {%%%A.2-------
		                 \vlin{\pincact}{}
		                      {\pincLTSJudShort{\pincSec{\pincna}{\pincF}}{\pincF}{\pincna}
		                      }{
		                 \vlhy{}}
		                }
              }
   }%%A-stop
   {%%%B
    \vlin{\pincpe}{\pincb\not\equiv\pinca}
   		 {\pincLTSJudShort{\pincPar{\pincNu{\pinca}
		                                   {(\pincPar{\pincSec{\pincb}{\pincE}}
		                                             {\pincF})}
		                           }
		                           {\pincNu{\pinca}{\pincZer}}
		                  }
   			              {\pincPar{\pincNu{\pinca}{(\pincPar{\pincE}{\pincF})}}
   			                       {\pincNu{\pinca}{\pincZer}}
   			               \pincCong
   			               \pincNu{\pinca}{(\pincPar{\pincE}{\pincF})}
   			              }
   			              {\pincb}
   	      }   {
    \vlin{\pinccntxp}{}
   		 {\pincLTSJudShort{\pincPar{\pincPar{\pincSec{\pincb}{\pincE}}{\pincF}}
   		                           {\pincZer}
   		                   \pincCong
   		                   \pincPar{\pincSec{\pincb}{\pincE}}{\pincF}
   		                  }
   			              {\pincPar{\pincE}{\pincF}
   			               \pincCong
   			               \pincPar{\pincPar{\pincE}{\pincF}}
   			                       {\pincZer}}
   			              {\pincb}
   	      }   {
    \vlin{\pincact}{}
   		 {\pincLTSJudShort{\pincSec{\pincb}{\pincE}}
   			              {\pincE}
   			              {\pincb}
   	      }   {
    \vlhy{}   }}}
   }%%%B-stop
}%%%vlderivation---stop

      \end{equation}
    \end{minipage}
  }%fbox
}%\scriptsize
\vspace{\baselineskip}\par
%%%%%%%%%%%%%%%
As a second example, we show that the labeled transition system
\eqref{equation:PPi-LTS-from-BVT} allows some interaction which originates from the logical nature of \OpNameRen. In $ \CCSR $ we model that
$\pincPar{\pincNu{\pinca}
	         {(\pincSec{\pinca}{\pincSec{\pincb}
	                                    {\pincE}}
                 )}
	 }
         {\pincNu{\pinca}
                 {(\pincSec{\pincna}
                           {\pincF })}}$
reduces to $\pincNu{\pinca}{(\pincPar{\pincE}{\pincF})}$, observing $\pincb$,
unlike in Milner $\CCS$:
%%%%%%%%%%%%%
\par\vspace{\baselineskip}\noindent
{\small
  \fbox{
    \begin{minipage}{.974\linewidth}
    \vspace{-.4cm}
      \begin{equation}
      \label{equation:Using the labeled transition system-02}
      \vlderivation{
 \vliin{\pinctran}{}
	   {\pincLTSJud{\pincPar{\pincNu{\pinca}
		   	                        {(\pincSec{\pinca}
		                                      {\pincSec{\pincb}
		                                               {\pincE}
		                                      })}
	                        }
                            {\pincNu{\pinca}
                                    {(\pincSec{\pincna}
                                              {\pincF})}
                            }
                   }
		           {\pincNu{\pinca}{(\pincPar{\pincE}{\pincF})}
			       }
		       {\pincPreT;\pincb\,\pincCong\,\pincb}
       }
       {%%A
        \vlin{\pincpe}{\pincLabT\not\equiv\pinca}
	         {\pincLTSJudShort{\pincPar{\pincNu{\pinca}
		         	   	                       {(\pincSec{\pinca}
		         	                                     {\pincSec{\pincb}
		         	                                              {\pincE}
		         	                                     })}
	       	                           }
	                                   {\pincNu{\pinca}
	                                           {(\pincSec{\pincna}
	                                                     {\pincF})}
	                                   }
		                      }
			                  {\pincPar{\pincNu{\pinca}{(\pincSec{\pincb}{\pincE})}}
			                           {\pincNu{\pinca}{\pincF}}
			                  }
			                  {\pincPreT}
             } {
	            \vliin{\pinccom}{}
    		          {\pincLTSJudShort{\pincPar{(\pincSec{\pinca}
			                                              {\pincSec{\pincb}{\pincE}})}
					                            {\pincSec{\pincna}{\pincF}}
				                       }
				                       {\pincPar{(\pincSec{\pincb}{\pincE})}{\pincF}
				                       }
			                           {\pincPreT}
  		              } {%%%A.1-------
		                 \vlin{\pincact}{}
		                      {\pincLTSJud{\pincSec{\pinca}
		                                           {\pincSec{\pincb}{\pincE}}
			                              }
			                              {\pincSec{\pincb}{\pincE}}
		                                  {\pinca}
		                      } {
		                 \vlhy{}}
		                }
		                {%%%A.2-------
		                 \vlin{\pincact}{}
		                      {\pincLTSJud{\pincSec{\pincna}{\pincF}}{\pincF}{\pincna}
		                      }{
		                 \vlhy{}}
		                }
              }
   }%%A-stop
   {%%%B
    \vlin{\pincpe}{\pincnb\not\equiv\pinca}
   		 {\pincLTSJudShort{\pincPar{\pincNu{\pinca}{(\pincSec{\pincb}{\pincE})}}
	   		 		               {\pincNu{\pinca}{\pincF}}
	   		 		      }
	   			          {\pincPar{\pincNu{\pinca}{(\pincPar{\pincE}{\pincF})}}
	   			                   {\pincNu{\pinca}{\pincZer}}
	   			           \pincCong
	   			           \pincNu{\pinca}{(\pincPar{\pincE}{\pincF})}
	   			          }
	   			          {\pincb}
   	      }   
   	      {
    \vlin{\pinccntxp}{}
   		 {\pincLTSJudShort{\pincPar{\pincSec{\pincb}{\pincE}}{\pincF}
	   		              }
	   			          {\pincPar{\pincE}{\pincF}
	   			           \pincCong
	   			           \pincPar{\pincPar{\pincE}{\pincF}}
	   			                   {\pincZer}}
	   			          {\pincb}
   	      }   {
    \vlin{\pincact}{}
   		 {\pincLTSJudShort{\pincSec{\pincb}{\pincE}}
	   			          {\pincE}
	   			          {\pincb}
   	      }   {
    \vlhy{}   }}}
   }%%%B-stop
}%%%vlderivation---stop

      \end{equation}
    \end{minipage}
  }%fbox
}%\small
\vspace{\baselineskip}\par\noindent
%%%%%%%%%%%%%
\end{example}
%%%%%%%%%%%%%
\paragraph{\Simpleprocess es.} They are the last notion we introduce in this section. 
They are useful for technical reasons which Section~\ref{section:Soundness of BVTCC} will make apparent. A process $\pincE$ is a \textit{\simpleprocess} whenever it satifies two constraints. First, $ \pincE $ must belong to the language of~\eqref{equation:PPi-simple-processes}:
%%%%%%%%%%%%%
\par\vspace{\baselineskip}\noindent
{\small
  \fbox{
    \begin{minipage}{.974\linewidth}
     \vspace{-.1cm}
       \begin{equation}
	 \label{equation:PPi-simple-processes}
	 \begin{aligned}
\pincE & \grammareq \pincZer
	  \ \ \mid \ \ \pincSec{\pincLabL}{\pincZer}
	  \ \ \mid \ \ \pincPar{\pincE}{\pincE}
	  \ \ \mid \ \ \pincNu{\pinca}{\pincE}
\end{aligned}
       \end{equation}
%      \vspace{-.7cm}
    \end{minipage}
  }%fbox
}%\small
\par\vspace{\baselineskip}\noindent
%%%%%%%%%%%%
Second, if $\pincLabL_1,\ldots,\pincLabL_n$ are all, and only, the action prefixes that occur in $ \pincE $, then $ i\neq j $ implies $\pincLabL_i\neq\vlne{\pincLabL_j}$, for every 
$i,j\in\Set{1,\ldots,n}$.
%%%
\begin{example}[\textbf{\textit{\Simpleprocess es}}]
\label{example:Simple processes}
Some are in the following table.
%%%%%%%%%%%%%
\par\vspace{\baselineskip}\noindent
{\small
\fbox{
    \begin{minipage}{.974\linewidth}
%       \begin{equation}
	      \begin{center}
{%%%%%%Start scope of \doublespacing
\doublespacing
\begin{tabular}{c}
%\textbf{\Simpleprocess es}
%\\\hline
$\pincPar{(\pincSec{\pinca}{\pincZer})}
         {(\pincSec{\pincnb}{\pincZer})}$
\\\hline
$    \pincPar{(\pincSec{\pinca}{\pincZer})
             }
             {\pincPar{\pincNu{\pincb}
			      {(\pincPar{\pincNu{\pincd}
                                                {(\pincPar{(\pincSec{\pinca}
                                                                    {\pincZer})}

							  {(\pincSec{\pincnc}
								    {\pincZer})})
                                                }
					}
					{(\pincSec{\pincb}{\pincZer})})}
                      }
                      {(\pincSec{\pinca}{\pincZer})}
             }
$
\\\hline
$
\pincPar{\pincNu{\pincb}
		{(\pincPar{\pincNu{\pincc}
				  {(\pincPar{(\pincSec{\pinca}
						      {\pincZer})}

					    {(\pincSec{\pincnc}
						      {\pincZer})})
				  }
			  }
			  {(\pincSec{\pincb}{\pincZer})})}
	}
	{(\pincSec{\pinca}{\pincZer})}
$
\end{tabular}
}%%%%%End scope of \doublespacing
\end{center}
%       \end{equation}
    \end{minipage}
  }%fbox
}%\small
\par\vspace{\baselineskip}\noindent
Both the second, and the third process are \simple\ because they belong to~\eqref{equation:PPi-simple-processes}, and $\pinca,\pincb,\pincnc$ is the list of their pairwise distinct action prefixes.
\end{example}
%%%%%%%%%%%%%
\begin{remark}[\textbf{\textit{Aim, and nature of \simpleprocess es}}]
In coming Section~\ref{section:How computing in CCSR by means of BVTCC} we shall intuitively show that \simpleprocess es play the role of results of computations when we use derivations of $\BVT$ to compute what the \lts\ in~\eqref{equation:PPi-LTS-from-BVT} can, in fact, compute by itself.
\end{remark}
\section{How computing in $\CCSR$ by means of $\BVT$}
\label{section:How computing  in CCSR by means of BVTCC}
Given $\BVT$, and $\CCSR$ we illustrate how transforming questions about the existence of computations of $ \CCSR $ into questions about proof-search inside the standard fragment $\BVTL$ of $\BVT$.
Let $\pincE$, and $\pincF$, be two processes of $\CCSR$, with $ \pincF$ simple.
Let us assume we want to check
$\pincLTSJud
   {\pincE}
   {\pincF}
   {\pincLabL_1;\cdots;\pincLabL_n}$.
Next we highlight the main steps to answer such a question by answering a question 
about proof-search inside $\BVT$, without resuming to computations in the \lts\ of $\CCSR$.
\par
To that purpose, this section has two parts. 
The first one formalizes the notions that makes the link between processes of $ \CCSR $, and structures of $ \BVT $ precise. The second part, \ie Subsection~\ref{subsection:Reducing the labeled transitions to proof-search steps}, delineates the steps to transform one question into the other, eventually justifying also the need to prove the Soundness of $\BVTL$ --- not 
$\BVT$ --- \wrt\ $\CCSR$, in Section~\ref{section:Soundness of BVTCC}.
%%%%%%%%%
\subsection{Connecting $\CCSR$, and $ \BVT $}
\label{subsubsection:Formally connecting CCSR, and BVTCC}
%%%%%%
\paragraph{\Processstructure s.}
They belong to the language of the grammar~\eqref{align:BV2-process-structures} here below, and, clearly, they are \OpNameTen-free:
%%%%%%%%%%%%%
\par\vspace{\baselineskip}\noindent
{\small
  \fbox{
    \begin{minipage}{.974\linewidth}
      \begin{equation}
	\label{align:BV2-process-structures}
         \strR  \grammareq \vlone
	  \ \ \mid \ \ \vlsbr<\pincLabL;\strR >
	  \ \ \mid \ \ \vlsqbrl\strR\vlpa \strR\vlsqbrr
	  \ \ \mid \ \ \vlfo{\pinca}{\strR}
      \end{equation}
    \end{minipage}
  }%fbox
}%\small
\par\vspace{\baselineskip}\noindent
%%%%%%%%%%%%
Like at page~\ref{fig:BVT-structures}, we range over variable names of \processstructure s by
$\atmLabL, \atmLabM$, and $ \atmLabN $.
%%%%
\begin{fact}[\textbf{\textit{Processes correspond to \processstructure s}}]
\label{fact:From a process to a process structure}
Processes, and \processstructure s isomorphically correspond thanks to the following isomorphism, so extending the correspondence in
\cite{Brus:02:A-Purely:wd} among $\CCS$ terms, and $\BV$ structures.
%%%%%%%%%%%%%
\par\vspace{\baselineskip}\noindent
{\small
  \fbox{
    \begin{minipage}{.974\linewidth}
       \begin{equation}
	 \label{equation:SBV2-to-PPi-map-process}
	\begin{gathered}
  \begin{minipage}{.35\textwidth}
      \vlstore{
% 	  \label{align:map-Zer}
	  \mapPincToDi{\pincZer}
	  &\mapsto       
	  \vlone
	  \\
% 	  \label{align:map-name}
	  \mapPincToDi{\pinca}
	  &\mapsto
	  \atma
	  \\
% 	  \label{align:map-coname}
	  \mapPincToDi{\pincna}
	  &\mapsto
	  \natma
	}%\vlstore
	  $\begin{aligned}
	    \vlread
	   \end{aligned}$
  \end{minipage}
  %%%% 2nd column
  \begin{minipage}{.45\textwidth}
      \vlstore{
          \mapPincToDi{\pincSec{\pincLabT}
          			           {\pincE}}
				  &\mapsto
			      \vlsbr<\vlone;\mapPincToDi{\pincE}>
				  \\
			% 	  \label{align:map-Pre-I}
			      \mapPincToDi{\pincSec{\pincLabL}
			      			           {\pincE}}
				  &\mapsto
				  \vlsbr<\mapPincToDi{\pincLabL}
				        ;\mapPincToDi{\pincE}>
				  \\
			% 	  \label{align:map-Par}
				  \mapPincToDi{\pincPar{\pincE}
				                       {\pincF}
				              }
				  &\mapsto
				  \vlsbr[\mapPincToDi{\pincE}
				        ;\mapPincToDi{\pincF}]
				  \\
			% 	  \label{align:map-Nu}
				  \mapPincToDi{\pincNu{\pinca}
				                      {\pincE}}
				  &\mapsto
 				  \vlfo{\atma}{\mapPincToDi{\pincE}}
				}%\vlstore
         $\begin{aligned}
	  \vlread
	 \end{aligned}$
    \end{minipage}
\end{gathered}
       \end{equation}
    \end{minipage}
  }%fbox
}%\small
\par\vspace{\baselineskip}\noindent
%%%%%%%%%%%%%
\end{fact}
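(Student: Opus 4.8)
The plan is to read \eqref{equation:SBV2-to-PPi-map-process} as defining $\mapPincToDi{\cdot}$ by structural recursion on \eqref{align:PPi-syntax}, to exhibit an inverse $\mapDiToPinc{\cdot}{}$ defined by structural recursion on \eqref{align:BV2-process-structures}, and to show that the two maps descend to mutually inverse bijections between $\CCSR$-processes modulo $\pincCong$ and \processstructure s modulo $\approx$. Concretely, $\mapDiToPinc{\cdot}{}$ sends $\vlone$ to $\pincZer$, sends $\vlsbr<\atma;\strR>$ to $\pincSec{\pinca}{\mapDiToPinc{\strR}{}}$ and dually $\vlsbr<\natma;\strR>$ to $\pincSec{\pincna}{\mapDiToPinc{\strR}{}}$, sends $\vlsbr[\strR;\strT]$ to $\pincPar{\mapDiToPinc{\strR}{}}{\mapDiToPinc{\strT}{}}$, and sends $\vlfo{\atma}{\strR}$ to $\pincNu{\pinca}{\mapDiToPinc{\strR}{}}$. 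Each clause of $\mapPincToDi{\cdot}$ hits exactly one syntactic constructor of \eqref{align:BV2-process-structures} and conversely, the single non-literal clause being the silent prefix, for which $\mapPincToDi{\pincSec{\pincLabT}{\pincE}}=\vlsbr<\vlone;\mapPincToDi{\pincE}>\approx\mapPincToDi{\pincE}$ by \eqref{align:unit-seq}.

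First I would prove the two round-trip identities $\mapDiToPinc{\mapPincToDi{\pincE}}{}\pincCong\pincE$ and $\mapPincToDi{\mapDiToPinc{\strR}{}}\approx\strR$ by induction on $\pincE$ and on $\strR$ respectively. Apart from the silent-prefix clause, where the identity reduces to the instance $\vlsbr<\vlone;\strR>\approx\strR$ of \eqref{align:unit-seq}, every case is a direct application of the induction hypothesis under a matching constructor, so these are routine. This already yields surjectivity of $\mapPincToDi{\cdot}$ onto $\approx$-classes: every \processstructure\ $\strR$ equals $\mapPincToDi{\mapDiToPinc{\strR}{}}$ up to $\approx$.

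The core of the argument is that the two congruences correspond clause by clause, \ie $\pincE\pincCong\pincF$ iff $\mapPincToDi{\pincE}\approx\mapPincToDi{\pincF}$. For the forward direction, since $\mapPincToDi{\cdot}$ commutes with all process contexts, it suffices to send each generating clause of \eqref{align:PPi-structural-congruence} to an instance of $\approx$: parallel unit, commutativity, and associativity map to \eqref{align:unit-pa}, \eqref{align:symm-pa}, and \eqref{align:assoc-pa} using $\mapPincToDi{\pincZer}=\vlone$; the restriction-permutation clause maps to \eqref{align:alpha-symm}; the $\alpha$-renaming clause to \eqref{align:alpha-varsub}; the restriction-garbage clause to \eqref{align:alpha-intro}; and the label identity $\vlne{\pincna}\sim\pinca$ to \eqref{align:negation-negation}. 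The converse direction is obtained symmetrically through $\mapDiToPinc{\cdot}{}$, checking that each generating clause of $\approx$ restricted to \processstructure s is sent to an instance of $\pincCong$; combined with the round-trip identities this gives injectivity on $\pincCong$-classes. Injectivity, surjectivity, and compatibility with the constructors together yield the claimed isomorphism.

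The main obstacle is making the quantifier/restriction clauses line up exactly rather than merely up to bookkeeping. Concretely, I must first establish $\pincFN{\pincE}=\strFN{\mapPincToDi{\pincE}}$ by a short induction, because the side condition $\pinca\notin\pincFN{\pincE}$ of \eqref{align:PPi-structural-congruence} and the side conditions $\atma\notin\strFN{\strR}$ of \eqref{align:alpha-intro}--\eqref{align:alpha-varsub} must be shown equivalent along the map; and I must check that the clash-free substitution $\pincE\subst{\pinca}{\pincb}$ on processes is carried to the structural substitution $\mapPincToDi{\pincE}\subst{\atma}{\atmb}$, so that the $\alpha$-renaming clauses match. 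The silent-prefix clause also deserves care, as it is the only point where $\mapPincToDi{\cdot}$ is not a literal constructor-to-constructor translation and where \eqref{align:unit-seq} is genuinely needed.
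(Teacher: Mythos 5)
The paper itself gives no proof of this Fact: it is taken as immediate because the grammar \eqref{align:PPi-syntax} of processes and the grammar \eqref{align:BV2-process-structures} of \processstructure s match constructor for constructor. Your inverse map and the two round-trip inductions formalize exactly this implicit argument, and they are correct; note, in fact, that on the grammars as given the round trips hold on the nose and not merely up to congruence, since silent prefixes $\pincSec{\pincLabT}{\pincE}$ are not generated by \eqref{align:PPi-syntax} (the silent action is not a label) and $\vlsbr<\vlone;\strR>$ is not generated by \eqref{align:BV2-process-structures}, so the only ``non-literal'' clause of the map is never exercised. Up to this point your proposal is sound and coincides with what the paper takes for granted.

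The gap is in what you call the core of the argument, namely the converse half of the claim that $\pincE\pincCong\pincF$ iff $\mapPincToDi{\pincE}\approx\mapPincToDi{\pincF}$. The forward direction is indeed the routine clause-by-clause check you describe, granting your $\pincFN{\cdot}$/$\strFN{\cdot}$ and substitution lemmas. But the converse cannot be obtained by ``checking each generating clause of $\approx$ restricted to \processstructure s'': $\approx$ is the contextual closure of \eqref{align:negation-atom}--\eqref{align:alpha-symm} over \emph{arbitrary} structures, so a derivation witnessing $\mapPincToDi{\pincE}\approx\mapPincToDi{\pincF}$ may pass through intermediate structures that are not \processstructure s --- for instance instances of \eqref{align:unit-seq}, of \OpNameSeq-associativity \eqref{align:assoc-se}, of the \OpNameCop\ and negation clauses all produce structures outside \eqref{align:BV2-process-structures}, on which your inverse $\mapDiToPinc{\cdot}{}$ is undefined. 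Hence transporting the derivation clause by clause along the inverse does not cover all cases, and injectivity on congruence classes does not follow as stated. Closing this requires an extra ingredient your sketch does not contain: for example a normal-form argument in the spirit of \canonicalstructure s (Fact~\ref{fact:Structures normalize to canonical forms}), showing that $\approx$-equal \processstructure s have equal canonical representatives which are again images of processes, or an induction on $\approx$-derivations with an invariant strong enough to re-enter the fragment. The difficulties you do flag (free names, substitution) are genuine but minor compared to this one.
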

%%%%%%%%%%
\paragraph{\Environmentstructure s.}
\label{paragraph:Environmentstructure s}
Let us recall Example~\eqref{example:Modeling external communication inside BVT}. It shows that representing an external communication as a derivation of $\BVT$ requires to assign a specific meaning to the structures in the conclusion of the derivation. One structure represents a process. The other one encodes the labels that model the sequence of messages between the process, and an environment. So,  we need to identify the \dfn{\environmentstructure s}, namely the set of structures that can fairly represent the sequence of messages. By definition, we say that every \dfn{\environmentstructure}\ is a \emph{\canonical} structure 
(page~\pageref{paragraph:Structures in canonical form}) that the following grammar \eqref{equation:SBV2-environment-structures} generates:
%%%%%%%%%%%%%
\par\vspace{\baselineskip}\noindent
  \fbox{
    \begin{minipage}{.974\linewidth}
      {\small
        \begin{equation}
	    \label{equation:SBV2-environment-structures}
	    \begin{aligned}
%%%% With single action prefix
\strR & \grammareq
                      \vlone
        \ \ \mid  \ \ \atmLabL
        \ \ \mid  \ \ \vlsbr<\atmLabL;\strR>
        \ \ \mid  \ \ \vlstore{\vlsbr<\atmLabL;\strR>}
                      \vlfo{\atma}{\vlread}
%         \ \ \mid  \ \ \vlfo{\atma}{\strR}
\end{aligned}
	\end{equation}
      }%\small
    \end{minipage}
  }%fbox
\par\vspace{\baselineskip}\noindent
%%%%%%%%%%%%%
If different from $\vlone$, we have to think of every \environmentstructure\ as a
list, possibly in the scope of some instance of \OpNameRen, that we can consume
from its leftmost component, onward.
%%%%%%%%%%%%%
\begin{example}[\textit{\textbf{\Environmentstructure s}}]
\label{example:Environment structures}
Let 
$\natma,\atma_1,\natma_1,\atmb_1,\atmb_2 \not\approx \vlone$.
%%%%%%%%%%%%%
\par\vspace{\baselineskip}\noindent
{\small
\fbox{
    \begin{minipage}{.974\linewidth}
    \vspace{-.4cm}
	    \vlstore{
& \natma
   & \textrm{example}
\\
\label{eqnarray:example-environment-structure-00}
& \vlsbr<\atma_1
        ;\vlfo{\atmb_2}{<\natma_1
                        ;\vlfo{\atmb_1}
                              {<\atmb_2;\atmb_1>}>}>
   & \textrm{example}
\\
\label{eqnarray:example-environment-structure-01}
& \vlsbr<\atma_1
        ;\vlfo{\atmb_4}{<\natma_1
                        ;\vlfo{\atmb_1}
                              {<\atmb_2;\atmb_1>}>}>
   & \textrm{counterexample}
\\
\label{eqnarray:example-environment-structure-02}
& \vlsbr<\atma_1
        ;\vlfo{\atmb_2}
              {<\vlone
               ;\vlfo{\atmb_1}{<\atmb_2;\atmb_1>}>}>
   & \mbox{counterexample}
}
{\setlength{\arraycolsep}{2pt}
\begin{eqnarray}
 \vlread
\end{eqnarray}}
    \vspace{-.5cm}
    \end{minipage}
  }%fbox
}%\small
\par\vspace{\baselineskip}\noindent
%%%%%%%%%%%%%
\eqref{eqnarray:example-environment-structure-01}
is not an \environmentstructure\ because $\atmb_4$ does not occur in the structure.
\eqref{eqnarray:example-environment-structure-02}
is not an \environmentstructure\ because $\vlone$ occurs in it.
\end{example}
%%%%%%%%%%%%
\begin{fact}[\textbf{\textit{\Environmentstructure s map to sequences of
actions}}]
\label{fact:From an environment structure to a set of actions}
The map~\eqref{equation:SBV2-to-PPi-map-actions} takes both an 
\environmentstructure, and a set of
atoms as arguments. The map transforms a given \environmentstructure\ to a sequence of actions
that may work as a label of transitions in~\eqref{equation:PPi-LTS-from-BVT}.
%%%%%%%%%%%%%
\par\vspace{\baselineskip}\noindent
{\small
  \fbox{
    \begin{minipage}{.974\linewidth}
     \vspace{-.5cm}
       \begin{equation}
	     \label{equation:SBV2-to-PPi-map-actions}
  	     %%%%%% From environment structures to actions
\begin{gathered}
    \begin{minipage}{.4\linewidth}
      \vlstore{
	}%\vlstore
	{\setlength{\arraycolsep}{2pt}
	\begin{eqnarray*}
% 	  \vlread
	  \mapDiToPinc{\vlone}{X}
	  &\mapsto&
	  \pincPreT
	  \\
	  \mapDiToPinc{\pincLabL}{X}
	  &\mapsto&
	  \pincPreT
	  \qquad (\pincLabL\in X)
	  \\
	  \mapDiToPinc{\pincLabL}{X}
	  &\mapsto&
	  \pincLabnL
	  \qquad (\pincLabL\not\in X)
	\end{eqnarray*}
	}
    \end{minipage}
    %%%% 2nd column
    \begin{minipage}{.5\linewidth}
      \vlstore{
          \mapDiToPinc{\vlfo{\atma}{\strR}}{X}
	  &\mapsto&
          \mapDiToPinc{\strR}{X\cup\Set{\atma,\natma}}
	  \\
          \mapDiToPinc{\vlsbr<\strP;\strR>}{X}
	  &\mapsto& %% continues after vlread here below
	}%\vlstore
	{\setlength{\arraycolsep}{2pt}
	\begin{eqnarray*}
	  \vlread
          \mapDiToPinc{\strP}{X};\mapDiToPinc{\strR}{X}
	\end{eqnarray*}
	}
    \end{minipage}
\end{gathered}
       \end{equation}
     \vspace{-.7cm}
    \end{minipage}
  }%fbox
}%\small
\par\vspace{\baselineskip}\noindent
%%%%%%%%%%%%%%%%%%
Given an \environmentstructure, the map yields the corresponding sequence, if its second argument is  $\emptyset$.
\end{fact}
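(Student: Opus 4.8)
The plan is to prove the Fact by structural induction on the \environmentstructure\ $\strR$ --- equivalently by induction on its size $\Size{\strR}$ --- while generalising over the second argument $X$. Concretely, I would establish the slightly stronger statement ``for every finite set of atoms $X$, the structure $\mapDiToPinc{\strR}{X}$ is a sequence of actions in the sense of~\eqref{equation:PPi-LTS-labels}'', and then read off the Fact by instantiating $X=\emptyset$. The generalisation over $X$ is what lets the induction close, because the clause for \OpNameRen\ in~\eqref{equation:SBV2-to-PPi-map-actions} recurses on a structurally smaller argument but with a strictly larger set $X\cup\Set{\atma,\natma}$, so a hypothesis fixed at $X=\emptyset$ would not apply to the subcall.

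First I would check well-definedness and termination. The four productions of the grammar~\eqref{equation:SBV2-environment-structures}, namely $\vlone$, $\atmLabL$, $\vlsbr<\atmLabL;\strR>$, and $\vlfo{\atma}{\vlsbr<\atmLabL;\strR>}$, have pairwise distinct top-level shapes, so each is matched by exactly one clause of the map and $\mapDiToPinc{\cdot}{X}$ is a total, deterministic function on \environmentstructure s. In every recursive clause the argument is a proper substructure of $\strR$: removing the binding $\vlfo{\atma}{\cdot}$ lowers $\Size{\cdot}$ by at least one, and splitting $\vlsbr<\atmLabL;\strR>$ leaves both $\atmLabL$ and $\strR$ strictly smaller. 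This settles well-foundedness of the recursion.

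The inductive verification that the output lands in the action language is then immediate in each case. For the base cases, $\mapDiToPinc{\vlone}{X}=\pincPreT$ is the silent action, and $\mapDiToPinc{\atmLabL}{X}$ is either $\pincPreT$ or $\pincLabnL$; the latter is an instance of the production $\vlne{\pincAct}$ of~\eqref{equation:PPi-LTS-labels} (and is, up to the congruence~\eqref{align:PPi-labels-congruence}, again a plain label). For a sequence the map returns $\mapDiToPinc{\atmLabL}{X};\mapDiToPinc{\strR}{X}$, which is an action sequence by the base case, the induction hypothesis, and the sequential-composition production of~\eqref{equation:PPi-LTS-labels}; note that in an \environmentstructure\ the left component of a \OpNameSeq\ is always a single label, so the general sequence clause of the map specialises to this shape. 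For the \OpNameRen\ case the quantifier is discarded and the map recurses with $X\cup\Set{\atma,\natma}$ on the strictly smaller body, so the (generalised) induction hypothesis applies directly.

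The one genuinely delicate point --- where I would spend most of the care --- is well-definedness with respect to the congruence $\approx$, since \environmentstructure s are taken up to $\approx$ while the map is specified syntactically. The key observation is that the \OpNameRen\ clause treats \emph{both} $\atma$ and $\natma$ as silent by inserting them into $X$, so every bound occurrence is sent to $\pincPreT$ regardless of the chosen bound name; this makes the result invariant under the $\alpha$-rules \eqref{align:alpha-varsub} and \eqref{align:alpha-symm}, and under \eqref{align:alpha-intro} because a vacuously bound name contributes nothing. Because every \environmentstructure\ is moreover \canonical, units and left-nested brackets have already been normalised away, so no further congruence cases arise and the map is constant on $\approx$-classes, hence well defined.
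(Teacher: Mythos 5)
Your proof is correct, but be aware that the paper offers no proof of this statement at all: it is labelled a \textbf{Fact}, and the displayed equation~\eqref{equation:SBV2-to-PPi-map-actions} \emph{is} the definition of the map, the assertion that its output lies in the action language~\eqref{equation:PPi-LTS-labels} being regarded as immediate from the shape of the grammar~\eqref{equation:SBV2-environment-structures}. What you supply is the verification the paper elides, and you organise it the right way. The strengthening to an arbitrary second argument $X$ is exactly what is needed for the induction to close through the \OpNameRen\ clause, since a hypothesis fixed at $X=\emptyset$ would not apply to the recursive call on $X\cup\Set{\atma,\natma}$; your remark that in an \environmentstructure\ the left component of every \OpNameSeq\ is a single label correctly explains why the general sequence clause of the map suffices; and your use of the production $\vlne{\pincAct}$ of~\eqref{equation:PPi-LTS-labels}, together with the congruence~\eqref{align:PPi-labels-congruence}, to absorb the case $\mapDiToPinc{\pincLabL}{X}\mapsto\pincLabnL$ is accurate. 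Your final paragraph --- invariance under $\approx$, argued from the fact that a bound name and its co-name are both sent to $\pincPreT$ so that \eqref{align:alpha-intro}, \eqref{align:alpha-varsub}, and \eqref{align:alpha-symm} cannot change the result, with canonicity disposing of the unit and associativity axioms --- addresses a well-definedness issue the paper never raises, and is a genuine addition rather than a reconstruction of anything in the text. In short: correct, strictly more detailed than the paper, no gap.
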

%%%%%%%%%%%%%
\begin{example}[\textbf{\textit{From an \environmentstructure\ to actions}}]
\label{example:From an environment structure to a set of actions}
Both $\atmb_1$, and $\atmb_2$ are internal actions of
$\vlstore{\vlsbr<\atma_1
                ;\vlfo{\atmb_2}
                      {<\natma_1
                       ;\vlfo{\atmb_1}{<\atmb_2;\atmb_1>}>}>}
\mapDiToPinc{\vlread}{\emptyset} = 
\atma_1;\natma_1; \pincPreT;\pincPreT
\pincCong
\atma_1 ;\natma_1$ in~\eqref{eqnarray:example-environment-structure-00}. Intuitively, if a variable name $\pincLabL$ that occurs in a structure $\pincE$ belongs to $X$ in $\mapDiToPinc{\pincE}{X}$, then $\pincLabL$ gets mapped to $\pincLabT$. The reason why $\pincLabL$ is in $X$ is that $ \pincLabL $ is not a free name of $ \pincE $.
\end{example}
%%%%%%%%%%%%%5
\paragraph{\Trivialderivation s.}
By definition, a derivation $\bvtDder$ of $\BVT$ is \dfn{\trivial} if (i) $\bvtDder$ only operates on \OpNameTen-free structures, and (ii) $\bvtDder$ does not contain any
occurrence of $\bvtatidrulein$. All the others are \dfn{\nontrivialderivation s}.
%%%%%%%%%
\begin{example}[\textbf{\textit{A \trivialderivation}}]
\label{example:A trivial derivation}
It is in~\eqref{equation:A-trivial-derivation} here below.
%%%%%%%%%%%%%%%%%%%%%%
\par\vspace{\baselineskip}\noindent
{\small
  \fbox{
    \begin{minipage}{.974\linewidth}
      \begin{equation}
       \label{equation:A-trivial-derivation}
      	    \vlderivation{
      \vlin{\bvtseqdrule}{}
	  {\vlsbr
	    [\vlfo{\atma}
		  {[<\atmaRed;\atmbBlu;\strR>
		  ;<\natmaRed;\strT>]}
	    ;\natmbBlu]
	  }{
      \vliq{\eqref{align:unit-seq}}{}
	  {\vlsbr
	    [\vlfo{\atma}
		  {<[\atmaRed;\natmaRed]
		  ;[<\atmbBlu;\strR>;\strT]>}
	    ;\natmbBlu]
	  }{
      \vlin{\bvtseqdrule}{}
	  {\vlsbr
	    [\vlfo{\atma}
		  {<[\atmaRed;\natmaRed]
		  ;[<\atmbBlu;\strR>
		    ;<\vlone;\strT>]>}
	    ;\natmbBlu]
	  }{
      \vliq{\eqref{align:unit-pa}
	  ,\eqref{align:alpha-intro}}{}
	  {\vlsbr
	    [\vlfo{\atma}
		  {<[\atmaRed;\natmaRed]
		  ;<[\atmbBlu;\vlone]
		    ;[\strR;\strT]>>}
	    ;\natmbBlu]
	  }{
      \vlin{\bvtrdrule}{}
	  {\vlsbr
	    [\vlfo{\atma}
		  {<[\atmaRed;\natmaRed]
		  ;<\atmbBlu
		    ;[\strR;\strT]>>}
	    ;\vlfo{\atma}{\natmbBlu}]
	  }{
      \vliq{\eqref{align:unit-seq}}{}
	  {\vlfo{\atma}
	    {\vlsbr
	    [<[\atmaRed;\natmaRed]
	      ;<\atmbBlu
		;[\strR;\strT]>>
	    ;\natmbBlu]
	    }
	  }{
      \vlin{\bvtseqdrule}{}
	  {\vlfo{\atma}
	    {\vlsbr
	    [<[\atmaRed;\natmaRed]
	      ;<\atmbBlu
		;[\strR;\strT]>>
	    ;<\vlone;\natmbBlu>]
	    }
	  }{
      \vliq{\eqref{align:unit-pa}
	  ,\eqref{align:unit-seq}}{}
	  {\vlfo{\atma}
	    {\vlsbr
	    <[\atmaRed;\natmaRed;\vlone]
	      ;[<\atmbBlu;[\strR;\strT]>
	      ;\natmbBlu]>
	    }
	  }{
      \vlin{\bvtseqdrule}{}
	  {\vlfo{\atma}
	    {\vlsbr
	    <[\atmaRed;\natmaRed]
	      ;[<\atmbBlu;[\strR;\strT]>
	      ;<\natmbBlu;\vlone>]>
	    }
	  }{
      \vlhy{\vlfo{\atma}
	         {\vlsbr
	          <[\atmaRed;\natmaRed]
	           ;<[\atmbBlu;\natmbBlu]
	          ;[\strR;\strT;\vlone]>>
	         }
	   }}}}}}}}}}
      }

       \end{equation}
    \end{minipage}
  }%fbox
}%\small
\par\vspace{\baselineskip}\noindent
%%%%%%%%%%%%%%%%%%%%%%
Being \trivial\ does not mean without rules.
``\Trivial'' identifies a derivation where no communication, represented by
instances of $\bvtatidrulein$, occur.
\end{example}
%%%%%%%%%%%%%%
\begin{fact}[\textit{\textbf{\Trivialderivation s on \processstructure s are
quite simple}}]
\label{fact:Trivial derivations preserve process structures}
Let $\strR$, and $\strT$ be \processstructure s, and
$\bvtInfer{\bvtDder}
          {\, \strT
          \bvtJudGen{\BVsub}{}
          \strR}$ 
be \trivial.
Then $\BVsub =\Set{\bvtseqdrulein,\bvtrdrulein} $, and all the instances of
$\bvtseqdrulein$ in $ \bvtDder $ have form
$\vlderivation{
 \vliq{\bvtseqdrule}{}
      {\vlsbr[<\pincLabL;\strR'>;\strR'']}{
 \vlhy{\vlsbr<\pincLabL;[\strR';\strR'']>}}
}
$, or
$\vlderivation{
 \vliq{\bvtseqdrule}{}
      {\vlsbr[\strR';\strR'']}{
 \vlhy{\vlsbr<\strR';\strR''>}}
}$, for some
$\strR', \strR''$, and $\pincLabL\not\approx\vlone$.
\end{fact}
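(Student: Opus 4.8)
The plan is to read $\bvtDder$ \emph{downward}, from its \processstructure\ premise $\strT$ to its \processstructure\ conclusion $\strR$, and to exploit the rigid shape that the grammar \eqref{align:BV2-process-structures} imposes: the left immediate argument of every $\OpNameSeq$ is a single label. First I would discard the two rules of $\BVT$ that cannot occur. Condition (ii) of triviality forbids $\bvtatidrulein$ outright. The switch $\bvtswirulein$ is ruled out by condition (i): both its conclusion $\vlsbr[(\strR_1;\strR_3);\strR_2]$ and its premise $\vlsbr([\strR_1;\strR_2];\strR_3)$ exhibit a binary $\OpNameCop$, which no $\OpNameTen$-free structure may contain. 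Hence the rules occurring in $\bvtDder$ are confined to $\Set{\bvtseqdrulein,\bvtrdrulein}$.

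Next I would establish the governing invariant: up to $\approx$, \emph{every} structure occurring in $\bvtDder$ is a \processstructure. I would argue this by induction along $\bvtDder$ read top-down, \ie following each rule from premise to conclusion, the base case being $\strT$. For $\bvtrdrulein$, a premise redex $\vlfo{\pinca}{\vlsbr[\strR_1;\strR_2]}$ that is a substructure of a \processstructure\ forces $\strR_1,\strR_2$ to be \processstructure s, so the conclusion redex $\vlsbr[\vlfo{\pinca}{\strR_1};\vlfo{\pinca}{\strR_2}]$ is again one, and replacing it inside its (\processstructure) context keeps the whole a \processstructure. For $\bvtseqdrulein$ the premise redex is $\vlsbr<[\strR_1;\strR_3];[\strR_2;\strR_4]>$; since the ambient structure is a \processstructure, this $\OpNameSeq$ is itself a \processstructure, and because the only $\OpNameSeq$s produced by \eqref{align:BV2-process-structures} have a label as left argument, its left part must collapse to a label, \ie $\vlsbr[\strR_1;\strR_3]\approx\pincLabL$ with one of $\strR_1,\strR_3$ equal to $\vlone$. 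The conclusion redex $\vlsbr[<\strR_1;\strR_2>;<\strR_3;\strR_4>]$ then simplifies to a \processstructure\ as well.

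Granting the invariant, the form analysis is short. Fix an instance of $\bvtseqdrulein$. By the invariant its premise redex is $\approx\vlsbr<\pincLabL;\strZ>$ with $\pincLabL\approx\vlsbr[\strR_1;\strR_3]$ and continuation $\strZ\approx\vlsbr[\strR_2;\strR_4]$. Writing $\strZ\approx\vlsbr[\strR';\strR'']$ for the $\OpNamePar$-splitting dictated by the rule, the conclusion redex is $\approx\vlsbr[<\pincLabL;\strR'>;\strR'']$, which is the first announced shape. In the sub-case $\strR'\approx\vlone$ both redexes simplify to $\vlsbr[\pincLabL;\strR'']$ and $\vlsbr<\pincLabL;\strR''>$ respectively, which is the second announced shape (with $\pincLabL$ occupying the slot written $\strR'$ there). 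Purely vacuous applications, where the split is $\strZ\approx\vlsbr[\strZ;\vlone]$ so that premise $\approx$ conclusion, carry no content and may be absorbed into $\approx$; in every genuine instance $\vlsbr[\strR_1;\strR_3]$ collapses to an actual label, so $\pincLabL\not\approx\vlone$, as required.

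The hard part is the invariant of the second paragraph, and in particular the choice of reading direction. The crucial property is that in a \processstructure\ no $\OpNamePar$ ever heads an $\OpNameSeq$; read top-down this property is preserved through $\bvtseqdrulein$ precisely because the premise already is a \processstructure\ and therefore forces $\vlsbr[\strR_1;\strR_3]$ to be a label. Read bottom-up the same rule could instead push a $\OpNamePar$ of two distinct labels to the head of an $\OpNameSeq$ and so leave the \processstructure\ class, after which no rule of $\Set{\bvtseqdrulein,\bvtrdrulein}$ could repair it; this is exactly why a naive bottom-up induction fails and why the top-down reading is essential. Once the invariant is in place the remaining work is only routine bookkeeping with the unit and associativity clauses of $\approx$.
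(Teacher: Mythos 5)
Your proof is correct, and it reaches the conclusion by a genuinely different route than the paper's own argument. The paper works locally and by contradiction: after discarding $\bvtatidrulein$ (condition (ii) of triviality) and $\bvtswirulein$ (\OpNameTen-freeness forces its \OpNameCop\ component to be $\approx\vlone$, so the instance is fake and gets erased), it supposes some $\bvtseqdrulein$ instance has premise $\vlsbr<[\pincLabL;\pincLabM];[\strR';\strR'']>$ with $\pincLabL,\pincLabM\not\approx\vlone$, and observes that, with no $\bvtatidrulein$ available, a \OpNamePar\ of two non-units heading a \OpNameSeq\ can never disappear going upward, hence would survive into $\strT$, contradicting that $\strT$ is a \processstructure. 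You instead prove the stronger global invariant that every structure occurring in $\bvtDder$ is, up to $\approx$, a \processstructure, by induction read top-down from $\strT$, and then read the two admissible shapes of $\bvtseqdrulein$ directly off the grammar \eqref{align:BV2-process-structures}. Both arguments turn on the same directional insight --- the constraint flows from the premise $\strT$ downward, because no rule in $\Set{\bvtseqdrule,\bvtrdrule}$ can repair a \OpNamePar-headed \OpNameSeq\ from above, which is exactly your observation about why a naive bottom-up induction fails --- but the proofs are organized differently, and each buys something. Your invariant is what actually licenses the paper's notational shortcut of writing the heads of a general $\bvtseqdrulein$ instance as labels $\pincLabL,\pincLabM$ in the first place (the paper never justifies this, since intermediate structures are not assumed to be \processstructure s a priori), and downward preservation of shape is a reusable statement of the kind the reduction machinery of Section~\ref{section:Soundness of BVTCC} repeatedly needs; the price is checking preservation for each rule and each redex position, including the bookkeeping about how redexes are identified across $\approx$. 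The paper's version is more economical, tracking only the single offending substructure. One minor divergence: you rule out $\bvtswirulein$ outright on the grounds that its premise and conclusion exhibit a binary \OpNameCop, whereas the paper, reading \OpNameTen-freeness up to $\approx$, admits and then erases degenerate instances whose \OpNameCop\ component is $\approx\vlone$; under either reading the claim $\BVsub=\Set{\bvtseqdrule,\bvtrdrule}$ stands, so nothing is lost.
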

%%%%%%%
\begin{proof}
By definition, no $\bvtatidrulein$ can exist in $\bvtDder$.
Let us assume an instance 
$ \vlinf{\bvtswirule }{}
        {\vlsbr[(\strR;\strT);\strU]}
        {\vlsbr([\strR;\strU];\strT)} $
exists in $\bvtDder$. Since $ \bvtDder $ is \OpNameTen-free, it must be $ \strT \approx \vlone$ and we can eliminate such an $ \bvtswirulein $.
Let us assume one instance of $\bvtseqdrulein$ exists in $\bvtDder$.
In general it would be
$\vlinf{\bvtseqdrule}{(*)}
       {\vlsbr[<\pincLabL;\strR'>;<\pincLabM;\strR''>]}
       {\vlsbr<[\pincLabL;\pincLabM];[\strR';\strR'']>}$, for some
$\pincLabL\,,\pincLabM, \strR'$, and $\strR''$.
So, let us assume such a $(*)$ occurs in $\bvtDder$ with
$\pincLabL\,, \pincLabM\not\approx\vlone$. In absence of $\bvtatidrulein$, even
though we might have $\pincLabL\approx\vlne\pincLabM$, the structure $\vlsbr[\pincLabL;\pincLabM]$ could not disappear from $\bvtDder$, namely from $\strT$.
Consequently, $\strT$ could not be a process structure, against assumption.
\end{proof}
%%%%%%
\paragraph{\Simplestructure s.}
This notion strengthens the idea that ``trivial'' stands for ``no interactions''. A structure $\strR$ is a \emph{\simplestructure} if it satisfies two constraints.
First, it must belong to the language of~\eqref{equation:SBV2-normal-structures}.
%%%%%%%%%%%%%
\par\vspace{\baselineskip}\noindent
{\small
  \fbox{
    \begin{minipage}{.974\linewidth}
     \vspace{-.1cm}
       \begin{equation}
	 \label{equation:SBV2-normal-structures}
	 {%do not erase
\vlstore{
\strR & \grammareq \vlone
	  \ \ \mid \ \ \atmLabL
	  \ \ \mid \ \ \vlsbr[\strR;\strR]
	  \ \ \mid \ \ \vlfo{\pinca}{\strR}
	}%\vlstore
\begin{split}
 \vlread
\end{split}
}%do not erase

       \end{equation}
%      \vspace{-.7cm}
    \end{minipage}
  }%fbox
}%\small
\par\vspace{\baselineskip}\noindent
%%%%%%%
Second, if $ \pincLabL_1,\ldots,\pincLabL_n $ are all, and only, the variable names that occur in $ \strR $, then $ i\neq j $ implies $ \pincLabL_i\neq\pincLabL_j $, for every $ i,j\in\Set{1,\ldots, n} $.
%%%%%%%%%%%%%
\begin{fact}[\textbf{\textit{Basic properties of \simplestructure s}}]
\label{fact:Basic properties of simplestructure}
\begin{itemize}
\item 
Trivially, by definition, \simplestructure s are \coinvertible, because every of them is the negation of an \invertiblestructure\ (Proposition~\ref{proposition:Invertible structures are invertible}.)
\item 
\Simplestructure s are the logical counterpart of \simpleprocess es, thanks to the isomorphism~\eqref{equation:SBV2-to-PPi-map-process}.
\end{itemize}
\end{fact}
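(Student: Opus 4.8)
The statement splits into two independent claims, which I would prove separately.

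For the first claim --- that every \simplestructure\ is \coinvertible\ --- the plan is a structural induction on the grammar~\eqref{equation:SBV2-normal-structures}, proving the sharper statement that the negation $\vlne\strR$ of a \simplestructure\ $\strR$ is always generated by the grammar~\eqref{equation:SBV2-invertible-structures} of Proposition~\ref{proposition:Invertible structures are invertible}. The base cases are immediate: $\vlne\vlone\approx\vlone$ by~\eqref{align:negation-atom}, and the negation of a name $\atmLabL$ is again a name, i.e.\ a one-element \OpNamePar\ (the case $n=1$ of~\eqref{equation:SBV2-invertible-structures}), whose side condition $\atmLabL_i\neq\vlne{\atmLabL_j}$ is then vacuous. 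For the inductive cases I would push negation inward with~\eqref{align:negation-pa} and~\eqref{align:negation-fo}: a \OpNamePar\ $\vlsbr[\strR_1;\strR_2]$ negates to the \OpNameCop\ $\vlsbr(\vlne{\strR_1};\vlne{\strR_2})$, and a quantifier $\vlfo{\atma}{\strR_1}$ negates to $\vlfo{\atma}{\vlne{\strR_1}}$ by self-duality of \OpNameRen. Since the immediate sub-structures $\strR_1,\strR_2$ again lie in~\eqref{equation:SBV2-normal-structures}, the induction hypothesis puts $\vlne{\strR_1}$ and $\vlne{\strR_2}$ in~\eqref{equation:SBV2-invertible-structures}, so both compound cases fall under the \OpNameCop- and \OpNameRen-clauses of that grammar. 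Proposition~\ref{proposition:Invertible structures are invertible} then makes $\vlne\strR$ \invertible, whence $\strR\approx\vlne{\vlne\strR}$ (by~\eqref{align:negation-negation}) is \coinvertible\ by the definition stated after~\eqref{equation:SBV2-invertible-structures-in-words}. A pleasant point is that the distinctness side condition on \simplestructure s is never consumed here: negation turns each \OpNamePar\ into a \OpNameCop, so the par-of-names clause of~\eqref{equation:SBV2-invertible-structures}, the only one carrying a side condition, simply never arises.

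For the second claim --- that \simplestructure s are the logical counterpart of \simpleprocess es --- the plan is to restrict the isomorphism $\mapPincToDi{\cdot}$ of Fact~\ref{fact:From a process to a process structure}, displayed in~\eqref{equation:SBV2-to-PPi-map-process}, to \simpleprocess es and verify it is a bijection onto \simplestructure s by reading the grammar~\eqref{equation:PPi-simple-processes} clause by clause against~\eqref{equation:SBV2-normal-structures}. Here $\pincZer\mapsto\vlone$; the guarded process $\pincSec{\pincLabL}{\pincZer}$ maps to $\vlsbr<\mapPincToDi{\pincLabL};\vlone>$, which collapses to the bare name $\mapPincToDi{\pincLabL}$ by the unit law~\eqref{align:unit-seq}, matching the name clause $\atmLabL$; parallel composition $\pincPar{\pincE}{\pincF}$ maps to the \OpNamePar\ $\vlsbr[\mapPincToDi{\pincE};\mapPincToDi{\pincF}]$; and restriction $\pincNu{\pinca}{\pincE}$ maps to the quantifier $\vlfo{\atma}{\mapPincToDi{\pincE}}$. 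The four clauses therefore match exactly, and the action prefixes occurring in a \simpleprocess\ are carried, through $\pinca\mapsto\atma$ and $\pincna\mapsto\natma$, precisely to the variable names occurring in its image. It then remains to argue that the second constraint defining a \simpleprocess\ corresponds, occurrence by occurrence, to the second constraint defining a \simplestructure, so that the bijection on raw terms already granted by Fact~\ref{fact:From a process to a process structure} cuts down to a bijection between the two constrained classes.

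The part I expect to demand the most care is precisely this last correspondence of side conditions in the second claim, together with the bookkeeping of the unit-law collapse $\vlsbr<\mapPincToDi{\pincLabL};\vlone>\approx\mapPincToDi{\pincLabL}$ that is needed to land in the seq-free grammar~\eqref{equation:SBV2-normal-structures} only up to $\approx$. The two constraints are phrased on occurrences of prefixes and of names respectively, and reconciling them has to be done up to the congruence on processes~\eqref{align:PPi-structural-congruence} and the congruence $\approx$ on structures; this is where one must be scrupulous about what ``all, and only, the occurrences'' counts and about how restriction hides names symmetrically on both sides. By contrast, the first claim is essentially a direct reduction to Proposition~\ref{proposition:Invertible structures are invertible} and should present no difficulty.
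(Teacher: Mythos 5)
Your proposal is correct and follows essentially the same route as the paper, which states this Fact without a separate proof: the first bullet is justified exactly by observing that the negation of a \simplestructure\ lies in the grammar~\eqref{equation:SBV2-invertible-structures} of Proposition~\ref{proposition:Invertible structures are invertible}, and the second by reading the isomorphism~\eqref{equation:SBV2-to-PPi-map-process} clause by clause against the grammars~\eqref{equation:PPi-simple-processes} and~\eqref{equation:SBV2-normal-structures}. Your structural induction, and your attention to the unit-law collapse $\vlsbr<\mapPincToDi{\pincLabL};\vlone>\approx\mapPincToDi{\pincLabL}$ and to the matching of the side conditions, merely make explicit what the paper treats as immediate.
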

%%%
\begin{example}[\textbf{\textit{\Simplestructure s}}]
\label{example:Normal structures stand for simple processes}
The following table shows some instances of \simplestructure s which correspond to the \simpleprocess es\ in Example~\eqref{example:Simple processes}.
%%%%%%%%%%%%%
\par\vspace{\baselineskip}\noindent
{\small
\fbox{
    \begin{minipage}{.974\linewidth}
	      \begin{center}
%   \begin{minipage}{.3\linewidth}
    {%%%%%%Start scope of \doublespacing
     \doublespacing
     \begin{tabular}{c}
      \textbf{\Simplestructure s}
      \\\hline
      $\vlsbr[\atma;\natmb]$
      \\\hline
      $\vlsbr[\atma;\vlfo{\atmb}{[\vlfo{\atmd}{[\atma;\natmc]};\atmb]};\atma]$
      \\\hline
      $\vlsbr[\vlfo{\atmb}{[\vlfo{\atmc}{[\atma;\natmc]};\atmb]};\atma]$
     \end{tabular}
    }%%%%%End scope of \doublespacing
%   \end{minipage}
\end{center}
    \end{minipage}
  }%fbox
}%\small
\par\vspace{\baselineskip}\noindent
Both the second, and the third structures are \simple\ because belong to~\eqref{equation:SBV2-normal-structures}, and $\pinca,\pincb,\pincnc$ is the list of their pairwise distinct variable names.
All the structures are co\invertible because negation of 
$\vlsbr(\natma;\atmb)$, and
$\vlsbr(\natma;\vlfo{\atmb}{(\vlfo{\atmd}{(\natma;\atmc)};\natmb)};\natma)$, and
$\vlsbr(\vlfo{\atmb}{(\vlfo{\atmc}{(\natma;\atmc)};\natmb)};\natma)$, respectively, which all are \invertible. \qed
\end{example}
%%%%%%%%%%%%
The following fact formalizes that \trivialderivation s operating on \simplestructure s only, represent computations where only instances of $\bvtrdrulein$ occur.
In Section~\ref{section:Soundness of BVTCC} this will allow to see that a \trivialderivation\ on \simplestructure s stands for a process that cannot communicate, neither internally, nor externally.
%%%%%%%%%%%%
\begin{fact}[\textbf{\textit{\Trivialderivation s on \simplestructure s contain almost no rules}}]
\label{fact:Trivial derivations preserve normal structures}
For any \simple\ $\strT$, if
$\bvtInfer{\bvtDder}
          {\, \strT
          \bvtJudGen{\SBVsub}{}
          \strR}$ is \trivial, then $\SBVsub=\Set{\bvtrdrulein}$, and $\strR$ is
\simple\ as well.
\end{fact}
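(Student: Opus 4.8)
The plan is to exploit that a \simplestructure, by the grammar~\eqref{equation:SBV2-normal-structures}, contains neither \OpNameSeq nor \OpNameTen, and to show that seq-freeness is an invariant of $\bvtDder$ which leaves $\bvtrdrulein$ as the only applicable rule. First I would record what triviality gives: $\bvtDder$ contains no $\bvtatidrulein$ and is \OpNameTen-free, so a priori $\SBVsub\subseteq\Set{\bvtswirulein,\bvtseqdrulein,\bvtrdrulein}$. Since $\bvtswirulein$ exhibits a genuine \OpNameCop in both its premise and its conclusion, the very argument used in the proof of Fact~\ref{fact:Trivial derivations preserve process structures} applies verbatim: \OpNameTen-freeness forces the relevant \OpNameCop component to be $\vlone$, so every would-be switch is a no-op and may be discarded, leaving $\Set{\bvtseqdrulein,\bvtrdrulein}$.

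The decisive step --- the one that separates \simplestructure s from the \processstructure s of Fact~\ref{fact:Trivial derivations preserve process structures} --- is to prove that no structure occurring in $\bvtDder$ contains a genuine \OpNameSeq. The premise $\strT$ is seq-free by~\eqref{equation:SBV2-normal-structures}. Inspecting the rules, $\bvtrdrulein$ rewrites $\vlfo{\atma}{\vlsbr[\strU;\strV]}$ into $\vlsbr[\vlfo{\atma}{\strU};\vlfo{\atma}{\strV}]$ and hence cannot create a \OpNameSeq, whereas the only seq-shaped subterms that the unit clause~\eqref{align:unit-seq} can manufacture are the vacuous $\vlsbr<\vlone;\strU>$, which never match the $\bvtseqdrulein$-redex, namely a \OpNameSeq whose two immediate components are parallel composites. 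Reading $\bvtDder$ downward from $\strT$ and inducting on its length, at each stage the current structure is seq-free and therefore offers no $\bvtseqdrulein$-redex, so the step taken is an instance of $\bvtrdrulein$ (or a discarded switch), which again yields a seq-free structure; thus $\bvtseqdrulein$ never fires and $\SBVsub=\Set{\bvtrdrulein}$.

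It remains to see that $\bvtrdrulein$ preserves simplicity, whence the conclusion $\strR$ is \simple. For grammar membership, the redex $\vlfo{\atma}{\vlsbr[\strU;\strV]}$ is a substructure of a \simplestructure, so $\strU$ and $\strV$ are \simple, and thus $\vlsbr[\vlfo{\atma}{\strU};\vlfo{\atma}{\strV}]$ lies again in~\eqref{equation:SBV2-normal-structures}. For the name constraint, $\bvtrdrulein$ neither adds, removes, nor changes the polarity of any atom occurrence, so the multiset of names occurring --- and hence the pairwise constraint imposed on them --- is left untouched. A straightforward induction on the number of $\bvtrdrulein$-instances, with base case the \simple\ premise $\strT$, then yields that $\strR$ is \simple.

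The main obstacle I anticipate is precisely the seq-freeness invariant: I must argue carefully that the congruence --- chiefly the unit clause~\eqref{align:unit-seq} --- can only produce vacuous \OpNameSeq-subterms of the form $\vlsbr<\vlone;\strU>$, which never constitute a $\bvtseqdrulein$-redex, so that $\bvtseqdrulein$ is genuinely unreachable; the remaining preservation checks are routine.
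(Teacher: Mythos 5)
Your proof is correct and takes essentially the same route as the paper's: both rest on the observation that the rules available in a \trivialderivation\ can neither create nor destroy genuine \OpNameSeq\ occurrences, so seq-freeness of the \simple\ endpoint propagates along the whole derivation (the paper reads this upward from $\strR$ as a persistence-into-$\strT$ contradiction, you read it downward from $\strT$ as an invariant), which rules out $\bvtseqdrulein$, leaves only $\bvtrdrulein$, and makes preservation of simplicity immediate. One small imprecision: since $\vlone\approx\vlsbr[\vlone;\vlone]$, a seq-free structure \emph{can} still match the $\bvtseqdrulein$ pattern, so such instances are not literally impossible but merely vacuous (identities up to $\approx$), and they must be discarded exactly as you discard the vacuous $\bvtswirulein$ instances --- with that adjustment your argument goes through unchanged.
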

%%%%%%%%%%%%%
\begin{proof}
Fact~\ref{fact:Trivial derivations preserve process structures} implies that the
derivation $\bvtDder$ only contains instances of $\bvtrdrulein$, and of very
specific instances of $\bvtseqdrulein$. Both kinds of rules neither
erase, nor introduce atoms, or new occurrences of \OpNameSeq in between $\strR$,
and $\strT$. Let us assume that $\bvtDder$ effectively contains an instance of
$\bvtseqdrulein$ with reduct
$\vlsbr<\atmLabL;\strR'>$, for some $\atmLabL$, and $\strR'$.
Then, the occurrence of \OpNameSeq would occur in $\strT$, as well,
making it not \simple, against our assumption. So, no occurrence of $\bvtseqdrulein$
exists in $\bvtDder$. This, of course, does not prevent the existence of
$\vlsbr<\atmLabL;\strR'>$ along $\bvtDder$, and, in particular, inside $\strR$. 
However, $\bvtrdrulein$ could not eliminate it, and an occurrence of \OpNameSeq would be inside $\strT$. In that case $\strT$ could not be \simple, against assumption. But if no occurrence of $\vlsbr<\atmLabL;\strR'>$ is inside $\bvtDder$, then our assumptions imply that 
$\strR$ is a \simplestructure. \qed
\end{proof}
%%%%%%%%%%%%%%
  \subsection{Recasting labeled transitions to proof-search}
\label{subsection:Reducing the labeled transitions to proof-search steps}
Once connected $\BVT$, and $\CCSR$ as in the previous subsection, we get back to our initial reachability problem. Let us assume we want to check 
$\pincLTSJud
   {\pincE}
   {\pincF}
   {\pincLabL_1;\cdots;\pincLabL_n}$ in $ \CCSR $, where $ \pincF$ is a \simpleprocess.
The following steps recast the problem of $ \CCSR $ into a problem of searching inside $\BVT$:
\begin{enumerate}
\item 
\label{enumerate:how-soundness-works-01}
First we ``compile'' both $\pincE$, and $\pincF$ into \processstructure s $\mapPincToDi{\pincE}$, and $\mapPincToDi{\pincF}$, where
$\mapPincToDi{\pincF}$ is forcefully \simple. Then, we fix an $ \strR $ such that 
   $ \mapDiToPinc{\strR}{\emptyset} = \pincLabL_1;\cdots;\pincLabL_n$.

\item
\label{enumerate:how-soundness-works-02}
Second, it is sufficient to look for 
$\vlstore{\vlsbr[\mapPincToDi{\pincE}
                ;\vlne{\mapPincToDi{\pincF}}
                ;\strR]}
 \bvtInfer{\bvtPder}
          {\ \bvtJudGen{}{}
          \vlread}$ inside $ \BVT $ as the up-fragment of $ \SBVT $ is admissible for $ \BVT $ (Corollary~\ref{theorem:Admissibility of the up fragment} \cite{Roversi:unpub2012-I}.)
. 
\item 
\label{enumerate:how-soundness-works-03}
Finally, if $ \bvtPder $ of point~\eqref{enumerate:how-soundness-works-02} here above exists,
we can conclude
$\pincLTSJud
   {\pincE}
   {\pincF}
   {\pincLabL_1;\cdots;\pincLabL_n}$ in $ \CCSR $.
\end{enumerate}
%%%%%
Point~\ref{enumerate:how-soundness-works-03} rests on some simple observations.
The structure $\vlne{\mapPincToDi{\pincF}}$ is \invertible\ thanks to Fact~\ref{fact:Basic properties of simplestructure}.
So, it exists
$\vlstore{\vlsbr[\mapPincToDi{\pincE};\strR]}
 \bvtInfer{\bvtDder'}
          {\mapPincToDi{\pincF} \bvtJudGen{\BVT}{}
          \vlread}$ where both $ \mapPincToDi{\pincE} $, and $ \mapPincToDi{\pincF} $ are 
\OpNameTen-free because they are \processstructure s. The same holds for $\strR$ which is an \environmentstructure. Consequently, every instance of $\bvtswirulein$ in $\bvtDder$, if any, can only be
$
\vlderivation{
\vlin{\bvtswirule}{}
     {\vlsbr[(\strR;\vlone);\strU]}{
\vlhy{\vlsbr([\strR;\strU];\vlone)}}
}
$, and it can be erased.
This means that $ \bvtDder $ only contains rules that belong to 
$\Set{\bvtatidrulein,\bvtseqdrulein,\bvtrdrulein} $.
Standardization
(Theorem~\ref{theorem:Standardization in bvtatrdrulein...}), which applies to
$\Set{\bvtatrdrulein,\bvtatidrulein,\bvtseqdrulein,\bvtrdrulein}$,
implies we can transform $\bvtDder$ in $\BVT$ to a \standardderivation\ $\bvtEder$ of 
$\BVTL$.
The only missing step is in the coming section. It shows that proof-search in $ \BVTL $ is sound \wrt\ the computations of the \lts\ defined for $ \CCSR $.
\section{Soundness of $\BVTL$ \wrt\ $ \CCSR $}
\label{section:Soundness of BVTCC}
The goal is proving Soundness whose formal statement is in Theorem~\eqref{equation:PPi-soundness-example-00} below.
We remark that our statement generalizes the one in \cite{Brus:02:A-Purely:wd}, and our  proof pinpoints many of the details missing in \cite{Brus:02:A-Purely:wd}.
\par
Soundness relies on the notions ``reduction of a \nontrivialderivation'', and ``\environmentstructure s that are consumed'', and needs some technical lemma.
%%%%%%%%%%%%%%
\paragraph{Reduction of \nontrivial, and \standardderivation s of $\BVTL$.}
Let $\strR$, and $\strT$ be \processstructure s.
Let $\bvtDder$ be a \nontrivial, and \standard\ derivation
${\small
%  \vlupsmash{
  \vlderivation{
  \vlde{\bvtDder'}{\BVTL}
       {\strR}{
  \vlin{\bvtatrdrule}{(*)}
       {\vlholer{\strS\vlsbr[\atma;\natma]}}{
  \vlde{\bvtDder''}{\BVTL}
       {\vlholer{\strS\vlsbr[\vlone]}}{
  \vlhy{\strT}}}}}
%   }
%\small
}$, where $(*)$ is the lowermost occurrence of $\bvtatrdrulein$ in $\bvtDder$.
The \dfn{reduction of $\bvtDder$} is the derivation $\bvtEder$ of rules of $\BVTL$
that we get from $\bvtDder$ by (i) replacing $\vlone$ for all occurrences of
$\atma$, and $\natma$ in $\bvtDder'$ that,
eventually, form the redex of $(*)$, and by (ii) eliminating all the fake
instances of rules that the previous step may have created.
%%%%%%%%%%%%%%%
\begin{fact}[\textit{\textbf{Reduction preserves \processstructure s}}]
\label{fact:Reduction preserve process structures}
Let $\strR$, and $\strT$ be \processstructure s. For every \nontrivial, and
\standard\ derivation
$\bvtInfer{\bvtDder}
          {\, \strT
          \bvtJudGen{\BVTL}{}
          \strR}$,
its reduction
$\bvtInfer{\bvtEder}
          {\, \strT'
          \bvtJudGen{\BVTL}{}
          \strR'}$
is such that both $\strR'$, and $\strT'$ are \processstructure s. Moreover,
$\bvtEder$ may not be \nontrivial, namely, no $\bvtatrdrulein$ may remain in $\bvtEder$. However, if $\bvtEder$ is \nontrivial, then it is \standard.
\end{fact}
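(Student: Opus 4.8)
The plan is to read the three assertions straight off the shape of the reduction, treating the two end-structures first and the preservation of standardness last. First I would note that the reduction modifies only the lower sub-derivation $\bvtDder'$, so the premise is left untouched: $\strT'=\strT$, which is a \processstructure\ by hypothesis. For the conclusion I would trace the two atoms $\atma,\natma$ forming the redex of $(*)$ downward through $\bvtDder'$. Since $(*)$ is the lowermost $\bvtatrdrulein$ and the only rules below it are $\bvtseqdrulein$, $\bvtrdrulein$ and further $\bvtatrdrulein$ instances — none of which can annihilate this particular pair — the two occurrences survive all the way to $\strR$. In the \processstructure\ $\strR$ every atom occurs only as a prefix $\vlsbr<\pincLabL;\strU>$, so replacing each by $\vlone$ and normalizing through $\vlsbr<\vlone;\strU>\approx\strU$ (rule~\eqref{align:unit-seq}) merely deletes two prefixes; the resulting $\strR'$ is again generated by the \processstructure\ grammar. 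This settles that both $\strT'$ and $\strR'$ are \processstructure s.

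Next I would verify that $\bvtEder$ is a legitimate derivation of $\BVTL$ and argue the ``may not be \nontrivial'' clause. Substituting $\vlone$ for the tracked occurrences collapses $(*)$ itself: its conclusion $\vlholer{\strS\vlsbr[\atma;\natma]}$ turns into $\vlholer{\strS\vlsbr[\vlone;\vlone]}\approx\vlholer{\strS\vlsbr[\vlone]}$, i.e.\ into its own premise $\vlholer{\strS\vlscn{\vlone}}$, and is discarded; any $\bvtseqdrulein$ acting on the vanished atoms, and any $\bvtrdrulein$ whose quantifier now binds nothing (removable by~\eqref{align:alpha-intro}), likewise become fake and are erased in step (ii). Since the two occurrences are created by $(*)$ and merely transported down to $\strR$, they take part in no other atomic interaction of $\bvtDder'$, so every remaining atomic interaction is a genuine instance untouched by the substitution. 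If $(*)$ was the only atomic interaction of $\bvtDder$, then no $\bvtatrdrulein$ survives and $\bvtEder$ is \trivial; this is exactly the asserted possibility.

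Finally, for the clause ``if $\bvtEder$ is \nontrivial\ then it is \standard'' I would show that each surviving atomic interaction is still performed inside a \rightcontext. The interactions coming from the untouched upper part $\bvtDder''$ keep their contexts verbatim, so only those inside $\bvtDder'$ need attention. The key point is that the congruence steps triggered by the substitution — the unit laws~\eqref{align:unit-seq}, \eqref{align:unit-pa} and the $\alpha$-collapse~\eqref{align:alpha-intro} — are all \emph{node-deleting}: each removes a $\OpNameSeq$, $\OpNamePar$ or $\OpNameRen$ node but never inserts a $\OpNameSeq$ to the right of a redex. Deleting a node can only turn a ``right-of-$\OpNameSeq$'' position into one that is not, never the converse, so a hole that was in a \rightcontext\ of~\eqref{equation:SBV2-right-contexts-inductive} stays in one.

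The step I expect to be the main obstacle is precisely this last one: ruling out that the $\OpNameSeq$ node deleted when a prefix $\vlsbr<\atma;\strU>$ collapses to $\strU$ is an ancestor of some surviving redex $\vlsbr[\atmb;\natmb]$. I would dispose of it by contradiction: such a redex would then lie inside the continuation $\strU$, hence to the right of that $\OpNameSeq$, so its interaction could not have been labelled $\bvtatrdrulein$ in $\bvtDder$ — contradicting the standardness of $\bvtDder$. Consequently no surviving redex has a deleted $\OpNameSeq$ among its ancestors, its root-to-hole path is unaffected, and standardness is preserved. Fact~\ref{fact:Trivial derivations preserve process structures} pins down the admissible shapes of the $\bvtseqdrulein$ instances that may still occur, keeping the bookkeeping of the deleted nodes finite and routine.
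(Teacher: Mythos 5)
Your proof is correct and takes essentially the same route as the paper's: the conclusion is obtained by erasing the two traced atoms, which keeps it inside the \processstructure\ grammar; the instance $(*)$ becomes fake and disappears, so $\bvtEder$ may turn out \trivial; and the surviving rule instances are left verbatim, so their order and their \rightcontext s are preserved. One inaccuracy is worth flagging: since $\bvtDder$ is \standard\ and $(*)$ is its lowermost atomic interaction, $\bvtDder'$ contains no atomic interactions at all (your mention of ``further $\bvtatrdrulein$ instances'' below $(*)$ contradicts $(*)$ being lowermost), so the node-deletion argument of your last paragraph is guarding a vacuous case --- every surviving interaction lies in the untouched $\bvtDder''$, which is precisely why the paper can settle preservation of standardness with the single observation that reduction does not alter the order of rules in $\bvtDder$.
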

%%%%%%%%%%%
\begin{proof}
The first statement follows from the definition of \processstructure s. If we erase
any sub-structure from a given \processstructure, we still get a 
\processstructure\ which, at least, is $\vlone$. Moreover, the lowermost instance of
$\bvtatrdrulein$ disappears, after a reduction. So, if it was the only one, none
remains. Finally, reduction does not alter the order of rules in $\bvtDder$.
\end{proof}
%%%%%%%%%%%%%%%%%%
% \newpage
\begin{fact}[\textit{\textbf{Preserving \rightcontext s}}]
\label{fact:Preserving rightcontext s}
Let $\bvtDder$ be a trivial derivation
$\vlstore{
  \vlsbr[\pincE;\strR]
 }
 \bvtInfer{\bvtDder}
          {\, \strS'\,\vlscn{\atma}
          \bvtJudGen{\Set{\bvtseqdrule,\bvtrdrule}}{}
          \strS\vlscn{\atma}}$, for some $\strS\vlhole, \strS'\vlhole$, and $ \atma $.
\begin{enumerate}
\item
\label{enumerate:Preserving rightcontext s-00}
If $\strS\vlscn{\atma}$ is not a \rightcontext, then
$\strS'\,\vlscn{\atma}$ cannot be a \rightcontext\ as well.

\item
\label{enumerate:Preserving rightcontext s-01}
If $\strS'\,\vlscn{\atma}$ is a \rightcontext, then
$\strS\,\vlscn{\atma}$ is a \rightcontext\ as well.
\end{enumerate}
\end{fact}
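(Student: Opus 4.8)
The plan is to reduce the statement to a single rule instance and then chain over the whole derivation. First note that the two items are contrapositive to one another: item~\ref{enumerate:Preserving rightcontext s-01} reads ``$\strS'\vlscn{\atma}$ right-context $\Rightarrow$ $\strS\vlscn{\atma}$ right-context'', whereas item~\ref{enumerate:Preserving rightcontext s-00} reads ``$\strS\vlscn{\atma}$ not a right-context $\Rightarrow$ $\strS'\vlscn{\atma}$ not a right-context'', which is the same implication. Hence it suffices to establish one of them, and I would prove item~\ref{enumerate:Preserving rightcontext s-00}. Since $\bvtDder$ contains no $\bvtatidrulein$ and uses only $\bvtseqdrulein$ and $\bvtrdrulein$ --- rules that neither create, erase, nor duplicate atom occurrences --- the occurrence of $\atma$ that fills the hole of $\strS\vlhole$ in the conclusion corresponds to a single, positionally well-defined occurrence throughout $\bvtDder$, in particular to the one filling the hole of $\strS'\vlhole$ in the premise. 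I would track that occurrence and, by induction on the length of $\bvtDder$, reduce to a single instance of $\bvtseqdrulein$ or $\bvtrdrulein$: if every downward step sends a non-right-context at the tracked occurrence to a non-right-context at the corresponding occurrence one line above, then composing from the bottom up yields item~\ref{enumerate:Preserving rightcontext s-00}.

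For the single-step analysis I would use the $\approx$-stable description of non-right-contexts already exploited in the proof of Lemma~\ref{lemma:Existence of bvtatrdrulein}: a context fails to be a right-context exactly when it is $\approx$-equivalent to $\strS_1\vlsbr<\strT_0;\strS_2\vlhole>$ with $\strT_0\not\approx\vlone$, i.e.\ when, along the path from the root to the tracked occurrence, some $\OpNameSeq$ keeps that occurrence in its right argument while its left argument is not $\approx\vlone$. Working with this notion rather than directly with the grammar~\eqref{equation:SBV2-right-contexts-inductive} is essential, because the grammar is not closed under $\approx$ (the unit law $\vlsbr<\vlone;\strR>\approx\strR$ would otherwise change the status of the hole). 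If the tracked occurrence lies outside the redex of the applied rule, the path to it is untouched and there is nothing to prove; so the real work is when it lies inside the redex.

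For $\bvtrdrulein$, whose redex rewrites $\vlfo{\atma}{\vlsbr[\strR;\strU]}$ into $\vlsbr[\vlfo{\atma}{\strR};\vlfo{\atma}{\strU}]$, only $\OpNamePar$ and $\OpNameRen$ occur above $\strR$ and $\strU$, and neither introduces nor removes a separating $\OpNameSeq$; the tracked occurrence therefore has exactly the same offending-$\OpNameSeq$ ancestors in premise and conclusion, so its right-context status is literally preserved (here in both directions). For $\bvtseqdrulein$, whose redex rewrites $\vlsbr<[\strR;\strU];[\strT;\strV]>$ into $\vlsbr[<\strR;\strT>;<\strU;\strV>]$, the tracked occurrence sits in one of $\strR,\strT,\strU,\strV$: the occurrences inside $\strR$ and $\strU$ are right-contexts on both sides, while for $\strT$ (resp.\ $\strV$) the conclusion places it to the right of the $\OpNameSeq$ whose left component is $\strR$ (resp.\ $\strU$), and the premise places it to the right of the $\OpNameSeq$ whose left component is $\vlsbr[\strR;\strU]$. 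Since $\vlsbr[\strR;\strU]\not\approx\vlone$ whenever $\strR\not\approx\vlone$, a non-right-context in the conclusion forces a non-right-context in the premise, which is precisely the required implication. The converse genuinely fails when, say, $\strR\approx\vlone$ but $\strU\not\approx\vlone$, which is exactly why the statement is one-directional and not an equivalence.

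The main obstacle is this last point: the interplay of $\OpNameSeq$, $\OpNamePar$ and the units under $\approx$. The tempting shortcut is to read right-contextness off the syntactic grammar and conclude an equivalence, but the unit laws make that unsound; one must use the $\approx$-invariant characterisation of non-right-contexts and observe that applying $\bvtseqdrulein$ downward turns a top $\OpNameSeq$ into a top $\OpNamePar$, so right-contextness can only be \emph{gained}, never lost, when passing from premise to conclusion. Everything else is a routine bracket-pushing check, which one may further shorten by invoking the restricted shapes of $\bvtseqdrulein$ given by Fact~\ref{fact:Trivial derivations preserve process structures} in the case where $\strS'\vlscn{\atma}$ and $\strS\vlscn{\atma}$ are themselves \processstructure s.
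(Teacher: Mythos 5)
Your proof is correct and takes essentially the same route as the paper's: both establish the first item directly --- the key point being that $\bvtseqdrulein$ and $\bvtrdrulein$, read upward, can only enlarge the left component of the \OpNameSeq above the tracked occurrence of $\atma$ (from $\strR$ to $\vlsbr[\strR;\strU]$), never erase it, commute the occurrence past it, or turn it into a \OpNamePar, so the $\approx$-stable witness of non-right-contextness persists --- and both obtain the second item by contraposition of the first. The paper compresses this into a terse global claim (non-commutativity of \OpNameSeq, plus the non-derivability of $\vlsbr<\strR;\strT>$ from $\vlsbr[\strR;\strT]$ in $\Set{\bvtseqdrule,\bvtrdrule}$), whereas you unfold the very same idea into an explicit occurrence-tracking induction with per-rule commuting cases; the mathematical content coincides.
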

%%%%%%%%%%%%%%
\begin{proof}
\begin{enumerate}
\item
If $\strS\vlscn{\atma}$ is not a \rightcontext, then it has form
$\strS\vlscn{\atma}\approx\strS_0\,\vlsbr<\strR;\strS_1\,\vlscn{\atma}>$, with
$\strR\not\approx\vlone$, for some $\strS_0\vlhole$, and $\strS_1\vlhole$.
\OpNameSeq is non commutative. So, going upward in $\bvtDder$, 
there is no hope to transform
$\strS_0\,\vlsbr<\strR;\strS_1\,\vlscn{\atma}>$ into some
$\vlstore{\strS'_0\,\vlsbr<\vlholer{\strS'_1\,\vlscn{\atma}};\strR'>}
\vlholer{\vlread}$ where the occurrence of $ \atma $ in the first structure is the same occurrence as $ \atma $ in the second one.
Moreover,
$\vlinf{}{}{\vlsbr<\strR;\strT>}{\vlsbr[\strR;\strT]}$ is not derivable in $\Set{\bvtseqdrulein,\bvtrdrulein}\subset \BVT $. 
So, $\strS_0\vlsbr<\strR;\strS_1\,\vlscn{\atma}>$ cannot transform  into
some $\vlstore{\strS'_0\,\vlsbr[\strR';\vlholer{\strS'_1\,\vlscn{\atma}}]}
\vlholer{\vlread}$, going upward in $\bvtDder$.

\item
By contraposition of the previous point~\eqref{enumerate:Preserving
rightcontext s-00}.
\end{enumerate}
\end{proof}
%%%%%%%%%%%%
\begin{proposition}[\textbf{\textit{\Processstructure s, \trivialderivation s, and
\rightcontext s}}]
\label{proposition:Rightcontext s preserve communication}
\label{fact:Trivial derivations and rightcontext s}
Let $\strR$ be a \processstructure, and $\bvtDder$
be a \trivialderivation\
$\vlstore{
  \vlholer{\strS\vlsbr[\atmb;\natmb]}
 }
 \bvtInfer{\bvtDder}
          {\, \vlread
          \bvtJudGen{\Set{\bvtseqdrule,\bvtrdrule}}{}
          \strR}$, for some $\vlholer{\strS\vlhole}, \atmb$, and $\natmb$. Then:

\begin{enumerate}
\item
\label{enumerate:Trivial derivations and rightcontext s-10}
$\strR\not\approx\vlone$, and both $\atmb, \natmb$ occur in it.

\item
\label{enumerate:Trivial derivations and rightcontext s-00}
The structure $\strR$ is a
\rightcontext\ for both $\atmb$, and $\natmb$.
Namely, $\strR\approx\vlholer{\strS'\,\vlscn{\atmb }}$, and
$\strR\approx\vlholer{\strS''\,\vlscn{\natmb}}$
for some
$\vlholer{\strS' \vlhole}$, and
$\vlholer{\strS''\vlhole}$.

\item
\label{enumerate:Trivial derivations and rightcontext s-01}
$\strR\not\approx\vlsbr\strSc'<\pincalpha;\strSb' \,\vlscn{\atmb }>$, and
$\strR\not\approx\vlsbr\strSc''<\pincalpha;\strSb''\,\vlscn{\natmb}>$, for any
$\strSc'\vlhole, \strSc''\vlhole, \strSb'\vlhole$, and $\strSb''\vlhole$.

\item
\label{enumerate:Trivial derivations and rightcontext s-02}
$\vlstore{
 \vlsbr[\vlholer{\strS' \,\vlscn{\atmb}}
       ;\vlfo{\atmb}{\vlholer{\strS''\,\vlscn{\natmb}}}
       ;\strT]}
 \strR\not\approx\vlread$,
with $\atmb\in\strFN{\vlholer{\strS' \,\vlscn{\atmb}}}$, and
$\vlstore{
 \vlsbr[\vlfo{\atmb}{\vlholer{\strS' \,\vlscn{\atmb}}}
       ;\vlholer{\strS''\,\vlscn{\natmb}}
       ;\strT]}
 \strR\not\approx\vlread$,
with $\natmb\in\strFN{\vlholer{\strS'' \,\vlscn{\natmb}}}$, for any
$\vlholer{\strS'\vlhole}, \vlholer{\strS''\vlhole}$, and \processstructure\ $\strT$.

\item
\label{enumerate:Trivial derivations and rightcontext s-02'}
Let $\vec{\atma}$ be a, possibly empty, sequence of names.
Let $\strT$ be a \processstructure, possibly such that $\strT\approx\vlone$.
Then 
$\vlstore{
\vlsbr[\vlholer{\strS' \,\vlscn{\atmb }}
      ;\vlholer{\strS''\,\vlscn{\natmb}}
      ;\strT]
}
\strR\approx\vlfo{\vec{\atma}}{\vlread}$ such that
either (i)
$\atmb\in\strFN{\vlholer{\strS' \,\vlscn{\atmb }}}$, and
$\natmb\in\strFN{\vlholer{\strS''\,\vlscn{\natmb}}}$,
or (ii)
$\atmb\in\strBN{\vlholer{\strS' \,\vlscn{\atmb }}}$, and
$\natmb\in\strBN{\vlholer{\strS''\,\vlscn{\natmb}}}$.

\item
\label{enumerate:Trivial derivations and rightcontext s-03}
Let $\vlholer{\strS'\,\vlscn{\atmb}}$ be the one in Point~\eqref{enumerate:Trivial derivations and rightcontext s-02'} here above.
If $ \pincE $, and $ \pincF $ are processes \ST\
$\mapPincToDi{\pincE}=\vlholer{\strS'\,\vlscn{\atmb}}$, and
$\mapPincToDi{\pincF}=\vlholer{\strS'\,\vlscn{\vlone}}$,
then
$\pincLTSJud{\pincE}
            {\pincF}
            {\pincLabL}$,  where $\pincLabL$ is $\pincLabT$, if
$\atmb\in\strBN{\vlholer{\strS'\,\vlscn{\atmb}}}$, and $\pincLabL$ is $\atmb$, if
$\atmb\in\strFN{\vlholer{\strS'\,\vlscn{\atmb}}}$. The same holds by replacing
$\vlholer{\strS''\vlhole}$ for $\vlholer{\strS'\vlhole}$, and
$\natmb$ for $\atmb$.

\item
\label{enumerate:Trivial derivations and rightcontext s-04}
Let $\vlholer{\strS'\,\vlscn{\atmb}}$, and $\vlholer{\strS''\,\vlscn{\atmb}}$ be
the ones in Point~\eqref{enumerate:Trivial derivations and rightcontext s-02'} here
above.
If $ \pincE, \pincF, \pincE'$, and $\pincF'$ are processes \ST\
$\mapPincToDi{\pincE} =\vlholer{\strS'\,\vlscn{\atmb}},
 \mapPincToDi{\pincF} =\vlholer{\strS''\,\vlscn{\natmb}},
 \mapPincToDi{\pincE'}=\vlholer{\strS'\,\vlscn{\vlone}}$, and
$\mapPincToDi{\pincF'}=\vlholer{\strS''\,\vlscn{\vlone}}$,
then
$\pincLTSJud{\pincPar{\pincE}
                     {\pincF}}
            {\pincPar{\pincE'}
                     {\pincF'}}
            {\pincPreT}$.
\end{enumerate}
\end{proposition}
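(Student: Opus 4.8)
The plan is to establish the seven items in order, exploiting that $\bvtDder$ is trivial between \processstructure s, so by Fact~\ref{fact:Trivial derivations preserve process structures} it uses only $\bvtseqdrule$ and $\bvtrdrule$ and contains no $\bvtatidrulein$. The first thing to record is that both rules merely rearrange the syntax tree: $\bvtseqdrule$ permutes $\OpNameSeq$ and $\OpNamePar$, while $\bvtrdrule$ only splits a quantifier; hence neither creates nor erases an atom occurrence, and $\bvtrdrule$ is the sole rule touching a $\OpNameRen$. Item~(\ref{enumerate:Trivial derivations and rightcontext s-10}) is then immediate: the occurrences of $\atmb$ and $\natmb$ present in the premise $\vlholer{\strS\vlsbr[\atmb;\natmb]}$ survive in $\strR$, so $\strR\not\approx\vlone$. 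For item~(\ref{enumerate:Trivial derivations and rightcontext s-00}) I track these two occurrences downward: in the premise each sits in a \rightcontext\ position (the par $\vlsbr[\atmb;\natmb]$ itself fills a \rightcontext, and a par argument is a \rightcontext\ position of \eqref{equation:SBV2-right-contexts-inductive}), so Fact~\ref{fact:Preserving rightcontext s} applied step by step keeps them in \rightcontext\ positions of $\strR$. Item~(\ref{enumerate:Trivial derivations and rightcontext s-01}) is just the unfolding of ``\rightcontext\ for $\atmb$'', since \eqref{equation:SBV2-right-contexts-inductive} never places the hole to the right of a $\OpNameSeq$; it therefore follows from item~(\ref{enumerate:Trivial derivations and rightcontext s-00}).

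The crux is item~(\ref{enumerate:Trivial derivations and rightcontext s-02'}), which I prove by induction on $\Size{\bvtDder}$, peeling off the bottommost rule and maintaining two invariants on the tracked occurrences. The first is \emph{par-relatedness}: up to the congruence there is a single $\OpNamePar$ separating a sub-structure containing $\atmb$ from one containing $\natmb$. The second is \emph{binder synchronisation}: $\atmb$ lies under a \emph{local} $\atmb$-binder (one in whose scope $\natmb$ does not also lie) \IFF\ $\natmb$ lies under a local $\atmb$-binder. Both hold in the premise, where $\atmb,\natmb$ are the adjacent atoms of $\vlsbr[\atmb;\natmb]$ and every enclosing $\atmb$-binder is shared, i.e.\ case~(i). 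For the step I check that $\bvtseqdrule$ leaves every quantifier untouched and sends par-related pairs to par-related pairs, and that $\bvtrdrule$, turning $\vlfo{\atma}{\vlsbr[\strU;\strV]}$ into $\vlsbr[\vlfo{\atma}{\strU};\vlfo{\atma}{\strV}]$, either keeps the two tracked atoms in one argument under a single copy of the binder or separates them into the two arguments; in the latter case, when $\atma=\atmb$, the previously shared binder becomes two local binders \emph{simultaneously} for $\atmb$ and for $\natmb$, so the synchronisation passes from case~(i) to case~(ii) for both atoms at once. Reading the invariants off $\strR$ yields $\strR\approx\vlfo{\vec{\atma}}{\vlsbr[\vlholer{\strS'\,\vlscn{\atmb}};\vlholer{\strS''\,\vlscn{\natmb}};\strT]}$ with the (i)/(ii) dichotomy, the shared binders landing in $\vec{\atma}$. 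Item~(\ref{enumerate:Trivial derivations and rightcontext s-02}) is the negation of a \emph{broken} synchronisation (one atom free while the complementary one is under a local $\atmb$-binder), hence it is ruled out by the same invariant.

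For items~(\ref{enumerate:Trivial derivations and rightcontext s-03}) and~(\ref{enumerate:Trivial derivations and rightcontext s-04}) I pass to processes through the isomorphism of Fact~\ref{fact:From a process to a process structure} and argue by induction on the \rightcontext\ $\vlholer{\strS'\,\vlscn{\atmb}}$ delivered by item~(\ref{enumerate:Trivial derivations and rightcontext s-02'}). In the base case $\vlholer{\strS'\,\vlscn{\atmb}}\approx\vlsbr<\atmb;\strR'>$ reads back as $\pincSec{\atmb}{\pincG}$, and replacing $\atmb$ by $\vlone$ gives $\pincG$, so $\pincact$ yields $\pincLTSJudShort{\pincSec{\atmb}{\pincG}}{\pincG}{\atmb}$; a $\OpNamePar$ layer is absorbed by $\pinccntxp$, and a $\OpNameRen$ layer $\vlfo{\atmb}{\cdot}$ reads back as a restriction $\pincNu{\atmb}{\cdot}$. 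While $\atmb$ stays free the label remains the observable $\atmb$; once it becomes bound, pairing the restriction with the idle partner $\pincZer$ (using $\pincPar{\pincE}{\pincZer}\pincCong\pincE$ and $\pincNu{\atmb}{\pincZer}\pincCong\pincZer$ from \eqref{align:PPi-structural-congruence}) lets $\pincpi$ turn the fired $\atmb$ into the silent label $\pincPreT$; this proves item~(\ref{enumerate:Trivial derivations and rightcontext s-03}), symmetrically for $\natmb$. For item~(\ref{enumerate:Trivial derivations and rightcontext s-04}) I combine the two complementary firings: in case~(i) the observable $\atmb$ and $\natmb=\vlne{\atmb}$ are contracted by $\pinccom$ into $\pincLTSJudShort{\pincPar{\pincE}{\pincF}}{\pincPar{\pincE'}{\pincF'}}{\pincPreT}$; in case~(ii) the internal communication first happens under the two shared restrictions via $\pinccom$ and is then lifted by $\pincpe$, whose side condition $\pincPreT\not\in\Set{\atmb,\natmb}$ is met.

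I expect the genuine difficulty to be item~(\ref{enumerate:Trivial derivations and rightcontext s-02'}), specifically keeping the synchronisation invariant \emph{exact} across $\bvtrdrule$ while reasoning up to $\approx$: the $\alpha$-law \eqref{align:alpha-intro} silently deletes a vacuous quantifier, so I must verify that it never migrates a tracked occurrence from ``bound'' to ``free'' without doing the same to its partner, i.e.\ that a binder counted as local for one atom is never garbage-collected for the other. A secondary, more clerical obstacle is the scope bookkeeping in items~(\ref{enumerate:Trivial derivations and rightcontext s-03}) and~(\ref{enumerate:Trivial derivations and rightcontext s-04}): aligning the two local $\atmb$-binders produced by $\bvtrdrule$ so that, up to the process congruence \eqref{align:PPi-structural-congruence}, they present exactly as the $\pincPar{\pincNu{\atmb}{\cdot}}{\pincNu{\atmb}{\cdot}}$ redex required by $\pincpi$ and $\pincpe$.
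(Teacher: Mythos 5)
Your proposal follows the paper's own proof closely on five of the seven items. Points~\ref{enumerate:Trivial derivations and rightcontext s-10}--\ref{enumerate:Trivial derivations and rightcontext s-01} are argued exactly as in the paper: no rule in $\Set{\bvtseqdrulein,\bvtrdrulein}$ creates atoms, Fact~\ref{fact:Preserving rightcontext s} propagates \rightcontext\ positions from the premise down to $\strR$, and point~\ref{enumerate:Trivial derivations and rightcontext s-01} unfolds point~\ref{enumerate:Trivial derivations and rightcontext s-00}. Your treatment of points~\ref{enumerate:Trivial derivations and rightcontext s-03} and~\ref{enumerate:Trivial derivations and rightcontext s-04} is also the paper's (Appendices C and D): induction on the size of the process(es), cases on whether the outermost layer of the context is a prefix, a \OpNamePar, or a \OpNameRen, closed by $\pincact$, $\pinccntxp$, $\pinccom$, and $\pincpi$/$\pincpe$ with the idle partner $\pincZer$ supplied by the congruence. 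Your ``binder synchronisation'' is precisely the idea behind the paper's points~\ref{enumerate:Trivial derivations and rightcontext s-02} and~\ref{enumerate:Trivial derivations and rightcontext s-02'}; the only organisational difference is the direction of the dependency: the paper proves point~\ref{enumerate:Trivial derivations and rightcontext s-02} directly (a free $\atmb$ can never enter the scope of $\vlfo{\atmb}{\vlholer{\strS''\,\vlscn{\natmb}}}$ along a \trivialderivation, so the redex $\vlsbr[\atmb;\natmb]$ of the premise could never form) and reads point~\ref{enumerate:Trivial derivations and rightcontext s-02'} off it, whereas you prove point~\ref{enumerate:Trivial derivations and rightcontext s-02'} by induction and obtain point~\ref{enumerate:Trivial derivations and rightcontext s-02} as a corollary.

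The genuine problem is the closing step of your induction for point~\ref{enumerate:Trivial derivations and rightcontext s-02'}: the claim that ``the shared binders land in $\vec{\atma}$'', i.e.\ that every occurrence of \OpNameRen\ whose scope contains both tracked atoms can be presented as a top-level prefix of $\strR$. The congruence $\approx$ has no scope-extrusion clause; as the paper itself stresses in its final section, extrusion is the \emph{rule} $\bvtrdrulein$ together with \eqref{align:alpha-intro}, not an axiom of $\approx$, so a shared binder nested under a \OpNamePar\ cannot be hoisted. Concretely, one instance of $\bvtseqdrulein$ applied inside the \rightcontext\ $\vlsbr[\atmc;\vlfo{\atma}{<\vlhole;[\strU';\strU'']>}]$ turns the premise $\vlsbr[\atmc;\vlfo{\atma}{<[\atmb;\natmb];[\strU';\strU'']>}]$ into the \processstructure\ $\strR\approx\vlsbr[\atmc;\vlfo{\atma}{[<\atmb;\strU'>;<\natmb;\strU''>]}]$; if $\atma$ occurs free in $\strU'$, the binder $\vlfo{\atma}{\cdot}$ can neither be deleted by \eqref{align:alpha-intro} nor moved past the \OpNamePar\ with $\atmc$, yet in any structure of the form $\vlstore{\vlsbr[\vlholer{\strS'\,\vlscn{\atmb}};\vlholer{\strS''\,\vlscn{\natmb}};\strT]}\vlfo{\vec{\atma}}{\vlread}$ a binder enclosing both tracked occurrences must sit in the prefix $\vec{\atma}$. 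So your induction cannot close on $\bvtseqdrulein$ steps performed under a nested shared binder. The repair is the weaker reading that the paper actually proves and uses: point~\ref{enumerate:Trivial derivations and rightcontext s-02'} asserts only the free/bound dichotomy (i)/(ii) for whatever decomposition of that shape is at hand --- exactly what your synchronisation invariant gives --- while the existence of such a decomposition is never claimed in general, but checked case by case on the concrete forms of $\mapPincToDi{\pincE}$ inside the proofs of Theorems~\ref{theorem:Soundness w.r.t. internal communication} and~\ref{theorem:Soundness w.r.t. external communication}. Weaken your point~\ref{enumerate:Trivial derivations and rightcontext s-02'} accordingly and the rest of your argument stands.
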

%%%%%%%%%
\begin{proof}
Concerning point~\eqref{enumerate:Trivial derivations and rightcontext s-10},
since no rule of $\bvtDder$ generates atoms both $\atmb$, and $\natmb$
must already occur in $\strR$.

\par%%-00----------------
Concerning point~\eqref{enumerate:Trivial derivations and rightcontext s-00},
we start from point~\eqref{enumerate:Trivial derivations and rightcontext s-10}, and
we look at $\vlstore{\vlsbr[\atmb;\natmb]}\vlholer{\strS\vlread}$ by
first ``hiding'' $\atmb$, which gives
$\vlholer{\strS_0\,\vlscn{\atmb}}
 \equiv
 \vlstore{\vlsbr[\atmb;\natmb]}
 \vlholer{\strS\vlread}$, for some $\vlholer{\strS_0\,\vlhole}$, and then
 ``hiding''  $\natmb$ yielding
$\vlholer{\strS_1\,\vlscn{\natmb}}
 \equiv
 \vlstore{\vlsbr[\atmb;\natmb]}
 \vlholer{\strS\vlread}$, for some
$\vlholer{\strS_1\,\vlhole}$.
Then, we apply point~\eqref{enumerate:Preserving rightcontext s-01} of
Fact~\ref{fact:Preserving rightcontext s} to
$\vlholer{\strS_0\,\vlscn{\atmb}}$. It implies that
$\strR\approx\strS'\,\vlscn{\atmb }$ is a \rightcontext, for some
$\strS'\vlhole$. Analogously, point~\eqref{enumerate:Preserving rightcontext s-01}
on Fact~\ref{fact:Preserving rightcontext s} to
$\vlholer{\strS_1\,\vlscn{\natmb}}$ implies that
$\strR\approx\strS''\,\vlscn{\natmb}$ is a \rightcontext, for some
$\strS''\vlhole$.

\par%%-01 --------
Point~\eqref{enumerate:Trivial derivations and rightcontext s-01},
directly follows from point~\eqref{enumerate:Trivial derivations and rightcontext s-00}.

\par%%-02--------
Point~\eqref{enumerate:Trivial derivations and rightcontext s-02} holds because, for example, $\atmb$ cannot enter the scope of $\vlfo{\atmb}{\vlholer{\strS''\,\vlscn{\natmb}}}$.

\par%%-02'-----------
Point~\eqref{enumerate:Trivial derivations and rightcontext s-02'} follows
from \eqref{enumerate:Trivial derivations and rightcontext s-02}.

\par%%-03-----------
Point~\eqref{enumerate:Trivial derivations and rightcontext s-03}
holds by proceeding inductively on $\Size{\pincE}$, and by cases on the form of
$\vlholer{\strS'\vlhole}$, or $\vlholer{\strS''\vlhole}$, respectively. (Details,
relative to $\vlholer{\strS'\vlhole}$, in Appendix~\ref{section:Proof of
Rightcontext s can preserve external communication}.)

\par%%-04-------------
Point~\eqref{enumerate:Trivial derivations and rightcontext s-04}
holds thanks to points~\eqref{enumerate:Trivial derivations and rightcontext s-02},
and~\eqref{enumerate:Trivial derivations and rightcontext s-03}, by proceeding inductively on $\Size{\pincPar{\pincE}{\pincF}}$, and by cases on the form of $\vlholer{\strS'\vlhole}$, and $\vlholer{\strS''\vlhole}$. (Details in Appendix~\ref{section:Proof of Rightcontext s can preserve internal communication}.)
\end{proof}
%%%%%%%%%%%
\par
The coming theorem says that the absence of interactions, as in a \trivialderivation, models non interacting transitions inside the \lts\ of $ \CCSR $. We include proof details here, and not in an Appendix, because this proof supplies tha simplest technical account of what we shall do for proving soundness.
%%%%%%%%%%%%%%%
\begin{theorem}[\textit{\textbf{\Trivialderivation s model empty computations in \lts}}]
\label{theorem:Soundness w.r.t. trivial deductions}
Let $\pincE$, and $\pincF$ be processes, with $\pincF$ \simple. If
$\vlstore{\mapPincToDi{\pincE}}
 \bvtInfer{\bvtDder}
          {\, \mapPincToDi{\pincF}
          \bvtJudGen{\BVT}{}
          \vlread}$ is trivial --- beware, not necessarily in $\BVTL$ ---,
then
$\pincLTSJud{\pincE}
            {\pincF}
            {\pincPreT}$.
\end{theorem}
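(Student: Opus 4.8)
The plan is to reduce the given \trivialderivation\ to a derivation built from a single rule, match each of its instances with a silent transition of the \lts, and compose. Since $\pincF$ is \simple, Fact~\ref{fact:Basic properties of simplestructure} makes $\mapPincToDi{\pincF}$ a \simplestructure; as it is the premise of the \trivialderivation\ $\bvtDder$, Fact~\ref{fact:Trivial derivations preserve normal structures} forces $\bvtDder$ to use $\bvtrdrulein$ only and makes $\mapPincToDi{\pincE}$ \simple\ as well, so by the isomorphism of Fact~\ref{fact:From a process to a process structure} $\pincE$ is a \simpleprocess. Applying the same Fact to the initial segments of $\bvtDder$, every intermediate structure stays \simple, hence a \processstructure. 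I would therefore record the structures $\mapPincToDi{\pincF}=\strV_0,\strV_1,\ldots,\strV_n=\mapPincToDi{\pincE}$ read top-to-bottom, each step $\strV_i\to\strV_{i+1}$ being one instance of $\bvtrdrulein$, and write $\pincE_i$ for the \simpleprocess\ with $\mapPincToDi{\pincE_i}=\strV_i$, so that $\pincE_0\pincCong\pincF$ and $\pincE_n\pincCong\pincE$.

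The core is a single-step lemma: whenever one instance of $\bvtrdrulein$ rewrites $\mapPincToDi{\pincG}$ (premise, above) to $\mapPincToDi{\pincH}$ (conclusion, below), then $\pincLTSJud{\pincH}{\pincG}{\pincPreT}$. I would prove it by induction on the context $\strS\vlhole$ surrounding the redex, which for \simplestructure s is built from \OpNamePar and \OpNameRen only. In the empty-context base case the claim is the distribution $\pincLTSJudShort{\pincPar{\pincNu{\pinca}{\pincP}}{\pincNu{\pinca}{\pincQ}}}{\pincNu{\pinca}{(\pincPar{\pincP}{\pincQ})}}{\pincPreT}$, where $\mapPincToDi{\pincP}=\strV$ and $\mapPincToDi{\pincQ}=\strW$ are the components of the redex $\vlfo{\atma}{\vlsbr[\strV;\strW]}$. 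The point is that, although this distribution is not among the congruences of~\eqref{align:PPi-structural-congruence}, the \lts\ does realize it: using $\pincNu{\pinca}{\pincE}\pincCong\pincPar{\pincNu{\pinca}{\pincE}}{\pincNu{\pinca}{\pincZer}}$ (a consequence of $\pincNu{\pinca}{\pincZer}\pincCong\pincZer$ and $\pincPar{\pincE}{\pincZer}\pincCong\pincE$) together with $\pincrefl$ on $\pincPar{\pincP}{\pincQ}$, the rule $\pincpe$ --- whose side condition $\pincalpha\not\in\Set{\pinca,\pincna}$ is automatically met by the silent action --- yields exactly the required transition. The \OpNamePar-context step follows by $\pinccntxp$ (up to commutativity of \OpNamePar), and the \OpNameRen-context step, namely lifting a $\pincPreT$-transition $\pincLTSJud{\pincH_0}{\pincG_0}{\pincPreT}$ through a lone restriction to $\pincLTSJud{\pincNu{\pincc}{\pincH_0}}{\pincNu{\pincc}{\pincG_0}}{\pincPreT}$, follows by the same $\pincNu{\pincc}{\pincZer}$ rewriting feeding $\pincpe$.

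With the single-step lemma in hand I would finish by induction on the length $n$ of $\bvtDder$. For $n=0$ we have $\pincE\pincCong\pincF$ and conclude by $\pincrefl$. For $n>0$, the lemma applied to the lowermost $\bvtrdrulein$ instance gives $\pincLTSJud{\pincE}{\pincE_{n-1}}{\pincPreT}$, the induction hypothesis on the upper derivation of length $n-1$ (whose premise $\mapPincToDi{\pincF}$ is still \simple) gives $\pincLTSJud{\pincE_{n-1}}{\pincF}{\pincPreT}$, and $\pinctran$ composes them into a transition labeled $\pincPreT;\pincPreT$, which is the desired one because $\pincPreT;\pincPreT\pincCong\pincPreT$ by~\eqref{align:PPi-labels-congruence}.

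The step I expect to be the main obstacle is the \OpNameRen-context case of the single-step lemma: there is no primitive \lts\ rule that transports a transition through a single restriction $\pincNu{\pincc}{\,\cdot\,}$, so the argument must route it through the companion $\pincNu{\pincc}{\pincZer}$ in order to apply $\pincpi$ or $\pincpe$. Getting the side conditions and the bookkeeping of the $\pincNu{\pincc}{\pincZer}$ congruence exactly right is precisely where the logical behavior of $\bvtrdrulein$ and the operational semantics of restriction in $\CCSR$ have to be reconciled, and it is also what explains why $\pincpi$ and $\pincpe$ were shaped as they are.
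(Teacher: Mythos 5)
Your proposal is correct, and its core mechanism is the same as the paper's: peel off the bottommost instance of $\bvtrdrulein$, realize it in the \lts\ as a silent transition obtained by feeding a $\pincrefl$ premise into $\pincpe$ via the congruence $\pincNu{\pinca}{\pincE}\pincCong\pincPar{\pincNu{\pinca}{\pincE}}{\pincNu{\pinca}{\pincZer}}$, close under parallel contexts with $\pinccntxp$, and compose with the inductive hypothesis by $\pinctran$ using $\pincPreT;\pincPreT\pincCong\pincPreT$. The difference is in the decomposition, and it is not merely cosmetic: you factor the argument through a single-step lemma proved by induction on the context $\strS\vlhole$ of the redex, whereas the paper performs the case analysis on $\strS\vlhole$ directly inside the main induction and lists only the cases $\strS\vlhole\approx\vlhole$ and $\strS\vlhole\approx\vlsbr[\vlhole;\strT]$, dismissing the \OpNameSeq\ case. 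Your context induction additionally covers redexes lying under one or more occurrences of \OpNameRen\ --- for instance $\strS\vlhole\approx\vlfo{\atmc}{\vlsbr[\atmc;\vlhole]}$, which is perfectly possible for a \simplestructure\ and is not subsumed by the paper's two cases, since a restriction binding a name that actually occurs cannot be removed by \eqref{align:alpha-intro}. The $\pincNu{\pincc}{\pincZer}$-companion trick you isolate as ``the main obstacle'' is exactly what is needed to transport a $\pincPreT$-transition through a lone restriction there, so your proof is, if anything, more complete than the paper's on this point. One further detail in your favor: in the base case the paper labels the rule $\pincpi$, but since the premise carries the label $\pincPreT\not\in\Set{\pinca,\pincna}$, the rule whose side condition is actually met is $\pincpe$, exactly as you argue.
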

%%%%%%%%%%%
\begin{proof}%%%%theorem:Soundness w.r.t. trivial deductions
Fact~\ref{fact:Trivial derivations preserve normal structures} implies that
$\mapPincToDi{\pincE}$ is \simple, like $\mapPincToDi{\pincF}$ is, and that $\bvtDder$ can only contain instances of $\bvtrdrulein$, if any rule occurs.
We proceed by induction on the number $n$ of instances of $\bvtrdrulein$ in
$\bvtDder$.
\par
If $n=0$, forcefully $\mapPincToDi{\pincE}\equiv\mapPincToDi{\pincF}$. We conclude by $\pincrefl$, \ie
$\pincLTSJud{\pincE}
            {\pincE}
            {\pincLabT}$. Otherwise, the last rule of $\bvtDder$ is:
{\small
$$
\vlinf{\bvtrdrule}{}
      {\strS\vlsbr[\vlfo{\atma}{\mapPincToDi{\pincE'}}
                  ;\vlfo{\atma}{\mapPincToDi{\pincE''}}]}
      {\strS\vlfo{\atma}{\vlsbr[\mapPincToDi{\pincE'}
                               ;\mapPincToDi{\pincE''}]}}
$$
}%\small
for some context $\strS\vlhole$, and processes $\pincE'$, and $\pincE''$, such that
$\mapPincToDi{\pincE}
 \approx
 \strS\vlsbr[\vlfo{\atma}{\mapPincToDi{\pincE'}}
            ;\vlfo{\atma}{\mapPincToDi{\pincE''}}]$.
We can proceed by cases on the form of $\strS\vlhole$.

\begin{itemize}
\item
%% 1-------------
Let $\strS\vlhole\approx\vlhole$. 
So, $\pincE$ must be $\pincPar{\pincNu{\pinca}{\pincE'}}
                             {\pincNu{\pinca}{\pincE''}}$, and we can write:
{\small
$$
\vlderivation{
\vliin{\pinctran}{}
      {\pincLTSJud{\pincPar{\pincNu{\pinca}
                                   {\pincE'}}
                           {\pincNu{\pinca}
                                   {\pincE''}}
                  }
	          {\pincF}
	          {\pincLabT}
      }
      {%%1---
	\vlin{\pincpi}{}
	     {\pincLTSJud{\pincPar{\pincNu{\pinca}
                                          {\pincE'}}
                                  {\pincNu{\pinca}
                                          {\pincE''}}
                         }
	                 {\pincPar{\pincNu{\pinca}
                                          {(\pincPar{\pincE'}
				 	            {\pincE''})}}
				  {\pincNu{\pinca}
                                          {\pincZer}
                                  }
                          \pincCong
	                  \pincNu{\pinca}
                                 {(\pincPar{\pincE'}
                                                       {\pincE''})
                                 }
                         }
	                 {\pincLabT}
             }{
	\vlin{\pincrefl}{}
	     {\pincLTSJud{\pincPar{\pincE'}
                                  {\pincE''}
                         }
	                 {\pincPar{\pincE'}
                                  {\pincE''}
                          \pincCong
			  \pincPar{\pincPar{\pincE'}
					   {\pincE''}}
				  {\pincZer}
                         }
	                 {\pincLabT}
             }{
             }{}}
      }
      {%%2---
	\vlhy{\pincLTSJud{\pincNu{\pinca}
                                 {(\pincPar{\pincE'}
                                                       {\pincE''})
                                 }
                         }
	                 {\pincF}
	                 {\pincLabT}
	     }
      }
}$$
}%\small
where
$\pincLTSJud{\pincNu{\pinca}
                    {(\pincPar{\pincE'}
                              {\pincE''})}} 
            {\pincF}
            {\pincLabT}$
holds by induction because
$\vlstore{\vlfo{\atma}{\vlsbr[\mapPincToDi{\pincE'}
                             ;\mapPincToDi{\pincE''}]}}
  \mapPincToDi{\pincF}\bvtJudGen{\Set{\bvtrdrule}}{} \vlread$ is shorter than $\bvtDder$.

\item
%% 2-------------
Let $\strS\vlhole\approx\vlsbr[\vlhole;\strT]$. 
So, $\pincE$ must be
$\pincPar{\pincPar{\pincNu{\pinca}{\pincE'}}
                  {\pincNu{\pinca}{\pincE''}}}
         {\pincF'}
         $, with $\mapPincToDi{\pincF'}=\strT$.
The case is analogous to the
previous one, with the proviso that an instance of $\pinccntxp$ must precede the
instance of $\pincpi$.
In particular,
$\pincLTSJud{\pincPar{\pincNu{\pinca}
			                 {(\pincPar{\pincE'}
				                       {\pincE''})
			                 }
		             }
		    {\pincF'}
	    }
	    {\pincF}
	    {\pincLabT}$
holds by induction because
$\vlstore{\vlsbr[\vlfo{\atma}{\vlsbr[\mapPincToDi{\pincE'}
                                    ;\mapPincToDi{\pincE''}]}
                ;\mapPincToDi{\pincF'}]}
  \mapPincToDi{\pincF}\bvtJudGen{\Set{\bvtrdrule}}{} \vlread$ 
is shorter than $\bvtDder$.
\end{itemize}

%% 3-------------
\par
The third case 
$\vlstore{\vlsbr<\pincLabL;\vlhole>}
 \strS\vlhole\approx\vlread$ that we could obtain by assuming 
$\pincE =
\pincSec{\pincLabL}
        {\pincE'}$ cannot occur because $\pincE$ would not be \simple, against assumptions.

\end{proof}%%%%theorem:Soundness w.r.t. trivial deductions
%%%%%%%%%%%
\begin{remark}[\textbf{\textit{Why do we define \simplestructure s as such?}}]
\label{remark:Why process structures include normal ones}
Theorem~\ref{theorem:Soundness w.r.t. trivial deductions} would not hold if we used ``\processstructure s'' in place of ``\simplestructure s''. Let us pretend, for a moment, that $ \pincF $ be any \processstructure, and not only a \simple\ one, indeed. The bottommost rule in $ \bvtDder $ might well be:
{\small
$$
\vlinf{\bvtseqdrule}{}
      {\vlsbr[\mapPincToDi{\pincE'}
             ;<\mapPincToDi{\pincLabL}
              ;\mapPincToDi{\pincE''}>]}
      {\vlsbr<\mapPincToDi{\pincLabL}
             ;[\mapPincToDi{\pincE'}
              ;\mapPincToDi{\pincE''}]>}
$$
}%\small
for some $\pincE'$, and $\pincE''$, such that 
$\pincE = \pincPar{\pincE'}
                     {(\pincSec{\pincLabL}
                               {\pincE''})}$.
By induction,
$\pincLTSJud{\pincSec{\pincLabL}
                     {(\pincPar{\pincE'}
                               {\pincE''})}
            }
            {\pincF}
            {\pincLabT}$. However, in the \lts~\eqref{equation:PPi-LTS-from-BVT} of
$\CCSR$ we cannot deduce 
$\pincLTSJud{\pincPar{\pincE'}
                     {(\pincSec{\pincLabL}
                               {\pincE''})}
            }
            {\pincSec{\pincLabL}
                     {(\pincPar{\pincE'}
                               {\pincE''})}}
            {\pincLabT}$ whenever $ \pincLabL $ occurs free in $ \pincE' $.
So, as we did in the definition of \simpleprocess es, we must eliminate any occurrence of \OpNameSeq structure.
\end{remark}
%%%%%%%%%%%%%
% \newpage
\begin{theorem}[\textbf{\textit{Soundness \wrt\ internal communication}}]
\label{theorem:Soundness w.r.t. internal communication}
Let $\pincE$, and $\pincF$ be processes, with $\pincF$
\simple, and $\pincE\not\approx\vlone$. 
Let $\bvtDder$ be the derivation
{\small
$\vlderivation{
  \vlde{\bvtDder'}{\BVTL}
       {\mapPincToDi{\pincE}}
       {
  \vlin{\bvtatrdrule}{(*)}
       {\vlholer{
        \strS\vlsbr[\atmb;\natmb]
        }
       }
       {
  \vlde{\bvtDder''}{\BVTL}
       {\vlholer{
        \strS\vlscn{\vlone}
        }
       }
       {
  \vlhy{\mapPincToDi{\pincF}}}}}
}
$\,}%\small
which, besides being \standard, we assume to be \nontrivial, and
\ST\ $(*)$ is its lowermost instance of $\bvtatrdrulein$.
If, for some process $\pincG$, the derivation
$\bvtInfer{\bvtEder}
          {\,\mapPincToDi{\pincF}
          \bvtJudGen{\BVTL}{}
          \mapPincToDi{\pincG}}$ is the reduction of $\bvtDder$,
then
$\pincLTSJud{\pincE}
            {\pincG}
            {\pincLabT}$.
\end{theorem}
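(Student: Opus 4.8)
The plan is to read off, just below the lowermost interaction $(*)$, a purely \trivialderivation\ to which Proposition~\ref{proposition:Rightcontext s preserve communication} applies, and then to translate its conclusions into the \lts\ of $\CCSR$. First I would observe that, since $(*)$ is the \emph{lowermost} occurrence of $\bvtatrdrulein$ and $\bvtDder$ is \standard, the subderivation $\bvtDder'$ running from $\vlholer{\strS\vlsbr[\atmb;\natmb]}$ down to $\mapPincToDi{\pincE}$ contains no atomic interaction at all; hence $\bvtDder'$ is \trivial\ and, by Fact~\ref{fact:Trivial derivations preserve process structures}, uses only $\Set{\bvtseqdrulein,\bvtrdrulein}$. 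This is exactly the hypothesis of Proposition~\ref{proposition:Rightcontext s preserve communication} with $\strR=\mapPincToDi{\pincE}$, so I may invoke its point~\eqref{enumerate:Trivial derivations and rightcontext s-02'} to write
$$\mapPincToDi{\pincE}\approx\vlfo{\vec{\atma}}{\vlsbr[\vlholer{\strS'\,\vlscn{\atmb}};\vlholer{\strS''\,\vlscn{\natmb}};\strT]},$$
with $\atmb,\natmb$ either both free or both bound in their respective \rightcontext s.

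Next I would describe $\pincG$. By definition the reduction $\bvtEder$ replaces by $\vlone$ exactly the two occurrences of $\atmb$ and $\natmb$ that point~\eqref{enumerate:Trivial derivations and rightcontext s-02'} tracks inside $\vlholer{\strS'\,\vlscn{\atmb}}$ and $\vlholer{\strS''\,\vlscn{\natmb}}$, and by Fact~\ref{fact:Reduction preserve process structures} its conclusion is again a \processstructure; so there is a process $\pincG$ with
$$\mapPincToDi{\pincG}\approx\vlfo{\vec{\atma}}{\vlsbr[\vlholer{\strS'\,\vlscn{\vlone}};\vlholer{\strS''\,\vlscn{\vlone}};\strT]}.$$
Using the isomorphism of Fact~\ref{fact:From a process to a process structure} I then fix processes $\pincE_1,\pincF_1,\pincE_1',\pincF_1',\pincH$ with $\mapPincToDi{\pincE_1}=\vlholer{\strS'\,\vlscn{\atmb}}$, $\mapPincToDi{\pincF_1}=\vlholer{\strS''\,\vlscn{\natmb}}$, $\mapPincToDi{\pincE_1'}=\vlholer{\strS'\,\vlscn{\vlone}}$, $\mapPincToDi{\pincF_1'}=\vlholer{\strS''\,\vlscn{\vlone}}$, and $\mapPincToDi{\pincH}=\strT$, so that $\pincE\approx\pincNu{\vec{\pinca}}{(\pincPar{\pincE_1}{(\pincPar{\pincF_1}{\pincH})})}$ and $\pincG\approx\pincNu{\vec{\pinca}}{(\pincPar{\pincE_1'}{(\pincPar{\pincF_1'}{\pincH})})}$, where $\vec{\pinca}$ is the list of names matching $\vec{\atma}$.

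The heart of the argument is then to build the transition. Point~\eqref{enumerate:Trivial derivations and rightcontext s-04} of Proposition~\ref{proposition:Rightcontext s preserve communication} delivers the core internal communication
$$\pincLTSJud{\pincPar{\pincE_1}{\pincF_1}}{\pincPar{\pincE_1'}{\pincF_1'}}{\pincPreT}$$
uniformly in the free/bound case. From here I lift this silent transition outward: one application of $\pinccntxp$ puts $\pincH$ in parallel, giving $\pincLTSJud{\pincPar{(\pincPar{\pincE_1}{\pincF_1})}{\pincH}}{\pincPar{(\pincPar{\pincE_1'}{\pincF_1'})}{\pincH}}{\pincPreT}$; and then, by induction on the length of $\vec{\pinca}$, each restriction is introduced by the device already used in Example~\ref{example:Using the labeled transition system}. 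Namely, given a single name $\pinca$ and a silent step $\pincLTSJud{\pincP_0}{\pincP_0'}{\pincPreT}$, one rewrites $\pincP_0$ as $\pincPar{\pincP_0}{\pincNu{\pinca}{\pincZer}}$ up to $\pincCong$ and applies $\pincpe$ — legitimate since the label $\pincPreT\notin\Set{\pinca,\pincna}$ — recovering $\pincLTSJud{\pincNu{\pinca}{\pincP_0}}{\pincNu{\pinca}{\pincP_0'}}{\pincPreT}$ up to $\pincCong$. Finally, associativity and commutativity of $\mid$ together with the two decompositions give $\pincE\pincCong\pincNu{\vec{\pinca}}{(\pincPar{(\pincPar{\pincE_1}{\pincF_1})}{\pincH})}$ and $\pincG\pincCong\pincNu{\vec{\pinca}}{(\pincPar{(\pincPar{\pincE_1'}{\pincF_1'})}{\pincH})}$, whence $\pincLTSJud{\pincE}{\pincG}{\pincPreT}$.

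The main obstacle I expect is bookkeeping rather than conceptual: I must guarantee that the occurrences of $\atmb,\natmb$ annihilated by $(*)$ are \emph{literally} those tracked to $\mapPincToDi{\pincE}$ by point~\eqref{enumerate:Trivial derivations and rightcontext s-02'}, so that erasing them is precisely the reduction producing $\mapPincToDi{\pincG}$, and that the congruence-plus-$\pincpe$ step can be iterated over all of $\vec{\pinca}$ while the label stays silent. The free-versus-bound dichotomy, which would otherwise force a case split — external firing composed through $\pinccom$ versus communication under a shared restriction handled by $\pincpi$ — is absorbed once and for all inside point~\eqref{enumerate:Trivial derivations and rightcontext s-04}, so at this level it costs nothing.
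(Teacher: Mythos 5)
Your proof is correct in substance and rests on the same pillars as the paper's: the observation that $\bvtDder'$ is a trivial derivation, Proposition~\ref{proposition:Rightcontext s preserve communication} applied to it, its Point~\eqref{enumerate:Trivial derivations and rightcontext s-04} for the core silent communication, and the congruence-plus-$\pincpe$ device (rewriting a process $\pincE$ as $\pincPar{\pincE}{\pincNu{\pinca}{\pincZer}}$ before applying $\pincpe$), which is exactly the device the paper itself uses. Where you genuinely differ is in how the lifting into the \lts\ is organized. The paper never invokes Point~\eqref{enumerate:Trivial derivations and rightcontext s-02'}: it applies Point~\eqref{enumerate:Trivial derivations and rightcontext s-00} and then splits into cases on the top-level shape of $\mapPincToDi{\pincE}$ --- two prefixed components (handled by $\pincact$ and $\pinccom$), two separately restricted components plus a rest (handled by $\pinccntxp$ and $\pincpi$/$\pincpe$), one outermost restriction, and no restriction --- building a bespoke \lts\ derivation in each case. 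You instead extract once and for all the prefix form $\vlfo{\vec{\atma}}{\vlsbr[\vlholer{\strS'\,\vlscn{\atmb}};\vlholer{\strS''\,\vlscn{\natmb}};\strT]}$ and run a single uniform construction, with an explicit induction on the length of $\vec{\atma}$; this is cleaner, and it makes rigorous the iteration over nested restrictions that the paper leaves implicit. Two small notes: your appeal to Fact~\ref{fact:Trivial derivations preserve process structures} is unnecessary (and its hypotheses are not met, since the premise $\vlholer{\strS\vlsbr[\atmb;\natmb]}$ of $\bvtDder'$ need not be a \processstructure); the rule-set $\Set{\bvtseqdrule,\bvtrdrule}$ follows directly from $\bvtDder'$ lying in $\BVTL$ with no $\bvtatrdrulein$ below $(*)$.

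The one delicate point is your reliance on the literal prefix form of Point~\eqref{enumerate:Trivial derivations and rightcontext s-02'}. The congruence $\approx$ has no scope-extrusion clause --- extrusion is the one-directional rule $\bvtrdrulein$, as the paper stresses in its final section --- so a conclusion such as $\vlsbr[\vlfo{\atmb}{[<\atmb;\strR>;<\natmb;\strT>]};\strU]$ with $\strU\not\approx\vlone$, where a single occurrence of \OpNameRen\ covers both annihilating occurrences but not $\strU$, is not $\approx$-equal to any $\vlfo{\vec{\atma}}{\vlsbr[\vlholer{\strS'\,\vlscn{\atmb}};\vlholer{\strS''\,\vlscn{\natmb}};\strT]}$ in which the two occurrences sit in distinct parallel components. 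Such conclusions do arise under the theorem's hypotheses, so they escape the decomposition you quote. This is an imprecision you inherit from the Proposition itself rather than a flaw peculiar to your argument --- the paper's own case list misses exactly the same configuration --- and the repair is the one your own scheme suggests: first derive the silent step for $\vlfo{\atmb}{[<\atmb;\strR>;<\natmb;\strT>]}$ alone (prefix form with trivial $\strT$), then compose with $\strU$ by $\pinccntxp$, recursing on the structure rather than flattening it in one shot.
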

%%%%%%%%%%%
\begin{proof}%%%Soundness w.r.t. internal communication
The derivation $\bvtDder'$ satisfies the assumptions of
Point~\eqref{enumerate:Trivial derivations and rightcontext s-00} in
Proposition~\ref{fact:Trivial derivations and rightcontext s} which implies
$\mapPincToDi{\pincE}\approx\vlholer{\strS' \,\vlscn{\atmb}}$, and $\mapPincToDi{\pincE}\approx\vlholer{\strS''\,\vlscn{\natmb}}$, for some 
$ \vlholer{\strS'\,\vlscn{\natmb}} $, and $ \vlholer{\strS''\,\vlscn{\natmb}} $,
which must be \processstructure s. We proceed on the possible distinct forms that $\mapPincToDi{\pincE}$ can assume. Point~\eqref{enumerate:Trivial derivations and rightcontext s-04} of Proposition~\ref{proposition:Rightcontext s preserve communication} will
help concluding. (Details in Appendix~\ref{section:Proof of theorem:Soundness w.r.t.
internal communication}.)
\end{proof}%%%%Soundness w.r.t. internal communication
%%%%%%%%%%%
\paragraph{\Environmentstructure s that get consumed.}
Let $\strT$, and $\strU$ be \processstructure s, and $\strR$ be an
\environmentstructure. Let 
$\vlstore{
  \vlsbr[\strT;\strR]
 }
 \bvtInfer{\bvtDder}
          {\, \strU
          \bvtJudGen{\BVTL}{}
          \vlread}$ which, since belongs to $\BVTL$, is \standard.
We say that $\bvtDder$ \dfn{consumes $\strR$} if every atom of $\strR$ eventually
annihilates with an atom of $\strT$ thanks to an instance of $\bvtatrdrulein$, so
that none of them occurs in $\strU$.
%%%%%%%%%%%
\begin{example}[\textbf{\textit{Consuming environment structures}}]
Derivations that consume the environment structure $\vlsbr<\natmaRed;\atmbBlu>$ that
occurs in their conclusion are~\eqref{equation:tracing-sequential-interactions-01},
and \eqref{equation:tracing-sequential-interactions-00}.
If we consider only a part of \eqref{equation:tracing-sequential-interactions-01},
as here below, we get a standard derivation that does not consume
$\vlsbr<\natmaRed;\atmbBlu>$:
%%%%%%%%%%%%%
\par\vspace{\baselineskip}\noindent
{\small
  \fbox{
    \begin{minipage}{.974\linewidth}
%      \vspace{-.5cm}
       \begin{equation}
	 \label{equation:BV2-consuming-environment-structures}
 	 \vlderivation                   {
  \vlin{\bvtseqdrule}{}
       {\vlsbr
        [<\atmaRed;\strT>
        ;<\natmbBlu;\strU>
        ;<\natmaRed;\atmbBlu>]}{
  \vliq{\bvtatrdrule}{}
       {\vlsbr
        [<[\atmaRed;\natmaRed]
         ;[\strT;\atmbBlu]>
        ;<\natmbBlu;\strU>]}{
  \vlhy{\vlsbr
        [\strT
        ;<\natmbBlu;\strU>
        ;\atmbBlu]}}}}
       \end{equation}
%      \vspace{-.7cm}
    \end{minipage}
  }%fbox
}%\small
\end{example}
%%%%%%%%%%%
\begin{theorem}[\textbf{\textit{Soundness w.r.t. external communication}}]
\label{theorem:Soundness w.r.t. external communication}
Let $\pincE$, and $\pincF$ be processes, and $\strR$ be an
\environmentstructure.
Let $\pincF$ be \simple, and $\pincE\not\approx\vlone$.
Let $\bvtDder$ be a \nontrivial, and \standardderivation\ that assumes one of the
two following forms:
{\small
$$\vlderivation{
  \vlde{\bvtDder'}{\BVTL}
       {\vlsbr[\mapPincToDi{\pincE};<\natmb;\strR>]}
       {
  \vlin{\bvtatrdrule}{(*)}
       {\vlholer{
        \strS\vlsbr[\atmb;\natmb]
        }
       }
       {
  \vlde{\bvtDder''}{\BVTL}
       {\vlholer{
        \strS\vlscn{\vlone}
        }
       }
       {
  \vlhy{\mapPincToDi{\pincF}}}}}}
\qquad\qquad\qquad\textrm{or}\qquad\qquad\qquad
\vlderivation{
  \vlde{\bvtDder'}{\BVTL}
       {\vlsbr[\mapPincToDi{\pincE};\vlfo{\atmb}{<\natmb;\strR>}]}
       {
  \vlin{\bvtatrdrule}{(*)}
       {\vlholer{
        \strS\vlsbr[\atmb;\natmb]
        }
       }
       {
  \vlde{\bvtDder''}{\BVTL}
       {\vlholer{
        \strS\vlscn{\vlone}
        }
       }
       {
  \vlhy{\mapPincToDi{\pincF}}}}}}
$$}%\small
\ST\
$(*)$ is its lowermost instance of $\bvtatrdrulein$, 
and $\natmb$ in $\strS\vlsbr[\atmb;\natmb]$ is the same occurrence of $\natmb$ as the one in
$\vlsbr<\natmb;\strR>$.
If
$\vlstore{\vlsbr[\mapPincToDi{\pincG};\strR]}
 \bvtInfer{\bvtEder}
          {\, \mapPincToDi{\pincF}
          \bvtJudGen{\BVTL}{}
          \vlread}$ is the reduction of $\bvtDder$,
then
$\pincLTSJud{\pincE}
            {\pincG}
            {\pincLabT}$ if
$\atmb\in\strBN{\pincE}$. Otherwise, if $\atmb\in\strFN{\pincE}$, then
$\pincLTSJud{\pincE}
            {\pincG}
            {\pincb}$.
\end{theorem}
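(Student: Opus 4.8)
The plan is to run the argument of Theorem~\ref{theorem:Soundness w.r.t. internal communication} almost verbatim, replacing its \emph{internal} building block by the \emph{external} one, namely Point~\eqref{enumerate:Trivial derivations and rightcontext s-03} of Proposition~\ref{proposition:Rightcontext s preserve communication}. First I observe that, since $(*)$ is the \emph{lowermost} occurrence of $\bvtatrdrulein$ in $\bvtDder$ and $\bvtDder\in\BVTL=\Set{\bvtatrdrulein,\bvtseqdrulein,\bvtrdrulein}$, the upper part $\bvtDder'$ carries no atomic interaction at all; being moreover \OpNameTen-free, it is a \trivialderivation\ in $\Set{\bvtseqdrule,\bvtrdrule}$ whose premise is $\vlholer{\strS\vlsbr[\atmb;\natmb]}$ and whose conclusion is the \processstructure\ $\vlsbr[\mapPincToDi{\pincE};<\natmb;\strR>]$ in the first case, and $\vlsbr[\mapPincToDi{\pincE};\vlfo{\atmb}{<\natmb;\strR>}]$ in the second (recall that an \environmentstructure\ is, up to $\approx$, a \processstructure). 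Hence $\bvtDder'$ satisfies the hypotheses of Proposition~\ref{proposition:Rightcontext s preserve communication}, taken relative to the two occurrences $\atmb,\natmb$ that form the redex of $(*)$.

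I then invoke Point~\eqref{enumerate:Trivial derivations and rightcontext s-02'} to write the conclusion as $\vlfo{\vec{\atma}}{\vlsbr[\vlholer{\strS'\,\vlscn{\atmb}};\vlholer{\strS''\,\vlscn{\natmb}};\strT]}$, carrying its free/bound dichotomy. The hypothesis that the $\natmb$ of $\strS\vlsbr[\atmb;\natmb]$ is the \emph{same} occurrence as the head of $\vlsbr<\natmb;\strR>$ pins $\vlholer{\strS''\,\vlscn{\natmb}}$ down to the environment component; for the two conclusions at hand the prefix $\vec{\atma}$ is empty, so $\vlholer{\strS'\,\vlscn{\atmb}}$ together with $\strT$ recomposes $\mapPincToDi{\pincE}$, and absorbing $\strT$ into the \rightcontext\ we may take $\mapPincToDi{\pincE}\approx\vlholer{\strS'\,\vlscn{\atmb}}$. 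The two admissible shapes of $\bvtDder$ now realise the two branches of the dichotomy: in the first shape $\natmb$ is free in the environment component, whence branch~(i) applies and $\atmb\in\strFN{\mapPincToDi{\pincE}}$; in the second the binder $\vlfo{\atmb}{\,\cdot\,}$ makes $\natmb$ bound, whence branch~(ii) applies and $\atmb\in\strBN{\mapPincToDi{\pincE}}$. The mixed situation cannot occur precisely because $\bvtrdrulein$ merges only quantifiers bearing the \emph{same} name, so a bound $\natmb$ can never be brought alongside a free $\atmb$.

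It remains to read off the transition and to identify $\pincG$. Feeding $\mapPincToDi{\pincE}\approx\vlholer{\strS'\,\vlscn{\atmb}}$ into Point~\eqref{enumerate:Trivial derivations and rightcontext s-03} produces $\pincLTSJud{\pincE}{\pincG}{\pincLabL}$, with $\pincG$ the target process of that point, and with $\pincLabL$ equal to the action $\pincb$ (the action named by the free atom $\atmb$) when $\atmb\in\strFN{\pincE}$, and equal to $\pincLabT$ when $\atmb\in\strBN{\pincE}$; since the isomorphism~\eqref{equation:SBV2-to-PPi-map-process} preserves the free/bound status of names, these are exactly the two cases of the statement. That the $\pincG$ so obtained is the target named in the theorem follows from the shape of the reduction $\bvtEder$, which substitutes $\vlone$ for the redex of $(*)$: it sends $\mapPincToDi{\pincE}\approx\vlholer{\strS'\,\vlscn{\atmb}}$ to $\vlholer{\strS'\,\vlscn{\vlone}}=\mapPincToDi{\pincG}$, and the environment component $\vlsbr<\natmb;\strR>$ to $\vlsbr<\vlone;\strR>\approx\strR$ in the first case, respectively $\vlfo{\atmb}{\vlsbr<\natmb;\strR>}$ to $\vlfo{\atmb}{\strR}\approx\strR$ in the second, the last step using the $\alpha$-rule~\eqref{align:alpha-intro} and the fact that an \environmentstructure\ binds $\atmb$ only at the head just consumed, so $\atmb\notin\strFN{\strR}$. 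Therefore the conclusion of $\bvtEder$ is $\vlsbr[\mapPincToDi{\pincG};\strR]$, as required.

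The step I expect to be the genuine obstacle is the alignment performed in the second paragraph: one must check that the decomposition returned by Point~\eqref{enumerate:Trivial derivations and rightcontext s-02'} can always be reshaped, using associativity, commutativity and the $\alpha$-rule of \OpNameRen, so that its $\atmb$-factor coincides verbatim with the whole of $\mapPincToDi{\pincE}$, since Point~\eqref{enumerate:Trivial derivations and rightcontext s-03} is phrased for $\mapPincToDi{\pincE}$ taken as a single \rightcontext. Verifying that this reshaping moves neither the redex occurrence of $\atmb$ nor the free-versus-bound verdict recorded by the dichotomy is the one delicate point; the remaining bookkeeping is routine and is best deferred to the appendix.
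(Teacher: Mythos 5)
Your proposal is correct, and it hinges on the same key result as the paper's own proof, namely Proposition~\ref{proposition:Rightcontext s preserve communication}; but you deploy it in a genuinely more economical way. The paper first argues (consumption of the environment plus Point~\eqref{enumerate:Trivial derivations and rightcontext s-00}) that $\mapPincToDi{\pincE}\approx\vlholer{\strS'\,\vlscn{\atmb}}$, and then, in Appendix~\ref{section:Proof of theorem:Soundness w.r.t. external communication}, redoes by hand a case analysis on where the binder of $\atmb$ sits inside $\mapPincToDi{\pincE}$ --- at the root, on one \OpNamePar component, or nowhere, plus the general multi-binder form handled by congruence~\eqref{align:PPi-structural-congruence} --- exhibiting explicit \lts\ derivations built from $\pincact$, $\pinccntxp$, $\pincpi$, $\pincpe$, and invoking Point~\eqref{enumerate:Trivial derivations and rightcontext s-03} only on the \emph{inner} right-contexts. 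You call Point~\eqref{enumerate:Trivial derivations and rightcontext s-03} once, on the whole of $\mapPincToDi{\pincE}$. That is legitimate because the Point is stated for an arbitrary right-context arising from the decomposition of Point~\eqref{enumerate:Trivial derivations and rightcontext s-02'}, and its proof (Appendix~\ref{section:Proof of Rightcontext s can preserve external communication}) already contains, as an induction on $\Size{\pincE}$, exactly the peeling of \OpNamePar and \OpNameRen layers that the paper's appendix repeats; your free/bound dichotomy then delivers the label $\pincb$ versus $\pincLabT$ exactly as the statement requires. What the paper's longer route buys is an explicit record of the transition derivations and of the congruence rearrangement needed when several quantifiers wrap the $\atmb$-factor, which is precisely the ``reshaping'' of the decomposition that you flag at the end as the delicate point --- so both proofs concentrate their real work at the same spot, and your deferral is honest rather than evasive.

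Two repairs, neither fatal. First, $\bvtDder'$ is the \emph{lower} segment of $\bvtDder$, not the upper one (its conclusion is the conclusion of $\bvtDder$); your triviality argument --- no $\bvtatrdrulein$ below the lowermost one, and no $\bvtatidrulein$ anywhere since $\bvtDder$ is \standard --- is unaffected. Second, the ``fact'' that an \environmentstructure\ binds $\atmb$ only at the head just consumed is not a fact: the grammar~\eqref{equation:SBV2-environment-structures} lets a quantifier capture occurrences deep in the tail, as the paper's own structure~\eqref{eqnarray:example-environment-structure-00} shows, where $\vlfo{\atmb_2}$ binds an occurrence of $\atmb_2$ far from the head; nothing forbids $\atmb$ from reoccurring inside $\strR$ in $\vlstore{\vlsbr<\natmb;\strR>}\vlfo{\atmb}{\vlread}$. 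You do not actually need that claim: the theorem \emph{hypothesizes} that the reduction $\bvtEder$ ends in $\vlsbr[\mapPincToDi{\pincG};\strR]$, so the identification $\mapPincToDi{\pincG}\approx\vlholer{\strS'\,\vlscn{\vlone}}$ follows by matching this assumed shape against what the reduction does to the two redex occurrences; in the pathological situation where $\atmb$ survives in the tail, the hypothesis is simply unsatisfiable and the statement holds vacuously.
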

%%%%%%%%%%
\begin{proof}%%%%theorem:Soundness w.r.t. external communication nuovo
First, $\bvtDder$ necessarily consumes $\vlsbr<\natmb;\strR>$, or
$\vlstore{\vlsbr<\natmb;\strR>} \vlfo{\atmb}{\vlread}$ in either cases. 
The reason is twofold. Being $\mapPincToDi{\pincF}$ a 
\simplestructure\ implies it cannot contain any \OpNameSeq\ structure which, instead, is one of the operators that can compose $\vlsbr<\natmb;\strR>$, or
$\vlstore{\vlsbr<\natmb;\strR>} \vlfo{\atmb}{\vlread}$. Moreover, no occurrence of $\atmb$
inside $\strR$ can annihilate with the first occurrence of $\natmb$ inside $\vlsbr<\natmb;\strR>$, or $\vlstore{\vlsbr<\natmb;\strR>} \vlfo{\atmb}{\vlread}$.
\par
Second, $\bvtDder'$ satisfies the assumptions of Proposition~\ref{proposition:Rightcontext s preserve communication}. So, its
Point~\eqref{enumerate:Trivial derivations and rightcontext s-00} applies to
$\vlstore{\vlsbr<\natmb;\strR>}
 \vlsbr[\mapPincToDi{\pincE};\vlread]$, and
$\vlstore{\vlsbr<\natmb;\strR>}
 \vlsbr[\mapPincToDi{\pincE};\vlfo{\atmb}{\vlread}]$. 
Since $\natmb$ occurs in $\vlsbr<\natmb;\strR>$, for some $\vlholer{\strS'\vlhole}$, it must be $\mapPincToDi{\pincE}\approx\vlholer{\strS'\,\vlscn{\atmb }}$ in which the occurrence of $ \atmb $ we outline is the one that annihilates the given $ \natmb $.
We proceed on the possible forms that
%$\vlholer{\strS'\,\vlscn{\atmb}}$, which must be a \processstructure, 
$\mapPincToDi{\pincE}$ can assume, in relation with the form of $\strR$.
Point~\eqref{enumerate:Trivial derivations and rightcontext s-03} of
Proposition~\ref{proposition:Rightcontext s preserve communication} will help
concluding. (Details in Appendix~\ref{section:Proof of theorem:Soundness w.r.t.
external communication}.)
\end{proof}%%%%theorem:Soundness w.r.t. external communication nuovo
%%%%%%%%%%%
\begin{theorem}[\textbf{\textit{Soundness}}]
\label{equation:PPi-soundness-example-00}
Let $\pincE$, and $\pincF  $ be processes with $\pincF $ \simple.
For every \standardderivation\ $\bvtDder $, and 
every \environmentstructure\ $ \strR $,
if
$ \vlderivation{
  \vlde{\bvtDder}{\BVTL}
       {\vlsbr[\mapPincToDi{\pincE};\strR]}{
  \vlhy{\mapPincToDi{\pincF}}}} $, and
$\bvtDder$ consumes $\strR$,
then
$\pincLTSJud{\pincE}{\pincF}
            {\mapDiToPinc{\strR}{\emptyset}}$.
\end{theorem}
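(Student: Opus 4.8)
The plan is to prove Soundness by induction on the number $n$ of instances of $\bvtatrdrulein$ occurring in $\bvtDder$, peeling off the \emph{lowermost} interaction at each step, reading it as a single labelled transition, and composing that transition with the inductive hypothesis through one instance of $\pinctran$. For the base case $n=0$ the derivation $\bvtDder$ carries no atomic interaction and is therefore trivial; since $\bvtDder$ consumes $\strR$ and there is no $\bvtatrdrulein$ available to annihilate the atoms of $\strR$, forcefully $\strR\approx\vlone$, whence $\vlsbr[\mapPincToDi{\pincE};\strR]\approx\mapPincToDi{\pincE}$. Theorem~\ref{theorem:Soundness w.r.t. trivial deductions} then yields $\pincLTSJudShort{\pincE}{\pincF}{\pincPreT}$, and since $\mapDiToPinc{\vlone}{\emptyset}=\pincPreT$ the case is closed.

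For the inductive step $n\geq1$ I would let $(*)$ be the lowermost occurrence of $\bvtatrdrulein$, with redex $\vlsbr[\atmb;\natmb]$, so that the part $\bvtDder'$ of $\bvtDder$ standing below $(*)$ (between $(*)$ and the conclusion) is a trivial derivation ending in $\vlsbr[\mapPincToDi{\pincE};\strR]$. Applying Proposition~\ref{proposition:Rightcontext s preserve communication} to $\bvtDder'$ guarantees that both $\atmb$ and $\natmb$ already occur in $\vlsbr[\mapPincToDi{\pincE};\strR]$, and I split on where they sit. If both lie inside $\mapPincToDi{\pincE}$ the step is an internal communication and Theorem~\ref{theorem:Soundness w.r.t. internal communication} supplies $\pincLTSJudShort{\pincE}{\pincG}{\pincPreT}$, the reduction $\bvtEder$ of $\bvtDder$ having conclusion $\vlsbr[\mapPincToDi{\pincG};\strR]$. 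If instead one of them, say $\natmb$, is the head of the environment structure $\strR$ (which, by standardness and by the strictly left-to-right shape of environment structures, is the atom consumed first), then $\strR\approx\vlsbr<\natmb;\strR'>$ or $\strR\approx\vlfo{\atmb}{\vlsbr<\natmb;\strR'>}$, and Theorem~\ref{theorem:Soundness w.r.t. external communication} supplies $\pincLTSJudShort{\pincE}{\pincG}{\pincb}$ when $\atmb\in\strFN{\pincE}$, or $\pincLTSJudShort{\pincE}{\pincG}{\pincPreT}$ when $\atmb\in\strBN{\pincE}$, with $\bvtEder$ consuming the tail $\strR'$.

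In every subcase $\bvtEder$ is again standard (Fact~\ref{fact:Reduction preserve process structures}), still consumes its residual environment structure $\strR^\circ$, and contains $n-1$ interactions, so the inductive hypothesis gives $\pincLTSJudShort{\pincG}{\pincF}{\mapDiToPinc{\strR^\circ}{\emptyset}}$; one instance of $\pinctran$ then composes the peeled transition, of label $\pincLabL$, with it. The needed label identity $\pincLabL;\mapDiToPinc{\strR^\circ}{\emptyset}\pincCong\mapDiToPinc{\strR}{\emptyset}$ follows in each subcase straight from the defining clauses of $\mapDiToPinc{\cdot}{\cdot}$ together with $\pincPreT;\pincAct\sim\pincAct$: for an internal step $\strR^\circ\approx\strR$ and $\pincLabL=\pincPreT$; for a free external head the clause for $\vlsbr<\atmLabL;\strR>$ produces the leading $\pincb$; and for a bound external head the surrounding $\vlfo{\atmb}{\cdot}$ turns the head into $\pincPreT$, after which the $\Set{\atmb,\natmb}$-indexed clause matches $\mapDiToPinc{\strR^\circ}{\emptyset}$.

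The main obstacle I expect is the bookkeeping at the interface between the two auxiliary theorems and the induction. First, Theorem~\ref{theorem:Soundness w.r.t. internal communication} is stated for a bare conclusion $\mapPincToDi{\pincE}$ with no environment component, so I must check that, since the redex of an internal $(*)$ lies entirely inside $\mapPincToDi{\pincE}$, its argument goes through unchanged with $\strR$ carried along as an inert parallel context via $\pinccntxp$, yielding a transition of $\pincE$ alone rather than of $\pincPar{\pincE}{\,\cdot\,}$, and a reduction whose conclusion is $\vlsbr[\mapPincToDi{\pincG};\strR]$. Second, in the bound external subcase I must ensure that the reduction leaves the residual $\strR'$ under the restriction $\vlfo{\atmb}{\cdot}$, so that the set argument of $\mapDiToPinc{\cdot}{\cdot}$ is updated to $\Set{\atmb,\natmb}$ exactly as the map prescribes and $\strR^\circ$ stays a genuine environment structure that $\bvtEder$ consumes. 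Once these two points are secured, the composition through $\pinctran$ and the congruence $\pincCong$ is routine.
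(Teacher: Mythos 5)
Your proposal is correct, and its skeleton is the paper's: peel off the lowermost instance of $\bvtatrdrulein$, pass to the reduction $\bvtEder$, invoke Theorems~\ref{theorem:Soundness w.r.t. trivial deductions}, \ref{theorem:Soundness w.r.t. internal communication} and~\ref{theorem:Soundness w.r.t. external communication}, and recompose with $\pinctran$ and the congruence on action sequences. Two of your choices differ genuinely from the paper, and both are improvements. First, you induct on the number of interactions with the trivial derivation as base case, while the paper inducts on the number of rules with base case $\mapPincToDi{\pincE}\approx\vlone$; consequently the paper must single out, inside every branch, the subcase where $\bvtEder$ is trivial and the inductive hypothesis is declared unavailable, whereas in your setup those subcases are literally instances of the base case. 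Second, and more substantially, you split the inductive step according to \emph{where the redex of the lowermost $(*)$ lies} (both atoms inside $\mapPincToDi{\pincE}$, or one atom at the head of $\strR$), while the paper splits according to the \emph{shape of $\strR$} ($\vlone$, $\vlsbr<\natmb;\strT>$, $\vlfo{\atmb}{\vlsbr<\natmb;\strT>}$), tacitly assuming that whenever $\strR\not\approx\vlone$ the lowermost interaction consumes the head of $\strR$. That assumption is not forced: with $\mapPincToDi{\pincE}\approx\vlsbr[\atma;\natma;\atmb]$ and $\strR\approx\natmb$, a standard derivation consuming $\strR$ may perform the $\vlsbr[\atma;\natma]$ annihilation lowermost, an internal step under a non-unit environment that falls outside the paper's three cases but squarely inside yours. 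The price is exactly the obstacle you flag: Theorem~\ref{theorem:Soundness w.r.t. internal communication} is stated for a bare conclusion $\mapPincToDi{\pincE}$, so in this case its argument must be rerun with $\strR$ present. The check does go through --- since both redex atoms lie in the $\mapPincToDi{\pincE}$ component of the par, the \rightcontext\ decomposition provided by Proposition~\ref{proposition:Rightcontext s preserve communication} localizes to $\mapPincToDi{\pincE}$, and the case analysis of Appendix~\ref{section:Proof of theorem:Soundness w.r.t. internal communication}, which already tolerates an arbitrary parallel remainder $\mapPincToDi{\pincE'''}$, then yields a transition of $\pincE$ alone with label $\pincPreT$ --- but it deserves to be isolated as a small lemma rather than left as an aside; note also that $\pinccntxp$ plays no role in it, precisely because $\strR$ must end up in the label of the transition, not as a parallel component of the process, so it cannot be ``carried along'' on the \lts\ side at all.
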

%%%%
\begin{proof}
As a basic case we assume $\mapPincToDi{\pincE}\approx\vlone$. This means that 
$ \pincE $  is $ \pincZer $. Moreover, since $\bvtDder$ consumes $\strR$, and no atom exists in $\mapPincToDi{\pincE}$ to annihilate atoms of $\strR$, we must have $\mapPincToDi{\pincF}\approx\vlone$, \ie $ \pincF \equiv \pincZer$, 
and $\strR\approx\vlone$. Since $\pincLTSJud{\pincZer}{\pincZer}{\pincPreT}$, thanks to $\pincrefl$, we are done.
\par
Instead, if $\mapPincToDi{\pincE}\not\approx\vlone$, in analogy with \cite{Brus:02:A-Purely:wd},
we proceed by induction on the number of rules in $\bvtDder$, in relation with the
two cases where $\strR\approx\vlone$, or $\strR\not\approx\vlone$.
\par
Since $\bvtDder$ is \nontrivial, and \standard,
we can focus on its lowermost occurrence $ (*) $ of $\bvtatrdrulein$. Let us assume the redex of $ (*) $ be  $\vlsbr[\atmb;\natmb]$. We can have the following cases.
\begin{itemize}
 %%1
 \item
 Let $\strR\approx\vlone$,
 and $\bvtInfer{\bvtEder}
               {\, \mapPincToDi{\pincF}
                \bvtJudGen{\BVT}{}
                \mapPincToDi{\pincG}}$ be the reduction of $\bvtDder$.
 \begin{enumerate}
  %%% 1.1
  \item
  The first case is with $\bvtEder$ \nontrivial. The inductive hypothesis holds on
$\bvtEder$, and we get
$\pincLTSEXTJud{\pincG}
               {\pincF}
               {\pincPreT=\mapDiToPinc{\vlone}{\emptyset}}
               {r}{&}$.

  %%% 1.2
  \item
  The second case is with $\bvtEder$ trivial, so we cannot apply the inductive
hypothesis on $\bvtEder$. However, Theorem~\ref{theorem:Soundness w.r.t.
trivial deductions} holds on $\bvtEder$, and we get
$\pincLTSJud{\pincG}
            {\pincF}
            {\pincPreT}$.
 \end{enumerate}
Finally, both $\bvtDder$, and $ \bvtEder $ satisfy the assumptions of Theorem~\ref{theorem:Soundness w.r.t. internal communication}, so it
implies
$\pincLTSJud{\pincE}
            {\pincG}
            {\pincPreT}$,
and the statement we are proving holds thanks to $\pinctran$.

%%2
 \item
 Let $\vlone
      \not\approx\strR
      \approx\vlstore{\vlsbr<\natmb;\strT>}
             \vlfo{\atmb}{\vlread}$, for some \environmentstructure\ $ \strT $.
 Let $\vlstore{\vlsbr[\mapPincToDi{\pincG};\vlfo{\atmb}{\vlsbr<\vlone;\strT>}]}
      \bvtInfer{\bvtEder}
               {\, \mapPincToDi{\pincF}
               \bvtJudGen{\BVT}{}
               \vlread}$ be the reduction of $\bvtDder$.
Since $\vlstore{\vlsbr<\vlone;\strT>}
       \vlfo{\atmb}{\vlread}$ is an \environmentstructure, it is \canonical, so,
necessarily
$\vlstore{\vlsbr<\vlone;\strT>}
 \vlfo{\atmb}{\vlread} \approx \vlfo{\atmb}{\strT} \approx\strT$ because $ \natmb\not\in\strFN{\strT} $.
Hence, $\vlstore{\vlsbr[\mapPincToDi{\pincG};\strT]}
       \bvtInfer{\bvtEder}
                {\, \mapPincToDi{\pincF}
                \bvtJudGen{\BVT}{}
                \vlread}$.
Moreover, since $\natmb$ disappears along $ \bvtDder $, we forcefully have 
$\atmb\in\strBN{\mapPincToDi{\pincE}}$.
%So, $\Set{\atmb, \natmb}\cap\strFN{\strT}=\emptyset$, and we can apply
%    $\vlfo{\atmb}{\strT}\approx\strT$. Moreover, it must be.

 \begin{enumerate}
  %%% 2.1
  \item
  Let $\bvtEder$ be \nontrivial. The inductive hypothesis holds on
$\bvtEder$, implying
$\pincLTSJud{\pincG}
               {\pincF}
               {\mapDiToPinc{\strT}{\emptyset}}$.
Moreover, $ \bvtDder $ satisfies the assumptions of Theorem~\ref{theorem:Soundness w.r.t. external communication} which implies
$\pincLTSJud{\pincE}
            {\pincG}
            {\pincLabT}$ also because, as we said, $\atmb\in\strBN{\mapPincToDi{\pincE}}$.
So, the statement holds because
$\mapDiToPinc{\strT}{\Set{\atmb,\natmb}}
 \pincCong
 \pincLabT;\mapDiToPinc{\strT}{\Set{\atmb,\natmb}}
 =\mapDiToPinc{\natmb}{\Set{\atmb,\natmb}}
  ;\mapDiToPinc{\strT}{\Set{\atmb,\natmb}}
 =\vlstore{\vlsbr<\natmb;\strT>}
  \mapDiToPinc{\vlfo{\atmb}{\vlread}}{\emptyset}$, and by $\pinctran$ we get
$\vlstore{\vlsbr<\natmb;\strT>}
 \pincLTSEXTJud{\pincE}
               {\pincF}
               {\mapDiToPinc{\vlfo{\atmb}{\vlread}}{\emptyset}}
               {r}{&}$.

  %%% 2.2--------
  \item
  The second case is with $\bvtEder$ \trivial, so we cannot apply the inductive
hypothesis on $\bvtEder$. However, Theorem~\ref{theorem:Soundness w.r.t.
trivial deductions} holds on $\bvtEder$, and we get
$\pincLTSJud{\pincG}
            {\pincF}
            {\pincPreT}$, which implies $\strT\approx\vlone$. Indeed, if $\strT\not\approx\vlone$, then $ \bvtDder' $ could not consume $\strT$.
The reason is that being $ \bvtEder $ a \trivialderivation, it cannot contain any instance of $\bvtatidrulein$. But a $ \bvtDder' $ not consuming $ \strT $, would mean $ \bvtDder $ not consuming $ \strR $, against assumption.
Finally, Theorem~\ref{theorem:Soundness w.r.t. external communication} holds on
$\bvtDder$, and implies
$\pincLTSJud{\pincE}
            {\pincG}
            {\pincLabT}$, because, as we said, $\atmb\in\strBN{\mapPincToDi{\pincE}}$.
So, the statement holds because
$\mapDiToPinc{\vlone}{\Set{\atmb,\natmb}}
 \pincCong
 \pincLabT;\mapDiToPinc{\vlone}{\Set{\atmb,\natmb}}
 =\mapDiToPinc{\natmb}{\Set{\atmb,\natmb}}
  ;\mapDiToPinc{\vlone}{\Set{\atmb,\natmb}}
 =\vlstore{\vlsbr<\natmb;\vlone>}
  \mapDiToPinc{\vlfo{\atmb}{\vlread}}{\emptyset}$, and by $\pinctran$ we get
$\vlstore{\vlsbr<\natmb;\vlone>}
 \pincLTSEXTJud{\pincE}
               {\pincF}
               {\mapDiToPinc{\vlfo{\atmb}{\vlread}}{\emptyset}}
               {r}{&}$.
 \end{enumerate}
We could proceed in the same way when
$\vlone \not\approx\strR
        \approx\vlstore{\vlsbr<\atmb;\strT>}
             \vlfo{\atmb}{\vlread}$.

%%3
 \item
 Let $\vlone\not
      \approx\strR
      \approx\vlstore{\vlsbr<\natmb;\strT>}\vlread$.
 Then, both
      $\vlstore{\vlsbr[\mapPincToDi{\pincG};\strT]}
      \bvtInfer{\bvtEder}
               {\, \mapPincToDi{\pincF}
               \bvtJudGen{\BVT}{}
               \vlread}$, and
$\atmb\in\strFN{\mapPincToDi{\pincE}}$ for the reasons analogous to the ones given in the previous case.

 \begin{enumerate}
  %%% 3.1
  \item
  The first case is with $\bvtEder$ \nontrivial. The inductive hypothesis holds on
$\bvtEder$, and we get
$\pincLTSJud{\pincG}
               {\pincF}
               {\mapDiToPinc{\strT}{\emptyset}}
               $.
Moreover, Theorem~\ref{theorem:Soundness w.r.t. external communication} holds on
$\bvtDder$, and implies
$\pincLTSJud{\pincE}
            {\pincG}
            {\pincb}$, because, as we said, $\atmb\in\strFN{\mapPincToDi{\pincE}}$.
So, the statement holds because
$\mapDiToPinc{\natmb}{\emptyset}
  ;\mapDiToPinc{\strT}{\emptyset}
 =\vlstore{\vlsbr<\natmb;\strT>}
  \mapDiToPinc{\vlread}{\emptyset}$, and by $\pinctran$ we get
$\vlstore{\vlsbr<\natmb;\strT>}
 \pincLTSEXTJud{\pincE}
               {\pincF}
               {\mapDiToPinc{\vlread}{\emptyset}}
               {r}{&}$.

  %%% 3.2--------
  \item
  The second case is with $\bvtEder$ trivial, so we cannot apply the inductive
hypothesis on $\bvtEder$. However, Theorem~\ref{theorem:Soundness w.r.t.
trivial deductions} holds on $\bvtEder$, and we get
$\pincLTSJud{\pincG}
            {\pincF}
            {\pincPreT}$, which implies $\strT\approx\vlone$ for reasons analogous to the
ones given in the previous case. 
Moreover, Theorem~\ref{theorem:Soundness w.r.t. external communication} holds on
$\bvtDder$, and implies
$\pincLTSJud{\pincE}
            {\pincG}
            {\pincb}$, because, as we said, $\atmb\in\strFN{\mapPincToDi{\pincE}}$.
So, the statement holds because
$\mapDiToPinc{\natmb}{\emptyset}
 \pincCong
  \mapDiToPinc{\natmb}{\emptyset}
  ;\pincLabT
 =\mapDiToPinc{\natmb}{\emptyset}
  ;\mapDiToPinc{\vlone}{\emptyset}
 =\vlstore{\vlsbr<\natmb;\vlone>}
  \mapDiToPinc{\vlread}{\emptyset}$, and by $\pinctran$ we get
$\vlstore{\vlsbr<\natmb;\vlone>}
 \pincLTSEXTJud{\pincE}
               {\pincF}
               {\mapDiToPinc{\vlread}{\emptyset}}
               {r}{&}$.
 \end{enumerate}
We could proceed in the same way when
$\vlone \not\approx\strR
        \approx\vlstore{\vlsbr<\atmb;\strT>}
        \vlread$. 
\end{itemize}
\end{proof}
%%%%%%%%%%%
\subsection{An instance of the proof of Soundness}
\label{subsection:An instance of Soundness proof}
The derivation~\eqref{equation:example-reduction-10} is \standard.
%%%%%%%%%%%%%%%%%
\par\vspace{\baselineskip}\noindent
{\small
  \fbox{
    \begin{minipage}{.974\linewidth}
      \begin{equation}
       \label{equation:example-reduction-10}
		  \begin{minipage}{.47\textwidth}
		       $$
    \vlderivation{
      \vlin{\bvtseqdrule}{}
	  {\vlsbr
	    [\vlfo{\atma}
		  {[<\atmaRed;\atmbBlu;\mapPincToDi{\pincE'}>
		  ;<\natmaRed;\mapPincToDi{\pincF'}>]}
	    ;\natmbBlu]
	  }{
      \vliq{\eqref{align:unit-seq}}{}
	  {\vlsbr
	    [\vlfo{\atma}
		  {<[\atmaRed;\natmaRed]
		  ;[<\atmbBlu;\mapPincToDi{\pincE'}>;\mapPincToDi{\pincF'}]>}
	    ;\natmbBlu]
	  }{
      \vlin{\bvtseqdrule}{}
	  {\vlsbr
	    [\vlfo{\atma}
		  {<[\atmaRed;\natmaRed]
		  ;[<\atmbBlu;\mapPincToDi{\pincE'}>
		    ;<\vlone;\mapPincToDi{\pincF'}>]>}
	    ;\natmbBlu]
	  }{
      \vliq{\eqref{align:unit-pa}
	  ,\eqref{align:alpha-intro}}{}
	  {\vlsbr
	    [\vlfo{\atma}
		  {<[\atmaRed;\natmaRed]
		  ;<[\atmbBlu;\vlone]
		    ;[\mapPincToDi{\pincE'};\mapPincToDi{\pincF'}]>>}
	    ;\natmbBlu]
	  }{
      \vlin{\bvtrdrule}{}
	  {\vlsbr
	    [\vlfo{\atma}
		  {<[\atmaRed;\natmaRed]
		  ;<\atmbBlu
		    ;[\mapPincToDi{\pincE'};\mapPincToDi{\pincF'}]>>}
	    ;\vlfo{\atma}{\natmbBlu}]
	  }{
      \vliq{\eqref{align:unit-seq}}{}
	  {\vlfo{\atma}
	    {\vlsbr
	    [<[\atmaRed;\natmaRed]
	      ;<\atmbBlu
		;[\mapPincToDi{\pincE'};\mapPincToDi{\pincF'}]>>
	    ;\natmbBlu]
	    }
	  }{
      \vlin{\bvtseqdrule}{}
	  {\vlfo{\atma}
	    {\vlsbr
	    [<[\atmaRed;\natmaRed]
	      ;<\atmbBlu
		;[\mapPincToDi{\pincE'};\mapPincToDi{\pincF'}]>>
	    ;<\vlone;\natmbBlu>]
	    }
	  }{
      \vliq{\eqref{align:unit-pa}
	  ,\eqref{align:unit-seq}}{}
	  {\vlfo{\atma}
	    {\vlsbr
	    <[\atmaRed;\natmaRed;\vlone]
	      ;[<\atmbBlu;[\mapPincToDi{\pincE'};\mapPincToDi{\pincF'}]>
	      ;\natmbBlu]>
	    }
	  }{
      \vlin{\bvtseqdrule}{}
	  {\vlfo{\atma}
	    {\vlsbr
	    <[\atmaRed;\natmaRed]
	      ;[<\atmbBlu;[\mapPincToDi{\pincE'};\mapPincToDi{\pincF'}]>
	      ;<\natmbBlu;\vlone>]>
	    }
	  }{
      \vliq{\bvtatrdrule
	  ,\eqref{align:unit-seq}
	  ,\eqref{align:unit-pa}^2}{(*)}
	  {\vlfo{\atma}
	    {\vlsbr
	    <[\atmaRed;\natmaRed]
	      ;<[\atmbBlu;\natmbBlu]
	      ;[\mapPincToDi{\pincE'};\mapPincToDi{\pincF'};\vlone]>>
	    }
	  }{
      \vliq{\bvtatrdrule
	  ,\eqref{align:unit-seq}}{}
	  {\vlfo{\atma}
	    {\vlsbr
	    <[\atmbBlu;\natmbBlu]
	      ;[\mapPincToDi{\pincE'};\mapPincToDi{\pincF'}]>
	    }
	  }{
      \vlhy{\vlfo{\atma}
 	             {\vlsbr[\mapPincToDi{\pincE'};\mapPincToDi{\pincF'}]}
      }}}}}}}}}}}}
      }
   $$

		  \end{minipage}
      \end{equation}
    \end{minipage}
  }%fbox
}%\small
\par\vspace{\baselineskip}\noindent
%%%%%%%%%%%
Hence, \eqref{equation:example-reduction-10} is an instance of the assumption 
$\vlstore{
\vlsbr[\mapPincToDi{\pincE};\strR]
}
\bvtInfer{\bvtDder}
         {\, \mapPincToDi{\pincF}
         \bvtJudGen{\BVTL}{}
         \vlread}$ in Theorem~\ref{equation:PPi-soundness-example-00} above.
The structure 
$\vlstore{\vlsbr[<\atmaRed;\atmbBlu;\mapPincToDi{\pincE'}>
                ;<\natmaRed;\mapPincToDi{\pincF'}>]} 
 \vlfo{\atma}{\vlread}$ in~\eqref{equation:example-reduction-10}
plays the role of $ \mapPincToDi{\pincE} $,
while $ \natmbBlu $ corresponds to $ \strR $. Finally 
$\vlstore{\vlsbr[\mapPincToDi{\pincE'}
                ;\mapPincToDi{\pincF'}]}
 \vlfo{\atma}{\vlread}$ plays the role of $ \mapPincToDi{\pincF}$, for some process $\pincE'$, and $\pincF'$.
By definition,
$ \pincE = 
  \pincNu{\pinca}
         {(\pincPar{(\pincSec{\pincaRed}
                             {\pincSec{\pincbBlu}{\pincE'}})}
                   {(\pincSec{\pincnaRed}{\pincF'})})}$, and
$ \pincF = 
  \pincNu{\pinca}
         {(\pincPar{\pincE'}{\pincF'})}$.
%%%%%%%%%%%%%
Once identified the lowermost instance $(*)$ of $\bvtatrdrulein$, we replace $\vloneRed$ for all those occurrences of atoms that, eventually, annihilate in $(*)$.
So, \eqref{equation:example-reduction-10} becomes the structure~\eqref{equation:example-reduction-00} which is not a derivation because it contains fake instances of rules.
%%%%%%%%%%%%%%%%%
\par\vspace{\baselineskip}\noindent
{\small
  \fbox{
    \begin{minipage}{.974\linewidth}
      \begin{equation}
       \begin{minipage}{.45\textwidth}
         \label{equation:example-reduction-00}
	     $$    \vlderivation{
      \vlin{}{}
	  {\vlsbr
	    [\vlfo{\atma}
		  {[<\vloneRed;\atmbBlu;\mapPincToDi{\pincE'}>
		  ;<\vloneRed;\mapPincToDi{\pincF'}>]}
	    ;\natmbBlu]
	  }{
      \vliq{}{}
	  {\vlsbr
	    [\vlfo{\atma}
		  {<[\vloneRed;\vloneRed]
		  ;[<\atmbBlu;\mapPincToDi{\pincE'}>;\mapPincToDi{\pincF'}]>}
	    ;\natmbBlu]
	  }{
      \vlin{}{}
	  {\vlsbr
	    [\vlfo{\atma}
		  {<[\vloneRed;\vloneRed]
		  ;[<\atmbBlu;\mapPincToDi{\pincE'}>
		    ;<\vlone;\mapPincToDi{\pincF'}>]>}
	    ;\natmbBlu]
	  }{
      \vliq{}{}
	  {\vlsbr
	    [\vlfo{\atma}
		  {<[\vloneRed;\vloneRed]
		  ;<[\atmbBlu;\vlone]
		    ;[\mapPincToDi{\pincE'};\mapPincToDi{\pincF'}]>>}
	    ;\natmbBlu]
	  }{
      \vlin{}{}
	  {\vlsbr
	    [\vlfo{\atma}
		  {<[\vloneRed;\vloneRed]
		  ;<\atmbBlu
		    ;[\mapPincToDi{\pincE'};\mapPincToDi{\pincF'}]>>}
	    ;\vlfo{\atma}{\natmbBlu}]
	  }{
      \vliq{}{}
	  {\vlfo{\atma}
	    {\vlsbr
	    [<[\vloneRed;\vloneRed]
	      ;<\atmbBlu
		;[\mapPincToDi{\pincE'};\mapPincToDi{\pincF'}]>>
	    ;\natmbBlu]
	    }
	  }{
      \vlin{}{}
	  {\vlfo{\atma}
	    {\vlsbr
	    [<[\vloneRed;\vloneRed]
	      ;<\atmbBlu
		;[\mapPincToDi{\pincE'};\mapPincToDi{\pincF'}]>>
	    ;<\vlone;\natmbBlu>]
	    }
	  }{
      \vliq{}{}
	  {\vlfo{\atma}
	    {\vlsbr
	    <[\vloneRed;\vloneRed;\vlone]
	      ;[<\atmbBlu;[\mapPincToDi{\pincE'};\mapPincToDi{\pincF'}]>
	      ;\natmbBlu]>
	    }
	  }{
      \vlin{}{}
	  {\vlfo{\atma}
	    {\vlsbr
	    <[\vloneRed;\vloneRed]
	      ;[<\atmbBlu;[\mapPincToDi{\pincE'};\mapPincToDi{\pincF'}]>
	      ;<\natmbBlu;\vlone>]>
	    }
	  }{
      \vliq{}{}
	  {\vlfo{\atma}
	    {\vlsbr
	    <[\vloneRed;\vloneRed]
	      ;<[\atmbBlu;\natmbBlu]
	      ;[\mapPincToDi{\pincE'};\mapPincToDi{\pincF'};\vlone]>>
	    }
	  }{
      \vliq{}{}
	  {\vlfo{\atma}
	    {\vlsbr
	    <[\atmbBlu;\natmbBlu]
	      ;[\mapPincToDi{\pincE'};\mapPincToDi{\pincF'}]>
	    }
	  }{
      \vlhy{\vlfo{\atma}
	    {\vlsbr[\mapPincToDi{\pincE'};\mapPincToDi{\pincF'}]
	    }
      }}}}}}}}}}}}
      }$$
       \end{minipage}
      \end{equation}
    \end{minipage}
  }%fbox
}%\small
\par\vspace{\baselineskip}\noindent
%%%%%%%%%%%
Removing all the fake rules, we get to $ \bvtEder $ in~\eqref{equation:example-reduction-20}:
%%%%%%%%%%%%%%%%%
\par\vspace{\baselineskip}\noindent
{\small
  \fbox{
    \begin{minipage}{.974\linewidth}
      \begin{equation}
       \label{equation:example-reduction-20}
		%\begin{gathered}
		  \begin{minipage}{.47\textwidth}
	       $$   \vlderivation{
      \vliq{\eqref{align:unit-seq}}{}
	  {\vlsbr
	    [\vlfo{\atma}
		  {[<\atmbBlu;\mapPincToDi{\pincE'}>;\mapPincToDi{\pincF'}]}
	    ;\natmbBlu]
	  }{
      \vlin{\bvtseqdrule}{}
	  {\vlsbr
	    [\vlfo{\atma}
		  {[<\atmbBlu;\mapPincToDi{\pincE'}>
		    ;<\vlone;\mapPincToDi{\pincF'}>]}
	    ;\natmbBlu]
	  }{
      \vliq{\eqref{align:unit-pa}
           ,\eqref{align:alpha-intro}}{}
	  {\vlsbr
	    [\vlfo{\atma}
		  {<[\atmbBlu;\vlone]
		    ;[\mapPincToDi{\pincE'};\mapPincToDi{\pincF'}]>}
	    ;\natmbBlu]
	  }{
      \vlin{\bvtrdrule}{}
	  {\vlsbr
	    [\vlfo{\atma}
		  {<\atmbBlu
		    ;[\mapPincToDi{\pincE'};\mapPincToDi{\pincF'}]>}
	    ;\vlfo{\atma}{\natmbBlu}]
	  }{
      \vliq{\eqref{align:unit-seq}}{}
	  {\vlfo{\atma}
	    {\vlsbr
	    [<\atmbBlu
		;[\mapPincToDi{\pincE'};\mapPincToDi{\pincF'}]>
	    ;\natmbBlu]
	    }
	  }{
      \vlin{\bvtseqdrule}{}
	  {\vlfo{\atma}
	    {\vlsbr
	    [<\atmbBlu
		;[\mapPincToDi{\pincE'};\mapPincToDi{\pincF'}]>
	    ;<\vlone;\natmbBlu>]
	    }
	  }{
      \vliq{\eqref{align:unit-seq}^2}{}
	  {\vlfo{\atma}
	    {\vlsbr
	    <\vlone
	      ;[<\atmbBlu;[\mapPincToDi{\pincE'};\mapPincToDi{\pincF'}]>
	      ;\natmbBlu]>
	    }
	  }{
      \vlin{\bvtseqdrule}{}
	  {\vlfo{\atma}
	    {\vlsbr
	      [<\atmbBlu;[\mapPincToDi{\pincE'};\mapPincToDi{\pincF'}]>
	      ;<\natmbBlu;\vlone>]
	    }
	  }{
      \vliq{\eqref{align:unit-pa}}{}
	  {\vlfo{\atma}
	    {\vlsbr
             <[\atmbBlu;\natmbBlu]
	      ;[\mapPincToDi{\pincE'};\mapPincToDi{\pincF'};\vlone]>
	    }
	  }{
      \vliq{\bvtatrdrule
           ,\eqref{align:unit-seq}}{}
	  {\vlfo{\atma}
	    {\vlsbr
	    <[\atmbBlu;\natmbBlu]
	      ;[\mapPincToDi{\pincE'};\mapPincToDi{\pincF'}]>
	    }
	  }{
      \vlhy{\vlfo{\atma}
	    {\vlsbr[\mapPincToDi{\pincE'};\mapPincToDi{\pincF'}]
	    }
      }}}}}}}}}}}
      }$$
		  \end{minipage}
		%\end{gathered}
      \end{equation}
    \end{minipage}
  }%fbox
}%\small
\par\vspace{\baselineskip}\noindent
%%%%%%%%%%%
The lowermost instance $(*)$ of $ \bvtatrdrulein $ in~\eqref{equation:example-reduction-10} has disappeared from~\eqref{equation:example-reduction-20}.
The inductive argument on~\eqref{equation:example-reduction-20} implies
$\pincLTSJud{\pincNu{\pinca}
                    {(\pincPar{(\pincSec{\pincbBlu}
                                        {\pincE'})}
                              {\pincF'})}
            }
            {\pincNu{\pinca}
                    {(\pincPar{\pincE'}{\pincF'})}
            }
            {\pincbBlu}$.
Since we can prove:
%%%%%%%%%%%%%
\par\vspace{\baselineskip}\noindent
{\scriptsize
  \fbox{
    \begin{minipage}{.974\linewidth}
    \vspace{-.1cm}
      \begin{equation}
      \label{equation:PPi-example-LTS-02-reconstruction}
      \vlderivation{
           \vlin{\pincpe}{\pincLabT\not\equiv\pinca}
		        {\pincLTSJud{\pincNu{\pinca}
		                            {(\pincPar{(\pincSec{\pincaRed}
		                                                {\pincSec{\pincbBlu}
						                                          {\pincE'}})}
					                          {(\pincSec{\pincnaRed}{\pincF'})})
					                }
					         \pincCong
					         \pincPar{\pincNu{\pinca}
		                                     {(\pincPar{(\pincSec{\pincaRed}
		                                                         {\pincSec{\pincbBlu}
						                                                  {\pincE'}})}
					                                   {(\pincSec{\pincnaRed}
					                                             {\pincF'})})
					                         }
					                 }
					                 {\pincNu{\pinca}
					                         {\pincZer}}
							}
			                {\pincPar{\pincNu{\pinca}{(\pincPar{(\pincSec{\pincbBlu}
							                                            {\pincE'})}
							                          {\pincF'})}
			                         }
			                         {\pincNu{\pinca}
			                         		 {\pincZer}}
			                 \pincCong
			                 \pincNu{\pinca}{(\pincPar{(\pincSec{\pincbBlu}
							                                    {\pincE'})}
							                          {\pincF'})}
							}
 			                {\pincPreT}
		        }{%%%\vliin
	    \vliin{\pinccom}{}
		      {\pincLTSJud{
				           \pincPar{\pincPar{(\pincSec{\pincaRed}
		                                              {\pincSec{\pincbBlu}
		                                                       {\pincE'}})}
				                            {(\pincSec{\pincnaRed}
				                                      {\pincF'})}
				                   }
				           		   {\pincZer}
				           \pincCong
		                   \pincPar{(\pincSec{\pincaRed}
		                                     {\pincSec{\pincbBlu}
		                                              {\pincE'}})}
				                    {(\pincSec{\pincnaRed}{\pincF'})}
				          }
			              {
			               \pincPar{(\pincSec{\pincbBlu}
			               			         {\pincE'})}
			               			                      {\pincF'}
			               \pincCong
			               \pincPar{\pincPar{(\pincSec{\pincbBlu}
 			               					          {\pincE'})}
			               				    {\pincF'}
			                       }
			                       {\pincZer}
						  }
						  {\pincPreT}
		     }
		     {%%%1.1-------
		      \vlin{\pincact}{}
		           {\pincLTSJud{\pincSec{\pincaRed}
		                                {\pincSec{\pincbBlu}
		                                         {\pincE'}}}
 			                   {\pincSec{\pincbBlu}
 			                            {\pincE'}}
		                       {\pinca}
		           }{\vlhy{}}
		     }%%%1.1-------
		     {%%%1.2-------
		      \vlin{\pincact}{}
		           {\pincLTSJud{\pincSec{\pincnaRed}
		                                {\pincF'}}
		                       {\pincF'}
		                       {\pincna}
		           }{\vlhy{}}
		     }
		 }%%%\vliin
}
      \end{equation}
    \end{minipage}
  }%fbox
}%\small
\vspace{\baselineskip}\par\noindent
by transitivity, we conclude
$\pincLTSJud{\pincNu{\pinca}
                    {(\pincPar{(\pincSec{\pincaRed}
                                        {\pincSec{\pincbBlu}
                                                 {\pincE'}})
                              }
                              {(\pincSec{\pincnaRed}
                                        {\pincF'})}
                     )}
            }
            {\pincNu{\pinca}
                    {(\pincPar{\pincE'}{\pincF'})}
            }
            {\pincbBlu}$.
%%%%%%%%%%%%
\section{Final discussion, and future work}
\label{section:Final discussion, and future work}
This work shows that $\BVT$ \cite{Roversi:2010-LLCexDI,Roversi:TLCA11,Roversi:unpub2012-I}, which we can consider as a minimal extension of $\BV$ \cite{Gugl:06:A-System:kl}, is expressive enough to model concurrent and communicating computations, as expressed by the language  $\CCSR$, whose logic-based restriction con hide actions to the environment in an unusual flexible way, as compared to the restriction of Milner $ \CCS $.
The reason why, in various points, we have kept relating $\CCSR$ with a fragment of Milner $\CCS$ is twofold.
First, we start from the programme of~\cite{Brus:02:A-Purely:wd}, that shows the connections between $\BV$ and the smallest meaningful fragment of Milner $\CCS$.
Second, it is evident we can define $\BVTMin$ as follows. We take $\BVT\setminus\Set{\bvtrdrulein}$ and
we forbid clauses \eqref{align:alpha-intro}, and~\eqref{align:alpha-varsub} on its structures. So defined, $\BVTMin$ would be very close to the fragment of Milner $ \CCS $, which we have called $\CCSRM$, and which only contains restriction, and both sequential, and parallel composition. The reason is that $\BVTMin$ could simulate the two standard rules for restriction:
{\small
\[
\vlinf{}
      {\pincLabL\not\in\Set{\pinca,\pincna}}
      {\pincLTSJud{\pincNu{\pinca}{\pincE}}
		          {\pincNu{\pinca}{\pincE'}}
  		          {\pincLabL}}
      {\pincLTSJud{\pincE}
                  {\pincE'}
                  {\pincLabL}}
\qquad\qquad
\vlinf{}
      {\pincLabL\in\Set{\pinca,\pincna}}
      {\pincLTSJud{\pincNu{\pinca}{\pincE}}
		          {\pincNu{\pinca}{\pincE'}}
  		          {\pincLabT}}
      {\pincLTSJud{\pincE}
                  {\pincE'}
                  {\pincLabL}}
\]
} %\small
but not the rules $\pincpi$, and $\pincpe$ in~\eqref{equation:PPi-LTS-from-BVT}. However, in fact, \OpNameRen looks much closer to the hiding operator $\pincNuPi{\pinca}{\pincE}$ of $\pi$-calculus \cite{SangiorgiWalker01}.
Clause~\eqref{align:alpha-symm} ``is''
$\pincNuPi{\pinca}{\pincNuPi{\pincb}{\pincE}}\approx\pincNuPi{\pincb}{\pincNuPi{\pinca}{\pincE}}$.
Clause~\eqref{align:alpha-intro} generalizes
$\pincNuPi{\pinca}{\pincZer}\approx\pincZer$. The instance:
{\small
\vlstore{
\vlderivation{
 \vliq{\eqref{align:alpha-intro},\bvtrdrule}{}
      {\vlsbr[\vlfo{\pinca}{\pincE};\pincF]}{
 \vlhy{\vlfo{\pinca}{\vlsbr[\pincE;\pincF]}}}}
}
\begin{equation}
 \label{equation:monodirectional-scope-extrusion}
 \vlread
\end{equation}
} %\small
weakly corresponds to scope extrusion $\pincNuPi{\pinca}{(\pincPar{\pincE}{\pincF})}\approx\pincPar{\pincNuPi{\pinca}{\pincE}}{\pincF}$ which holds, in both directions, whenever $\pinca$ is not free in $\pincF$. We postpone the study of semantics and of the relation between $\CCSR$, and the corresponding fragment of $\pi$-calculus, to future work.
\par
Further future work we see as interesting, is about the generalization of Soundness. We believe that a version of Soundness where no restriction to \simpleprocess es holds. The reason is twofold.
First, thanks to the Splitting theorem of $\BVT$ \cite{Roversi:2010-LLCexDI,Roversi:TLCA11,Roversi:unpub2012-I} it is possible to prove that every \emph{proof} of $\BVT$ can be transformed in a \standard\ proof of $\BVT$. So, no need to restrict to \OpNameTen-free derivations of $ \BVT $ exists to have \standard\ proofs. Second, the reduction process looks working on \standard\ proofs as well, and no obstacle seems to exist to the application of inductive arguments analogous to those ones we have used to prove our current Soundness.
\par
We conclude with a remark on the ``missing'' Completeness. Our readers may have noticed the lack of any reference to a Completeness of $\BVT$, \wrt $\CCSR$. Completeness would say that $\BVT$ has enough derivations to represent any computation in the \lts\ of $ \CCSR $. Formally, it would amount to:
%%%%%%%%%%%%%%%%%
\begin{theorem}[\textbf{\textit{Completeness of $\BVT$}}]
\label{theorem:Completeness of BVT}
For every \processstructure\ $\pincE$, and $ \pincF $, if 
$\pincLTSJud{\mapPincToDi{\pincE}}
            {\mapPincToDi{\pincF}}
            {\mapDiToPinc{\strR}{\emptyset}}$, then
$\vlstore{
 \vlsbr[\pincE;\strR]
}\bvtInfer{\bvtDder}
          { \pincF
            \bvtJudGen{\BVT}{}
            \vlread
 	  }$.
\end{theorem}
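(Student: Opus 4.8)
My plan is to recast the statement, which concerns \emph{derivability} in the \cutfree system $\BVT$, into a question of \emph{provability} in the symmetric system $\SBVT$, where the cut-like up-rules $\bvtatiurulein,\bvtsequrulein,\bvtrurulein$ are available. Writing $\pincP,\pincQ$ for the source and target of a transition, the core lemma I would establish is: for all processes $\pincP,\pincQ$ and every \environmentstructure\ $\strR$, if $\pincLTSJud{\pincP}{\pincQ}{\mapDiToPinc{\strR}{\emptyset}}$ then $\bvtJudGen{\SBVT}{}\vlsbr[\vlne{\mapPincToDi{\pincQ}};\mapPincToDi{\pincP};\strR]$. Crucially this lemma carries \emph{no} simplicity hypothesis on $\pincQ$; simplicity is used only in the conversion step below, never inside the induction, which is what lets the $\pinctran$ case below have access to non-simple intermediate processes.

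The conversion from the core lemma to the theorem uses three earlier results. First, admissibility of the up-fragment (Corollary~\ref{theorem:Admissibility of the up fragment}) turns the $\SBVT$ proof into a \cutfree proof $\bvtJudGen{\BVT}{}\vlsbr[\vlne{\mapPincToDi{\pincF}};\mapPincToDi{\pincE};\strR]$. Second, since $\pincF$ is \simple, $\mapPincToDi{\pincF}$ is a \simplestructure, hence \coinvertible\ (Fact~\ref{fact:Basic properties of simplestructure}); equivalently $\vlne{\mapPincToDi{\pincF}}$ is \invertible\ (Proposition~\ref{proposition:Invertible structures are invertible}). Third, applying the definition of \invertible\ structure with $\strT:=\vlne{\mapPincToDi{\pincF}}$ and $\strP:=\vlsbr[\mapPincToDi{\pincE};\strR]$ turns that \cutfree proof into exactly the required derivation $\vlderivation{\vlde{\bvtDder}{\BVT}{\vlsbr[\mapPincToDi{\pincE};\strR]}{\vlhy{\mapPincToDi{\pincF}}}}$.

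The core lemma I would prove by induction on the derivation of $\pincLTSJud{\pincP}{\pincQ}{\pincalpha}$, reusing each rule's own metavariables. For $\pincrefl$ take $\strR\approx\vlone$ and the identity axiom. For $\pincact$, i.e. $\pincLTSJud{\pincSec{\pincLabL}{\pincE}}{\pincE}{\pincLabL}$, take $\strR:=\vlne{\pincLabL}$ and read the proof off the external-interaction block~\eqref{equation:PPi-external-interaction-example-ll}. For $\pinccntxp$ I place the inductive proof in the par-context contributed by $\mapPincToDi{\pincG}$, attaching an identity on $\vlsbr[\vlne{\mapPincToDi{\pincG}};\mapPincToDi{\pincG}]$ by $\bvtswirulein$. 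The restriction rules $\pincpe,\pincpi$ are handled by wrapping the inductive proof under $\OpNameRen$ via $\bvtrdrulein$/$\bvtrurulein$: when $\pincalpha\not\in\Set{\pincb,\pincnb}$ ($\pincpe$) the new binder does not touch $\strR$, whereas when $\pincalpha\in\Set{\pincb,\pincnb}$ ($\pincpi$) the observed atom is captured by $\vlfo{\pincb}{\,\cdot\,}$, which is precisely how $\mapDiToPinc{\vlfo{\pincb}{\strR}}{X}=\mapDiToPinc{\strR}{X\cup\Set{\pincb,\pincnb}}$ silences it into $\pincLabT$. The internal-communication rule $\pinccom$ is where working in $\SBVT$ pays off: from the proofs of $\vlsbr[\vlne{\mapPincToDi{\pincE'}};\mapPincToDi{\pincE};\vlne{\pincLabL}]$ and $\vlsbr[\vlne{\mapPincToDi{\pincF'}};\mapPincToDi{\pincF};\pincLabL]$ I build the cotensor $\vlsbr(\vlne{\mapPincToDi{\pincE'}};\vlne{\mapPincToDi{\pincF'}})\approx\vlne{\mapPincToDi{\pincPar{\pincE'}{\pincF'}}}$ by $\bvtswirulein$ and annihilate the complementary pair $\pincLabL,\vlne{\pincLabL}$ by the atomic cut $\bvtatiurulein$, obtaining the claim with $\strR\approx\vlone$ and label $\pincLabT$. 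No \cutfree argument could perform this merge directly, which is exactly why the detour through $\SBVT$ is needed.

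The main obstacle is the transitivity rule $\pinctran$. There $\pincLTSJud{\pincF}{\pincF'}{\pincalpha}$ and $\pincLTSJud{\pincF'}{\pincG'}{\pincbeta}$ are merged into a transition with the sequential label $\pincalpha;\pincbeta$, so the two environment structures must be joined by the \emph{non-commutative} operator $\OpNameSeq$, as $\vlsbr<\strR';\strR''>$, to encode that order. A plain multiplicative cut on the shared $\mapPincToDi{\pincF'}$ instead composes them in \emph{parallel}, as $\vlsbr[\strR';\strR'']$, and this cannot be repaired afterwards, since $\vlsbr[\strR';\strR'']\to\vlsbr<\strR';\strR''>$ is not derivable while only the converse is. Thus $\pinctran$ demands a genuine sequential-composition construction in which the two sub-proofs are assembled as one $\OpNameSeq$ of two phases and only then split into their process and environment parts through $\bvtseqdrulein$/$\bvtsequrulein$, mirroring the way $\OpNameSeq$ models sequential composition of Milner $\CCS$ in~\cite{Brus:02:A-Purely:wd}. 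I expect the cleanest route is to first normalise the \lts\ derivation so that $\pinctran$ only chains single-action steps, then prove a dedicated ``sequential cut'' lemma composing one single-action block with the proof of the remaining trace while weaving their environments under a single $\OpNameSeq$, checking throughout that the process congruence $\pincCong$ and the label congruence are reflected by $\approx$ along $\mapPincToDi{\cdot}$.
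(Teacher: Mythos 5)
Your plan does not match the paper's, but that by itself is not the issue: the paper offers no proof at all of this statement (Theorem~\ref{theorem:Completeness of BVT} is explicitly left as an exercise, with the hint that it ``amounts to show that every rule of $\CCSR$ is derivable in $\BVT$'', \ie a direct induction building the derivation $\mapPincToDi{\pincF}\bvtJudGen{\BVT}{}\vlsbr[\mapPincToDi{\pincE};\strR]$ rule by rule, with no detour through $\SBVT$, \cutelimination, or invertibility). Note first that you prove a weaker statement than the one asserted, since your conversion step needs $\pincF$ \simple\ while the theorem does not assume it. The real problem is that the two crucial cases of your core lemma rest on inferences that are not admissible. Your $\pinccom$ step uses nothing but the two induction-hypothesis proofs, so it would have to be valid for arbitrary structures occupying those positions; it is not: with $\vlne{\mapPincToDi{\pincE'}}=\pincLabL$, $\mapPincToDi{\pincE}=\vlone$, $\vlne{\mapPincToDi{\pincF'}}=\vlne{\pincLabL}$, $\mapPincToDi{\pincF}=\vlone$ (take $\pincE'=\pincSec{\vlne{\pincLabL}}{\pincZer}$, $\pincF'=\pincSec{\pincLabL}{\pincZer}$, $\pincE=\pincF=\pincZer$), both premises become the provable $\vlsbr[\pincLabL;\vlne{\pincLabL}]$, while the conclusion $\vlsbr(\pincLabL;\vlne{\pincLabL})$ is unprovable --- by Corollary~\ref{theorem:Admissibility of the up fragment} it would have to be provable in $\BVT$, and a \OpNameCop\ of two atoms is not. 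Structurally, from the single \OpNameCop\ obtained by placing the two premise proofs side by side, $\bvtswirulein$ can maintain only one \OpNameCop, whereas you need two: the cotensor and the redex of $\bvtatiurulein$. The $\pinctran$ case has the same defect: the ``sequential cut'' you postpone is likewise non-admissible, since $\bvtJudGen{\BVT}{}\,\vlsbr[\vlne{\atmc};\atmc]$ and $\bvtJudGen{\BVT}{}\,\vlsbr[\vlne{\atmd};\atmc;<\vlne{\atmc};\atmd>]$ both hold while $\vlsbr[\vlne{\atmd};<\atmc;\vlne{\atmc};\atmd>]$ is unprovable: there $\atmc$ and $\vlne{\atmc}$ stand in \OpNameSeq\ relation, and no rule of $\BVT$ read bottom-up ever converts a \OpNameSeq\ relation into the \OpNamePar\ relation that $\bvtatidrulein$ requires. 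In both cases the missing ingredient is the information of \emph{how} the transitions were derived; bare provability of the premises, which is all your induction hypothesis retains, is demonstrably insufficient.

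The case you treat most briskly, $\pincpi$, hides a gap that no reorganisation can fix: binding an atom does not delete it, because a bound occurrence can only disappear by annihilating a complementary occurrence under the same binder, whereas $\pincpi$ silences a possibly \emph{unmatched} action. Concretely, let $\pincE=\pincPar{\pincNu{\pincb}{(\pincSec{\pincb}{\pincZer})}}{\pincNu{\pincb}{\pincZer}}$. By $\pincact$, $\pinccntxp$, and $\pincpi$ we obtain $\pincLTSJudShort{\pincE}{\pincZer}{\pincPreT}$, and $\pincZer$ is \simple. Taking $\strR\approx\vlone$ the hypothesis of Theorem~\ref{theorem:Completeness of BVT} holds, but the conclusion demands a proof of $\vlsbr[\vlfo{\atmb}{\atmb};\vlone]\approx\vlfo{\atmb}{\atmb}$, which contains a single atom occurrence and hence cannot be proved; nor does any other \environmentstructure\ $\strR$ with $\mapDiToPinc{\strR}{\emptyset}\pincCong\pincPreT$ help, because the atoms of an \environmentstructure\ are pairwise in \OpNameSeq\ relation and can therefore only annihilate against atoms of $\mapPincToDi{\pincE}$, of which there is exactly one. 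So your core lemma is false at $\pincpi$, and indeed Theorem~\ref{theorem:Completeness of BVT} itself is false as stated, even with your added simplicity hypothesis: the $\pincpi$/$\pincpe$ relaxation makes the \lts\ of $\CCSR$ strictly more permissive than $\BVT$. No proof strategy --- yours or the one the paper sketches --- can close this case until either $\pincpi$ is restricted to matched communications or the statement is weakened accordingly.
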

Ideally, we leave the proof of Theorem~\eqref{theorem:Completeness of BVT} as an exercise. The system $\BVT$ is so flexible that, proving it complete, amounts to show that every rule of  $\CCSR$ is derivable in $\BVT$.
%%%%%%%%%%%%%%%%%%%%%%%%%%%%%%%%%%%%%%%%%%%%%%%%%%%%%%%%%%%%%%%%%%%%%%%%%%
\bibliographystyle{plain}
\bibliography{Roversi-SBVQ-computational-interpretation}
%%%%%%%%%%%%%%%%%%%%%%%%%%%%%%%%%%%%%%%%%%%%%%%%%%%%%%%%%%%%%%%%%%%%%%%%%%
\appendix
\section{Proof of \textit{commuting conversions in} $\Set{\bvtatrdrulein,\bvtatidrulein,\bvtseqdrulein,\bvtrdrulein}$\\
(Lemma~\ref{lemma:bvtatidrulein commuting conversions},
page~\pageref{lemma:bvtatidrulein commuting conversions})}
\label{section:Proof of lemma:bvtatidrulein commuting conversions}
The proof is, first, by cases on $\bvtrhorulein$, and, then, by cases on $\vlstore{\vlholer{\strS\vlsbr[\atma;\natma]}}\vlread$.
Fixed $\vlstore{\vlholer{\strS\vlsbr[\atma;\natma]}}\vlread$, the proof is by cases on $\strR$ which must contain a redex of
$\bvtatidrulein, \bvtseqdrulein$, or $\bvtrdrulein$, that, after
$\bvtatidrulein^{\bullet}$, leads to the chosen $\vlstore{\vlholer{\strS\vlsbr[\atma;\natma]}}\vlread$.
\par
We start with $\bvtrhorulein \equiv \bvtatidrulein$.
\begin{itemize}
% Case  U ~ { }
\item Let $\vlstore{\vlholer{\strS\vlsbr[\atma;\natma]}}
           \vlread
           \approx\vlsbr[\atma;\natma]$.
So,
$\vlstore{\vlsbr<\natma;[\atmb;\natmb]>}
 \vlsbr[\atma;\vlfo{\atmb}{\vlread}]$, and
$\vlstore{\vlsbr<\natma;[\atmb;\natmb]>}
 \vlsbr[\atma;\vlread]$
are the most relevant forms of $\strR$.
Others can be
$\vlstore{[\atmb;\natmb]}
 \vlsbr[\atma;<\natma;\vlfo{\atmb}{\vlread}>]$, and
$\vlstore{[\atmb;\natmb]}
 \vlsbr[[\atma;\natma];\vlfo{\atmb}{\vlread}]$, and
$\vlsbr<[\atma;\natma];[\atmb;\natmb]>$, and
$\vlstore{[\atmb;\natmb]}
 \vlsbr<[\atma;\natma];\vlfo{\atmb}{\vlread}>$.
\par
We fully develop only the first case
with $\strR\approx
     \vlstore{\vlsbr<\natma;[\atmb;\natmb]>}
     \vlsbr[\atma;\vlfo{\atmb}{\vlread}]$.
In it the derivation\\
  $\vlderivation{
   \vliq{\bvtatidrule
	,\eqref{align:unit-seq}
	,\eqref{align:alpha-intro}}{}
	{\vlsbr[\atma;\vlfo{\atmb}{<\natma;[\atmb;\natmb]>}]}{
   \vlin{\bvtatrdrule}{}
	{\vlsbr[\atma;\natma]}{
   \vlhy{\vlone}}}}$ transforms to
  $
%    \vlupsmash{
   \vlderivation{
   \vliq{\eqref{align:alpha-intro}
        ,\bvtrdrule}{}
	{\vlsbr[\atma;\vlfo{\atmb}{<\natma;[\atmb;\natmb]>}]}{
   \vliq{\eqref{align:unit-pa}
        ,\bvtseqdrule
        ,\eqref{align:unit-pa}}{}
	{\vlfo{\atmb}{\vlsbr[\atma;<\natma;[\atmb;\natmb]>]}}{
   \vlin{\bvtatrdrule}{}
	{\vlfo{\atmb}{\vlsbr<[\atma;\natma];[\atmb;\natmb]>}}{
   \vliq{\bvtatidrule
        ,\eqref{align:alpha-intro}}{}
	{\vlfo{\atmb}{\vlsbr[\atmb;\natmb]}}{
   \vlhy{\vlone}}}}}}
%    }%\vlupsmash
   $.
\par
If, instead,
$\vlstore{\vlholer{\strS\vlsbr[\atma;\natma]}}\vlread
 \approx
 \vlstore{\vlsbr<\natma;[\atmb;\natmb]>}
 \vlsbr[\atma;\vlread]$, then no instances of $\bvtrdrulein$ are required, but only
one of $\bvtseqdrulein$.

% Case  U ~ \vlsbr[\strS\vlhole;\strU']
\item Let $\strS\vlhole\approx\vlsbr[\vlholer{\strS'\vlhole};\strU']$. 
  \begin{itemize}
  %% 1st case
  \item If $\strR\approx
            \vlsbr[\vlholer{\strS'[\atma;\natma]}
                   ;\strS''[\atmb;\natmb]]$, with
           $\strU' \approx \vlsbr{\strS''[\atmb;\natmb]}$, then
  $\vlupsmash{
    \vlderivation{
    \vliq{\bvtatidrule}{}
	 {\vlsbr[\vlholer{\strS'[\atma;\natma]}
	        ;\strS''[\atmb;\natmb]]}{
    \vliq{\bvtatrdrule}{}
	 {\vlsbr[\vlholer{\strS'[\atma;\natma]}
	        ;\strU'']}{
    \vlhy{\vlsbr[\strR';\strU'']}}}}}$ transforms to
  $\vlupsmash{
   \vlderivation{
   \vliq{\bvtatrdrule}{}
	{\vlsbr[\vlholer{\strS'[\atma;\natma]}
	       ;\strS''[\atmb;\natmb]]}{
   \vliq{\bvtatidrule}{}
	{\vlsbr[\strR'
	       ;\strS''[\atmb;\natmb]]
	}{
   \vlhy{\vlsbr[\strR';\strU'']}}}}}$, for some $\strR'$, and $\strU''$. ===
  % 2nd case
  \item If $\strR\approx
            \vlsbr[\vlholer{\strS'[\atma;\natma]}
                   ;\strU']
            \equiv
            \vlsbr[\strS''[\atmb;\natmb]
                   ;\strU']$,
  then
  $\vlupsmash{
    \vlderivation{
    \vliq{\bvtatidrule}{}
	 {\vlsbr[\strS''[\atmb;\natmb];\strU']}{
    \vliq{\bvtatrdrule}{}
	 {\vlsbr[\vlholer{\strS'''[\atma;\natma]};\strU']}{
    \vlhy{\vlsbr[\strR';\strU']}}}}}$, 
for some $\vlholer{\strS'''\vlhole}$, which is $\strS''\vlsbr[\atmb;\natmb]$ with
$\vlsbr[\atmb;\natmb]$ replaced by $\vlone$, and $\strR'$,
transforms to
  $
%   \vlupsmash{
   \vlderivation{
   \vliq{\bvtatrdrule}{}
	{\vlsbr[\vlholer{\strS'[\atma;\natma]}
	       ;\strU']}{
   \vliq{\bvtatidrule}{}
	{\vlsbr[\strS''''\vlsbr[\atmb;\natmb]
	       ;\strU']
	}{
   \vlhy{\vlsbr[\strR';\strU']}}}}
%   }
   $
for some $\strS''''\vlhole$ which is $\strS'\vlsbr[\atma;\natma]$, with
$\vlsbr[\atma;\natma]$ replaced by $\vlone$.
  \end{itemize}

% Case  U ~ \vlfo{\atma}{\strS'\vlhole}
\item Let $\strS\vlhole\approx\vlfo{\atmc}{\vlholer{\strS'\vlhole}}$ where $\atmc$ may also
coincide to $\atma$, or $\atmb$. This case is analogous to the last point of the previous case, because
$\vlstore{\vlsbr[\atma;\natma]}
 \vlholer{\strS'\vlread} \equiv
 \strS''\vlsbr[\atmb;\natmb]$, for some $\strS''\vlhole$.

% Case  U ~ \vlsbr<\strS'\vlhole;\strU'>
\item Let $\strS\vlhole\approx\vlsbr<\vlholer{\strS'\vlhole};\strU'>$.

\begin{itemize}
 \item
If
$\strR\approx
      \vlstore{\vlholer{\strS'[\atma;\natma]}}
      \vlsbr<\vlread
             ;\strS''[\atmb;\natmb]>$, with
           $\strU' \approx \vlsbr{\strS''[\atmb;\natmb]}$, then
  $\vlupsmash{
    \vlderivation{
    \vliq{\bvtatidrule}{}
 	 {\vlsbr<\vlholer{\strS'[\atma;\natma]}
	        ;\strS''[\atmb;\natmb]>}{
    \vliq{\bvtatrdrule}{}
	 {\vlsbr<\vlholer{\strS'[\atma;\natma]}
	       ;\strU''>}{
    \vlhy{\vlsbr<\strR';\strU''>}}}}}$ transforms to
  $\vlupsmash{
   \vlderivation{
   \vliq{\bvtatrdrule}{}
	{\vlsbr<\vlholer{\strS'[\atma;\natma]}
	       ;\strS''[\atmb;\natmb]>}{
   \vliq{\bvtatidrule}{}
	{\vlsbr<\strR'
	       ;\strS''[\atmb;\natmb]>
	}{
   \vlhy{\vlsbr<\strR';\strU''>}}}}}$, for some $\strR'$, and $\strU''$.

  \item If $\vlstore{
	    \vlsbr<\vlholer{\strS'\vlsbr[\atma;\natma]};\strU'>
            }
            \strR\approx\vlread
            \equiv
            \vlsbr<\strS''[\atmb;\natmb];\strU'>$,
  then
  $\vlupsmash{
   \vlderivation{
   \vliq{\bvtatidrule}{}
	{\vlsbr<\strS''[\atmb;\natmb];\strU'>}{
   \vliq{\bvtatrdrule}{}
	{\vlsbr<\vlholer{\strS'''[\atma;\natma]};\strU'>}{
   \vlhy{\vlsbr<\strR';\strU'>}}}}}$, 
for some $\vlholer{\strS'''\vlhole}$, which is $\strS''\vlsbr[\atmb;\natmb]$, with
$\vlsbr[\atmb;\natmb]$ replaced by $\vlone$, and $\strR'$,
transforms to
  $
%   \vlupsmash{
   \vlderivation{
   \vliq{\bvtatrdrule}{}
	{\vlsbr<\vlholer{\strS'[\atma;\natma]}
	       ;\strU'>}{
   \vliq{\bvtatidrule}{}
	{\vlsbr<\strS''''\vlsbr[\atmb;\natmb]
	       ;\strU'>
	}{
   \vlhy{\vlsbr<\strR';\strU'>}}}}
%   }
   $
for some $\strS''''\vlhole$ which is $\strS'\vlsbr[\atma;\natma]$, with
$\vlsbr[\atma;\natma]$ replaced by $\vlone$.

\end{itemize}

\end{itemize}
%%%%%%%%%%%%%
Now we focus on the case with $\bvtrhorulein \equiv \bvtseqdrulein$.
\begin{itemize}
\item
Let 
$\vlholer{\strS\vlhole} 
 \approx 
 \strS'\vlsbr[<\strU';\strS''\vlhole>
             ;<\strU'';\strU'''>]$.
Then 
$\strR\approx
 \strS'\vlsbr[<\strU';\strS''[\atma;\natma]>
             ;<\strU'';\strU'''>]$, and\\
  $\vlderivation{
   \vlin{\bvtatidrule}{}
	    {\strS'\vlsbr[<\strU';\strS''[\atma;\natma]>
                     ;<\strU'';\strU'''>]}{
   \vlin{\bvtseqdrule}{}
	    {\strS'\vlsbr[<\strU';\strS''\,\vlscn{\vlone}>
	                 ;<\strU'';\strU'''>]}{
   \vlhy{\strS'\vlsbr<[\strU';\strU'']
                     ;[\strS''\,\vlscn{\vlone};\strU''']>}}}}$
                     transforms to
  $\vlupsmash{
   \vlderivation{
   \vlin{\bvtseqdrule}{}
	{\strS'\vlsbr[<\strU';\strS''[\atma;\natma]>
                    ;<\strU'';\strU'''>]}{
   \vlin{\bvtatidrule}{}
	{\strS'\vlsbr<[\strU';\strU'']
                    ;[\strS''[\atma;\natma];\strU''']>}{
   \vlhy{\strS'\vlsbr<[\strU';\strU'']
                    ;[\strS''\,\vlscn{\vlone};\strU''']>}}}}}$.
\item
Let $\vlholer{\strS\vlhole} 
     \approx \strS'\vlsbr[<\strS''\vlhole;\strU'>
                         ;<\strU'';\strU'''>]$.
This case is analogous to the previous one.
\end{itemize}
%%%%%%%%%%%%
Finally, let $\bvtrhorulein \equiv \bvtrdrulein$.
Then $\bvtrdrulein$ involves the redex of $\bvtatidrulein$ whenever 
$\vlholer{\strS\vlhole}$ is
$\vlstore{
  \vlholer{
   \strS'
   \vlsbr[\vlstore{\strS''\vlhole}
          \vlfo{\atma}{\vlread}
         ;\vlfo{\atma}{\strU'}]
  }
 }\vlread$.
So,
  $\strR\approx
  \strS'
  \vlsbr[\vlstore{\strS''\vlsbr[\atma;\natma]}
         \vlfo{\atma}{\vlread}
        ;\vlfo{\atma}{\strU'}]$, and
  $\vlderivation{
   \vlin{\bvtatidrule}{}
	{\strS'
         \vlsbr[\vlstore{\strS''\vlsbr[\atma;\natma]}
                \vlfo{\atma}{\vlread}
               ;\vlfo{\atma}{\strU'}]}{
   \vlin{\bvtrdrule}{}
	{\strS'
         \vlsbr[\vlstore{\strS''\,\vlscn{\vlone}}
                \vlfo{\atma}{\vlread}
               ;\vlfo{\atma}{\strU'}]}{
   \vlhy{\vlstore{
         \vlsbr[\strS''\,\vlscn{\vlone};\strU']}
         \strS'\vlfo{\atma}{\vlread}}}}}$ transforms to\\
$\vlderivation{
   \vlin{\bvtrdrule}{}
	{\strS'
         \vlsbr[\vlstore{\strS''\vlsbr[\atma;\natma]}
                \vlfo{\atma}{\vlread}
               ;\vlfo{\atma}{\strU'}]}{
   \vlin{\bvtatidrule}{}
	{\vlstore{
         \vlsbr[\strS''[\atma;\natma];\strU']}
         \strS'\vlfo{\atma}{\vlread}}{
   \vlhy{\vlstore{
         \vlsbr[\strS''\,\vlscn{\vlone};\strU']}
         \strS'\vlfo{\atma}{\vlread}}}}}$.
\section{Proof of \textit{\textbf{A language of \invertiblestructure s}}
(proposition~\ref{proposition:Invertible structures are invertible},
page~\pageref{proposition:Invertible structures are invertible})}
\label{section:Proof of proposition:Invertible structures are
invertible}
This proof rests on Shallow splitting of \cite{Roversi:TLCA11,Roversi:unpub2012-I} whose statement we recall here.
\begin{proposition}[\textit{\textbf{Shallow Splitting}}]
\label{proposition:Shallow Splitting}
Let $\strR, \strT$, and $\strP$ be structures, and $\atma$ be a name, and  $ \bvtPder$ be a proof of $\BVT$.
\begin{enumerate}
\item\label{enum:Shallow-Splitting-seq}
If $\vlstore{\vlsbr[<\strR;\strT>;\strP]}
    \bvtInfer{\bvtPder}
             {\ \bvtJudGen{\BVT}{} \vlread}$, then there are
$\vlstore{\vlsbr<\strP_1;\strP_2>\bvtJudGen{\BVT}{} \strP}
\bvtInfer{\bvtDder}{\vlread}$, and
$\vlstore{\bvtJudGen{\BVT}{} {\vlsbr[\strR;\strP_1]}}
\bvtInfer{\bvtPder_1}{\ \vlread}$, and
$\vlstore{\bvtJudGen{\BVT}{} {\vlsbr[\strT;\strP_2]}}
 \bvtInfer{\bvtPder_2}{\ \vlread}$, for some
$\strP_1$, and $\strP_2$.

\item\label{enum:Shallow-Splitting-copar}
If $\vlstore{\vlsbr[(\strR;\strT);\strP]}
     \bvtInfer{\bvtPder}{\ \bvtJudGen{\BVT}{} \vlread}$,
then there are
$\vlstore{\vlsbr[\strP_1;\strP_2] \bvtJudGen{\BVT}{} \strP}
 \bvtInfer{\bvtDder}{\vlread}$, and
$\vlstore{\bvtJudGen{\BVT}{}{\vlsbr[\strR;\strP_1]}}
 \bvtInfer{\bvtPder_1}{\ \vlread}$, and
$\vlstore{\bvtJudGen{\BVT}{}{\vlsbr[\strT;\strP_2]}}
 \bvtInfer{\bvtPder_2}{\ \vlread}$, for some $\strP_1$, and $\strP_2$.

\item\label{enum:Shallow-Splitting-atom}
Let
$\vlstore{\vlsbr[\strR;\strP]}
 \bvtInfer{\bvtPder}{\ \bvtJudGen{\BVT}{} \vlread}$
with $\strR\approx\vlsbr[\atmLabL_1;\vldots;\atmLabL_m]$, such that
$ i\neq j $ implies $ \atmLabL_i \neq \vlne{\atmLabL_j} $, 
for every $ i,j \in\Set{1,\ldots,m}$, and $ m>0 $.
Then, for every structure $\strR_0$, and $\strR_1$, if
$\strR\approx\vlsbr[\strR_0;\strR_1]$,
there exists
$\vlstore{\vlne{\strR_1}
          \bvtJudGen{\BVT}{}
          \vlsbr[\strR_0;\strP]}
 \bvtInfer{\bvtDder}
          {\ \vlread}$.

\item\label{enum:Shallow-Splitting-fo}
If $\vlstore{\vlsbr[\vlfo{\atma}{\strR};\strP]}
    \bvtInfer{\bvtPder}{\ \bvtJudGen{}{} \vlread}$,
then there are
$\vlstore{\vlfo{\atma}{\strT} \bvtJudGen{\BVT}{} \strP}
 \bvtInfer{\bvtDder}{\vlread}$, and
$\vlstore{\bvtJudGen{\BVT}{} \vlsbr[\strR;\strT]}
 \bvtInfer{\bvtPder'}{\ \vlread}$, for some $\strT$.
\end{enumerate}
\end{proposition}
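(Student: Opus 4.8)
The plan is to establish all four clauses \emph{simultaneously}, by one induction on the length $\Size{\bvtPder}$ of the given proof, working throughout modulo the congruence $\approx$ and in particular absorbing units through \eqref{align:unit-co}--\eqref{align:unit-pa}. Two standing observations would be recorded first. Since every rule of $\BVT$, read from premise to conclusion, never decreases the number of atom occurrences, any derivation whose conclusion is $\vlone$ must have premise $\approx\vlone$; hence whenever a residual structure is forced to derive $\vlone$, it is itself a unit. Moreover, each of the four conclusions can be written, up to $\approx$, as a par $\vlsbr[\strU;\vlone]$, so an auxiliary goal of any shape (a bare seq, copar, or quantified structure) can be fed back into the matching clause template. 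With these in hand the whole argument reduces to inspecting the bottommost rule instance $\rho$ of $\bvtPder$, whose conclusion is, up to $\approx$, the structure being split.

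For the inductive step I would distinguish, in each clause, the \emph{inner} case from the \emph{principal} case. In the inner case the redex of $\rho$ lies wholly inside the context part (the $\strP$, or one of $\strR$, $\strT$): the premise then has the same shape with that part replaced by some $\strP'$, the induction hypothesis applies to the strictly shorter subproof, and $\rho$ is simply re-appended below the resulting derivation $\bvtDder$, leaving $\bvtPder_1$ and $\bvtPder_2$ unchanged. The principal case is where $\rho$ produces the distinguished structure and is handled rule by rule. For clause~\ref{enum:Shallow-Splitting-seq} the decisive instance is $\rho=\bvtseqdrule$ forming $\vlsbr<\strR;\strT>$, whose premise is the top-level seq $\vlsbr<[\strR;\strP_1'];[\strT;\strP_2']>$; reading this as $\vlsbr[<[\strR;\strP_1'];[\strT;\strP_2']>;\vlone]$ and invoking the induction hypothesis of the same clause on the shorter proof yields $\bvtJudGen{\BVT}{}\vlsbr[\strR;\strP_1']$ and $\bvtJudGen{\BVT}{}\vlsbr[\strT;\strP_2']$, after which $\strP_1:=\strP_1'$ and $\strP_2:=\strP_2'$ close it, the derivation of $\strP$ being reflexive. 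Clause~\ref{enum:Shallow-Splitting-copar} runs symmetrically through $\rho=\bvtswirule$ with premise $\vlsbr([\strR;\strP];\strT)$, and clause~\ref{enum:Shallow-Splitting-fo} through $\rho=\bvtrdrule$ with premise $\vlfo{\atma}{\vlsbr[\strR;\strV]}$, the witness being $\strT:=\strV$ once the hypothesis extracts $\bvtJudGen{\BVT}{}\vlsbr[\strR;\strV]$.

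Clause~\ref{enum:Shallow-Splitting-atom} is the constructive heart and I would treat it head-on rather than by template juggling. Because $\strR$ is a par of names with no complementary pair, no $\bvtatidrule$ can ever annihilate two atoms of $\strR$ against each other, so each atom of $\strR_1$ must eventually meet its dual \emph{inside} $\strP$. Following one such annihilation $\bvtatidrule$ downward, its premise is a proof of the par in which that atom of $\strR_1$ and its dual in $\strP$ have been replaced by $\vlone$; the induction hypothesis on this shorter proof delivers a derivation for the remaining $m-1$ atoms, and I prepend to it the single $\bvtatidrule$ that re-introduces $\vlsbr[\atmLabL_i;\vlne{\atmLabL_i}]$, thereby building $\vlne{\strR_1}\bvtJudGen{\BVT}{}\vlsbr[\strR_0;\strP]$ explicitly.

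I expect the main obstacle to be the exhaustive bookkeeping of the principal case, concentrated in two places. First, $\bvtswirule$ is the rule that shuttles substructures across par/copar boundaries and so generates the bulk of the sub-cases, each requiring a structure to be re-read modulo $\approx$ before a clause template fits; most of the routine-but-lengthy calculation will live there. Second, and genuinely new relative to the $\BV$ splitting that this extends, is the quantifier clause~\ref{enum:Shallow-Splitting-fo}: since \OpNameRen\ is self-dual and $\bvtrdrule$ may merge or split quantifier scopes, I must check that the side conditions on free and bound names survive each reduction and that the witness $\strT$ can always be chosen within the scope of $\vlfo{\atma}{\,\cdot\,}$ without capturing $\atma$, which is precisely where the $\alpha$-clauses \eqref{align:alpha-intro}--\eqref{align:alpha-symm} and the proviso that \OpNameRen\ binds positive atoms only are needed. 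Keeping the single induction on $\Size{\bvtPder}$, so that every appeal to the hypothesis is made on the strictly shorter premise of $\rho$, is what keeps the mutual recursion well-founded.
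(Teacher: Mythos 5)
First, a point of reference: the paper never proves this proposition at all. It is recalled from \cite{Roversi:TLCA11,Roversi:unpub2012-I}, being the $\BVT$ analogue of Guglielmi's splitting for $\BV$ \cite{Gugl:06:A-System:kl}, and is only \emph{used} in Appendix~B. So your attempt can only be judged against those proofs, and against them it has a genuine gap, sitting exactly where you expect only ``routine bookkeeping''. You induct on $\Size{\bvtPder}$ alone, with the explicit policy that every appeal to the hypothesis is made on the strictly shorter premise of the bottommost rule $\rho$; accordingly, your principal case for $\bvtseqdrulein$ treats only the alignment in which the redex seq coincides with $\vlsbr<\strR;\strT>$. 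But $\bvtseqdrulein$ can cut \emph{across} $\vlsbr<\strR;\strT>$ through associativity: with $\strT\approx\vlsbr<\strT';\strT''>$ and $\strP\approx\vlsbr[<\strU;\strV>;\strP']$, the bottommost instance may have conclusion redex $\vlsbr[<<\strR;\strT'>;\strT''>;<\strU;\strV>]$, so that the premise is $\vlsbr[<[<\strR;\strT'>;\strU];[\strT'';\strV]>;\strP']$. Applying clause~\ref{enum:Shallow-Splitting-seq} of the induction hypothesis to this (shorter) proof yields $\strZ_1,\strZ_2$ with $\vlsbr<\strZ_1;\strZ_2>\bvtJudGen{\BVT}{}\strP'$ and a proof of $\vlsbr[<\strR;\strT'>;\strU;\strZ_1]$ in which $\strR$ is still glued to $\strT'$; to separate them you must invoke splitting \emph{again}, on a proof that was manufactured by the first application, is not a subproof of $\bvtPder$, and whose length is completely uncontrolled. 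Your induction collapses there, and the same phenomenon occurs for $\bvtswirulein$ in clause~\ref{enum:Shallow-Splitting-copar} and $\bvtrdrulein$ in clause~\ref{enum:Shallow-Splitting-fo}. This is precisely why the literature does not induct on proof length alone but on a lexicographic measure, roughly: size of the conclusion first (which $\bvtswirulein$, $\bvtseqdrulein$, $\bvtrdrulein$ preserve, so the first call is licensed by the drop in length), proof length second; and why your own opening observation that no rule destroys atoms when read downwards is needed not for degenerate cases but to bound $\Size{\strZ_1}\leq\Size{\strP'}$, which is what makes the nested call strictly smaller in the first component (using $\strT''\not\approx\vlone$).

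Clause~\ref{enum:Shallow-Splitting-atom} is a second, independent failure. After your extraction step you hold $\vlne{\atmLabL_i}$ in a \OpNameCop\ and must push it back to its original, possibly deep, position inside $\strP$; when that position lies under a \OpNameSeq\ this needs a derivation of the shape $\vlsbr(\strP';\vlne{\atmLabL_i})\bvtJudGen{\BVT}{}\strP$, which is an instance of $\bvtsequrulein$ and is not derivable in the down fragment. The problem is not cosmetic: in the stated generality the clause is actually falsifiable. Take $\strR\approx\vlsbr[\atma;\natmb]$ and $\strP\approx\vlsbr<\atmb;\natma>$. Then $\vlsbr[\atma;\natmb;<\atmb;\natma>]$ is provable (from $\vlone$ create $\vlsbr[\atmb;\natmb]$, create $\vlsbr[\atma;\natma]$ in the right slot of the seq $\vlsbr<\atmb;\vlone>$, then one $\bvtseqdrulein$), yet for $\strR_0\approx\vlone$, $\strR_1\approx\strR$ the required derivation $\vlsbr(\natma;\atmb)\bvtJudGen{\BVT}{}\vlsbr<\atmb;\natma>$ does not exist: no rule may create or destroy atoms along it, and an exhaustive check of $\bvtswirulein$, $\bvtseqdrulein$, $\bvtrdrulein$ on a two-atom \OpNameCop\ shows only $\vlsbr(\natma;\atmb)$ and $\vlsbr[\natma;\atmb]$ are reachable. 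The sound form of the atom clause is Guglielmi's single-atom one (split off one $\atmLabL_i$ at a time, with the residue kept in par position), proved like the other clauses by case analysis on the bottommost rule with nested appeals to clauses~\ref{enum:Shallow-Splitting-seq}, \ref{enum:Shallow-Splitting-copar} and~\ref{enum:Shallow-Splitting-fo}, not by extracting and re-inserting an $\bvtatidrulein$ instance.
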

%%%%%%%%%%
Now, we reason by induction on
$\vlstore{\vlsbr[\vlne\strT;\strP]}
 \Size{\vlread}$,
proceeding by cases on the form of $\vlne\strT$.
%%% first case
\par
As a \emph{first case} we assume $\vlne\strT\approx\vlne{\vlone}$, and we cope with a
base case. The assumption becomes
$\vlstore{\vlsbr[\vlne{\vlone};\strP]}
 \bvtInfer{\bvtPder}{\ \bvtJudGen{}{} \vlread}$ which is exactly:
\[
\vlderivation                 {
\vlde{\bvtPder}{}
     {\vlsbr[\vlne{\vlone};\strP]
      \approx
      \strP}                 {
\vlhy{\vlne{\vlone}
      \approx
      \vlone
      }}}
\]
%%% second case
\par
As a \emph{second case} we assume
$\vlne\strT\approx
 \vlsbr[\natma_1;\vldots;\natma_m]$, and we cope with another
base case. The assumption becomes
$\vlstore{\vlsbr[[\natma_1;\vldots;\natma_m];\strP]}
 \bvtInfer{\bvtPder}{\ \bvtJudGen{}{} \vlread}$.
We conclude by Point~\ref{enum:Shallow-Splitting-atom} of Shallow Splitting
(Proposition~\ref{proposition:Shallow Splitting}) which implies
$\vlstore{\vlsbr(\atma_1;\vldots;\atma_m)
          \bvtJudGen{\BVT}{}
          \strP}
\vlread$.
%%% third case
\par
As a \emph{third case} we assume
$\vlne\strT\approx \vlsbr(\strR_1;\strR_2)$.
So, the assumption is
$\vlstore{\vlsbr[(\strR_1;\strR_2);\strP]}
 \bvtInfer{\bvtPder}{\ \bvtJudGen{}{} \vlread}$.
\par
Point~\ref{enum:Shallow-Splitting-copar} of Shallow Splitting
(Proposition~\ref{proposition:Shallow Splitting}) implies
$\vlstore{\vlsbr[\strP_1;\strP_2] \bvtJudGen{}{} \strP}
 \bvtInfer{\bvtDder}{\  \vlread}$, and
$\vlstore{\bvtJudGen{}{} \vlsbr[\strR_1;\strP_1]}
 \bvtInfer{\bvtQder_1}{\  \vlread}$, and
$\vlstore{\bvtJudGen{}{} \vlsbr[\strR_2;\strP_2]}
 \bvtInfer{\bvtQder_2}{\  \vlread}$,
for some $\strP_1, \strP_2$.
\par
Both $\strR_1$, and $\strR_2$ are invertible, and
$\vlstore{\Size{\vlsbr[\strR_1;\strP_1]}}
 \vlread$ $<
 \vlstore{\vlsbr[(\strR_1;\strR_2);\strP]}
 \Size{\vlread}$, and
$\vlstore{\Size{\vlsbr[\strR_2;\strP_2]}}
 \vlread$
 $<
 \vlstore{\vlsbr[(\strR_1;\strR_2);\strP]}
 \Size{\vlread}$.
So, the inductive hypothesis holds on
$\bvtQder_1$, and $\bvtQder_2$.
We get
$\vlstore{\vlne{\strR_1} \bvtJudGen{}{} \strP_1}
 \bvtInfer{\bvtEder_1}{\  \vlread}$, and
$\vlstore{\vlne{\strR_2} \bvtJudGen{}{} \strP_2}
 \bvtInfer{\bvtEder_2}{\  \vlread}$.
We conclude by:
\[
\vlderivation                           {
\vlde{\bvtDder}{}
     {\strP}                            {
\vlde{\bvtEder_1}{}
     {\vlsbr[\strP_1;\strP_2]}          {
\vlde{\bvtEder_2}{}
     {\vlsbr[\vlne{\strR_1};\strP_2]}   {
\vliq{\eqref{align:negation-pa}}{}
     {\vlsbr[\vlne\strR_1;\vlne\strR_2]}{
\vlhy{\vlsbr\vlne{(\strR_1;\strR_2)}}}}}}}
\]
%%% fourth case
\par
As a \emph{fourth case} we assume $\vlne\strT\approx\vlfo{\atma}{\strR}$ \ST, without loss of generality, $\atma\in\strBN{\vlfo{\atma}{\strR}}$. So, the assumption is
$\vlstore{\vlsbr[\vlfo{\atma}{\strR};\strP]}
 \bvtInfer{\bvtPder}{\ \bvtJudGen{}{} \vlread}$.
\par
Point~\ref{enum:Shallow-Splitting-fo} of Shallow Splitting
(Proposition~\ref{proposition:Shallow Splitting}) implies
$\vlstore{\strP}
 \bvtInfer{\bvtDder}{\ \vlfo{\atma}{\strT} \bvtJudGen{}{} \vlread}$, and
$\vlstore{\vlsbr[\strR;\strT]}
 \bvtInfer{\bvtQder}{\ \bvtJudGen{}{} \vlread}$, for some $\strT$.
\par
Both $\strR$ invertible, and
$\vlstore{\vlsbr[\strR;\strT]}
 \Size{\vlread}$
 $<
 \vlstore{\vlsbr[\vlfo{\atma}{\strR};\strP]}
 \Size{\vlread}$, imply the
induction holds on $\bvtQder$. We get
$\vlstore{\vlne\strR \bvtJudGen{}{} \strT}
 \bvtInfer{\bvtEder}{\  \vlread}$.
\par
So, we conclude that:
\[
\vlderivation                      {
\vlde{\bvtDder}{}
     {\strP}                       {
\vlde{\bvtEder}{}
     {\vlfo{\atma}{\strT}}         {
\vliq{\eqref{align:negation-fo}}{}
     {\vlfo{\atma}{\vlne\strR}}    {
\vlhy{\vlne{\vlfo{\atma}{\strR}}}} }}}
\]
\section{Proving point~\eqref{enumerate:Trivial derivations and rightcontext s-03}
of \textit{\Processstructure s, \trivialderivation s and \rightcontext s}
(Proposition~\ref{proposition:Rightcontext s preserve communication},
page~\pageref{proposition:Rightcontext s preserve communication})}
\label{section:Proof of Rightcontext s can preserve external communication}
The proof is by induction on the size of $ \pincE $, proceeding by
cases on the form of $\vlholer{\strS'\vlhole}$, which, by assumption, is a \processstructure, so it can assume only specific forms.

\begin{itemize}

\item
%% Base case-----------------
The base case is
   $\vlholer{\strS'\vlhole}
    \approx
    \vlstore{\vlsbr<\vlhole;\strU>}
    \vlread$, for some $\strU$. So,
$\vlholer{\strS'\,\vlscn{\vlone}}
    \approx
    \vlstore{\vlsbr<\vlone;\strU>}
    \vlread
    \approx\strU$. 
Moreover, 
$\mapPincToDi{\pincE}=\vlsbr<\atmb;\strU>$ implies that
$ \pincE $ is $ \pincSec{\pincb}{\pincE'} $ for some $\pincE'$ \ST\
$\mapPincToDi{\pincE'}=\strU$.
Since we can prove:
{\small
$$
\vlinf{\pincact}{}
      {\pincLTSJud{\pincSec{\atmb} {\pincE'}
                  }
                  {\pincE'
                  }
                  {\pincb}}
     {}
$$
}%\small
we are done because $\mapPincToDi{\pincF}=\vlsbr<\vlone;\strU>\approx\strU=\mapPincToDi{\pincE'}$.
\par
A first remark is that we cannot have $\vlholer{\strS'\vlhole} \approx
\vlsbr<\vlholer{\strSb'\vlhole};\pincF>$ with
$\vlholer{\strSb'\vlhole}\not\approx\vlhole$. Otherwise
$\vlholer{\strS'\vlhole}$ would not be a \processstructure.
\par
A second remark is that $ \strU\approx \vlone $ does not pose any problem. In such a case 
$ \pincE $ is $ \pincSec{\atmb}{\pincZer} $, and we can write
$\pincLTSJud{\pincSec{\atmb} {\pincZer}}{\pincZer}{\pincb}$.

\item
%%1-------
Let $\vlholer{\strS'\vlhole} \approx
     \vlsbr[\vlholer{\strSb'\vlhole};\strU]$. 
The assumptions
$\mapPincToDi{\pincE}=\vlsbr[\vlholer{\strSb'\,\vlscn{\atmb}};\strU]$, and
$\mapPincToDi{\pincF}=\vlsbr[\vlholer{\strSb'\,\vlscn{\vlone}};\strU]$ imply that
$\pincE$ is $\pincPar{\pincE'}{\pincE''}$, and 
$\pincF$ is $\pincPar{\pincF'}{\pincE''}$, for some $\pincE' ,\pincE''$, and $\pincF'$ \ST\
$\mapPincToDi{\pincE'}=\vlholer{\strSb'\,\vlscn{\atmb}}$, and
$\mapPincToDi{\pincF'}=\vlholer{\strSb'\,\vlscn{\vlone}}$, and
$\mapPincToDi{\pincE''}=\strU$.
We can prove:
{\small
$$
\vlstore{
}
\vlinf{\pinccntxp}{}
      {\pincLTSJud{\pincPar{\pincE'}
                           {\pincE''}
                  }
                  {\pincPar{\pincF'}
                           {\pincE''}
                  }
                  {\pincLabL}
      }
      {\pincLTSJud{\pincE'}
                  {\pincF'}
                  {\pincLabL}
      }
$$
}%\small
because the premise holds thanks to the inductive hypotheses, also assuring the desired constraints on $\pincLabL$.

\item
%%2-------
Let $\vlholer{\strS'\vlhole} \approx \vlfo{\atma}{\vlholer{\strSb'\vlhole}}$. 
The assumptions
$\mapPincToDi{\pincE}=\vlfo{\atma}{\vlholer{\strSb'\,\vlscn{\atmb}}}$, and
$\mapPincToDi{\pincF}=\vlfo{\atma}{\vlholer{\strSb'\,\vlscn{\vlone}}}$ imply that
$\pincE$ is $\pincNu{\pinca}{\pincE'}$, and 
$\pincF$ is $\pincNu{\pinca}{\pincF'}$, for some $\pincE'$, and $\pincF'$ \ST\
$\mapPincToDi{\pincE'}=\vlholer{\strSb'\,\vlscn{\atmb}}$, and
$\mapPincToDi{\pincF'}=\vlholer{\strSb'\,\vlscn{\vlone}}$.
We can prove:
{\small
$$
\vlderivation{
\vlin{\bvtrhorule}{}
      {\pincLTSJud{\pincNu{\pinca}
                          {\pincE'}
                  }
                  {\pincNu{\pinca}
                          {\pincF'}
                  }
                  {\pincLabL'}
     }{
\vlhy{\pincLTSJud{\pincE'}
                 {\pincF'}
                 {\pincLabL}
      }}
}
$$
}%\small
\par\noindent
because the premise holds thanks to the inductive argument. Of course we
choose $\bvtrhorule$, depending on $\atma$. If $\atma\equiv\atmb$, then
$\bvtrhorule$ must be $\pincpi$, and $\pincLabL'\equiv\pincLabT$. Otherwise, if
$\atma\not\equiv\atmb$, then $\bvtrhorule$ must be $\pincpe$, and
$\pincLabL'\equiv\pincLabL$.

\end{itemize}

Point~\eqref{enumerate:Trivial derivations and rightcontext s-01} of this
Proposition excludes any further case.
\section{Proving point~\eqref{enumerate:Trivial derivations and rightcontext s-04}
of \textit{\Processstructure s, \trivialderivation s and \rightcontext s}
(Proposition~\ref{proposition:Rightcontext s preserve communication},
page~\pageref{proposition:Rightcontext s preserve communication})}
\label{section:Proof of Rightcontext s can preserve internal communication}
%%%%
The proof is by induction on the size of $ \pincPar{\pincE}{\pincF}$, proceeding by cases on
the forms of $\vlholer{\strS' \vlhole}$, and $\vlholer{\strS''\vlhole}$, which, by assumption, are \processstructure s, so they can assume only specific forms.
\begin{itemize}

\item
%% Base case-----------------
The base case has
$\vlholer{\strS'\vlhole} \approx \vlsbr<\vlhole;\strU'>$, and
$\vlholer{\strS''\vlhole} \approx \vlsbr<\vlhole;\strU''>$,
for some $\strU'$, and $\strU''$ every of which may well be $ \pincZer $.
So,
$\vlholer{\strS'\,\vlscn{\vlone}}
 \approx
 \vlsbr<\vlone;\strU'>
 \approx \strU'$, and
$\vlholer{\strS''\,\vlscn{\vlone}}
 \approx
 \vlsbr<\vlone;\strU''>
  \approx \strU''$.
The assumptions
$\mapPincToDi{\pincE} =\vlsbr<\atmb ;\strU' >$, and
$\mapPincToDi{\pincF} =\vlsbr<\natmb;\strU''>$, and
$\mapPincToDi{\pincE'}=\vlsbr<\vlone;\strU' >\approx\strU'$, and
$\mapPincToDi{\pincF'}=\vlsbr<\vlone;\strU''>\approx\strU''$ imply that
$\pincE =\pincSec{\pincb }{\pincE' }$, and
$\pincF =\pincSec{\pincnb}{\pincE'}$.
We can write:
{\small
$$
\vlderivation{
\vliin{\pinccom}{}
      {\pincLTSJud{\pincPar{(\pincSec{\pincb}
				      {\pincE'})
			    }
			    {(\pincSec{\pincnb}
				      {\pincF'})
			    }
		  }
		  {\pincPar{\pincE'}
			    {\pincF'}
		  }
		  {\pincLabT}
      }{%1
	\vlin{\pincact}{}
	     {\pincLTSJud{\pincSec{\pincb}
				  {\pincE'}}
			  {\pincE'}
			  {\pincb}
	     }{\vlhy{}}
	}
	{%2
	\vlin{\pincact}{}
	     {\pincLTSJud{\pincSec{\pincnb}
				  {\pincE'}}
			  {\pincE'}
			  {\pincnb}
	     }{\vlhy{}}
	}
}
$$
}%\small
We remark that neither
$\vlholer{\strS'\vlhole}
 \approx
 \vlsbr<\vlholer{\strSb'\vlhole};\strU'>$
with $\vlholer{\strSb'\vlhole}\not\approx\vlhole$,
nor
$\vlholer{\strS'\vlhole}
 \approx
 \vlsbr<\vlholer{\strSb''\vlhole};\strU''>$
with $\vlholer{\strSb''\vlhole}\not\approx\vlhole$, can hold. Otherwise
neither $\vlholer{\strS'\vlhole}$, nor
neither $\vlholer{\strS''\vlhole}$ could be \processstructure s.

\item
%%1-------
Let $\vlholer{\strS'\vlhole} \approx
     \vlsbr[\vlholer{\strSb'\vlhole};\strU']$. So,
$\vlholer{\strS'\,\vlscn{\vlone}}
 \approx
 \vlsbr[\strSb'\,\vlscn{\vlone};\strU']$.
The assumptions
$\mapPincToDi{\pincE } =\vlsbr[\vlholer{\strSb'\vlscn{\atmb}};\strU']$, and
$\mapPincToDi{\pincE'} =\vlsbr[\vlholer{\strSb'\vlscn{\vlone}};\strU']$
imply that
$\pincE  =\pincPar{\pincG_1 }{\pincG_2}$, and
$\pincE' =\pincPar{\pincG'_1}{\pincG_2}$
\ST\
$\mapPincToDi{\pincG_1 } =\vlholer{\strSb'\vlscn{\atmb}}$, and
$\mapPincToDi{\pincG'_1} =\vlholer{\strSb'\vlscn{\vlone}}$, and
$\mapPincToDi{\pincG_2}  =\strU'$.
\begin{itemize}
%% 1.1----------
  \item
  Let $\vlholer{\strS''\vlhole} \approx
       \vlsbr[\vlholer{\strSb''\vlhole};\strU'']$. So,
$\vlholer{\strS''\,\vlscn{\vlone}}
 \approx
 \vlsbr[\strSb''\,\vlscn{\vlone};\strU'']$.
The assumptions
$\mapPincToDi{\pincF } =\vlsbr[\vlholer{\strSb''\vlscn{\natmb}};\strU'']$, and
$\mapPincToDi{\pincF'} =\vlsbr[\vlholer{\strSb''\vlscn{\vlone}};\strU'']$
imply that
$\pincF  =\pincPar{\pincH_1 }{\pincH_2}$, and
$\pincF' =\pincPar{\pincH'_1}{\pincH_2}$
\ST\
$\mapPincToDi{\pincH_1 } =\vlholer{\strSb''\vlscn{\natmb}}$, and
$\mapPincToDi{\pincH'_1} =\vlholer{\strSb''\vlscn{\vlone}}$, and
$\mapPincToDi{\pincH_2}  =\strU''$.
We can prove:
{\small
$$
\vlderivation{
\vlin{\pinccntxp}{}
     {
      \pincLTSJud{\pincPar{\pincG_1}
                          {\pincPar{\pincG_2}
                                   {\pincPar{\pincH_1}
                                            {\pincH_2}}
                          }
                 }
                 {\pincPar{\pincG'_1}
                          {\pincPar{\pincG_2}
                                   {\pincPar{\pincH'_1}
                                            {\pincH_2}}
                          }
                 }
                 {\pincPreT}
     }{
\vlin{\pinccntxp}{}
     {
      \pincLTSJud{\pincPar{\pincG_1}
                          {\pincPar{\pincH_1}
                                   {\pincH_2}}
                 }
                 {\pincPar{\pincG'_1}
                          {\pincPar{\pincH'_1}
                                   {\pincH_2}}
                 }
                 {\pincPreT}
     }{
\vlhy{\pincLTSJud{\pincPar{\pincG_1}
                          {\pincH_1}
                 }
                 {\pincPar{\pincG'_1}
                          {\pincH'_1}
                 }
                 {\pincPreT}}}}
}
$$
}%\small
The premise holds thanks to the inductive hypothesis because both
$ \pincPar{\pincG_1}
          {\pincH_1}$ is smaller than
$ \pincPar{\pincG_1}
          {\pincPar{\pincG_2}
                   {\pincPar{\pincH_1}
                            {\pincH_2}}}$.
                            
%% 1.2----------
  \item
  Let $\vlholer{\strS''\vlhole} \approx
\vlsbr<\vlholer{\strSb''\vlhole};\strU''>$ with
$\vlholer{\strSb''\vlhole}\approx\vlhole$. Otherwise
$\vlholer{\strS''\vlhole}$ could not be a \processstructure.
So,
$\vlholer{\strS''\,\vlscn{\vlone}}
 \approx
 \vlsbr<\vlone;\strU''>\approx\strU''$.
The assumptions
$\mapPincToDi{\pincF } =\vlsbr<\natmb;\strU''>$, and
$\mapPincToDi{\pincF'} =\vlsbr<\vlone;\strU''> \approx \strU''$
imply that
$\pincF  =\pincSec{\pincnb}{\pincF'}$.
We can prove:
{\small
$$\vlderivation{
\vlin{\pinccntxp}{}
     {
      \pincLTSJud{\pincPar{\pincPar{\pincG_1}
                                   {\pincG_2}}
                          {(\pincSec{\natmb}
                                    {\pincF'})
                          }
                 }
                 {\pincPar{\pincPar{\pincG'_1}
                                   {\pincG_2}}
                          {\pincF'}
                 }
                 {\pincPreT}
      }{
\vlhy{\pincLTSJud{\pincPar{\pincG_1}
                          {(\pincSec{\natmb}
                                    {\pincF'})
                          }
                 }
                 {\pincPar{\pincG'_1}
                          {\pincF'}
                 }
                 {\pincPreT}}}
}
$$
}%\small
The premise holds thanks to the inductive hypothesis because
$\pincPar{\pincG_1}
         {(\pincSec{\natmb}
                   {\pincF'})}$ is smaller than
$\pincPar{\pincPar{\pincG_1}
                  {\pincG_2}}
         {(\pincSec{\natmb}
                   {\pincF'})}$.

  %%1.3----------
\item
Let $\vlholer{\strS''\vlhole} \approx \vlfo{\atma}{\vlholer{\strSb''\vlhole}}$,
for any $\atma$.
So,
$\vlholer{\strS''\,\vlscn{\vlone}}
 \approx
 \vlfo{\atma}{\vlholer{\strSb''\,\vlscn{\vlone}}}$.
The assumptions
$\mapPincToDi{\pincF } =\vlfo{\atma}{\vlholer{\strSb''\vlscn{\natmb}}}$, and
$\mapPincToDi{\pincF'} =\vlfo{\atma}{\vlholer{\strSb''\vlscn{\vlone}}}$
imply that
$\pincF  =\pincNu{\pincb}{\pincH }$, and
$\pincF' =\pincNu{\pincb}{\pincH'}$, for some 
$\pincH$, and $\pincH'$ such that 
$\mapPincToDi{\pincH } =\vlholer{\strSb''\vlscn{\natmb}}$, and
$\mapPincToDi{\pincH'} =\vlholer{\strSb''\vlscn{\vlone}}$.
We can prove:
{\small
$$
\vlderivation{
\vlin{\pinccntxp}{}
      {\pincLTSJud{\pincPar{\pincPar{\pincG_1}
                                    {\pincG_2}}
                           {\pincNu{\pincb}{(\pincH)}}
                  }
                  {\pincPar{\pincPar{\pincG'_1}
                                    {\pincG_2}}
                           {\pincNu{\pincb}{(\pincH')}}
                  }
                  {\pincPreT}
      }{
\vlhy{\pincLTSJud{\pincPar{\pincG_1}
                           {\pincNu{\pincb}{(\pincH)}}
                  }
                  {\pincPar{\pincG'_1}
                           {\pincNu{\pincb}{(\pincH')}}
                  }
                  {\pincPreT}
      }}
}
$$
}%\small
\par\noindent
The premise holds thanks to the inductive hypothesis because
$\pincPar{\pincG_1}
         {\pincNu{\pincb}{(\pincH)}}$ is smaller than 
$\pincPar{\pincPar{\pincG_1}
                  {\pincG_2}}
         {\pincNu{\pincb}{(\pincH)}}$.
\end{itemize}

% \newpage
\item
%%2--------------
Let $\vlholer{\strS'\vlhole} \approx
\vlsbr<\vlholer{\strSb'\vlhole};\strU'>$ with
$\strSb'\vlhole\approx\vlhole$. Otherwise
$\vlholer{\strS'\vlhole}$ could not be a \processstructure.
So, $\vlholer{\strS'\,\vlscn{\vlone}} \approx
\vlsbr<\vlone;\strU'>\approx\strU'$.
The assumptions
$\mapPincToDi{\pincE } =\vlsbr<\atmb;\strU'>$, and
$\mapPincToDi{\pincE'} =\vlsbr<\vlone;\strU'> \approx \strU''$
imply that
$\pincE  =\pincSec{\pincb}{\pincE'}$.

\begin{itemize}
%% 2.1----------
  \item
  We already considered the case with
  $\vlholer{\strS''\vlhole} \approx
       \vlsbr[\vlholer{\strSb''\vlhole};\strU'']$. It is enough to switch
  $\vlholer{\strS'\vlhole}$ and $\vlholer{\strS''\vlhole}$.

%% 2.2----------
  \item
  Letting $\vlholer{\strS''\vlhole} \approx
\vlsbr<\vlholer{\strSb''\vlhole};\strU''>$, with
$\strSb''\vlhole\approx\vlhole$, otherwise $\vlholer{\strS''\vlhole}$ could not be a
\processstructure, becomes the base case, we started with.

  %%2.3----------
  \item
  Let $\vlholer{\strS''\vlhole} \approx \vlfo{\atma}{\vlholer{\strSb''\vlhole}}$,
for any $\atma$. So,
$\vlholer{\strS''\,\vlscn{\vlone}}
 \approx
 \vlfo{\atma}{\vlholer{\strSb''\,\vlscn{\vlone}}}$ where, thanks to \eqref{align:PPi-structural-congruence}, we can always
be in a situation such that $\atma$ is different from every element
in $\strFN{\vlholer{\strS'\,\vlscn{\atmb}}}$.
The assumptions
$\mapPincToDi{\pincF } =\vlfo{\atma}{\vlholer{\strSb''\vlscn{\natmb}}}$, and
$\mapPincToDi{\pincF'} =\vlfo{\atma}{\vlholer{\strSb''\vlscn{\vlone}}}$
imply that
$\pincF  =\pincNu{\pincb}{\pincH }$, and
$\pincF' =\pincNu{\pincb}{\pincH'}$, for some 
$\pincH$, and $\pincH'$ such that 
$\mapPincToDi{\pincH } =\vlholer{\strSb''\vlscn{\natmb}}$, and
$\mapPincToDi{\pincH'} =\vlholer{\strSb''\vlscn{\vlone}}$.
We can prove:
{\small
$$
\vlderivation{
\vlin{\bvtrhorule}{}
      {\pincLTSJud{\pincPar{\pincNu{\atma}
                                   {(\pincSec{\atmb}
                                             {\pincE'})}
                           }
                           {\pincNu{\pinca}
				                   {\pincH}
                           }
                  }
                  {\pincPar{\pincNu{\atma}
                                   {\pincE'}}
			               {\pincNu{\pinca}{\pincH'}}
                  }
                  {\pincPreT}
      }{
\vlhy{\pincLTSJud{\pincPar{\pincSec{\atmb}
                                    {\pincE'}
                          }
                          {\pincH}
                 }
                 {\pincPar{\pincE'}
      			          {\pincH'}
                 }
                 {\pincPreT}
     }}
}
$$
}%\small
\par\noindent
where $\bvtrhorule$ can be any between $\pincpi$, and $\pincpe$.
The premise holds thanks to the inductive hypothesis because
$\pincPar{\pincSec{\atmb}{\pincE'}}{\pincH}$ is smaller than 
$\pincPar{\pincNu{\atma}{(\pincSec{\atmb}{\pincE'})}}{\pincNu{\pinca}{\pincH}}$.
\end{itemize}

\item
%%3--------------
Let $\vlholer{\strS'\vlhole} \approx \vlfo{\atma}{\vlholer{\strSb'\vlhole}}$ for a
given $\atma$.
So, $\vlholer{\strS'\,\vlscn{\vlone}} \approx
\vlfo{\atma}{\vlholer{\strSb'\,\vlscn{\vlone}}}$.
The assumptions
$\mapPincToDi{\pincE } =\vlfo{\atma}{\vlholer{\strSb'\vlscn{\atmb}}}$, and
$\mapPincToDi{\pincE'} =\vlfo{\atma}{\vlholer{\strSb'\vlscn{\vlone}}}$
imply that
$\pincE  =\pincNu{\pinca}{\pincG }$, and
$\pincE' =\pincNu{\pinca}{\pincG'}$, for some 
$\pincG$, and $\pincG'$ such that 
$\mapPincToDi{\pincG } =\vlholer{\strSb'\vlscn{\atmb}}$, and
$\mapPincToDi{\pincG'} =\vlholer{\strSb'\vlscn{\vlone}}$.

\begin{itemize}
  %% 3.1----------
  \item
  We already considered the case with
  $\vlholer{\strS''\vlhole} \approx
       \vlsbr[\vlholer{\strSb''\vlhole};\strU'']$. It is enough to switch
  $\vlholer{\strS'\vlhole}$ and $\vlholer{\strS''\vlhole}$.

  %% 3.2----------
  \item
  We already considered the case with $\vlholer{\strS''\vlhole} \approx
\vlsbr<\vlholer{\strSb''\vlhole};\strU''>$. It is enough to switch
  $\vlholer{\strS'\vlhole}$ and $\vlholer{\strS''\vlhole}$.

  %%3.3----------
  \item
  Let $\vlholer{\strS''\vlhole}
       \approx
       \vlfo{\atmc}{\vlholer{\strSb''\vlhole}}$,
for any $\atmc$. So,
$\vlholer{\strS''\,\vlscn{\vlone}}
 \approx
 \vlfo{\atmc}{\vlholer{\strSb''\,\vlscn{\vlone}}}$. 
The assumptions
$\mapPincToDi{\pincF } =\vlfo{\atmc}{\vlholer{\strSb''\vlscn{\natmb}}}$, and
$\mapPincToDi{\pincF'} =\vlfo{\atmc}{\vlholer{\strSb''\vlscn{\vlone}}}$
imply that
$\pincF  =\pincNu{\pincc}{\pincH }$, and
$\pincF' =\pincNu{\pincc}{\pincH'}$, for some 
$\pincH$, and $\pincH'$ such that 
$\mapPincToDi{\pincH } =\vlholer{\strSb''\vlscn{\natmb}}$, and
$\mapPincToDi{\pincH'} =\vlholer{\strSb''\vlscn{\vlone}}$. 
We need to consider the
following cases where
(i) $\bvtrhorule$ can be $\pincpi$, or $\pincpe$, and
(ii) the premise of all the given derivations exists thanks to the inductive arguments we have used so far in this proof.

      \begin{itemize}
      %%3.3.A----------
      \item
      As a first case let $\atma\equiv\atmc$, and $\atma,\atmc\not\equiv\atmb$.
      We can prove:
      {\small
      $$
      \vlderivation{
      \vlin{\bvtrhorule}{}
			    {
			     \pincLTSJud{\pincPar{\pincNu{\pinca}
							                 {\pincG}
						             }
									 {\pincNu{\pinca}
									 	     {\pincH}
									 }
					        }
							{\pincPar{\pincNu{\pinca}
							                 {\pincG'}
						             }
									 {\pincNu{\pinca}
									 	     {\pincH'}
									 }							
							}
							{\pincLabT}
			  }{
      \vlhy{
			     \pincLTSJud{\pincPar{\pincG
						             }
									 {\pincH
									 }
					        }
							{\pincPar{\pincG'
						             }
									 {\pincH'
									 }							
							}
							{\pincLabT}
      }}}
      $$
      }%\small
      
      We can proceed in the same way also when $\atma,\atmc\equiv\atmb$, the derivation becoming:
      {\small
      $$
      \vlderivation{
      \vlin{\bvtrhorule}{}
			    {
			     \pincLTSJud{\pincPar{\pincNu{\pincb}
							                 {\pincG}
						             }
									 {\pincNu{\pincb}
									 	     {\pincH}
									 }
					        }
							{\pincPar{\pincNu{\pincb}
							                 {\pincG'}
						             }
									 {\pincNu{\pincb}
									 	     {\pincH'}
									 }							
							}
							{\pincLabT}
			  }{
      \vlhy{
			     \pincLTSJud{\pincPar{\pincG
						             }
									 {\pincH
									 }
					        }
							{\pincPar{\pincG'
						             }
									 {\pincH'
									 }							
							}
							{\pincLabT}
      }}}
      $$
      }%\small

      %%3.3.C----------
      \item
      As a third case let $\atma\equiv\atmb$, and $\atmc\not\equiv\atmb$.
      we can prove:
      {\small
      $$
      \vlderivation{
      \vlin{\bvtrhorule}{}
	   {\pincLTSJud{\pincPar{\pincNu{\pincb}
      	                            {\pincG}}
	                        {\pincNu{\pincc}
	                              	{\pincH}}
	                \pincCong
	                \pincPar{\pincNu{\pincd}
	                      	        {\pincG\subst{\pincd}{\pincb}}}
	                	    {\pincNu{\pincd}
	                	            {\pincNu{\pincc}
	                	            	    {\pincH}}}
			       }
			       {\pincPar{\pincNu{\pincd}
			             	        {\pincG'\subst{\pincd}{\pincb}}}
			       	        {\pincNu{\pincd}
			       	                {\pincNu{\pincc}
			       	                        {\pincH'}}}
			        \pincCong
			        \pincPar{\pincNu{\pincb}
			             	        {\pincG'}}
			       	        {\pincNu{\pincc}
			       	                {\pincH'}}
			       }
			       {\pincLabT}
	   }{
       \vlhy
	   {\pincLTSJud{\pincPar{\pincG\subst{\pincd}{\pincb}}
	                	    {\pincNu{\pincc}{\pincH}}
			       }
			       {\pincPar{\pincG'\subst{\pincd}{\pincb}}
			       	        {\pincNu{\pincc}
			       	                {\pincH'}}
			       }
			       {\pincLabT}
	   }}
      }
      $$
      }%\small
      \par\noindent
      where $ \pincd $ neither occurs in $ \pincG $, nor it occurs in $\pincNu{\pincc}{\pincH} $ so that we can apply \eqref{align:PPi-structural-congruence}.
      \end{itemize}
\end{itemize}

\end{itemize}
\section{Proof of \textit{\textbf{Soundness \wrt\ internal communication}}
(Theorem~\ref{theorem:Soundness w.r.t. internal communication},
page~\pageref{theorem:Soundness w.r.t. internal communication})}
\label{section:Proof of theorem:Soundness w.r.t. internal communication}
\begin{itemize}
%%1---------
\item
As a base case, let $\mapPincToDi{\pincE}\approx
     \vlsbr[<\atmb;\mapPincToDi{\pincE}'>
           ;<\natmb;\mapPincToDi{\pincE}''>]$, for some process
$\pincE'$, and $\pincE''$. So,
$\pincE$ is 
$ \pincPar{(\pincSec{\pincb}
                    {\pincE'})}
          {(\pincSec{\pincnb}
                    {\pincE''})
          }$, and
$\vlholer{\strS'\vlhole}
 \approx\vlsbr<\vlhole;\mapPincToDi{\pincE'}>$, and
$\vlholer{\strS''\vlhole}
 \approx\vlsbr<\vlhole;\mapPincToDi{\pincE''}>$.
We can take
$\pincG$ to be
$ \pincPar{\pincE'}{\pincE''}$ because
$ \vlsbr[<\vlone;\mapPincToDi{\pincE'}>
        ;<\vlone;\mapPincToDi{\pincE''}>]
        \approx
  \vlsbr[\mapPincToDi{\pincE}'
        ;\mapPincToDi{\pincE}'']$.
We can write:
{\small
$$
\vlderivation{
\vliin{\pinccom}{}
      {\pincLTSJud{\pincPar{(\pincSec{\pincb}
				                     {\pincE'})
			               }
			               {(\pincSec{\pincnb}
				                     {\pincE''})
			               }
				  }
				  {\pincPar{\pincE'}
				    	   {\pincE''}
				  }
				  {\pincLabT}
      }{%1
	\vlin{\pincact}{}
	     {\pincLTSJud{\pincSec{\pincb}
				              {\pincE'}
				     }
					 {\pincE'}
					 {\pincb}
	     }{\vlhy{}}
	}
	{%2
	\vlin{\pincact}{}
	     {\pincLTSJud{\pincSec{\pincnb}
				              {\pincE'}}
					 {\pincE'}
					 {\pincnb}
	     }{\vlhy{}}
	}
}
$$
}%\small

%%2-------------
\item
Let $\vlstore{
     \vlsbr[\vlfo{\atmc}
                 {\vlholer{\strS' \,\vlscn{\atmb}}}
           ;\vlfo{\atmc}
                 {\vlholer{\strS''\,\vlscn{\natmb}}}
           ;\mapPincToDi{\pincE'''}]
     }
     \mapPincToDi{\pincE}\approx
     \vlread$, for some $\pincE'''$, and $\atmc$.
We remark that $\atmc$ is either different from $\atmb$ in both 
$\vlfo{\atmc}{\vlholer{\strS' \,\vlscn{\atmb}}}$, and
$\vlfo{\atmc}{\vlholer{\strS''\,\vlscn{\natmb}}}$, or it is equal to $ \atmb $ in both of them. Otherwise, we could not get to the premise of $\bvtatrdrulein$ in $\bvtDder'$.
So, 
$\pincE$ is 
$ \pincPar{\pincNu{\pincc}{\pincE'}
          }
	      {\pincPar{\pincNu{\pincc}{\pincE''}}
	               {\pincE'''}
	      }$, 
where $ \mapPincToDi{\pincE'} \approx \vlholer{\strS'\,\vlscn{\atmb}} $, and
$ \mapPincToDi{\pincE''} \approx \vlholer{\strS''\,\vlscn{\natmb}}$.
We can take 
$ \pincG $ as
$ \pincPar{\pincNu{\pincc}{\pincG'}
          }
	      {\pincPar{\pincNu{\pincc}{\pincG''}}
	               {\pincE'''}
	      }$,
because
$\vlstore{
 \vlsbr[\vlfo{\atmc}{\vlholer{\strS'\,\vlscn{\vlone}}}
       ;\vlfo{\atmc}{\vlholer{\strS''\,\vlscn{\vlone}}}
        ;\mapPincToDi{\pincE'''}]}
 \mapPincToDi{\pincG}\approx \vlread$,  with
 $ \mapPincToDi{\pincG'} \approx \vlholer{\strS'\,\vlscn{\vlone}} $, and
 $ \mapPincToDi{\pincG''} \approx \vlholer{\strS''\,\vlscn{\vlone}}$.
We can write:
{\small
$$
\vlderivation{
\vlin{\pinccntxp}{}
     {\pincLTSJud{\pincPar{\pincNu{\pincc}{\pincE'}
                           }
                           {\pincPar{\pincNu{\pincc}
                                            {\pincE''}}
                                    {\pincE'''}
                           }
                 }
                 {\pincPar{\pincNu{\pincc}{\pincG'}}
		                  {\pincPar{\pincNu{\pincc}{\pincG''}}
				                   {\pincE'''}
			              }
                 }
                 {\pincLabT}
     }{
\vlin{\bvtrhorule}{}
     {\pincLTSJud{\pincPar{\pincNu{\pincc}{\pincE'}
                          }
                          {\pincNu{\pincc}
                                  {\pincE''}
                          }
                 }
                 {\pincPar{\pincNu{\pincc}{\pincG'}}
     		              {\pincNu{\pincc}{\pincG''}}
                 }
                 {\pincLabT}
     }{
\vlhy{\pincLTSJud{\pincPar{\pincE'
                          }
                          {\pincE''
                          }
                 }
                 {\pincPar{\pincG'}
     		              {\pincG''}
                 }
                 {\pincLabT}
	 }}}
}
$$
}%\small
where $\bvtrhorule$ can be $\pincpe$, or $\pincpi$.
The premise follows from Point~\eqref{enumerate:Trivial derivations and rightcontext
s-04} of Proposition~\ref{proposition:Rightcontext s preserve communication}.

%%3-------------
\item
Let $\vlstore{
     \vlsbr[\vlholer{\strS' \,\vlscn{\atmb}}
           ;\vlholer{\strS''\,\vlscn{\natmb}}
           ;\mapPincToDi{\pincE'''}]
     }
     \mapPincToDi{\pincE}\approx
     \vlfo{\atmc}{\vlread}$, for some $\pincE'''$, and $\atmc$.
So, 
$\pincE$ is 
$\pincNu{\pincc}{(\pincPar{\pincE'}
	                      {\pincPar{\pincE''}
	                               {\pincE'''}
	                      })}$, 
where $ \mapPincToDi{\pincE'} \approx \vlholer{\strS'\,\vlscn{\atmb}} $, and
$ \mapPincToDi{\pincE''} \approx \vlholer{\strS''\,\vlscn{\natmb}}$.
We can take $\pincG $ as
$\pincNu{\pincc}{
 (\pincPar{\pincG'}
	      {\pincPar{\pincG''}
	               {\pincE'''}
	      })}$, 
because
$\vlstore{
 \vlsbr[\vlholer{\strS'\,\vlscn{\vlone}}
       ;\vlholer{\strS''\,\vlscn{\vlone}}
        ;\mapPincToDi{\pincE'''}]}
 \mapPincToDi{\pincG}\approx \vlfo{\atmc}{\vlread}$, with
$ \mapPincToDi{\pincG'} \approx \vlholer{\strS'\,\vlscn{\vlone}} $, and
$ \mapPincToDi{\pincG''} \approx \vlholer{\strS''\,\vlscn{\vlone}}$.
We can write:
{\small
$$
\vlderivation{
\vlin{\bvtrhorule}{}
     {\pincLTSJud{\pincNu{\pincc}{(\pincPar{\pincE'}
				                           {\pincPar{\pincE''}
				                                    {\pincE'''}
				                           })
                                 }
                  \approx
                  \pincPar{\pincNu{\pincc}{(\pincPar{\pincE'}
		  				                            {\pincPar{\pincE''}
		  				                                     {\pincE'''}
		  				                            })
		                                  }
                          }
                          {\pincNu{\pincc}
                                  {\pincZer}}
                 }
                 {\pincPar{\pincNu{\pincc}{(\pincPar{\pincG'}
		  				                            {\pincPar{\pincG''}
		  				                                     {\pincE'''}
		  				                            })
		                                  }
                          }
                          {\pincNu{\pincc}
                                  {\pincZer}}
                  \approx
                  \pincNu{\pincc}{(\pincPar{\pincG'}
 		                                   {\pincPar{\pincG''}
				                                    {\pincE'''}
			                               })
			                     }
                 }
                 {\pincLabT}
     }{
\vlin{\pinccntxp}{}
     {\pincLTSJud{\pincPar{\pincPar{\pincE'}
                                        {\pincE''}}
                               {\pincZer}
                      }
                      {\pincPar{\pincG'}
          		              {\pincPar{\pincG'}
          		                   	   {\pincZer}}
                      }
                      {\pincLabT}
     }{
\vlhy{\pincLTSJud{\pincPar{\pincE'}
                          {\pincE''}
                 }
                 {\pincPar{\pincG'}
     		              {\pincG''}
                 }
                 {\pincLabT}
	 }}}
}
$$
}%\small
where $\bvtrhorule$ can be $\pincpe$, or $\pincpi$.
The premise follows from Point~\eqref{enumerate:Trivial derivations and rightcontext
s-04} of Proposition~\ref{proposition:Rightcontext s preserve communication}.
\par
Of course, if
$\mapPincToDi{\pincE}
 \approx
     \vlsbr[\vlholer{\strS' \,\vlscn{\atmb }}
           ;\vlholer{\strS''\,\vlscn{\natmb}}
           ;\mapPincToDi{\pincE'''}]$, for some $\pincE'''$,
we can proceed as here above, dropping $\bvtrhorule$.
\end{itemize}
Assuming that $ (*) $ is the lowermost instance of $\bvtatrdrulein $ of $ \bvtDder $ excludes other cases that would impede getting to the premise of $(*)$ itself in a \trivialderivation\ like $ \bvtDder'$ has to be.
\section{Proof of \textit{\textbf{Soundness w.r.t. external communication}}
(Theorem~\ref{theorem:Soundness w.r.t. external communication},
page~\pageref{theorem:Soundness w.r.t. external communication})}
\label{section:Proof of theorem:Soundness w.r.t. external communication}
We proceed on the possible forms that
$\mapPincToDi{\pincE}$ can assume, in relation with the form of $\strR$.
Point~\eqref{enumerate:Trivial derivations and rightcontext s-03} of
Proposition~\ref{proposition:Rightcontext s preserve communication} will help
concluding.
%%%%%%%%%
\begin{itemize}
\item[\textbf{First case.}]
We focus on $\bvtDder$ concluding with
$\vlstore{\vlsbr<\natmb;\strR>}
 \vlsbr[\mapPincToDi{\pincE};\vlfo{\atmb}{\vlread}]$.
In the simplest case,
Points~\eqref{enumerate:Trivial derivations and rightcontext s-01}, and
\eqref{enumerate:Trivial derivations and rightcontext s-02} of
Proposition~\ref{proposition:Rightcontext s preserve communication}
imply that either
$\vlstore{\vlholer{\strS'\,\vlscn{\atmb}}}
 \mapPincToDi{\pincE}\approx
 \vlsbr[\vlfo{\atmb}{\vlread};\mapPincToDi{\pincE''}]$, or
$\vlstore{\vlsbr<\atmb;\mapPincToDi{\pincE''}>}
 \mapPincToDi{\pincE}\approx
 \vlfo{\atmb}{\vlread}$,
for some $\pincE''$, and $\vlholer{\strS'\vlhole}$,
such that $\atmb\in\strFN{\vlholer{\strS'\,\vlscn{\atmb}}}$.

\begin{enumerate}
%1.1--------
\item
Let
$\vlstore{\vlsbr<\atmb;\mapPincToDi{\pincE''}>} 
    \mapPincToDi{\pincE}\approx\vlfo{\atmb}{\vlread}$. So,
$\pincE$ is $\pincNu{\pincb}{(\pincSec{\pincb}{\pincE''})}$. We can take
$ \pincG $ coinciding to $\pincE''$, because 
$\vlstore{\vlsbr<\vlone;\mapPincToDi{\pincE''}>}
 \vlfo{\atmb}{\vlread}
 \approx \vlfo{\atmb}{\mapPincToDi{\pincE''}}$. We can prove:
{\small
$$
\vlderivation                          {
\vlin{\pincpi}{}
     {\pincLTSJud{\pincNu{\atmb}
		         {(\pincSec{\atmb}
			               {\pincE''})}
                 }
                 {\pincNu{\pincb}
                         {\pincE''}
                 }
                 {\pincLabT}}             {
\vlin{\pincact}{}
     {\pincLTSJud{\pincSec{\atmb}{\pincE''}}
                 {\pincE''}
                 {\pincb}}{
\vlhy{}}}}
$$
}%\small

   %1.2--------
   \item
    Let
$\vlstore{\vlholer{\strS'\,\vlscn{\atmb}}}
 \mapPincToDi{\pincE}\approx
 \vlsbr[\vlfo{\atmb}{\vlread};\mapPincToDi{\pincE''}]$. 
So, $ \pincE $ is 
$ \pincPar{\pincNu{\atmb}{\pincE'}}
          {\pincE''} $ where
$ \mapPincToDi{\pincE'} \approx \vlholer{\strS'\,\vlscn{\atmb}}$.
We can take $\pincG$ as $\pincPar{\pincNu{\atmb}{\pincG'}}{\pincE''}$ where
$\vlstore{\vlholer{\strS'\,\vlscn{\vlone}}}
 \mapPincToDi{\pincG'} \approx \vlfo{\atmb}{\vlread}$.
We can prove:
{\small
$$
\vlderivation{
\vlin{\pinccntxp}{}
     {\pincLTSJud{\pincPar{\pincNu{\pincb}
                                  {\pincE'}}
	                      {\pincE''}
                 }
                 {\pincPar{\pincNu{\pincb}
                                  {\pincG'}}
                 	      {\pincE''}
                 }
                 {\pincPreT}}     {
\vlin{\pincpi}{}
     {\pincLTSJud{\pincNu{\pincb}{\pincE'}
                 }
                 {\pincNu{\pincb}{\pincG'}
                 }
                 {\pincPreT}}         {
\vlhy{\pincLTSJud{\pincE'
                 }
                 {\pincG'
                 }
                 {\pincb}}}}}
$$
}%\small
Point~\eqref{enumerate:Trivial derivations and rightcontext
s-03} of Proposition~\ref{proposition:Rightcontext s preserve communication} implies that the premise holds.
\end{enumerate}
%%%%%%%%
% \newpage
In fact, the most general situations that Points~\eqref{enumerate:Trivial
derivations and rightcontext s-01}, and \eqref{enumerate:Trivial derivations and
rightcontext s-02} of Proposition~\ref{proposition:Rightcontext s preserve
communication} imply are:
{\small
$$
\vlstore{
\vlsbr[\vlfo{\atma_1}
            {\vldots
             \vlfo{\atma_m}
                  {\vlholer{\strS' \,\vlscn{\atmb }}}
             \vldots}
      ;\mapPincToDi{\pincE'}]
}
\mapPincToDi{\pincE}\approx\vlread
\qquad\qquad\qquad\qquad
\vlstore{
\vlsbr<\atmb ;\mapPincToDi{\pincE'}>
}
\mapPincToDi{\pincE}\approx
\vlfo{\atma_1} {\vldots \vlfo{\atma_m} {\vlread} \vldots}
$$
}%\small
where $a_i \not\equiv a_j$, for every $1\leq i, j\leq m$, and
$\atmb\equiv a_i$, for some $1\leq i \leq m$.
We can resume to the situation we have just developed in
detail, by rearranging the occurrences of \OpNameRen, thanks to
congruence \eqref{align:PPi-structural-congruence}.

\item[\textbf{Second case.}] 
Let us assume that $\bvtDder$ concludes with
$\vlstore{\vlsbr<\natmb;\strR'>}\strR\approx\vlread$.
Points~\eqref{enumerate:Trivial derivations and rightcontext s-01}, and
\eqref{enumerate:Trivial derivations and rightcontext s-02} of
Proposition~\ref{proposition:Rightcontext s preserve communication}
imply either
$\vlstore{\vlsbr<\atmb;\mapPincToDi{\pincE'}>}
 \mapPincToDi{\pincE}\approx\vlread$, or
$\vlstore{\vlholer{\strS'\,\vlscn{\atmb}}}
 \mapPincToDi{\pincE}\approx\vlsbr[\vlread;\mapPincToDi{\pincE'}]$, where
$\atmb\in\strFN{\vlholer{\strS'\,\vlscn{\atmb}}}$.
Both combinations are simple sub-cases of the previous ones, just developed
in detail.
\end{itemize}
\end{document}